%
%


\documentclass[openright,twoside]{iitbthesis}

\usepackage{epsfig}
\usepackage{amssymb}
\usepackage{amsthm}
\usepackage{amsmath}
\input comment.sty

\newtheorem{theorem}{Theorem}[chapter]

\newtheorem{claim}[theorem]{Claim}

\newtheorem{cor}[theorem]{Corollary}
\newtheorem{definition}[theorem]{Definition}

\newtheorem{lemma}[theorem]{Lemma}

\newcommand{\sort}[1]{\mbox{Sort}(#1)}
\newcommand{\scan}[1]{\mbox{Scan}(#1)}
\newcommand{\perm}[1]{\mbox{Perm}(#1)}
\newcommand{\msf}[1]{\mbox{MSF}(#1)}
\newcommand{\ssf}[1]{\mbox{SF}(#1)}
\newcommand{\var}[1]{\mbox{Var}(#1)}
\newcommand{\colour}[1]{\mbox{Col}(#1)}

%

\begin{document}


\clearpage\pagenumbering{roman}  

\title{Efficient Algorithms and Data Structures for Massive Data Sets}
\author{Alka}
\date{March 2010.}


\iitbdegree{Doctor of Philosophy}

\thesis

\department{Department of Computer Science and Engineering}

\maketitle

\begin{certificate}
It is certified that the work contained in the thesis entitled
       {\sf \bf Efficient Algorithms and Data Structures for Massive Data Sets}
       by {\bf Alka},
       has been carried out under my supervision and that this
       work has not been submitted elsewhere for a degree.

\vspace*{25mm}

\noindent
        Dr. Sajith Gopalan \\
        Associate Professor \\
        Department of Computer Science and Engineering\\
        Indian Institute of Technology, Guwahati.\\

\vspace*{10mm}
\noindent
        March 2010.

\end{certificate}


\begin{abstract}
  For many algorithmic problems,
traditional algorithms that optimise on the number of instructions executed
prove expensive on I/Os. Novel and very different design techniques,
when applied to these problems, can produce algorithms that are I/O efficient. 
This thesis adds to the growing chorus of such results.
The computational models we use are the external memory model and the W-Stream model.

On the external memory model, we obtain the following results.
(1) An I/O efficient algorithm for computing minimum spanning trees of graphs
that improves on the performance of the best known algorithm.
(2) The first external memory version of soft heap, an approximate meldable priority queue.
(3) Hard heap, the first meldable external memory priority queue that matches
the amortised I/O performance of the known external memory priority queues, while
allowing a meld operation at the same amortised cost.
(4) I/O efficient exact, approximate and randomised algorithms for the minimum cut problem, 
which has not been explored before on the external memory model.
(5)~Some lower and upper bounds on I/Os for interval graphs.

On the W-Stream model, we obtain the following results.
(1) Algorithms for various tree problems and list ranking that match the performance
of the best known algorithms and are easier to implement than them.
(2) Pass efficient algorithms for sorting, and the maximal independent set problems,
that improve on the best known algorithms.
(3)~Pass efficient algorithms for the graphs problems of finding vertex-colouring, 
approximate single source shortest paths, maximal matching, 
and approximate weighted vertex cover.
(4)~Lower bounds on passes for list ranking and maximal matching.

We propose two variants of the W-Stream model, and design
algorithms for the maximal independent set, vertex-colouring, and 
planar graph single source shortest paths problems on those models.

\end{abstract}

\begin{acknowledgement}
\singlespace
First and foremost, I would like to thank my guide for all his support and
encouragement during the course of my PhD and Masters.
I am also grateful to him for always patiently listening to my ideas and doubts
even when they were trivial or ``silly".
Indeed, I have been inspired by his hard-working attitude,
intellectual orientation and a thoroughly professional outlook towards research.
The research training that I have acquired while working with him
will drive my efforts in future.

I would also like to thank the members of my doctoral committee, in particular
Profs. S. V. Rao, Pinaki Mitra and J. S. Sahambi,
for their feedback and encouragement
during the course of my PhD work. I am also grateful to the entire Computer
Science faculty for their tremendous support and affection during my stay at IIT
Guwahati.
I would also like to thank the anonymous referees who have commented at various fora,
on the I/O efficient minimum spanning trees algorithm presented in this thesis.
Their comments have been specifically valuable in improving the rigour and presentation 
of this piece of work.

I also gratefully acknowledge MHRD, Govt of India and Philips Research, India
for supporting my research at IIT Guwahati.

I take this opportunity to express my heartfelt thanks to my friends and
colleagues who made my stay at IIT
Guwahati an enjoyable experience. Knowing and interacting with friends such as
Godfrey, Lipika, Minaxi, Mili, and Thoi has been a great experience. These
friends have always been by my side whenever I have needed them.

It goes without saying that this journey would not have been possible without
the tremendous support and encouragement I have got from my
sister Chutti and my bhaiya Lucky,  and my
parents. Indeed no words can express my gratefulness towards my parents
who have unconditionally and whole-heartedly supported me in all my endeavours.
I am also grateful to my in-laws for their understanding, patience and
tremendous support. I am also thankful to my sister-in-law Tinku for being a great
friend and her understanding during difficult times.

Last but not the least, I thank my husband Mani for his constant
encouragement, love, support and infinite patience. Without him,
this journey could not have been completed so smoothly.

\vspace*{25mm}
{\Large Date: \rule{4cm}{1sp}\hfill Alka \rule{1cm}{0pt}}

\end{acknowledgement}

\tableofcontents
\listoftables
\listoffigures




\cleardoublepage\pagenumbering{arabic} 

\chapter{Introduction}
\label{intro:chapt}

Over the years, computers have been used to solve larger and larger
problems. Today we have several applications of
computing that often deal with massive
data sets that are terabytes or even petabytes in size.
Examples of such applications can be found in
databases \cite{KaRaDa+96, RaSu94}, geographic information systems \cite{ArVeVi+07, HaCo91}, 
VLSI verification, computerised medical treatment, astrophysics, geophysics, 
constraint logic programming, computational biology, 
computer graphics, virtual reality, 3D simulation and modeling
\cite{Arthesis}, 
analysis of telephone calls in a voice network \cite{BuGoWe03, CoFiPr+00, FKM+05a, HuDaMa00}, 
and transactions in a credit card network \cite{ChFaPr99}, to name a few.

\section{Memory Models}

In traditional algorithm design,
it is assumed that the main memory is infinite in size and
allows random uniform access to all its locations.
This enables the designer to assume that all the data
fits in the main memory. (Thus, traditional algorithms
are often called ``in-core''.) Under these assumptions,
the performance of an algorithm is decided by the number of
instructions executed, and therefore, the design goal is to
optimise it.

These assumptions may not be valid while dealing with massive data sets,
because in reality, the main memory is limited, and so the bulk of the
data may have to be stored in inexpensive but slow secondary memory.
The number of instructions executed would no longer be a reliable
performance metric; but a measure of the time taken in input/output (I/O) communication would be.
I/O communication tends to be slow because of large access times of 
	secondary memories.

Computer hardware has seen significant advances in the last few decades:
machines have become a lot faster, and the amount of main memory they 
have has grown.
But the issue of the main memory being limited has only become
more relevant, because applications have grown even faster in size.
Also, small computing devices (e.g., sensors, smart phones)
with limited memories have found several uses.

Caching and prefetching methods are typically designed to be 
	general-purpose, and cannot take full advantage of the
	locality present in the computation of a problem. 
Designing of I/O efficient algorithms has, therefore, been an actively researched
area in the last twenty years. A host of algorithms have been designed with
an intent of minimising the time taken in I/O.  
For many a problem,
it has been shown that while the traditionally efficient algorithm is expensive on I/Os,
novel and very different design techniques can be used to produce an
algorithm that is not. This thesis makes contributions of a similar
vein.

We now describe some models of computations that have been used to design
I/O efficient algorithms.

\subsection{The External Memory Model}
 
In this memory model, introduced by Aggarwal and Vitter
\cite{AgVi88}, it is assumed that the bulk of the data is kept in the
secondary memory which is a permanent storage. The
secondary memory is divided into blocks. An input/output (I/O) is defined as 
	the transfer of a block of data between the secondary memory and a volatile main memory.
The processor's clock period and the main memory access time are
negligible when compared to the secondary memory access time.
The measure of the performance of an algorithm is the number of I/Os it performs.
Algorithms designed on this model are referred to as external memory algorithms.
The model defines the following parameters: the size of the problem input ($N$),
the size of the main memory ($M$), and the size of a disk block ($B$).

It has been shown that, on this model, the number of I/Os needed to read (write)
$N$ contiguous items from (to) the disk is $\scan{N} = \Theta (N/B)$, and that
the number of I/Os required to sort $N$ items is $\sort{N} = \Theta((N/B)\log_{M/B} (N/B))$ \cite{AgVi88}.
For all realistic values of $N$, $B$, and $M$, $\scan{N} < \sort{N} \ll N$.

\subsection{The Streaming Model}

This model \cite{AlMaSz99, HRR99, MP80} allows the input data to be accessed sequentially, but not randomly.
Algorithms on this model are constrained to access the data sequentially
in a few passes from the read-only input tape, using only a small amount
of working memory typically much smaller than the input size.
Algorithms designed on this
model are called streaming algorithms.

The streaming model defines the following parameters:
the size of the problem input ($N$), and the size of the working memory ($M\log N$).
At any one time $\Theta(M)$ elements can fit into the working
memory; $\Theta(\log N)$ bits are needed to represent each input element.
In a pass, the input is read sequentially by loading $O(M)$ elements
into the working memory at a time.
The measure of the performance of an algorithm on this model is the
number of passes it requires.

The read-only streaming model is very restrictive.
Several problems, such as graph problems, are inherently difficult on the model.
Therefore many extensions to this model have been proposed.
The W-Stream model \cite{R03} is one such extension.
In this model, an algorithm is allowed to write an intermediate stream as it reads the
input stream.
This intermediate stream, which can be a constant factor larger than the original
stream, can be used as input for the next pass.
Algorithms designed on this
model are called W-Stream algorithms.

\section{Graphs}

Working with massive data sets often require analysing massive graphs.
An example is the ``call graph" on a telephone network, where a vertex corresponds to a telephone
number and an edge to a call between two numbers during
some specified time interval. Another example is the ``web graph", where
vertices are web pages and edges are links between web pages \cite{FKM+05a}.
Moreover, many problems from widely varied domains can be reduced to graph problems.
One example is the shortest path problem on geometric domains which
is often solved by computing shortest paths in graphs, where
a vertex is a discrete location, an edge is a connection between two vertices,
and the weight of an edge is the geometric distance between the locations
that correspond to its end-vertices \cite{zeh}.

A graph $G=(V,E)$ consists of a finite nonempty set
$V$ of vertices and a set $E\subseteq V \times V$
    of unordered pairs, called edges, of distinct vertices \cite{H69}.
We shall denote $\vert V \vert$ and $\vert E \vert$ by $V$ and $E$ respectively.
Two vertices $u,v\in V$ are said to be adjacent to (or neighbours of) each
    other iff $(u,v)\in E$.
The neighbourhood of vertex $u$ (the set of $u$'s neighbours) is denoted by $N(u)$.
The number of neighbours of a vertex is its degree. In a directed
graph every edge is an ordered pair; if
    $(u,v)\in E$, we say $v$ is an out-neighbour of $u$; we will
    also be using the terms in-neighbour, out-degree and in-degree,
    which can be similarly defined.

Most of the problems that we consider in this thesis are on graphs.

\section{Background}

In this section, we discuss some of the known upper and lower bound results that are relevant to 
our work, and are on the external memory model and variants of the streaming model.

\subsection{On the External Memory Model}

Detailed surveys of algorithmic results on the external memory model
can be found in \cite{MSS03, V01, Vi08}.
Here we mention some results relevant to us. 

The number of I/Os needed to read (write) $N$ contiguous items from (to) the disk 
	is $\scan{N}=\Theta (N/B)$, and that the number of I/Os required to sort 
	$N$ items is $\sort{N}=\Theta((N/B)\log_{M/B} (N/B))$ \cite{AgVi88}.

The list ranking and $3$ colouring problems on $N$ node linked lists 
	have I/O complexities that are $O(\sort{N})$ \cite{ChGoGr+95}.
For many problems on trees (e.g., Euler Tour, expression tree evaluation)
the I/O complexity is $O(\sort{N})$ \cite{ChGoGr+95}.
For list ranking and many tree problems the lower bound is $\Omega(\perm{N})$ on
	I/Os, where $\perm{N}$ is the number of I/Os required to permute $N$ elements. 

The connected components problem requires $\Omega((E/V)\sort{V})$ I/Os \cite{MR99}.
This lower bound extends also to related problems like minimum spanning tree,
biconnected components and ear decompositions \cite{MR99}.
The best known upper bound on I/Os for all these problems is
	$O(\frac{E}{V}\sort{V}\log \log \frac{VB}{E})$ \cite{ArBrTo04, MR99}.

Results on breadth first search can be found in \cite{MeMe02, MR99}.
The best known upper bound is $O(\sqrt{(VE)/B} + \sort{E} + \ssf{V,E})$ on I/Os
	\cite{MeMe02}, where $\ssf{V,E}$ is the number of I/Os required to 
	compute the spanning forests of a graph $G = (V,E)$.
Depth first search on an undirected graph can be computed 
	in $O(V \log V + E/B \log (E/B))$ I/Os \cite{KS96}.

Some restricted classes of sparse graphs, for example 
planar graphs, outerplanar graphs, grid graphs and bounded tree width graphs, have
been considered in designing I/O efficient 
algorithms for single source shortest paths, depth first search, and breadth first search. 
Exploitation of the structural properties of these classes of sparse graphs has led to algorithms 
for them that
	perform faster than the algorithms for a general graphs.
Most of these algorithms require $O(\sort{V+E})$ I/Os \cite{MSS03, V01, Vi08}.  

\subsection{On the Streaming Model and Its Variants}

Many data sketching and statistics problems have approximate solutions
	that execute in $O(1)$ passes and use polylogarithmic working space 
on the streaming model. 
See \cite{Mu05} for a survey.
Graph problems are hard to solve on the streaming model.
Therefore, streaming solutions for them have tended to be on 
	variants of the streaming model such as the semi-streaming  
\cite{FKM+05a}, 
	W-Stream \cite{DFR06, R03}, 
	and streaming and sorting models \cite{AgDaRa+04, R03}.  
Most of the graph problems have a lower bound of $\Omega(N/(M \log N))$ on passes 
	on the W-Stream model \cite{DEM+07, DFR06, HRR99, R03}.
Connected components and minimum spanning trees, each can be computed in $O(N/M)$ passes 
	\cite{DFR06}.
With high probability, a maximal independent set can be computed in $O((N \log N)/M)$ passes \cite{DEM+07}.
For the single source shortest paths problem, there is a monte-carlo algorithm
	that executes in $O((C N \log N)/ \sqrt{M})$ passes \cite{DFR06}, where
	$C$ is the maximum weight of an edge.

\section{Summary of the Thesis}

The thesis is divided into two parts.
In the first part are presented some lower and upper bounds results,
and data structures on the external memory model.
The second part of the thesis deals with the W-Stream model and its variants.

\subsection{Part I: Algorithms and Data Structures on the External Memory Model}
\label{EMGA:DS}
\subsubsection{Minimum Spanning Trees}

The minimum spanning tree (MST) problem on
an input undirected graph $G = (V,E)$,
where each edge is assigned a
real-valued weight,
is to compute a spanning forest (a spanning tree for each connected 
	component) of $G$
so that the total weight of the edges in the spanning forest
is a minimum.
We assume that the edge weights are unique. This assumption is without
loss of generality, because if the edge weights are not unique, then
tagging each edge weight with the label of the corresponding edge will
do to make all weights unique. One scan of the edgelist is sufficient for
such a tagging. 

For this problem, a lower bound of $\Omega(\frac{E}{V} \sort{V})$ on
I/Os is known \cite{MR99}. We present an I/O efficient
algorithm that computes a minimum spanning tree of an
undirected graph $G = (V,E)$ in $O(\sort{E}$ $\log \log_{E/V} B)$ I/Os. 
The current best known upper bound on I/Os for this
problem is $O(\sort{E}$ $\log\log (VB/E))$ \cite{ArBrTo04}. Our
algorithm performs better than that for practically all values of
$V$, $E$ and $B$, when $B \gg 16$ and $(B)^{\frac{1 - \sqrt{1 - \frac{4}{\log B}}}{2}} \leq E/V \leq (B)^{\frac{1 + \sqrt{1 - \frac{4}{\log B}}}{2}}$. Our Algorithm matches the lowerbound when $E/V \geq B^{\epsilon}$ for a constant $\epsilon > 0$. In particular, when
$E/V=B^{\epsilon}$, for a constant $0 < \epsilon < 1$, our algorithm, in
addition to matching the lower bound, is asymptotically faster
than the one by Arge et\ al.\ \cite{ArBrTo04} by a factor of $\log\log B$.

Graph problems, because of their inherent lack of data localities,
are not very amenable to efficient external memory solutions.
Therefore, even a modest $\log\log B$ factor of improvement is
significant.

In addition to computing a minimum spanning tree, our algorithm
also computes connected components. Therefore, our algorithm
improves on the best known upper bound for the connected
components problem too.

\subsubsection{External Memory Soft Heap, and Hard Heap, a Meldable Priority Queue}
A priority queue is a data structure that allows {\tt Insert}, {\tt Findmin}, and {\tt Deletemin} operations to execute efficiently. External
memory priority queues that perform each of these operations in
$O((1/B) \log_{M/B}(N/B))$ amortized I/Os are known \cite{Arge03, KS96}.

We present an external memory version of soft heap \cite{Ch00a}
	that we call ``External Memory Soft Heap'' (EMSH for short). 
It supports {\tt Insert}, {\tt Findmin}, {\tt Deletemin} and {\tt Meld} operations.
An EMSH may, as in its in-core version, and at its discretion,
corrupt the keys of some elements in it, by revising them upwards.
But the EMSH guarantees that the number of corrupt elements in it is never more than
$\epsilon N$, where $N$ is the total number of items inserted in it, and $\epsilon$ is
a parameter of it called the error-rate.
The amortised I/O complexity of an {\tt Insert} is 
	$O(\frac{1}{B} \log_{M/B}\frac{1}{\epsilon})$.
{\tt Findmin}, {\tt Deletemin} and {\tt Meld} all have non-positive amortised I/O complexities.

This data structure is useful for finding exact and approximate medians, and 
	for approximate sorting the same way it is in its in-core version \cite{Ch00a}.
Each can be computed in $O(N/B)$ I/Os.

When we choose an error rate $\epsilon<1/N$, EMSH stays devoid of corrupt nodes, 
and thus becomes a meldable priority queue that we call ``hard heap''.  
The amortised I/O complexity of an {\tt Insert}, in this case, is $O(\frac{1}{B} \log_{M/B}\frac{N
}{B})$, over a sequence of operations involving $N$ {\tt insert}s.
{\tt Findmin}, {\tt Deletemin} and {\tt Meld} all have non-positive amortised I/O complexities.
If the inserted keys are all unique, a {\tt Delete} (by key) operation can
        also be performed at an amortised I/O complexity of $O(\frac{1}{B} \log_{M/B}\frac{N}{B})$.
A balancing operation performed once in a while on a hard heap 
ensures that the number of I/Os performed by a sequence of $S$ operations on it is
$O(\frac{S}{B}+\frac{1}{B}\sum_{i = 1}^{S} \log_{M/B}\frac{N_i}{B})$, where
$N_i$ is the number of elements in the heap before the $i$th operation.

\subsubsection{The Minimum Cut Problem}
The minimum cut problem on an undirected unweighted graph is to partition the vertices into two
        sets while minimising the number of edges from one side of the partition to the other.
It is an important combinatorial optimisation problem.
Efficient in-core and parallel algorithms for the problem are known.
For a recent survey see \cite{Br07, Ka00, KaMo97, NaIb08}.
This problem has not been explored much from the perspective of massive data sets.
However, it is shown in \cite{AgDaRa+04, R03} that a minimum cut can be computed in
        a polylogarithmic number of passes using only a polylogarithmic sized main
        memory on the streaming and sorting model.

We show that any minimum cut algorithm requires
	$\Omega(\frac{E}{V} \sort{V})$ I/Os.
A minimum cut algorithm is proposed that runs in $O(c  (\msf{V,E}\log E + \frac{V}{B} \sort{V}))$ I/Os, and
        performs better on dense graphs than the algorithm of \cite{Ga95}, which requires $O(E + c^2 V \log(V/c))$
        I/Os, where $\msf{V,E}$ is the number of I/Os required to compute a minimum 
	spanning tree, and $c$ is the value of minimum cut.
Furthermore, we  use this algorithm to construct
        a data structure that stores all $\alpha$-mincuts (which are cuts of size at most $\alpha$ times the
        size of the minimum cut), where $\alpha < 3/2$.
The construction of the data structure requires an additional $O(\sort{k})$ I/Os, where $k$ is
        the total number of $\alpha$-mincuts.
The data structure answers a query of the following form in $O(V/B)$ I/Os: A cut $X$ (defined by a vertex
        partition) is given; find whether $X$ is an $\alpha$-mincut.

Next, we show that the minimum cut problem can be computed with high
probability in $O(c \cdot \msf{V,E} \log E  + \sort{E} \log^2 V +  \frac{V}{B} \sort{V} \log V)$ I/Os.
We also present a $(2 + \epsilon)$-approximate minimum cut
algorithm that requires $O((E/V) \msf{V,E})$ I/Os and 
        performs better on sparse graphs than both our exact minimum cut algorithm, and
	the in-core algorithm of \cite{Ga95} run on the external
	memory model in to-to.

\subsubsection{Some Lower and Upper bound results on Interval Graphs}
A graph $G=(V,E)$ is called an interval graph,
    if for some set $\Im$ of intervals of a linearly ordered set,
    there is a bijection $f:V\rightarrow\Im$
    so that two vertices $u$ and $v$ are adjacent in $G$ iff $f(u)$ and
    $f(v)$ overlap.
Every interval graph has an interval representation in which endpoints
    are all distinct \cite{G80}.

We show that
finding the connected components in a collection of disjoint
monotonic doubly linked lists (MLCC) of size $V$ is equivalent to 
the problem (IGC) of minimally colouring an interval
graph whose interval representation is
	given.
The number of I/Os needed for both are shown to be $\Omega(\frac{V}{B} \log_{M/B} \frac{\chi}{B})$, 
where $\chi$ is the chromatic
    number of an interval graph, or the total number of disjoint monotonic doubly linked lists, as is relevant.
We also show that the $3$ colouring of a doubly linked list ($3$LC) of size $V$ is
reducible to the $2$ colouring of a set of disjoint monotonic doubly linked lists (2MLC)  in 
$O(\scan{V} + \sort{\chi})$ I/Os.
It is also shown that 2MLC and 3LC of sizes $V$ each have lower bounds of 
$\Omega(\frac{V}{B} \log_{M/B} \frac{\chi}{B})$ on
I/Os, where $\chi$ is the number of disjoint 
	monotonic  doubly linked lists, and the total number of forward and backward stretches
	in the doubly linked list respectively.

We
present an SSSP algorithm that requires $O(\sort{V})$
    I/Os, and a BFS tree computation algorithm that requires $O(\scan{V})$ I/Os, and
	a DFS tree computation algorithm that requires $O(\frac{V}{\chi} \sort{\chi})$ I/Os.
The input graph is assumed to be represented as a set of intervals in sorted
	order.
We show that IGC can be computed in an optimal
$O(\frac{V}{B}\log_{M/B}\frac{\chi}{B})$ I/Os, if
the input graph is represented as a set of intervals in sorted
	order. 
Optimal algorithms are given for 3LC, 2MLC, MLCC problems.

\subsection{Part II: Algorithms on the W-Stream Model and its Variants}
\label{WSA}
\subsubsection{Some Algorithms on the W-Stream Model}
The following results are presented.
\begin{itemize}
\item Lower bounds of $\Omega(N/(M \log N))$ on passes for list ranking and maximal matching.
A lower bound for list ranking also applies to expression tree evaluation, finding the depth of
every node of a tree, and finding the number of descendants of every node in a tree.

\item An algorithm that sorts $N$ elements in $O(N /M)$ passes
while performing $O(N\log M+N^2/M)$ comparisons and $O(N^2/M)$ elemental reads.
Our algorithm does not use a simulation, and is easier to implement than the
earlier algorithms.

\item Algorithms for list ranking, and tree problems
such as Euler Tour, rooting of trees, labelling of rooted trees and
expression tree evaluation that use $O(N /M)$ passes each.
Unlike the previous algorithms, our algorithms are easy to implement as they do not use
simulations.

\item Algorithms for finding a maximal independent set and a $\Delta + 1$ colouring of graphs.
We show that when the input graph is presented in an adjacency list representation,
each can be found deterministically in $O(V / M)$ passes.
We also show that when the input is presented as an unordered edge list,
each can be found deterministically in
$O(V /x)$ passes, where
$x = O(\min\{M, \sqrt{M\log V} \})$ for MIS, and
$x = O(\min\{M, \sqrt{M\log V}, \frac{M\log V}{\Delta\log\Delta} \})$
for $\Delta + 1$ colouring.
\item  Algorithms for maximal matching and $2$-approximate weighted vertex cover that are
deterministic and require $O(V / M )$ passes.  The vertex cover algorithm assumes that
the weight of each vertex is $V^{O(1)}$.
The input here is assumed to be an unordered edge list.
The lower bound for the maximal matching problem is shown to be $\Omega(V/(M \log V))$ on passes.

\item An algorithm that, for all vertices $v \in V$, computes with high probability
	 an $\epsilon$-approximate shortest path from a given source vertex $s$ to $v$ 
	in $O(\frac{V \log V \log W}{\sqrt{M}})$ passes, where $W$ is the sum of the weights of the edges.
We assume that $\log W = O(\log V)$.
If $C$ is the maximum weight of an edge, then $W \leq VC$, and
our algorithm improves on the previous bound by a factor of $C/\log({VC})$ at the cost a
small error in accuracy.
Here again, we assume the input to be  given as an unordered edge list.

\end{itemize}
\subsubsection{Two Variants of the W-Stream Model and Some Algorithms on Them}
We propose two models which are variants of the
        W-Stream model. We give
        the following algorithms that run on two of those models: an
        $O(V/M)$ passes maximal independent set algorithm and an
        $O(V/x)$ passes $(\Delta+1)$-colouring algorithm,where $x = O(\min \{M, \sqrt{M\log V}\})$, both for general graphs,
        and an $O((\sqrt{V}+ \frac{V}{M})\log V+\frac{V}{\sqrt{M}})$ passes single source shortest paths algorithm 
        and an $O(\frac{V^2}{M})$ passes all pairs shortest paths algorithm, both
for planar graphs.

\section{Organisation of the Thesis}
The rest of the thesis is organised as follows. In chapters~\ref{mst:chapt}-\ref{ig:chapt} we present
    our results on external memory graph algorithms and data structures.
Chapter~\ref{mst:chapt} describes the improved result on minimum spanning tree problem.
Chapter~\ref{emsh:chapt} describes external memory soft heap and hard heap. 
Chapter~\ref{cut:chapt} describes the results on the minimum cut problem.
Chapter~\ref{ig:chapt} presents the lower bound results on IGC, $3$LC, $2$MLC, and MLCC.
It also presents upper bound results for $3$LC, IGC, SSSP, BFS and DFS.

In chapters~\ref{wstream:chapt}, and \ref{modstream:chapt} we present some lower and upper bounds 
	on the W-Stream model and its variants. 
Chapter~\ref{wstream:chapt} describes our results on sorting, tree problems, maximal
    independent set, graph colouring, SSSP, maximal matching, and weighted vertex cover,
all on the W-Stream model. 
Chapter~\ref{modstream:chapt} introduces two new variants of the W-Stream model, and then
presents algorithms for the maximal independent set, $\Delta+1$-colouring (both on general graphs), 
	SSSP and APSP (both on planar graphs) problems, on those models.
\part{Algorithms and Data Structures on the External Memory Model}
   \chapter{Minimum Spanning Trees}
\label{mst:chapt}
\section{Introduction}
The computing of minimum spanning trees is a fundamental problem in
Graph Theory and has been studied on various models of
computation. 
For this problem, a lower bound of $\Omega(\frac{E}{V}  \sort{V})$ on I/Os is known. 
In this chapter, we give an I/O efficient algorithm that computes a minimum spanning
tree of an undirected graph $G = (V,E)$ in $O(\sort{E}$ $\log \log_{E/V} B)$ I/Os. 
The current best known upper bound on I/Os for this problem 
	is $O(\sort{E}$ $ \log\log (VB/E))$ \cite{ArBrTo04}. 
Our algorithm performs better than that for practically all values of $V$, $E$ 
	and $B$, when $B\gg 16$ and $(B)^{\frac{1 - \sqrt{1 - \frac{4}{\log B}}}{2}} \leq E/V \leq (B)^{\frac{1 + \sqrt{1 - \frac{4}{\log B}}}{2}}$. 
Our Algorithm matches the lowerbound when $E/V \geq B^{\epsilon}$ for a constant $\epsilon > 0$. In particular, when
$E/V=B^{\epsilon}$, for a constant $0 < \epsilon < 1$, our algorithm, in
addition to matching the lower bound, is asymptotically faster
than the one by Arge et al. by a factor of log log B.

Graph problems, because of their inherent lack of data localities,
are not very amenable to efficient external memory solutions.
Therefore, even a modest $\log\log B$ factor of improvement
is significant.

In addition to computing a minimum spanning tree, our algorithm
computes connected components also. Therefore, our algorithm
improves on the best known upper bound for the connected components problem too.

\subsection{Problem Definition}
The \emph{minimum spanning tree (MST) problem} on
an input undirected graph $G = (V,E)$,
where $V$ and $E$ are the vertex set and edge set
respectively, and each edge is assigned a
real-valued weight,
is to compute a spanning forest (a spanning tree for each connected component) of $G$
so that the total weight of the edges in the spanning forest
is a minimum.

\subsection{Previous Results}
For the MST problem, a lower bound of
$\Omega(\frac{E}{V} \sort{V})$ on the number of I/Os is known
\cite{MR99}, while the currently best known algorithm, by Arge et\
al.\ \cite{ArBrTo04}, executes in $O(\sort{E}$ $\log \log (VB/E))$
I/Os. This algorithm matches the lower bound only when $E = \Omega(VB)$.
The problems of computing connected components (CC) and minimum
spanning forests are related. The best known upper bound for the
connected components problem \cite{MR99} is also $O(\sort{E} \log \log (VB/E))$. Both the connected components and minimum
spanning tree problems can be solved by a randomised algorithm in
$(\frac{E}{V} \sort{V})$ I/Os \cite{AbBuWe02, ChGoGr+95}. Some of the
best known upper bounds for the MST and CC problems are summarised
in the Table~\ref{prev:result:mst}.
\begin{table}
\begin{center}

            \begin{tabular}{|l|l|l|}
               \cline{1-3}
               \hline
              {\bf Problem} & {\bf I/O complexity} & {\bf Reference} \\
               \hline \hline
               MST and CC & $O(\sort{E} \log (V/M))$ & \cite{ChGoGr+95}  \\

               & $O(\sort{E} \log B + \scan{E} \log{V})$ &  \cite{KS96}  \\
              
               & $O(V + \sort{E})$ & \cite{ArBrTo04} \\
               
               & $O(\sort{E} \log \log (VB/E))$ &  \cite{ArBrTo04} \cite{MR99}  \\
               
               & $O(\frac{E}{V} \sort{E})$ (randomised) & \cite{AbBuWe02} \cite{ChGoGr+95} \\
               \hline

            \end{tabular}
\caption{Previous upper bounds for the MST and CC problems}
\label{prev:result:mst}
\end{center}
       \end{table}

\subsection{Our Results}
We propose a new MSF algorithm for undirected graphs. The I/O
complexity of our algorithm is $O(\sort{E}  \log \log_{E/V}B
)$. The complexity can be improved to $O(\frac{E}{V} \sort{V}$ $\log \log_{E/V} B)$,
    if a sparsification technique \cite{EGI+97} is applied to our algorithm \cite{Vi08}.
Therefore, our algorithm matches the lower bound for the
problem when $E \geq VB^{\epsilon}$, where $\epsilon>0$ is some constant.
The function $\log\log(VB/E)$ is $O(\log\log_{E/V}B)$ only if
$\log B - \log(E/V)<(\log B)/(\log(E/V))$, which is only if
$\log(E/V)>(\log\sqrt{B})(1+\sqrt{1-4/(\log B)})$ or
$\log(E/V)$ $< (\log\sqrt{B})$$(1-\sqrt{1-4/(\log B)})$.
If $B\gg 16$, for practically all values of $V$, $E$ and $B$ such that $(B)^{\frac{1 - \sqrt{1 - \frac{4}{\log B}}}{2}} \leq E/V \leq (B)^{\frac{1 + \sqrt{1 - \frac{4}{\log B}}}{2}}$,
$\log\log(VB/E)>\log\log_{E/V}B$, and therefore
our algorithm performs better than that of \cite{ArBrTo04}.
Our Algorithm matches the lowerbound when $E/V \geq B^{\epsilon}$ for a constant $\epsilon > 0$. In particular, when
$E/V=B^{\epsilon}$, for a constant $0 < \epsilon < 1$, our algorithm, in
addition to matching the lower bound, is asymptotically faster
than the one by Arge et\ al.\ \cite{ArBrTo04} by a factor of $\log\log B$.

In addition to computing a minimum spanning tree, our algorithm
computes connected components also.
Therefore, our algorithm
improves the upper bound for the connected components problem too
\cite{MR99}.

\subsection{Organisation of This Chapter}
Section~\ref{sec:prem:mst} is a preliminary section in which we discuss some
        existing algorithms whose idea will be used later in our algorithm.
In Section~\ref{mst:algo}, we describe our algorithm. 

\section{Preliminaries}
\label{sec:prem:mst}
The MSF algorithms that are based on edge contraction typically
proceed in a number of Bor\r{u}vka phases \cite{B26}. In each phase the
lightest edge adjacent to each vertex $v$ is selected and output
as part of the MSF. Then the selected edges are contracted; that
is, each set of vertices connected by the selected edges is fused
into a supervertex. Proofs of correctness of this approach can be
found in \cite{B26,ChGoGr+95,CLC82, CV91, KS96,MR99}.

Let the size of a
supervertex be the number of vertices it contains from the
original graph. An edge in the original graph between
two vertices that belong to the same supervertex is an
internal edge of the supervertex. Edges $(u,v)$ and $(u',v')$, where
$u$ and $u'$ end up in the same supervertex,
and
so do $v$ and $v'$, become multiple edges. At the end of
a Bor\r{u}vka phase, the algorithms typically remove the internal edges
and for each set of multiple edges retain only one of the lightest.

After the $i$-th phase, the size of every supervertex is at least
$2^i$, and thus after $O(\log (V/M))$ phases, the  vertices in the
contracted graph fit in the main memory. Once the vertices can fit
into the main memory, the MSF can be computed in one scan of the
sorted edge set using the disjoint set data structure and
Kruskal's algorithm \cite{CLR90}. Each Bor\r{u}vka phase can be
performed in $O(\sort{E})$ I/Os \cite{ArBrTo04, ChGoGr+95, KS96, MR99}; this
results in an $O(\sort{E} \log (V/M))$ algorithm. Kumar and
Schwabe \cite{KS96} improved on this when they obtained an
$O(\sort{E}  \log B + \scan{E}  \log V)$ I/Os algorithm;
they use the fact that after $\Theta (\log B)$ phases, with the
number of vertices decreased to $O(V/B)$, a contraction phase can
be performed more efficiently. Recently, Arge et\ al.\
\cite{ArBrTo04} obtained an improved
    $O(\sort{E}  \log \log (VB/E))$ I/Os algorithm.
They use the fact
    that after $\Theta(\log (VB/E))$ phases, with number of vertices decreased to $E/B$, a
modified version of Prim's internal memory algorithm \cite{CLR90}
can be used to construct an MSF in the remaining graph.

This modified version of Prim's algorithm is also given in
\cite{ArBrTo04}, and works as follows: An external memory priority
queue (EMPQ) is initialised with the edges that are incident to
the source vertex. In each step, as in Prim's algorithm, we add to
the present MSF the lightest {\em border edge} (an edge that
connects a vertex in the present MSF to one that is not), and add
to the EMPQ all the edges incident to the newly captured vertex.
The EMPQ stores all the current border edges and some internal
edges (edges between vertices in the present MSF). A
delete-minimum operation on the EMPQ produces the lightest edge in
it; it is an internal edge iff there are two copies of it in the
EMPQ; discard it if it is an internal edge. The algorithm performs
$\Theta(E)$ EMPQ operations which can be performed in
$O(\sort{E})$ I/Os \cite{Arge03, KS96,FJKT99} and also needs one
I/O per vertex. Thus, its I/O complexity is $O(V+\sort{E})$. When
$V=E/B$, this is $O(\sort{E})$. Note that the algorithm can also
be used for computing connected  components
    when an edge $(u,v)$ is inserted in EMPQ with key value $k = \min \{u,v\}$.

The algorithm of Arge et\ al.\  \cite{ArBrTo04} performs $\Theta(\log
(VB/E))$ phases in a total of $O(\sort{E}$ $ \log \log (VB/E))$
I/Os. They divide the $\Theta(\log (VB/E))$ phases into $\Theta(
\log \log (VB/E))$ \emph{superphases} requiring $O(\sort{E})$ I/Os
each and obtain the following the result:

{\em The minimum spanning tree of an undirected weighted graph $G
= (V,E)$ can be reduced to the same problem on a graph with
$O(E/B)$ vertices and $O(E)$ edges in $O(\sort{E}$ $ \log \log
(VB/E))$ I/Os. }

The $i$-th superphase of their algorithm consists of $\lceil \log
\sqrt N_i \rceil$ phases, where $N_i = 2^{(3/2)^i}$. To be
efficient, the phases in superphase $i$ work only on a subset
$E_i$ of edges. This subset contains the $\lceil \sqrt N_i \rceil$
lightest edges incident with each vertex $v$. These edges are
sufficient to perform $\lceil \log \sqrt N_i \rceil$ phases as
proved in \cite{ArBrTo04,CHIL03, MR99}. Using this subset of edges,
each superphase is performed in $O(\sort{E_i})=O(\sort{E})$ I/Os.

Combining the above with the modified Prim's algorithm,
\cite{ArBrTo04} presents an $O(\sort{E}$ $\log \log (VB/E))$ minimum spanning tree algorithm.

Our algorithm changes the scheduling in the initial part of the
algorithm of Arge et\ al.\ \cite{ArBrTo04}. Some of ideas
from Chong et\ al.\ \cite{CHIL03} too have been used.

\section{Our Algorithm}
\label{mst:algo}
The structure of our MSF algorithm is similar to that of
\cite{ArBrTo04}. The first part of the algorithm reduces the number
of vertices from $V$ to $E/B$, while the second part applies the I/O
efficient version of Prim's algorithm (see the previous section
for a description) to the resultant graph in
$O((E/B) + \sort{E})= O(\sort{E})$ I/Os.

The first part of our algorithm differs from that of \cite{ArBrTo04},
but the second part of the two algorithms are identical.
While we also make use of $\log (VB/E)$ phases,
we schedule them quite differently, thereby achieving a different
I/O complexity that is an improvement over the one of \cite{ArBrTo04}
for most values of $V$, $E$ and $B$.
The rest of this section describes our scheduling of the
phases, each of which reduces the number of vertices by a factor of at
least two, for an overall reduction by a factor of at least $VB/E$:
from $V$ to at most $E/B$.

We assume that the edge weights are unique. This assumption is without
loss of generality, because if the edge weights are not unique, then
tagging each edge weight with the label of the corresponding edge will
do to make all weights unique. One scan of the edgelist is sufficient for
such a tagging.
 
We assume that $E>V$ in the input graph. This assumption is
without loss of generality, because if $E<V$ in $G$, we could add
a vertex to the graph and make it adjacent to every other vertex
with the weights of all the new edges set to $\infty$. This can be
done in $O(\scan{V})$ I/Os. An MST $T$ of the resultant graph
$G'=(V',E')$ can be converted into an MSF for the original graph
by deleting all the new edges in $T$. Since $V'=V+1$ and $E'=E+V$,
we have that $E'>V'$, if $E>1$.

If $E<V$ in $G$, alternatively, we could run Bor\r{u}vka phases
until the number of edges in the resultant graph becomes zero or
exceeds the number of vertices. (The latter is possible if
isolated supervertices are removed from the graph as soon as they
form.) The first of those phases would run in $O(\sort{V})$ I/Os.
(See \cite{ArBrTo04}. Also see the discussion below on the first two
steps of a phase (Hook and Contract) of our algorithm.) The I/O
cost of the subsequent
    phases will fall geometrically, as the number of vertices would at least
halve in each phase. Thus, the total I/O cost too will be
$O(\sort{V})$.

As mentioned before, our algorithm executes $\log(VB/E)$
Bor\r{u}vka phases to reduce the number of vertices from $V$ to $E/B$.
These phases are executed in a number of stages.
The $j$-th stage, $j\geq 0$, executes $2^j \log(E/V)$ Bor\r{u}vka phases.
(Note that $\log(E/V)>0$.)

Therefore, the number of Bor\r{u}vka phases executed prior to the
$j$-th stage is $(2^j-1)\log(E/V)$, and the number of
supervertices at the beginning of the $j$-th stage is at most
$V/2^{(2^j-1)\log(E/V)}=E/(E/V)^{2^j}$.  Thus, $\log \log_{E/V} B$
stages are required to reduce the number of vertices to $E/B$. We
shall show that each stage can be executed in $O(\sort{E})$ I/Os.
Thus, the algorithm would compute the MSF in $O(\sort{E} \log \log_{E/V} B)$ I/Os.

For $j>0$, the $j$-th stage takes as input the graph $G_j=(V_j,E_j)$ output by the
previous stage. $G_{0}=(V_{0},E_{0})$, the input to the $0$-th stage, is the input graph
$G=(V,E)$. Let $g(j)=\log \log (E/V)^{2^j}$.
Thus the $j$-th stage has $2^{g(j)}$ phases.
From $E_j$ we construct $g(j)+2$ buckets that are numbered from $0$ to $g(j)+1$.
Each bucket is a set of edges, and is maintained as a sorted array with
the composite $\langle \mbox{source vertex}, \mbox{edgeweight} \rangle$ as the sort key.
In bucket $B_k$, $0 \leq k \leq
g(j)$, we store, for each vertex $v \in V_j$, the $(2^{2^k}-1)$ lightest
edges incident with $v$. Clearly, $B_{k}$ is a subset of $B_{k+1}$.
Bucket $B_{g(j)+1}$ holds all the edges of $G_j$.

For $0 \leq k \leq g(j)$,
Bucket $B_k$
is of size $\leq V_j (2^{2^k} - 1)$.
Since many of the vertices might be of degrees smaller than $(2^{2^k} - 1)$, the
actual size of the bucket could be much smaller than
$V_j (2^{2^k} - 1)$; but this is a gross upper bound.
The total space used by all the buckets of the $j$-th stage is, therefore, at most
\[\left(\sum^{g(j)}_{k=0} |B_k| \right) +
    |B_{g(j)+ 1}|
\leq \left(\sum^{g(j)}_{k=0} V_j (2^{2^k} - 1)\right) + O(E_j)\]
\[ \
= O(V_j (E/V)^{2^j}) + O(E_j) = O(E)
\]
because the set of supervertices $V_j$ number at most
$E/(E/V)^{2^j}$, as argued earlier.

Bucket $B_{g(j)+1}$ can be formed by sorting the list of edges on
the composite key $\langle \mbox{source vertex}, \mbox{edgeweight}
\rangle$. This requires $O(\sort{E_j})$ I/Os. (Note that for each
edge $\{u,v\}$, the list contains two entries $(u,v)$ and $(v,u)$
sourced at $u$ and $v$ respectively.) Next we form buckets
$B_{g(j)},\ldots,B_{0}$ in that order. For $g(j)\geq k\geq 0$,
bucket $B_{k}$ can be formed by scanning $B_{k+1}$ and choosing
for each vertex $v \in V_j$, the $(2^{2^k}-1)$ lightest
edges incident with $v$. Clearly, this involves scanning each
bucket twice, once for write and once for read. We do not attempt
to align bucket and block boundaries. As soon as the last record
of bucket $B_{k+1}$ is written, we start the writing of $B_k$; but
this requires us to start reading from the beginning of $B_{k+1}$;
if we retain a copy of the block that contains the beginning of
$B_{k+1}$ in the main memory, we can do this without performing an
additional I/O; thus the total I/O cost is proportional to the
total space (in blocks) needed for all the buckets of the $j$-th
stage, which is $O(E_{j}/B)$. By choosing not to align bucket and
block boundaries, we save the one-per-bucket overhead on I/Os we
would have incurred otherwise.

In the $j$-th stage, after constructing the buckets from $E_j$, the algorithm
performs $2^j \log (E/V)$ Bor\r{u}vka phases. These phases in our algorithm include,
in addition to the hook and contract operations, the clean up of
an appropriate bucket and some bookkeeping operations.
For $2\leq i \leq 2^j \log (E/V)$,
let $G_{j,i}=(V_{j,i},E_{j,i})$ be the graph
output by the $(i-1)$-st phase; this is the input
for the $i$-th phase.
Let $G_{j,1}=(V_{j,1},E_{j,1})= G_j = (V_j, E_j)$ be the input for the first phase.

A brief description of the $j$-th stage follows.
A detailed description is given later.

\vspace{0.2in}
\hrule

\begin{itemize}
\item Construct buckets $B_{g(j)+1}\ldots B_{0}$ as described earlier.

\item for $i=1$ to $2^j \log (E/V) -1 $,
\begin{enumerate}
\item (Hook) For each vertex $u \in V_{j,i}$, if $B_{0}$ contains an edge
incident with $u$
(there can be at most one, because $2^{2^{0}}-1=1$),
select that edge. Let $S$ be the set of the selected edges.

\item (Contract) Compute the connected components in
the graph $(V_{j,i}, S)$ and select one representative vertex for each
connected component. These representatives form the set $V_{j,i+1}$.
Construct a star graph for each connected component, with the representative
as the root, and the other vertices of the component pointing to it.

\item (Cleanup one bucket) Let $f(i)=1+$ the number of trailing $0$'s in the
binary representation of $i$. Clean up $B_{f(i)}$ of internal and multiple edges.

\item (Fill some buckets)
For $k = f(i)-1$ to $0$, for each supervertex $v$, store in bucket $B_k$
some of the lightest        edges ($(2^{2^k}-1)$ of them, if possible)
incident with $v$.

\item (Store the stars)
Delete all the edges from bucket $B_{f(i)}$ and instead store in
it the star graphs computed from (i) the star graphs obtained in step
2 of the present phase and (ii) the star graphs available in
buckets of lower indices.
\end{enumerate}

\item Perform the last ($2^{g(j)}$-th) Bor\r{u}vka phase of the stage.
In this phase, Bucket $B_{g(j)+1}$ is processed.
But this bucket contains all the edges in the graph,
and is therefore treated differently from the other buckets.
Using the star graphs available from
the previous phases, we clean up $B_{g(j)+1}$ of
internal and multiple edges. This leaves us with
a clean graph
$G_{j+1} = (V_{j+1}, E_{j+1})$
with which to begin the next stage.
\end{itemize}
\vspace{0.2in}
\hrule
\vspace{0.2in}

A discussion on the structure of a stage is now in order: The
buckets are repeatedly filled and emptied over the phases.
We call a bucket ``full'' when it contains edges, and
``empty'' when it contains star graphs, that is, information
regarding the clusterings of the past phases. (We use the word
``clustering'' to denote the hook and contract operations
performed during the course of the algorithm.) Let $f(i)=1+$ the
number of trailing $0$'s in the binary representation of $i$. The
$i$-th phase of the $j$-th stage, $1\leq i \leq 2^j \log(E/V)-1$,
(i)~uses up the edges in bucket $B_0$ for hooking, thereby
emptying $B_0$, (ii)~fills $B_{k}$, for all $k<f(i)$, from
$B_{f(i)}$,  and finally (iii)~empties $B_{f(i)}$. A simple
induction, therefore, shows that
\begin{quote}
for $i\geq 1$ and $k\geq 1$, bucket $B_{k}$ is full at the end of the $i$-th phase,
if and only if the $k$-th bit
(with the least significant bit counted as the first)
in the binary representation of $i$ is $0$; $0$ represents ``full'' and
$1$ represents ``empty''.
\end{quote}
The first phase forms the basis of the induction.
Prior to the first phase, at the end of a hypothetical $0$-th phase, all
buckets are full. The first phase uses up $B_0$, fills it from $B_1$ and
then empties $B_1$.
Emptying of bucket $B_{f(i)}$ and filling of all buckets of smaller indices
is analogous to summing $1$ and $(i-1)$ using binary representations;
this forms the induction step.

Definition: For $0\leq k\leq g(j)$, a $k$-interval is an interval $[i,i+2^{k})$
on the real line, where $i$ is a multiple of $2^{k}$. For $i\geq 1$,
the $f(i)$-interval $[i-2^{f(i)-1},i+2^{f(i)-1})$ is called
``the interval of $i$'' and is denoted by $I(i)$. Note that $i$
is the midpoint of $I(i)$. If we map the $i$-th phase to the interval $[i-1,i)$,
then $I(i)$ corresponds to phases numbered $i-2^{f(i)-1}+1,\ldots,i+2^{f(i)-1}$.

The binary representation of $i$ has exactly $f(i)-1$ trailing $0$'s.
Therefore, $f(i-2^{f(i)-1})>f(i)$.
In the $(i-2^{f(i)-1})$-th phase, we empty $B_k$ for some $k>f(i)$,
and in particular, fill $B_{f(i)}$.
All numbers between $(i-2^{f(i)-1})$ and $i$ have less
than $f(i)-1$ trailing $0$'s.
That means $B_{f(i)}$ is filled in the $(i-2^{f(i)-1})$-th phase,
and is never accessed again until the $i$-th phase, when it
is cleaned and emptied.
All numbers between $i$ and $(i+2^{f(i)-1})$ have less
than $f(i)-1$ trailing $0$'s.
Also, $f(i+2^{f(i)-1})>f(i)$.
So, in the $(i+2^{f(i)-1})$-th phase, we empty $B_k$ for some $k>f(i)$,
and in particular, fill $B_{f(i)}$ with edges taken
from it. To summarise, $B_{f(i)}$ is filled in the
$(i-2^{f(i)-1})$-th phase, emptied in the $i$-th phase and
filled again in the $(i+2^{f(i)-1})$-th phase.
That is, for $1\leq i\leq 2^{g(j)}-1$, $I(i)$ is the interval
between two successive fillings of $B_{f(i)}$.

In other words, for $1\leq k\leq g(j)$, $B_{k}$ is alternately
filled and emptied at intervals of $2^{k-1}$ phases.
In particular, $B_1$ is filled and emptied in alternate
phases. $B_0$ fulfills the role of a buffer that holds the lightest
edge incident with every vertex that is not overgrown.
It is filled in every step.

Now we discuss the five steps of a phase in detail.

{\bf Step 1:}
(Hook) For each vertex $u \in V_{j,i}$, if $B_{0}$ contains an edge
incident with $u$
(there can be at most one, because $2^{2^{0}}-1=1$),
select that edge. Let $S$ be the set of the selected edges.

{\bf Remarks on Step 1:} For each (super)vertex $u \in V_{j,i}$,
as we shall show, if $u$ is not sufficiently grown for phase $i$
in stage $j$ (that is, $u$ has not grown to a size of
$(E/V)^{2^{j-1}}\cdot 2^i$), then $B_{0}$ contains
an edge incident with $u$ in phase $i$, and this
is the lightest external edge incident with
$u$.

{\bf Step 2:}
(Contract) Compute the connected components in
the graph $H=(V_{j,i}, S)$ and select one representative vertex for each
connected component. These representatives form the set $V_{j,i+1}$.
Construct a star graph for each connected component, with the representative
as the root, and the other vertices of the component pointing to it.

{\bf Remarks on Step 2:}
Each $u \in V_{j,i}$ has exactly one outgoing edge in
$H=(V_{j,i}, S)$. Also, for any two consecutive edges
$(u,v)$ and $(v,w)$ in $H$, $\mbox{wt}(u,v)\geq\mbox{wt}(v,w)$.
Therefore, any cycle in $H$ must have a size of two. Moreover,
each component of $H$ is a pseudo tree, a connected directed graph
with exactly one cycle.

Make a copy $S'$ of $S$ and sort it on destination.
Recall that $S$ is sorted on source, and that
for each $u\in V_{j,i}$, there is at most one edge in $S$
with $u$ as the source. For $(u,v)\in S$, let us call $v$ the
parent $p(u)$ of $u$. Read $S$ and $S'$ concurrently.
For each $(v,w)\in S$ and each $(u,v)\in S'$, add $(u,w)$
into $S''$. For each $u$ such that $(u,u)\in S''$ and
$u<p(u)$, delete $(u,p(u))$ from $H$ and
    mark $u$ as a root in $H$.
Now $H$ is a forest.

Let $H'$ be the underlying undirected tree of $H$. Form an array
$T$ as follows: For $i\geq 1$, if the $i$-th element of array $S$
is $(u,v)$, then add $(u,v,2i+1)$ and $(v,u,2i)$ to $T$ as its
$(2i)$-th and $(2i+1)$-st elements. Clearly, $T$ is the edge list
of $H'$. Also an edge $(u,v)$ and its twin $(v,u)$ point to each
other in $T$. Sort $T$ on source and destination without violating
the twin pointers. We have obtained an adjacency list
representation of $H'$ with twin pointers which can be used in
computing an Euler tour.

Find an Euler tour $U$ of $H'$ by simulating the $O(1)$ time
EREW PRAM algorithm \cite{Ja92} on the external memory model \cite{ChGoGr+95}.
For each root node $r$ of $H$, delete
from $U$ the first edge with $r$ as the source. Now $U$ is a collection
of disjoint linked lists. For each element $(u,r)$ without a successor in $U$,
set $\mbox{rank}(u,r)=r$, and for every other element $(u,v)$, set
$\mbox{rank}(u,v)=0$. Now invoke list ranking on $U$, but with addition
of ranks replaced by bitwise OR. The result gives us the connected components
of $U$, and therefore of $H$. Each edge and therefore each vertex of $H$
now holds a pointer to the root of the tree to which it belongs.

Each connected component of $H$ forms a supervertex. Its root
shall be its representative. Thus, we have a star graph for
each supervertex.

The purpose of steps 3, 4 and 5 is to make sure that for each supervertex $v$ in
the remaining graph, if $v$ is not sufficiently grown, the
lightest external edge of $v$ is included
in bucket $B_{0}$ before the next phase.

As we mentioned earlier, bucket $B_{f(i)}$ is filled in the $(i-2^{f(i)-1})$-th phase,
and is never accessed again until the $i$-th phase.
To clean up $B_{f(i)}$, we need information regarding
all the clustering that
the algorithm has performed since then.
The following lemma is helpful:

\begin{lemma}
\label{step3} For every $i$, $1 \leq i < 2^{g(j)}$, and every $l$
such that $i\leq l< i+2^{f(i)-1}$ (that is, $l$ in the second half
of $I(i)$), at the end of the $l$-th phase, the clustering
information obtained from phases numbered
$i-2^{f(i)-1}+1,\ldots,i$ (that is, phases in the first half of
$I(i)$) is available in bucket $B_{f(i)}$.
\end{lemma}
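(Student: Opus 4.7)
My plan is to prove the statement by strong induction on the phase number $i$, splitting the claim into two sub-claims: first, that immediately after step 5 of phase $i$ the bucket $B_{f(i)}$ holds the clustering information from phases $\{i-2^{f(i)-1}+1,\dots,i\}$; and second, that this information is left untouched throughout every subsequent phase $l$ with $i<l<i+2^{f(i)-1}$. The base case $i=1$ is immediate, since $f(1)=1$, there are no nonempty lower-indexed buckets to contribute in step 5, and $B_1$ is simply loaded with the clustering produced in step 2 of phase $1$.

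For the first sub-claim in the induction step, set $m=f(i)$. Since $i$ is a multiple of $2^{m-1}$, for each $k$ with $1\le k<m$ the most recent phase prior to $i$ whose $f$-value equals $k$ is $i_k=i-2^{k-1}$, and no phase in the open range $(i_k,i)$ has $f$-value at least $k$. Hence $B_k$ is neither cleaned out nor refilled with edges between phase $i_k$ and phase $i$, so the star graphs sitting in $B_k$ at the start of phase $i$ are exactly those deposited at the end of phase $i_k$. Applying the induction hypothesis to the pair $(i',l')=(i_k,i_k)$ identifies those star graphs with the clustering from the block $\{i-2^{k}+1,\dots,i-2^{k-1}\}$. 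These blocks, taken over $k=1,\ldots,m-1$, telescope to $\{i-2^{m-1}+1,\dots,i-1\}$; merging them in step 5 with the clustering produced in step 2 of phase $i$ yields exactly the desired range.

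For the second sub-claim I would note that no multiple of $2^{m-1}$ lies strictly between $i$ and $i+2^{m-1}$, so every phase $l$ in that open interval has $f(l)<m$. Since steps 3, 4 and 5 of such a phase act only on $B_{f(l)}$ and on lower-indexed buckets, $B_m$ is untouched, and its contents at the end of phase $l$ coincide with those installed at the end of phase $i$, which closes the induction.

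The main subtlety I anticipate is the bookkeeping that justifies the phrase ``star graphs available in buckets of lower indices'' used in step 5: the star graphs surviving from earlier phases are those residing in $B_1,\ldots,B_{m-1}$ at the start of phase $i$, and one must argue they are still available for the combination in step 5 even though step 4 of the same phase is simultaneously refilling those buckets with fresh edges. I would handle this by reading off the star graphs before (or in parallel with) the edge refill in step 4, at a cost already absorbed in the $O(E_j/B)$ I/Os charged for bucket manipulation in the stage. Once this implementation point is granted, the remainder of the argument is a direct manipulation of the ``binary counter'' scheduling of the phases established earlier in the section.
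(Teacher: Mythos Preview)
Your proof is correct and follows essentially the same strong-induction argument as the paper: for each $k<f(i)$ identify the phase $i-2^{k-1}$ that last populated $B_k$ with star graphs, invoke the hypothesis there, telescope the resulting blocks together with the step-2 stars of phase $i$, and then observe that no phase strictly between $i$ and $i+2^{f(i)-1}$ has $f$-value $\ge f(i)$ so $B_{f(i)}$ stays untouched. The subtlety you flag is handled in the paper exactly as you propose---the combined forest $F'$ is assembled in step~3, before step~4 refills the lower buckets with edges, and step~5 merely stores $F'$.
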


{\bf Proof:} This can be proved by induction as follows.
The case for $i=1$ forms the basis: $f(1)=1$.
$1\leq l< 1+2^0=2$
implies that $l=1$. The first phase uses up $B_0$, fills it from $B_1$,
empties $B_1$ and then stores the newly found star graphs in
$B_{1}$. At the end of the first phase, therefore,
the clustering information obtained from
the first phase is indeed available in $B_{1}$. Hence the basis.

Hypothesise that for every $i$ ($1\leq i\leq p-1$) and every $l$ such that
$i\leq l<i+2^{f(i)-1}$,
at the end of the $l$-th phase,
the clustering information obtained from
$i-2^{f(i)-1}+1$ through $i$ is available in bucket $B_{f(i)}$.

The $p$-the phase provides the induction step.

If $f(p)=1$ (and therefore $p$ is odd), then
$p\leq l<p+2^0=p+1$
implies that $l=p$. The $p$-th phase uses up $B_0$, fills it from $B_1$,
empties $B_1$ and then stores the newly found star graphs in
$B_{1}$. At the end of the $p$-th phase, therefore,
the clustering information obtained from
the $p$-th phase is indeed available in $B_{1}$.

Suppose $f(p)>1$ (and therefore $p$ is even).
For $k$ with $1\leq k<f(p)$, let $r=p-2^{k-1}$.
Then $f(r)=k$. Since $r<p$, the hypothesis applies to $r$.
Also, $p-2^{k-1}=r\leq p-1<r+2^{f(r)-1}=p$. 
Therefore,
at the end of the $(p-1)$-st phase,
the clustering information obtained from
phases
$p-2^k+1$ through $p-2^{k-1}$
is available in bucket $B_{k}$, for $f(p)-1\geq k \geq 1$.
That is, all the clustering done since the last time $B_{f(p)}$ was filled,
which was in the $(p-2^{f(p)-1})$-th phase, till the $(p-1)$-st phase
is summarised in buckets $B_{f(p)-1}$ through $B_{1}$. This, along with
the star graphs formed in the $p$-th phase, is all the information that is needed
to clean up $B_{f(p)}$ in the $p$-th phase.

In the $p$-th phase $B_{f(p)}$ is cleaned up and the resultant star graphs are stored in it.
These summarise all the clustering done in phases numbered
$(p-2^{f(p)-1}+1)$ through $p$. After the $p$-th phase, $B_{f(p)}$ is accessed
only in the $(p+2^{f(p)-1})$-th phase.
Hence the induction holds.
\hfill $\Box$

We now introduce the notion of thresholds.
At the beginning of a stage, the threshold value
$h_k(v)$ for supervertex $v$ and bucket $B_{k}$ is defined as follows:
For a supervertex $v$ with at least $2^{2^k}$ external edges
let $e_{k}(v)$ denote the $2^{2^k}$-th lightest external edge of $v$.
Then,
\[h_k(v) =  \left\{
              \begin{array}{ll}
                 \mbox{wt}(e_{k}(v)) &
                              \mbox{~~if $v$ has at least $2^{2^k}$ external edges}\\
                 \infty & \mbox{~~otherwise}
              \end{array}
            \right. \]
Note that $h_{k}(v)$, when it is finite, is the weight of the lightest external edge of $v$
not in $B_k$; $h_{k}(v)=\infty$ signifies that every external edge of
$v$ is in $B_k$.
We store $h_k(v)$ at the end of $v$'s edgelist in
$B_k$. Note that the adding of the threshold values to the end of
the lists does not cause an asymptotic increase in the I/O complexity
of bucket formation.

Definition: A supervertex $v$ is $i$-maximal
if none of the supervertices formed in or before the $i$-th phase
is a proper superset of $v$.

Note that each supervertex at the beginning of stage $j$ is
$0$-maximal.

Definition: For a $t$-maximal supervertex $v$
(where $t\in I(i)$ and $1\leq i\leq 2^{g(j)}-1$)
and an $(i-2^{f(i)-1})$-maximal supervertex $x$,
$x$ is called an {\em $(i-2^{f(i)-1})$-seed} of $v$ if $x$ has the smallest
$h_{f(i)}$ threshold among all the
$(i-2^{f(i)-1})$-maximal supervertices that participate in $v$.
(We say that $x$ participates in $v$ when $x\subseteq v$.)

We use $r_{i}(v)$ to denote $h_{f(i)}(x)$, where $x$ is an $(i-2^{f(i)-1})$-seed of $v$.

{\bf Step 3:} Let $f(i)=1+$ the number of trailing $0$'s in the
binary representation of $i$.
Clean the edge lists of bucket $B_{f(i)}$ of
internal and multiple edges.
Let set $X_i=\phi$.
For each $i$-maximal supervertex $v$,
copy from $B_{f(i)}$ into $X_{i}$ all the external edges of $v$
with weight less than $r_{i}(v)$.

\noindent
{\bf Remark:} The cleanup operation makes use of all the clustering done since
the last time bucket $B_{f(i)}$ was clean, which was at the end of the
$(i-2^{f(i)-1})$-th phase. All the necessary information regarding past clustering
is available in buckets of smaller indices.

As described earlier in Lemma~\ref{step3}, if $f(i) > 1$
    bucket $B_{f(i)-1}$ stores the star graphs formed by MSF edges
    discovered in phases $(i - 2^{f(i)-1}) + 1$ through $(i - 2^{f(i) - 2})$.
Similarly if $f(i) > 2$, bucket $B_{f(i)-2}$ stores the
    star graphs formed in phases $(i - 2^{f(i)-2}) + 1$ through
    $(i - 2^{f(i)-3})$.
The leaves of the stars in $B_{f(i)-2}$ are the
roots of the stars in $B_{f(i)-1}$.
The leaves of the stars in $B_{f(i)-3}$ are the
roots of the stars in $B_{f(i)-2}$.
Continue like this and we have that bucket $B_1$ stores the star graphs formed
    in phase $(i-1)$ and the leaves of the stars formed in phase $i$
    are the roots of the stars in $B_1$.
Thus, the stars in $B_{f(i)-1}, \ldots, B_{1}$
along with the stars from the MSF edges computed in Step $2$ of the $i$-th phase,
form a forest $F$.

Reduce each connected component of forest $F$ to a star graph.
Let $F'$ denote the resultant forest.
This is done similarly to the contraction of Step 2.
Suppose $F'$ is expressed as a list of edges $(u,u_r)$, where $u_r$ is the
representative of the supervertex in which $u$ participates.

Rename the edges of $B_{f(i)}$ as follows: Sort $F'$ on source. Sort $B_{f(i)}$ on source.
Read $F'$ and $B_{f(i)}$ concurrently. Replace each $(u,v)\in B_{f(i)}$
with $(u_r,v)$. Sort $B_{f(i)}$ on destination.
Read $F'$ and $B_{f(i)}$ concurrently. Replace each $(u,v)\in B_{f(i)}$
with $(u,v_r)$. Sort $B_{f(i)}$ on the composite $\langle$source, destination, weight$\rangle$.

Remove from $B_{f(i)}$ edges of the form $(u,u)$; these are internal edges.
Also, for each $(u,v)$, if $B_{f(i)}$ has multiple copies of $(u,v)$,
then retain in $B_{f(i)}$ only the copy with the smallest weight.
This completes the clean up of $B_{f(i)}$. This way of clean-up
is well known in literature. (See \cite{ArBrTo04}, for example.)

The stars in $F'$ summarise all the clustering the algorithm performed
    in phases $(i - 2^{f(i) - 1} + 1)$ through $i$. 

Definition: We say that a set $P$ of edges is a minset of supervertex $v$,
if for any two external edges $e_1$ and $e_2$ of $v$ in $G$ with
$\mbox{wt}(e_1)<\mbox{wt}(e_2)$,  $e_{2}\in P$ implies that $e_{1}\in P$.

We now begin an inductive argument, which will be closed out in the remark after Step 4.

Inductively assume that at the end of the
$(i-2^{f(i)-1})$-th phase,
for each $(i-2^{f(i)-1})$-maximal supervertex $x$,
$B_{f(i)}$ is a minset
and the threshold $h_{f(i)}(x)$
is a lower bound on the weight of the lightest external edge of $x$ not in $B_{f(i)}$.
We claim that for each $i$-maximal supervertex $v$,
$X_i$ forms a minset.

We prove the claim by contradiction. Suppose $X_{i}$ is not a
minset of $v$. Then, among the edges of $G$ there must exist an
external edge $e$ of $v$ such that $\mbox{wt}(e)<r_{i}(v)$ and
$e\not\in X_{i}$. Say $e$ is an external edge of an
$(i-2^{f(i)-1})$-maximal supervertex $y$ that participates in $v$.
Clearly, $\mbox{wt}(e)<r_{i}(v)\leq h_{f(i)}(y)$. Of the external
edges of $y$ in $B_{f(i)}$, exactly those of weight less than
$r_{i}(v)$ are copied into $X_i$. Therefore, if $e\not\in X_{i}$,
then $e\not\in B_{f(i)}$. But then $B_{f(i)}$ is a minset of $y$.
That is, $h_{f(i)}(y)<\mbox{wt}(e)$.  Contradiction. Therefore,
such an $e$ does not exist.

Note that
the threshold $r_{i}(v)$
is a lower bound on the weight of the lightest external edge of $v$ not in $X_{i}$.

{\bf Step 4:} for $k = f(i)-1$ to $0$, and for each supervertex $v$,

- if $B_{f(i)}$ has at least $2^{2^k}$ edges with $v$ as the source, 
        then
               copy into $B_k$ the $(2^{2^k}-1)$ lightest of them,
               and set $h_{k}(v)$ to the weight of the $2^{2^k}$-th lightest of them;

-        else
               copy into $B_k$ all the edges in $B_{f(i)}$ with $v$ as the source,
               and set $h_{k}(v)$ to $r_{i}(v)$.

\noindent
{\bf Remark:}
Clearly, for each $i$-maximal supervertex $v$,
each $B_{k}$ ($k<f(i)$) formed in this step is a
minset.
Also,
the threshold $h_{k}(v)$ is a lower bound on
the weight of the lightest external edge of $v$ not in $B_{k}$.
When we note that at the beginning
of a stage, for each vertex $v$, and for each bucket $B_{k}$ that forms,
$B_{k}$ is a minset of $v$, and
$h_{k}(v)$ is the weight of the lightest external edge of $v$ not in $B_{k}$,
the induction we started in the remarks after Step 3 closes out.
Thus, we have the following lemma.

\begin{lemma}
Each freshly filled bucket is
a minset for each supervertex that is maximal at the time
of the filling.
Also, for each maximal supervertex $v$,
the threshold $h_{k}(v)$
is a lower bound on the weight of the lightest external edge of $v$ not in $B_{k}$.
\end{lemma}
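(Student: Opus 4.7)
The plan is to prove this lemma by induction on the sequence of bucket fillings that occur during a stage. The base case handles the initial filling at the beginning of the stage (described in the bucket-construction step before the phase loop), and the inductive step covers each refilling that occurs inside Steps 3--5 of a phase. The two claims (minset property and threshold lower-bound property) will be maintained together as a joint invariant, since the proof of each relies on the other.

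For the base case, I would argue directly from the way buckets are built at the beginning of the stage: bucket $B_{g(j)+1}$ contains all edges of $G_{j}$, and for $k \le g(j)$, $B_{k}$ is filled by scanning $B_{k+1}$ and taking, for each vertex $v$, the $(2^{2^{k}}-1)$ lightest incident edges. Since at this point every vertex of $V_{j}$ is $0$-maximal (no contractions have happened yet in the stage), these edges are precisely the lightest external edges, so $B_{k}$ is a minset by construction. The threshold $h_{k}(v)$ is set to $\mathrm{wt}(e_{k}(v))$ (or $\infty$), which by definition equals the weight of the lightest external edge of $v$ not in $B_{k}$, giving the lower-bound property trivially with equality.

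For the inductive step, I would combine the remarks already made after Steps 3 and 4. First, assume inductively that when $B_{f(i)}$ was last filled (in the $(i - 2^{f(i)-1})$-th phase, by Lemma~\ref{step3}), the joint invariant held with respect to the $(i-2^{f(i)-1})$-maximal supervertices. Using this, the contradiction argument in the remark after Step 3 shows that the set $X_{i}$ of edges copied out of the cleaned $B_{f(i)}$ is a minset for every $i$-maximal supervertex $v$, and that $r_{i}(v)$ is a valid lower bound on the weight of any external edge of $v$ not in $X_{i}$. Then for each $k < f(i)$, Step 4 fills $B_{k}$ by taking from $B_{f(i)}$ either the $(2^{2^{k}}-1)$ lightest edges incident with $v$ (setting $h_{k}(v)$ to the next one), or, if $B_{f(i)}$ is exhausted for $v$, all such edges (setting $h_{k}(v)$ to $r_{i}(v)$). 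In the first case the minset property and threshold equality hold directly; in the second case they are inherited from the minset/threshold property of $X_{i}$ established in Step 3. This closes the induction and establishes the lemma.

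The main obstacle, and what the argument really rests on, is verifying the inductive step for Step 3: showing that even after $2^{f(i)-1}$ phases have elapsed since $B_{f(i)}$ was last refreshed, the cleanup using the composite star forest $F'$ (assembled from buckets $B_{f(i)-1}, \ldots, B_{1}$ together with the stars from Step 2) correctly reinterprets all edges in $B_{f(i)}$ in terms of the current $i$-maximal supervertices, so that the minset property is preserved when translated from $(i-2^{f(i)-1})$-maximal supervertices to $i$-maximal ones. This in turn depends on Lemma~\ref{step3}, whose role is precisely to guarantee that all the clustering information needed for this translation is present in the lower-indexed buckets at the moment Step 3 is executed.
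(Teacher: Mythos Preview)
Your proposal is correct and follows essentially the same approach as the paper: the lemma is obtained by closing out the induction that the paper sets up in the remarks after Step~3 (the contradiction argument showing $X_i$ is a minset with threshold $r_i(v)$) and Step~4 (propagating these properties to each freshly filled $B_k$), with the base case supplied by the initial bucket construction at the start of the stage. Your identification of Lemma~\ref{step3} as the ingredient that makes the cleanup in Step~3 sound is exactly how the paper uses it.
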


\begin{cor}
At the beginning of a phase,
for each maximal supervertex $v$,
if $B_{0}$ contains an external edge $e$
of $v$, then $e$
is the lightest       external edge of $v$ in $G$.
\end{cor}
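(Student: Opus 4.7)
The plan is to derive this directly from the preceding lemma together with the capacity restriction on $B_0$. Let the current phase be phase $i$, so that a supervertex $v$ that is maximal at the beginning of phase $i$ is $(i-1)$-maximal. Observe that $B_0$ is refilled in Step 4 of every phase, so the contents of $B_0$ at the start of phase $i$ are exactly those laid down in Step 4 of phase $i-1$ (or, if $i=1$, during the initial bucket construction of the stage). At that moment of filling, the supervertex $v$ was already $(i-1)$-maximal, so $v$ counts as maximal at the time of the filling, and the lemma applies: $B_0$ is a minset for $v$.

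Next, I would observe the key arithmetic fact $2^{2^{0}}-1 = 1$: by the construction of $B_k$ in Step 4 and in the initial bucket formation, at most one edge incident with any given supervertex is stored in $B_0$. So if $B_0$ contains an external edge $e$ of $v$, then $e$ is the \emph{only} edge incident with $v$ in $B_0$.

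The conclusion then follows by contradiction. Suppose $e$ is not the lightest external edge of $v$ in $G$. Then there exists another external edge $e'$ of $v$ with $\mathrm{wt}(e') < \mathrm{wt}(e)$. The minset property applied to the pair $(e',e)$ (using $e_1=e'$ and $e_2=e$) forces $e' \in B_0$. But then $B_0$ contains two distinct edges incident with $v$, contradicting the cardinality bound above. Hence $e$ must be the lightest external edge of $v$ in $G$.

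I do not expect a serious obstacle here: the entire content of the corollary is a short deduction from the minset property of $B_0$ and the fact that $B_0$ holds at most one edge per supervertex. The only point that needs mild care is aligning the notion of ``maximal at the beginning of the phase'' with ``maximal at the time of filling of $B_0$'', which is immediate because $B_0$ is refilled every phase and no hook/contract occurs between the refilling of $B_0$ in phase $i-1$ and the start of phase $i$.
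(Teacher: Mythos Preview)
Your proof is correct and follows essentially the same approach as the paper, which states this as an immediate corollary of the preceding lemma without further argument. Your unpacking of the minset property together with the observation that $2^{2^{0}}-1=1$ (so $B_0$ holds at most one edge per supervertex) is exactly the intended deduction.
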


{\bf Step 5:}
Delete all the edges from bucket $B_{f(i)}$ and instead store in it $F'$, the set of star
graphs formed during the clean up.
$F'$ is maintained as an edgelist.

\subsection{Correctness of the Algorithm}
Corollary 1 above proves the correctness partly. It now remains to show that
every supervertex that is not ``overgrown'', will find its lightest external edge
in $B_{0}$ at the beginning of the next phase.

For each supervertex $v$ that has not grown to its full size
and bucket $B_k$, we define
the guarantee $Z_{k}(v)$ given by $B_{k}$ on $v$ as follows:
At the beginning of the $j$-th stage, if the component in $G_{j}$ that
contains $v$ has $s_v$ vertices, then
\[Z_k(v) =  \left\{
              \begin{array}{ll}
                 2^{k} &
                              \mbox{~~if $v$ has at least $2^{2^k}-1$ external edges}\\
                 \log s_v & \mbox{~~otherwise}
              \end{array}
            \right. \]

\noindent
For $i\geq 1$, at the end of the $i$-th phase, for each $i$-maximal supervertex $v$,
and for some $(i-2^{f(i)-1})$-seed $x$ of $v$, we set $Z_{f(i)}(v)=Z_{f(i)}(x)$.
For each $k<f(i)$, if $B_{f(i)}$ has at least $(2^{2^k}-1)$ edges with $v$ as the source,
then we set $Z_{k}(v)=i+2^{k}$; else we set $Z_{k}(v)=Z_{f(i)}(v)$.

\begin{lemma}
For $i\geq 1$, at the end of the $i$-th phase,
for each $k < f(i)$ and each $i$-maximal supervertex $v$ that has not grown to
its full size,
$Z_k(v) \geq i + 2^k$.
\end{lemma}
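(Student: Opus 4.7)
The plan is to use strong induction on the phase number $i$. The base case $i = 1$ is routine: $f(1) = 1$ forces $k = 0$, and both branches of the definition of $Z_0(v)$ yield at least $2$ (either $Z_0(v) = i + 1 = 2$ directly in Case A, or $Z_0(v) = Z_1(x)$ for the $0$-seed $x$, in which case the initial guarantees give $Z_1(x) \geq 2$, because $v$ not being at full size after phase $1$ forces $|v| \geq 2$ and so $s_v \geq 3$).

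For the inductive step I split on the two clauses in the definition of $Z_k(v)$. In Case A, $B_{f(i)}$ already contains at least $2^{2^k} - 1$ edges sourced at $v$, so $Z_k(v) = i + 2^k$ by fiat. In Case B, $Z_k(v) = Z_{f(i)}(x)$ for an $i'$-seed $x$ with $i' = i - 2^{f(i)-1}$. When $i' \geq 1$, the key observation is that subtracting $2^{f(i)-1}$ strictly increases the number of trailing zeros in the binary expansion, so $f(i') > f(i)$. I can therefore apply the inductive hypothesis to $x$ at phase $i'$ with the choice $k' = f(i)$, which is legal because $f(i) < f(i')$; here I also need that $x \subseteq v$ and $s_x = s_v$ force $x$ not to be at full size at the end of phase $i'$. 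This delivers $Z_{f(i)}(x) \geq i' + 2^{f(i)} = i + 2^{f(i)-1} \geq i + 2^k$. When $i' = 0$, $i$ must be a power of two, $x$ is an initial supervertex of stage $j$, and $Z_{f(i)}(x)$ is either $2^{f(i)} = 2i$ (already $\geq i + 2^k$ since $2^k \leq 2^{f(i)-1} = i$) or $\log s_v$, which is the problematic subcase.

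The main obstacle is this final subcase, where the lower bound $\log s_v \geq i + 2^k$ must be coaxed out of structural facts rather than the inductive hypothesis. The plan here is to combine three observations: (i) the seed property $h_{f(i)}(x) = \infty$ forces every initial supervertex eventually absorbed into $v$ to have begun stage $j$ with fewer than $2^{2^{f(i)}} - 1$ external edges, so $B_{f(i)}$ initially held all of their external edges; (ii) the Case B hypothesis, reinterpreted after the Step~3 cleanup, then implies that $v$ currently has fewer than $2^{2^k} - 1$ distinct external supervertex-neighbors in $G_{j,i+1}$; (iii) since $v$ is not at full size it retains at least one external edge, and both $v$ and each of its current external neighbors must have size at least $2^i$ (each having merged in every earlier phase, else it would be a complete component without external edges). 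Pushing these counts along the component of $v$ in $G_j$ to force $s_v \geq 2^{i + 2^k}$ is the delicate step, and I expect it to proceed by analyzing the component layer by layer outward from $v$, using the doubling lower bound on each sibling supervertex together with a small arithmetic argument that ties the shrinking external-edge count to the number of layers reachable by the end of phase $i$.
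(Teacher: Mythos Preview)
Your overall plan—strong induction on $i$, splitting on the two branches of the recursive definition of $Z_k(v)$—is the paper's approach; the paper merely packages the induction step by parity of $i$ (odd versus even) rather than by whether $i' = i - 2^{f(i)-1}$ is zero. For $i' \geq 1$ your argument is correct and matches the paper's Cases~1 and~2. One small slip: in your base case the bound $s_v \geq 3$ is too weak, since $\log 3 < 2$; the paper argues $s_v \geq 4$ by noting that any component on at most three vertices merges entirely in a single Bor\r{u}vka step (every edge is the lightest incident edge of at least one endpoint), so a $1$-maximal $v$ not at full size must live in a component of at least four.

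Your $i' = 0$, $\log s_v$ subcase is where the trouble lies. The paper does not isolate this case at all—it simply invokes the ``hypothesis'' at phase $i - 2^{f(i)-1}$ uniformly, tacitly treating the initial $Z_k$ values as satisfying the bound when that phase is~$0$. Your attempt to justify this rigorously, however, breaks down. Step~(iii) asserts that $v$ and its $i$-maximal neighbours each have size at least $2^i$, but that is exactly statement~(1) of the paper's \emph{next} lemma, whose proof uses the present lemma—so invoking it here is circular. The informal reason you offer (``each having merged in every earlier phase, else it would be a complete component'') is not available independently: a supervertex can fail to hook in some phase precisely when its $B_0$ entry is empty, and ruling that out is what the $Z$-guarantees are meant to establish. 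Even granting~(iii), the layer-by-layer arithmetic does not close: fewer than $2^{2^k}-1$ neighbours each of size at least $2^i$ yields at best $s_v \geq (2^{2^k}-1)\,2^i$, which falls short of the required $2^{i+2^k}$, and iterating outward would require controlling the neighbours' external-edge counts as well, which your argument does not provide.
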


\proof The proof is by a strong induction on $i$.

The first phase forms the basis. When $i=1$, $f(i)=f(1)=1$ and
$(i - 2^{f(i)-1})=0$. In the first phase $B_1$ is emptied.
Let $v$ be a $1$-maximal supervertex that has not grown to
its full size. Then $s_{v}\geq 4$. (If $s_v=1$ then $v$ is isolated
in $G$, and so does not participate in the algorithm. If $s_v$ is $2$ or $3$,
then $v$ will be grown to its full size after the first hooking.) Let $x$ be a $0$-seed of $v$.
Then $Z_1(v) = Z_1(x)\geq 2$.
Thus, $Z_0(v)$ is set to either $Z_1(v)$ or $1 + 2^0 = 2$. Either way, $Z_{0}(v)\geq 2$,
and hence the basis.

Now consider the $i$-th phase. There are two cases.

Case 1: $f(i) < f(i-1)$. Clearly, $i$ is odd, and hence $f(i) = 1$
and $(i - 2^{f(i)-1})=i-1$. This case is similar to
the basis. $B_1$ is the bucket emptied.
Let $v$ be an $i$-maximal supervertex that has not grown to its full size, and let $x$ be an $(i-1)$-seed of $v$.
Then $Z_1(v) = Z_1(x)$. As $x$ is an $(i-1)$-maximal supervertex, and $1<f(i-1)$,
we hypothesise that $Z_1(x)\geq (i-1)+2^1 = i+1$.
Thus, $Z_0(v)$ is set to either $Z_1(v)$ or $i + 2^0 = i+1$. Either way, $Z_{0}(v)\geq i+1$,
and the induction holds.

Case 2: $f(i) > f(i-1)$, and hence $i$ is even.
We hypothesise that at the end of the $(i - 2^{f(i)-1})$-th phase,
for each $k < f(i - 2^{f(i)-1})$ and each $(i - 2^{f(i)-1})$-maximal supervertex $x$
not grown to its full size,
$Z_k(x) \geq i - 2^{f(i)-1} + 2^k$.
Since, $f(i)<f(i - 2^{f(i)-1})$, in particular,
$Z_{f(i)}(x) \geq i - 2^{f(i)-1} + 2^{f(i)} = i + 2^{f(i) - 1}$.
Let $v$ be an $i$-maximal supervertex not grown to its full size,
and let $x$ be an $(i-2^{f(i)-1})$-seed of $v$.
Then $Z_{f(i)}(v) = Z_{f(i)}(x) \geq i + 2^{f(i)-1}$.
Thus, for each $k<f(i)$, $Z_k(v)$ is set to either
$Z_{f(i)}(x) \geq i + 2^{f(i) - 1}$ or $i + 2^k$. Either way, $Z_k(v)$ is $\geq i+2^k$,
and the induction holds. \hfill $\Box$

Definition:
For $s\geq 0$, let $t_{s} = \lfloor \frac{t}{2^{s}} \rfloor 2^{s}$.
For $0\leq t\leq 2^{g(j)}-1$ and $0\leq k\leq g(j)$,
a $t$-maximal supervertex $v$ is $k$-stagnant if
the weight of the lightest external edge of $v$ is at least
the $h_{k}(x)$,  where $x$ is the $t_{k}$-seed of $v$.

Note that $[t_{k},t_{k}+2^k)$ is a $k$-interval. That means bucket $B_k$ is
filled in the $t_{k}$-th phase, and is not filled again till
phase $t_{k}+2^k >t$. If immediately after the formation of $B_{k}$ in
the $t_k$-th phase, we were to cull all the external
edges of $v$ in it into an edgelist sorted on weights, and then were to prune
this list using $h_{k}(x)$, then the resultant
list would be empty. Note also that
every maximal supervertex is $0$-stagnant.

Phase $0$ is a hypothetical phase prior to the first phase.
Therefore, a $0$-maximal vertex is a vertex of $G$, and so is a
$0$-seed.

\begin{lemma}
For $0\leq k\leq g(j)$ and $0\leq t< 2^{k}$,
if $v$ is a $t$-maximal $k$-stagnant supervertex such that
$x$ is a $0$-seed of $v$, then $\mbox{size}(v) \geq 2^{Z_k(x)}$.
\end{lemma}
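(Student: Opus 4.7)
The plan is to prove the lemma by a direct case analysis on the value of $Z_k(x)$, without invoking induction. The constraint $t < 2^k$ is what makes this possible: since $t_k=\lfloor t/2^k\rfloor 2^k=0$, the $t_k$-seed of $v$ coincides with a $0$-seed, so $x$ itself is the seed that witnesses $k$-stagnancy. Thus the hypothesis says that every external edge of $v$ in the contracted graph $G_{j,t+1}$ has weight at least $h_k(x)$, where $h_k(x)$ is the threshold defined on $x$ at the beginning of stage $j$ with respect to the graph $G_j$.

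First I would handle the case where $x$ has at least $2^{2^k}$ external edges in $G_j$. Then $Z_k(x)=2^k$, and $h_k(x)$ is finite and equal to the weight of the $2^{2^k}$-th lightest external edge of $x$ in $G_j$. The $(2^{2^k}-1)$ strictly lighter external edges of $x$ point to pairwise distinct $G_j$-vertices, because $G_j$ is simple. If any such neighbour $y$ lay outside $v$, the edge $(x,y)$ would survive in $G_{j,t+1}$ as an external edge of $v$ of weight strictly less than $h_k(x)$, contradicting $k$-stagnancy. Hence all $2^{2^k}-1$ such neighbours lie inside $v$, and together with $x$ they give $v$ at least $2^{2^k}$ distinct $G_j$-vertices, so $\mbox{size}(v)\geq 2^{2^k}=2^{Z_k(x)}$.

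Next I would handle the complementary case where $x$ has strictly fewer than $2^{2^k}$ external edges in $G_j$, so $h_k(x)=\infty$. Then $k$-stagnancy forces $v$ to have no external edges in $G_{j,t+1}$ at all. Contraction does not create new edges, so $v$ has no edges leaving it in $G_j$ either; being connected by its hooking history, $v$ must therefore coincide with the $G_j$-component of $x$, whose number of $G_j$-vertices is exactly $s_x$, so $\mbox{size}(v)\geq s_x$. I now split this case in two. If $x$ has exactly $2^{2^k}-1$ external edges then by simplicity those $2^{2^k}-1$ neighbours are distinct from $x$ and from each other, so $s_x\geq 2^{2^k}$ while $Z_k(x)=2^k$, giving $\mbox{size}(v)\geq 2^{Z_k(x)}$. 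Otherwise $Z_k(x)=\log s_x$ and the bound $\mbox{size}(v)\geq s_x=2^{\log s_x}=2^{Z_k(x)}$ is immediate.

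The only real obstacle is bookkeeping: keeping clean the distinction between an external edge of $x$ in $G_j$ (where $h_k(x)$ is defined) and an external edge of $v$ in the contracted graph $G_{j,t+1}$ (where $k$-stagnancy is asserted). Once one notes that $G_j$ is simple, that distinct $G_j$-neighbours of $x$ remain distinct under contraction, and that an edge incident to $x$ becomes internal to $v$ exactly when its other endpoint has been absorbed into $v$, the trichotomy above closes the proof.
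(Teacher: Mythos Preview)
Your proof is correct and follows essentially the same approach as the paper: you observe that $t_k=0$ so the $0$-seed $x$ is the witness for $k$-stagnancy, and then argue that either the $2^{2^k}-1$ lightest $G_j$-edges of $x$ are forced inside $v$ (giving size $\geq 2^{2^k}$) or $h_k(x)=\infty$ forces $v$ to be its entire $G_j$-component (giving size $=s_x$). The only cosmetic difference is that the paper splits on whether $x$ has at least $2^{2^k}-1$ edges in $B_k$ (handling your first two subcases together), whereas you split on $\geq 2^{2^k}$ versus $<2^{2^k}$ and then peel off the boundary value $2^{2^k}-1$ separately; the logic is the same.
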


\proof Let $v$ be a $t$-maximal $k$-stagnant supervertex such that
$x$ is a $0$-seed of $v$,
for $0\leq k\leq g(j)$ and
$0\leq t< 2^{k}$. Here $t_{k}=0$.
That is, the lightest external edge of
$v$ has a weight not less than $h_{k}(x)$.

If $x$ has at least
$2^{2^k}-1$ outgoing edges given to it in $B_{k}$ at the beginning
of the stage, then all those edges are internal to $v$; so $x$ and
its neighbours ensure that $v$ has a size of at least
$2^{2^k}=2^{Z_k(x)}$. If $x$ has less than $2^{2^k}-1$ outgoing
edges in $B_{k}$, then $h_{k}(x)=\infty$, and $h_{k}(x')=\infty$
for every $0$-maximal $x'$ participating in $v$. Thus, $v$ has no
external edge. Then $v$ cannot grow anymore, and so has a size of
$2^{Z_k(x)}=s_v$. \hfill $\Box$

In the $i$-th phase, when we put $x$ up into $B_{k}$ with a guarantee of $Z_{k}(v)$
we would like to say that if ever a $k$-stagnant supervertex $v$ forms with $x$ as its
$i$-seed, then $v$ would have a size of at least $2^{Z_{k}(v)}$.
That is what the next lemma does.

We defined $f(i)$ as $1+$ the number of trailing $0$'s in the
binary representation of $i$, for $i\geq 1$. Let $f(0)$ be
defined as $g(j)+1$.

\begin{lemma}
For each $i$, $0\leq i\leq 2^{g(j)}$,

(1) for each $i$-maximal supervertex $v$,
    if $v$ has not grown to its final
    size in $G$, then $\mbox{size}(v) \geq 2^i$, and

(2) for $0\leq k<f(i)$, and $i\leq t< i+2^{k}$, if $v$ is a $t$-maximal
    $k$-stagnant supervertex such that
$x$ is an $i$-seed of $v$, then $\mbox{size}(v) \geq 2^{Z_k(x)}$.
\end{lemma}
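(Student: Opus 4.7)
The plan is to prove both parts simultaneously by strong induction on $i$. The base case $i=0$ is easy: part (1) is immediate since $2^0=1$ lower bounds the size of any supervertex, and part (2) reduces to the preceding lemma (which handled exactly the $t_k=0$ regime, i.e., $t<2^k$). So the work lies in the inductive step; assume both parts hold for all $i'<i$ and prove them for $i$.

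For part (1), let $v$ be an $i$-maximal supervertex that has not reached its final size. Since $v$ still has external edges in $G$, Corollary 1, combined with the inductive validity of the preceding threshold/minset invariants (Lemma on the $h_k$'s), guarantees that at the start of phase $i$ the lightest external edge of each $(i{-}1)$-maximal sub-component of $v$ that is not overgrown appears in $B_0$. Consequently every such sub-component is hooked in phase $i$. I would split into two sub-cases: either $v$ coincides with a single $(i{-}1)$-maximal supervertex that was not merged in phase $i$, which can only happen when $v$ is already overgrown for the stage and hence has size far exceeding $2^i$; or $v$ arises from merging at least two $(i{-}1)$-maximal supervertices, each of size $\geq 2^{i-1}$ by the inductive hypothesis on part (1), giving $\mathrm{size}(v)\geq 2\cdot 2^{i-1}=2^i$.

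For part (2), let $v$ be a $t$-maximal $k$-stagnant supervertex with $0\leq k<f(i)$ and $t\in[i,i+2^k)$, and let $x$ be an $i$-seed of $v$. Following the definition of the guarantee function, I would distinguish the two branches used to set $Z_k(x)$ at the end of phase $i$. In the first branch, $B_{f(i)}$ contributed at least $2^{2^k}-1$ edges rooted at $x$ and $Z_k(x)=i+2^k$; here I would use the minset/threshold lemma to argue that all these $2^{2^k}-1$ edges have weight strictly below the $h_k$-threshold that witnesses the $k$-stagnancy of $v$ at time $t$, so they are all internalised by the time $v$ is formed; combined with the binary-counter-scheduled fillings of $B_0,\ldots,B_{k-1}$ from $B_{f(i)}$, this produces $2^k$ successive doublings of the component rooted at $x$ during phases $i+1,\ldots,i+2^k$, which, multiplied with the size-$2^i$ lower bound from part (1) at phase $i$, gives $\mathrm{size}(v)\geq 2^{i+2^k}=2^{Z_k(x)}$. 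In the second branch, $Z_k(x)=Z_{f(i)}(x)$ is inherited from an $(i-2^{f(i)-1})$-seed $x'$ of $x$; I would then close the argument by invoking the inductive hypothesis on part (2) at phase $i-2^{f(i)-1}$ with the level $f(i)$ in place of $k$, transferring the size bound $2^{Z_{f(i)}(x')}$ from $x$ to $v$ via the chain $x'\subseteq x\subseteq v$.

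The main obstacle will be the first sub-case of part (2): turning a static count of $2^{2^k}-1$ edges present in $B_{f(i)}$ at phase $i$ into a dynamic guarantee of $2^k$ doublings over the next $2^k$ phases. This requires carefully propagating the minset property through the alternating fill/empty pattern of $B_0,B_1,\ldots,B_{k-1}$, reapplying the threshold lemma at every sub-phase to ensure that no edge cheaper than the witness of $k$-stagnancy escapes being processed, and invoking part (1) of the current induction at each intermediate phase to supply the per-step size multiplier. The rest of the proof is essentially arithmetic bookkeeping around the interval structure $I(i)$ and the relation between $f(i)$ and the binary expansion of $i$.
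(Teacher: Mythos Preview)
Your overall induction scheme matches the paper's, and your second branch of part~(2) is essentially the paper's Case~1. But there are two substantive gaps.

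\textbf{Part (1), non-growing sub-case, is circular.} You write that if an $i$-maximal $v$ did not merge in phase $i$, ``this can only happen when $v$ is already overgrown for the stage.'' That is precisely the content of the lemma; Corollary~1 only says that \emph{if} $B_0$ holds an edge for $v$ then it is the lightest external one, not that $B_0$ is nonempty for every non-full $v$. The paper closes this gap by routing through part~(2): if a $(p{+}1)$-maximal $x$ fails to grow in phase $p{+}1$, then $B_0$ was empty for $x$ after phase $p$, hence (since $B_0$ was refilled from $B_{f(p)}$) $B_{f(p)}$ contributed nothing for $x$, which forces $x$ to be $f(p)$-stagnant. One then applies the inductive hypothesis~(2) at $i=p-2^{f(p)-1}$, $k=f(p)$, together with Lemma~3 (which gives $Z_{f(p)}(y)\geq p+1$), to obtain $\mathrm{size}(x)\geq 2^{p+1}$. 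So part~(1) genuinely depends on part~(2); it is not an independent doubling argument.

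\textbf{Part (2), first branch: you are making it much harder than it is.} You propose to track ``$2^k$ successive doublings over phases $i{+}1,\ldots,i{+}2^k$'' by propagating minsets through the fill/empty schedule of $B_0,\ldots,B_{k-1}$, and you flag this as the main obstacle. The paper avoids this entirely with a one-line static count. After the phase-$p$ cleanup, the $2^{2^k}-1$ edges placed for $x$ in $B_k$ go to $2^{2^k}-1$ \emph{distinct} $p$-maximal supervertices (multiples were removed), and all of them have weight below $h_k(x)$. Since $v$ is $k$-stagnant, its lightest external edge has weight at least $h_k(x)$, so every one of those $2^{2^k}-1$ edges is internal to $v$. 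Hence $v$ contains at least $2^{2^k}$ $p$-maximal supervertices, each of size $\geq 2^p$ by part~(1), giving $\mathrm{size}(v)\geq 2^{2^k}\cdot 2^p=2^{p+2^k}=2^{Z_k(x)}$. No per-phase tracking is needed; the $k$-stagnancy hypothesis does all the work at once.

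A smaller omission in your second branch: to invoke the inductive hypothesis~(2) at level $f(i)$ you must first argue that $v$ is $f(i)$-stagnant, not merely $k$-stagnant. The paper handles this by observing that in this branch $h_k(x)=h_{f(i)}(y)$ (since $x$ inherited \emph{all} of $y$'s remaining edges from $B_{f(i)}$), so $y$ is also an $(i-2^{f(i)-1})$-seed of $v$ and the $k$-stagnancy of $v$ relative to $h_k(x)$ is literally $f(i)$-stagnancy relative to $h_{f(i)}(y)$.
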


\proof
The proof is by induction on $i$. The case of $i=0$ forms the basis for (2),
whereas the case of $i=1$ forms the basis for (1).
Every $0$-maximal supervertex $v$ has a size of one. Lemma 4 proves
(2) for $i=0$. For every vertex $v$ in $G_j$ that is not isolated,
$B_0$ holds the lightest external edge of $v$ in $G$ at the beginning
of stage $j$, and so $v$ is certain to hook in the first phase.
Thus every $1$-maximal supervertex that has not grown to its final
size in $G$ will have a size of at least two.

Now consider $p>0$. Hypothesise that (1) is true for all $i\leq p$, and
 (2) is true for all $i\leq p-1$.

\noindent
{\bf The Induction steps:}
Consider the $p$-th phase. After the hook and contract,
$B_{f(p)}$ is emptied and for all $k < f(p)$, bucket $B_k$ is filled.
Suppose for $t \in [p, p + 2^k )=I(p+2^{k-1})$, a $t$-maximal $k$-stagnant supervertex $v$ forms.
Let $x$ be a $p$-seed of $v$. In the $p$-th phase,
$x$ is put up into $B_{k}$ with a
guarantee of $Z_k(x)$. Let $y$ be a $(p-2^{f(p)-1})$-seed of $x$.
There are two cases.

\noindent Case 1: $Z_k(x)$ = $Z_{f(p)}(y)$. In this case,
supervertex $x$ gets all the edges from $B_{f(p)}$, and $h_k(x)$ =
$h_{f(p)}(y)$. So, $y$ is a $(p-2^{f(p)-1})$-seed of $v$ too.
The lightest external edge of supervertex $v$ has a weight not less
than $h_{f(p)}(y)=h_{k}(x)$, because otherwise $v$ would not be
$k$-stagnant as assumed.
So supervertex $v$ is $f(p)$-stagnant.
Also supervertex $v$ is $t$-maximal for some
$t \in [ p , p + 2^k ) \subseteq [p-2^{f(p)-1}, p + 2^{f(p)-1})$.
So $v$ qualifies for an application of hypothesis (2) with
$i=p-2^{f(p)-1}$ and $k=f(p)$.
Therefore, $\mbox{size}(v) \geq 2^{Z_{f(p)}(y)}$.
But $Z_{f(p)}(y) = Z_k(x)$, so $\mbox{size}(v) \geq 2^{{Z_k}(x)}$.

\noindent Case 2: If $Z_k(x) \neq Z_{f(p)}(y)$, then $Z_k(x) = p +
2^k$. In this case supervertex $v$ may not be $l$-stagnant for $l
> k$. Supervertex $x$ comes into $B_k$ with $2^{2^k} - 1$ edges.
All those edges are internal to $v$. So $v$ has a
size $\geq 2^{2^k}$ in terms of $p$-maximal supervertices. Since
each $p$-maximal supervertices has size $\geq 2^p$ (by hypothesise (1)),
$\mbox{size}(v) \geq 2^{2^k} \cdot 2^p =
2^{p + 2^k} = 2^{Z_k(x)}$.

Thus (2) is proved for $i=p$.
Note that hypothesis (2) pertains to $I(p)$ and the induction step
argues that (2) holds for $I(p+2^{k-1})$ for each $k<f(p)$.

\noindent
Now we prove statement (1) of the lemma for the $(p+1)$-st phase.
Suppose $x$ is a $(p+1)$-maximal supervertex that does not grow in the
$(p+1)$-st phase. That means $x$ is $0$-stagnant at the beginning of the $(p+1)$-st phase.
That is possible only if $B_{f(p)}$ has no edge to put up for $x$ in $B_0$ in the
$p$-th phase.
So $x$ is $f(p)$-stagnant, and
qualifies for an application of hypothesis (2) with
$i=p-2^{f(p)-1}$ and $k=f(p)$.
Suppose $y$ is a $(p - 2^{f(p) - 1})$-seed of $x$. Then $\mbox{size}(x) \geq 2^{Z_{f(p)}(y)}$.

\noindent
From Lemma 3, at $p - 2^{f(p) - 1}$, supervertex $y$ was put up into $B_{f(p)}$ with
    $Z_{f(p)}(y) \geq p - 2^{f(p) - 1} + 2^{f(p)} \geq p+1$.
So $\mbox{size}(x) \geq 2^{p+1}$.

If $x$ is a $(p+1)$-maximal supervertex that does grow in the
$(p+1)$-st phase, then it has at least two $p$-maximal supervertices (each of size at least
$2^p$ by hypothesis (1)) participating in it.
So it has a size
of at least $2^{p+1}$.

We conclude that every $(p+1)$-maximal supervertex has a size of at least
$2^{p+1}$. Hence the induction.
\hfill $\Box$

This guarantees that every supervertex that has not grown to a size of $2^{i}$
will indeed find an edge in $B_{0}$ in the $i$-th phase.
This completes the proof of correctness.

\subsection{The I/O Complexity of the Algorithm}
We discuss below the number of I/Os taken by each step of phase $i$ in stage $j$:

\noindent
{\bf I/Os in Step 1:} Since array $B_0$ holds the edges sorted on source, $S$ is
merely a copy of $B_0$, and so can be written out in
$O(\sort{V_{ji}})$ I/Os.

\noindent {\bf I/Os in Step 2:} One step of a Parallel Random
Access Machine (PRAM) that uses $N$ processors and $O(N)$ space
can be simulated on the external memory model in $O(\sort{N})$
I/Os \cite{ChGoGr+95}. The Euler tour of a tree of $N$ vertices
given in adjacency list representation with twin pointers can be
formed in $O(1)$ time with $O(N)$ processors on an Exclusive Read
Exclusive Write (EREW) PRAM \cite{Ja92}.

If $Y$ is a permutation of an array $X$ of $n$ elements,
and if each element in $X$ knows its position in $Y$,
then any $O(1)$ amount of information that each element
of $X$ holds can be copied into the corresponding element
of $Y$ and vice versa in $O(1)$ time using $n$ processors on an EREW PRAM.
Therefore, if each element of $X$ holds a pointer to another element of $X$, then
these pointers can be replicated in $Y$
in $O(1)$ time using $n$ processors on an EREW PRAM,
and hence in $O(\sort{N})$ I/Os  on the external memory model.

The list ranking algorithm of \cite{ChGoGr+95} when invoked on a list
of size $n$ takes $O(\sort{n})$ I/Os.

Putting it all together, the I/O complexity of Step 2 is therefore, $O(\sort{V_{j,i}})$.

\noindent
{\bf I/Os in Step 3:}
Let $b_{k,j,i}$ denote an upper bound on the size of bucket $B_{k}$ in phase $i$
of stage $j$.
(We define $b_{k,j,i}$ as the product of the number of supervertices that remain in
phase $i$, and $2^{2^k}-1$, the maximum degree a vertex can have in bucket $k$.)

For fixed $i$ and $j$, $b_{k,j,i}$ varies as $2^{2^k}-1$ with $k$.
Clearly, for any $k$, $b_{k,j,i}\geq\Sigma_{l=0}^{k-1} b_{l,j,i}$.

Also, $b_{l,j,i}>2b_{l,j,i+1}$, because the number supervertices that have
external edges halves in every phase.
Thus $F'$ that summarises the clustering
information in all of $B_{f(i)-1},\ldots,B_{0}$ has a size of
at most $b_{f(i),j,i-2^{f(i)-1}}$ space. $B_{f(i)}$
has a size of at most $b_{f(i),j,i}$. Therefore, the clean up of
$B_{f(i)}$
can be carried out in $\sort{b_{f(i),j,i-2^{f(i)-1}}}$ I/Os.

\noindent
{\bf I/Os in Step 4:}
Once $B_{f(i)}$ has been cleaned up,
for $f(i)> k\geq 0$, bucket $B_{k}$
can be formed by scanning $B_{k+1}$ and choosing for
each vertex
$v \in V_j$, the $(2^{2^k}-1)$ lightest
edges incident with $v$.
Clearly, this involves
scanning each bucket twice, once for write and once for read.
Since the bucket sizes fall superexponentially as $k$ falls,
this can be done in $O(b_{f(i),j,i}/B)$ I/Os.

\noindent {\bf I/Os in Step 5:} The clustering information computed earlier in Step 3
is stored in $B_{f(i)}$ in this step, The cost of this step is clearly dominated
by that of Step 3.

The total number of I/Os executed by the $i$-th phase of the $j$-th stage is therefore
$O(\sort{V_{j,i}}+ \scan{b_{f(i),j,i}} + \sort{b_{f(i),j,i-2^{f(i)-1}}})=
O(\sort{V_{j,i}}+ \sort{b_{f(i),j,i-2^{f(i)-1}}})$.

In the final (i.e., $2^{g(j)}$-th) Bor\r{u}vka phase, the cleanup of bucket $B_{g(j)
+ 1}$ uses the clustering information from previous
phases which is available in the lower numbered buckets. This
clean up is the same as the one in step $3$ and can be executed in
$O(\sort{E_j})$ I/Os.

Therefore, the total I/O cost of the $j$-th stage is
\[O(\sort{E_j})+ \sum^{2^{g(j)}-1}_{i=1} O(\sort{V_{j,i}} + \sort{b_{f(i),j,i-2^{f(i)-1}}})\]
\[= O(\sort{E_j}) + \sum^{g(j)}_{k=0} \sum^{(g(j)/2^k)-1}_{r=0} O(\sort{b_{k,j,r.2^k}})\]
\[= O(\sort{E}) + \sum^{g(j)}_{k=0} O(\sort{b_{kj0}}) =
O(\sort{E})+ \sum^{g(j)}_{k=0} O(\sort{V_{j} (2^{2^{k}}-1))}\]
which is $O(\sort{E})$.
(See the discussion at the beginning of Section 3.)

\begin{lemma}
Our algorithm performs $\log (VB/E)$ phases in $O(\sort{E} \log
\log_{E/V} B)$ I/Os.
\end{lemma}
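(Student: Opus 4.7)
The plan is to directly combine three ingredients already established in the preceding subsection: (i) the schedule in which the $j$-th stage runs $2^j \log(E/V)$ Bor\r{u}vka phases, (ii) the reduction in the number of supervertices across stages, and (iii) the per-stage I/O bound of $O(\sort{E})$ that was derived at the end of the I/O analysis.

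First, I would count phases. Summing over stages $j=0,1,\ldots,\log\log_{E/V}B-1$, the total number of Bor\r{u}vka phases is
\[
\sum_{j=0}^{\log\log_{E/V} B - 1} 2^{j}\,\log(E/V) \;=\; \log(E/V)\bigl(2^{\log\log_{E/V}B}-1\bigr) \;=\; \log(E/V)\bigl(\log_{E/V}B-1\bigr),
\]
which, after simplification using $\log(E/V)\cdot\log_{E/V}B = \log B$, is $\log B - \log(E/V) = \log(VB/E)$. This matches the claimed number of phases and reconciles with the independent observation that the number of supervertices at the start of stage $j$ is at most $E/(E/V)^{2^{j}}$, so that after $\log\log_{E/V}B$ stages the supervertex count has dropped to $E/B$ (the point at which modified Prim can take over).

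Next, I would bound the I/Os per stage. This is exactly the content of the final calculation in the I/O analysis: expanding the per-phase cost $O(\sort{V_{j,i}} + \sort{b_{f(i),j,i-2^{f(i)-1}}})$, grouping phases by the index $k=f(i)$ of the bucket they clean, and using the superexponential decay of bucket sizes $b_{k,j,0} = O(V_j\,(2^{2^{k}}-1))$, one obtains the telescoping bound $O(\sort{E})$ for the whole $j$-th stage. I would reuse this derivation verbatim rather than repeat it.

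Finally, I would multiply: since there are $\log\log_{E/V}B$ stages and each costs $O(\sort{E})$ I/Os, the total is $O(\sort{E}\,\log\log_{E/V}B)$, proving the lemma. The only mildly delicate point is the phase-counting identity $\log(E/V)\cdot\log_{E/V}B=\log B$, which requires $E>V$ so that $\log(E/V)>0$; this was guaranteed at the start of the section via the padding argument that adds a universal vertex if $E\le V$. Beyond that, there is no real obstacle—the lemma is essentially a packaging of the bookkeeping already done.
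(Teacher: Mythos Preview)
Your proposal is correct and follows essentially the same approach as the paper: invoke the already-derived $O(\sort{E})$ per-stage bound, sum the geometric series $\sum_{j=0}^{r-1}2^{j}\log(E/V)=(2^{r}-1)\log(E/V)$ to count phases, and observe that $r=\log\log_{E/V}B$ stages yield exactly $\log(VB/E)$ phases. The paper presents the phase count in the reverse direction (given the target $\log(VB/E)$, solve for $r$), but the content is identical; your added remark about needing $E>V$ is a nice touch that the paper handles earlier rather than here.
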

\proof As argued above, the total I/O cost of the $j$-th stage is
$O(\sort{E})$. If the total number of stages is $r$, the total
number of phases executed is: $\Sigma_{j=0}^{r-1} 2^{g(j)} =
(2^r-1)\log(E/V)$. If this is to be $\log (VB/E)$, then $r$ must
be $\log \log _{E/V} B$. \hfill $\Box$

Thus, we reduce the minimum spanning tree problem of an undirected
graph $G = (V,E)$ to the same problem on a graph with $O(E/B)$
vertices and $O(E)$ edges. On this new graph, the external memory
version of Prim's algorithm can compute a minimum spanning tree in
$O(E/B + \sort{E})$ I/Os. From the MSF of the reduced graph, an
MSF of the original graph can be constructed; the I/O complexity
of this will be dominated by the one of the reduction.

Putting everything together, therefore,

\begin{theorem}
Our algorithm computes a minimum spanning forest of an undirected
graph $G = (V,E)$ in $O(\sort{E} \log \log_{E/V} B)$ I/Os.
\end{theorem}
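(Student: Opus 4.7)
The plan is to assemble the theorem from the correctness argument (Lemmas~3--5 plus Corollary~1) and the I/O accounting for a single stage, together with a clean application of the external memory Prim's algorithm on the final reduced graph. Specifically, I would first invoke Lemma~6, which already packages the two ingredients I need: (i)~each stage runs in $O(\sort{E})$ I/Os, and (ii)~the number of stages required to drive the number of supervertices down from $V$ to $E/B$ is exactly $\log\log_{E/V} B$.

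First I would argue that the reduction from the original graph to a graph with $O(E/B)$ vertices and $O(E)$ edges is correct. Correctness within a stage is already settled by Corollary~1 (every non-overgrown supervertex finds its lightest external edge in $B_0$ at the start of each phase) combined with Lemma~5 (each supervertex doubles in size per phase unless it has hit its final size). Across stages, the invariant that buckets are minsets together with the threshold guarantees carries over: after $(2^j-1)\log(E/V)$ phases the size of every supervertex that has not reached its final size is at least $(E/V)^{2^j-1}$, so after $\log\log_{E/V} B$ stages the vertex count is at most $E/B$.

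Next I would combine the per-stage I/O bound with the number of stages: the total reduction cost is
\[
\sum_{j=0}^{\log\log_{E/V} B - 1} O(\sort{E}) \;=\; O(\sort{E}\,\log\log_{E/V} B).
\]
To reach the reduced graph $G'$ with $O(E/B)$ vertices and $O(E)$ edges, I then feed $G'$ to the external memory version of Prim's algorithm described in Section~\ref{sec:prem:mst}, which runs in $O(V'+\sort{E'})=O(E/B+\sort{E})=O(\sort{E})$ I/Os. Recovering an MSF of the original graph from the MSF of $G'$ together with the Bor\r{u}vka edges emitted across all phases is standard (relabel each MSF edge of $G'$ to one of the original edges it represents, which can be tracked throughout Steps~1--2 of every phase); this adds at most $O(\sort{E})$ I/Os and is dominated by the reduction.

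The main obstacle is verifying that the per-stage cost really is $O(\sort{E})$ despite the bucket-refilling schedule. I would handle it exactly as in the step-by-step tally: Steps~1--2 cost $O(\sort{V_{j,i}})$ per phase; Step~3's cleanup of $B_{f(i)}$ costs $O(\sort{b_{f(i),j,i-2^{f(i)-1}}})$ because all needed clustering information is summarised in the lower-indexed buckets (Lemma~2); Steps~4--5 are dominated by Step~3 because bucket sizes decay superexponentially in $k$. Summing over $i$ and grouping phases by $f(i)=k$ gives a geometric series dominated by $O(\sort{V_j(E/V)^{2^j}})=O(\sort{E})$, which absorbs the final-phase cleanup of $B_{g(j)+1}$. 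Putting everything together yields the stated bound of $O(\sort{E}\log\log_{E/V} B)$ I/Os, completing the theorem.
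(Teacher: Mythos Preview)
Your proposal is correct and follows the same approach as the paper: invoke Lemma~6 for the $O(\sort{E}\log\log_{E/V}B)$ reduction to a graph with $O(E/B)$ vertices, then apply the external memory Prim's algorithm in $O(E/B+\sort{E})=O(\sort{E})$ I/Os, with the recovery of the original MSF dominated by the reduction. If anything, your write-up is more thorough than the paper's own proof of the theorem, which is essentially a two-sentence corollary of Lemma~6; one small slip is that the statement ``all needed clustering information is summarised in the lower-indexed buckets'' is Lemma~1 (Lemma~\ref{step3}), not Lemma~2.
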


\section{Conclusions from this Chapter}
\label{conc:mst}
In this chapter we present an improved external memory algorithm for the computing of minimum
	spanning forests of graphs.
Our algorithm uses a novel scheduling technique 
	on the $\log \frac{VB}{E}$ Bor\r{u}vka phases needed to reduce the graph size.
In our scheduling technique, each bucket is emptied and filled at regular intervals. 
This might be restrictive, because at the point of emptying, a bucket may contain
many more edges than are necessary to fill the lower numbered buckets, but all
those are discarded. This slack in our scheduling could be exploited to design
a faster algorithm. We have not yet succedded.
   \chapter{External Memory Soft Heap and Hard Heap, a Meldable Priority Queue}
\label{emsh:chapt}
\section{Introduction}
An external memory version of soft heap that we call ``External Memory Soft Heap''
(EMSH for short) is presented. 
It supports {\tt Insert}, {\tt Findmin}, {\tt Deletemin} and {\tt Meld} operations.
An EMSH may, as in its in-core version, and at its discretion, 
corrupt the keys of some elements in it, by revising them upwards.
But the EMSH guarantees that the number of corrupt elements in it is never more than
$\epsilon N$, where $N$ is the total number of items inserted in it, and $\epsilon$ is
a parameter of it called the error-rate.
The amortised I/O complexity of an {\tt Insert} is 
$O(\frac{1}{B} \log_{m}\frac{1}{\epsilon})$, where $m = \frac{M}{B}$.
{\tt Findmin}, {\tt Deletemin} and {\tt Meld} all have non-positive amortised I/O complexities.

When we choose an error rate $\epsilon<1/N$, EMSH stays devoid of corrupt nodes, 
and thus becomes a meldable priority queue that we call ``hard heap''.  
The amortised I/O complexity of an {\tt Insert}, in this case, is $O(\frac{1}{B} \log_{m}\frac{N}{B})$,
over a sequence of operations involving $N$ {\tt insert}s.
{\tt Findmin}, {\tt Deletemin} and {\tt Meld} all have non-positive amortised I/O complexities.
If the inserted keys are all unique, a {\tt Delete} (by key) operation can
	also be performed at an amortised I/O complexity of $O(\frac{1}{B} \log_{m}\frac{N}{B})$.
A balancing operation performed once in a while on a hard heap 
ensures that the number of I/Os performed by a sequence of $S$ operations on it is
$O(\frac{S}{B}+\frac{1}{B}\sum_{i = 1}^{S} \log_{m}\frac{N_i}{B})$, where
$N_i$ is the number of elements in the heap before the $i$th operation.

\subsection{Definitions}
A priority queue is a data structure used for maintaining a set $S$ of elements, 
	where each element has a key drawn from a linearly ordered set.
A priority queue typically supports the following operations:
\begin{enumerate}
\item {\tt Insert}$(S,x)$: Insert element $x$ into $S$.
\item {\tt Findmin}$(S)$: Return the element with the smallest key in $S$
\item {\tt Deletemin}$(S)$: Return the element with the smallest key in $S$ and remove it from $S$.
\item {\tt Delete}$(S,x)$: Delete element $x$ from $S$
\item {\tt Delete}$(S,k)$: Delete the element with key $k$ from $S$
\end{enumerate}
Algorithmic applications of priority queues abound \cite{AHU74,CLR90}.

Soft heap is an approximate meldable priority queue devised
	by Chazelle \cite{Ch00a}, and supports
	{\tt Insert}, {\tt Findmin}, {\tt Deletemin}, {\tt Delete}, and
	{\tt Meld} operations. 
A soft heap, may at its discretion, corrupt the keys of some elements in it, by revising them upwards.
A {\tt Findmin} returns the element with the smallest current key, which may or may not be corrupt. 
A soft heap guarantees that the number of corrupt elements in it is never more than
$\epsilon N$, where $N$ is the total number of items inserted in it, and $\epsilon$ is
a parameter of it called the error-rate.   
A {\tt Meld} operation merges two soft heaps into one new soft heap.

\subsection{Previous Results}
I/O efficient priority queues have been reported before \cite{Arge03,BK98,FJKT99,KS96,MZ03}.
An I/O efficient priority queue, called \emph{buffer tree}  was introduced by 
        Arge \cite{Arge03}.
A buffer tree is an $(a,b)$ tree, where $a = M/4$ and $b = M$, and each
        node contains a buffer of size $\Theta(M)$.
Buffer tree supports the {\tt Insert}, {\tt  Deletemin}, {\tt Delete} (by key), 
	and offline search operations.
The amortised complexity of each of these operations on buffer tree is
        $O(\frac{1}{B} \log_{m}\frac{N}{B})$
        I/Os, over a sequence of operations of length $N$.
External memory versions of heap are presented in \cite{KS96} and \cite{FJKT99}. 
The heap in \cite{KS96} is a $\sqrt{m}$-way tree, each node of which contains a buffer 
of size $\Theta(M)$; it supports {\tt Insert}, {\tt Deletemin}, and {\tt Delete} (by key) operations.
The amortised cost of each operation on this heap is
        $O(\frac{1}{B} \log_{m}\frac{N}{B})$
        I/Os, where $N$ is the total number of
        elements in the heap.
The heap in \cite{FJKT99} is an $m$-way tree and it does not contain a buffer at each node.
It supports {\tt Insert}, and {\tt Deletemin} operations.
For this heap, the total number of I/Os performed by a sequence of $S$
        operations is $O(\frac{S}{B}+\frac{1}{B}\sum_{i = 1}^{S} \log_{m}\frac{N_i}{B})$, where$N_i$ is the number
        of elements in the heap before the $i$th operation.

An external memory version of tournament tree that supports 
        the {\tt Deletemin}, {\tt Delete}, and {\tt Update}
        operations is also presented in \cite{KS96};
this is a complete binary tree.
An {\tt Update}$(x,k)$ operation changes the key of element $x$ to $k$
        if and only if $k$ is smaller than the present key of $x$.
The amortised cost of each operation on this data structure is $O(\frac{1}{B} \log_2 \frac{N}{B})$ I/Os \cite{KS96}.

The priority queue of \cite{BK98} maintains a hierarchy of sorted lists in secondary memory.
An integer priority is presented in \cite{MZ03}.
See \cite{BCF+00} for an experimental study on some of these priority queues.
Numerous applications of these data structure have also been reported:
graph problems, computational geometry problems and sorting to name a few
\cite{Arge03,FJKT99,KS96}.

See Table \ref{comparison:emsh} for a comparison of hard heap with the
        priority queues of \cite{Arge03,FJKT99,KS96}.

Soft heap is an approximate meldable priority queue devised
        by Chazelle \cite{Ch00a}, and supports
        {\tt Insert}, {\tt Findmin}, {\tt Deletemin}, {\tt Delete}, and
        {\tt Meld} operations. 
This data structure is used in computing minimum spanning
        trees \cite{Ch00b} in the fastest known in-core algorithm for the problem.
Soft heap has also been used for finding exact and approximate
    medians, and for approximate sorting \cite{Ch00a}.
An alternative simpler implementation of soft heap 
is given by Kaplan and Zwick \cite{KaZw09}.

\begin{table}
\begin{center}
            \begin{tabular}{|l|l|l|l|l|l|}
               \cline{1-6}
               \hline
{\bf Properties}  &   {\bf buffer tree}     & {\bf heap} \cite{KS96}      & {\bf heap} \cite{FJKT99}    & {\bf tourn. tree}
   & {\bf EMSH} \\
\hline \hline
type of tree &  $(M/4,M)$   & $\sqrt{m}$-way & $m$-way & complete & set of
                                                                      $\sqrt{m}$- \\
             &  tree        & tree           & tree    & binary tree & way trees \\
\hline
size of a     & $\Theta(M)$              & $\Theta(\sqrt{m}B)$  & $\Theta(M)$        & $\Theta(M
)$    & $\Theta(\sqrt{m}B)$ \\
node          &                  &                &          &        &        \\
\hline
buffer of size &    yes          &   yes          & no       & yes    & no  \\
$M$ at each    &                 &                &          &        &   \\
node            &                 &                &          &        &   \\
\hline
extra space    & $\Theta(N)$     &  $\Theta(\sqrt{m}N)$ & 0   & $\Theta(N)$ & 0 \\
\hline
operations     & {\tt Insert} & {\tt Insert}    & {\tt Insert}    & {\tt Delete}    & {\tt Insert} \\
               & {\tt Delete} & {\tt Delete}    & {\tt Deletemin} & {\tt Deletemin} & {\tt Delete} \\
            & {\tt Deletemin} & {\tt Deletemin} & {\tt Findmin}   & {\tt Findmin}   & {\tt Deletemin} \\
            & {\tt Findmin}   & {\tt Findmin}   &                 & {\tt Update}& {\tt Findmin}, {\tt Meld} \\
\hline
\end{tabular}
\caption{Comparison with some known priority queues }
\label{comparison:emsh}
\end{center}
\end{table}
\subsection{Our Results}
In this chapter, we present an external memory version of soft heap that permits
batched operations. We call our data structure ``External Memory Soft Heap''
(EMSH for short). 
As far as we know, this is the first implementation of soft heap
on an external memory model. 
When we choose an error rate $\epsilon<1/N$, EMSH stays devoid of corrupt nodes, 
and thus becomes a meldable priority queue that we call ``hard heap''.

EMSH is an adaptation of soft heap for this model.
It supports {\tt Insert}, {\tt Findmin}, {\tt Deletemin} and {\tt Meld} operations.
An EMSH may, as in its in-core version, and at its discretion, 
corrupt the keys of some elements in it, by revising them upwards.
But it guarantees that the number of corrupt elements in it is never more than
$\epsilon N$, where $N$ is the total number of items inserted in it, and $\epsilon$ is
a parameter of it called the error-rate.
The amortised I/O complexity of an {\tt Insert} is $O(\frac{1}{B} \log_{m}\frac{1}{\epsilon})$.
{\tt Findmin}, {\tt Deletemin} and {\tt Meld} all have non-positive amortised I/O complexities.

A hard heap (an EMSH with $\epsilon < 1/N$) does not have any corrupt element.
Therefore, it is an exact meldable priority queue.
The amortised I/O complexity of an {\tt Insert}, in this case, is $O(\frac{1}{B} \log_{m}\frac{N}{B})$.
{\tt Findmin}, {\tt Deletemin} and {\tt Meld} all have non-positive amortised I/O complexities.
If the inserted keys are all unique, a {\tt Delete} (by key) operation can
	also be performed at an amortised I/O complexity of $O(\frac{1}{B} \log_{m}\frac{N}{B})$.

\subsection{Organisation of This Chapter}

This chapter is organised as follows: Section~\ref{datastr:emsh} describes the data structure.
The correctness of the algorithm is proved in Section~\ref{correct:emsh}.
The amortised I/O analysis of the algorithm is presented in Section~\ref{analy:emsh}.
EMSH with $\epsilon<1/N$ is discussed in Section~\ref{hard:emsh}.
Some of its applications are shown in Section~\ref{appl:emsh}
\section{The Data Structure}
\label{datastr:emsh}
An EMSH consists of a set of trees on disk.
The nodes of the trees are classified as follows.  
A node without a child is a leaf. 
A node without a parent is a root.
A node is internal, if it is neither a leaf, nor a root. 

Every non-leaf in the tree has at most $\sqrt{m}$ children.
Nodes hold pointers to their children. 

Every node has a rank associated with it at the time of its creation. 
The rank of a node never changes.
All children of a node of rank $k$ are of rank $k-1$. 
The rank of a tree $T$ is the rank of $T$'s root. 
The rank of a heap $H$ is $\max\{\mbox{rank}(T) \ |\ T\in H\}$. 
An EMSH can have at most $\sqrt{m}-1$ trees of any particular rank.

Each element held in the data structure has a key drawn from a 
	linearly ordered set. 
We will treat an element and its key as indistinguishable.

Each instance of EMSH has an associated error-rate $\epsilon>0$. 
Define $r = \log_{\sqrt{m}}{1/\epsilon}$.
Nodes of the EMSH with a rank of at most $r$ are called {\em pnodes} (for ``pure nodes''), and 
	nodes with rank greater than $r$ are called {\em cnodes} (for ``corrupt nodes'').
Each {\em pnode} holds an array that contains elements in sorted order.
A tree is a ptree if its rank is at most $r$, and a ctree otherwise.

We say that a pnode $p$ satisfies the \emph{pnode invariant} (PNI), if 
\begin{quote}
	$p$ is a non-leaf and the array in $p$ contains at most 
	$B\sqrt{m}$ and at least $B\sqrt{m}/2$ elements, or \\
	$p$ is a leaf and the array in $p$ contains at most 
	$B\sqrt{m}$ elements.
\end{quote}
Note that a pnode that satisfies PNI may contain less than $B\sqrt{m}/2$ elements,
	if it is a leaf.

Every cnode has an associated doubly linked list of {\em listnodes}. 
A cnode holds pointers to the first and last listnodes of its list. 
The size of a list is the number of listnodes in it. 
Each listnode holds pointers to the next and previous
	listnodes of its list; the next (resp., previous)
        pointer of a listnode $l$ is null if $l$  is the last (resp., first)
        of its list.
Each listnode contains at most $B\sqrt{m}$, and unless it is the last of a list,
	at least $B\sqrt{m}/2$  elements. 
The last listnode of a list may contain less than $B\sqrt{m}/2$ elements.

Let $s_k$ be defined as follows:
\begin{equation*}
s_k =
 \begin{cases}
 0 & \text{if $k \leq r$} \\
 2 & \text{if $k = r+1$} \\
 \lceil \frac{3}{2} s_{k-1} \rceil & \text{if $k > r+1$}
 \end{cases}
\end{equation*}

We say that a cnode $c$ that is a non-leaf  
	satisfies the \emph{cnode invariant} (CNI), if 
	the list of $c$ has a size of at least $\lfloor s_k/2 \rfloor + 1$. 
A leaf cnode always satisfies CNI.

\begin{table}
\begin{center}
            \begin{tabular}{|l|l|l|l|}
               \cline{1-4} \hline
 {\bf Type of node} & {\bf Number of} & {\bf Number of elements in a} & {\bf Size of the list of a} \\
              & {\bf children} & {\bf pnode of this type} & {\bf cnode of this type} \\
              &          & {\bf if it satisfies PNI} & {\bf if it satisfies CNI} \\
\hline \hline
               leaf node & $0$ & $\leq B\sqrt{m}$ & $\geq 1$  \\
\hline
               root node & $\leq\sqrt{m}$ & $\geq B\sqrt{m}/2$; $\leq B\sqrt{m}$  & $\geq \lfloor s_k/2 + 1\rfloor$ \\
\hline
               internal node & $\leq\sqrt{m}$ & $\geq B\sqrt{m}/2$; $\leq B\sqrt{m}$ & $\geq \lfloor s_k/2 + 1\rfloor$  \\
\hline
            \end{tabular}
\caption{Types of nodes in the data structure, and the invariants on them}
\label{tab:data:structure:emsh}
\end{center}
\end{table}
 
Table \ref{tab:data:structure:emsh} summarizes the different types of nodes in an EMSH, 
the number of children each can have, and the PNI and CNI constraints on each.

Every cnode has a {\em ckey}.
For an element $e$ belonging to the list of a cnode $v$,
	the ckey of $e$ is the same as the ckey of $v$; 
	$e$ is called corrupt if its ckey
	is greater than its key.

An EMSH is said to satisfy the heap property if the following
	conditions are met:
For every cnode $v$ of rank greater than $r+1$, the ckey of $v$ is smaller than 
	the ckey of each of $v$'s children.
For every cnode $v$ of rank $r+1$, the ckey of $v$ is smaller than
	each key in each of $v$'s children.
For every pnode $v$, each key in $v$ is smaller than each key in each of $v$'s
	children.

For each rank $i$, we maintain a bucket $B_{i}$ for the roots of rank $i$. 
We store the following information in $B_i$:
\begin{enumerate}
\item the number of roots of rank $i$ in the EMSH; there are at most $\sqrt{m}-1$ such roots.
\item pointers to the roots of rank $i$ in the EMSH.
\item if $i>r$ and $k=\min\{\mbox{ckey}(y)\ |\ y \mbox{ is a root of rank } i\}$
	then a listnode of the list associated with the root of rank $i$, whose ckey value is $k$;
	this listnode will not be the last of the list, unless the list has only one listnode. 
\item if $i \leq r$ then the $n$ smallest of all elements in the roots of rank $i$, 
	for some $n\leq B\sqrt{m}/2$
\item a pointer suffixmin$[i]$
\end{enumerate}

We define the minkey of a tree as follows:
for a ptree $T$, the minkey of $T$ is defined as the 
	smallest key in the root of $T$;
for a ctree $T$, the minkey of $T$ is the ckey of the root of $T$.
The minkey of a bucket $B_{i}$ is the smallest of the minkeys of the trees of rank $i$ in the EMSH;
	$B_{i}$ holds pointers to the roots of these trees.  
The suffixmin pointer of $B_{i}$ points to the bucket with the smallest minkey
	among $\{B_{x}\ |\ x\geq i\}$.  

For each bucket, we keep the items in 1, 2 and 5 above, and at most a block of the elements 
	(3 or 4 above) in the main memory.
When all elements of the block are deleted by {\tt Deletemin}s, the next 
	block is brought in.
The amount of main memory needed for a bucket is, thus, $O(B+\sqrt{m})$. 
As we shall show later, the maximum rank in the data structure, and so the number
	of buckets is $O(\log_{\sqrt{m}} (N/B))$. Therefore, if 
	$N = O(B m^{M/2(B + \sqrt{m})})$ 
	the main memory suffices for all the buckets.
(See Subsection~\ref{memory:requirement}). 

We do not keep duplicates of elements. 
All elements and listnodes that are taken into the buckets would be 
	physically removed from the respective roots.
But these elements and listnodes would still be thought of as belonging to their
	original positions.
For example, the above definition of minkeys assumes this.

A bucket $B_i$ becomes {\em empty}, irrespective of the value of $i$,
when all the elements in it have been deleted.


\subsection{The Operations}

In this section we discuss the {\tt Insert}, {\tt Deletemin}, {\tt Findmin}, {\tt Meld}, {\tt Sift} and {\tt Fill-Up} 
    operations on EMSH. The first four are the basic operations. 
The last two are auxiliary.
The {\tt Sift} operation is invoked only on non-leaf nodes that fail to satisfy the 
	pnode-invariant (PNI) or cnode-invariant (CNI), whichever is relevant.
When the invocation returns, the node will satisfy the invariant.
Note that PNI applies to pnodes and CNI applies to cnodes. 
{\tt Fill-Up} is invoked by the other operations on a bucket when they find it empty.

\subsubsection{{\tt Insert}}
For each heap, a buffer of size $B\sqrt{m}$ is maintained in the main memory. 
If an element $e$ is to be inserted into heap $H$, store it in the buffer of $H$. 
The buffer stores its elements in sorted order of key values. 
If the buffer is full (that is, $e$ is the $B\sqrt{m}$-th  element of the buffer), 
	create a new node $x$ of rank $0$, and copy all elements in the buffer into it.
The buffer is now empty.
Create a tree $T$ of rank $0$ with $x$ as its only node. 
Clearly, $x$ is a root as well as a leaf. 
Construct a new heap $H'$ with $T$ as its sole tree. 
Create a bucket $B_0$ for $H'$, set the number of trees in it to $1$, and include a pointer to $T$ in it.
Invoke {\tt Meld} on $H$ and $H'$.

\subsubsection{{\tt Deletemin}}
A {\tt Deletemin} operation is to delete and return an element with the smallest key in the EMSH.
The pointer suffixmin$[0]$ points to the bucket with the smallest minkey.
A {\tt Deletemin} proceeds as in Figure~\ref{fig:deletemin}.

\begin{figure}
\vspace{0.2in}
\hrule
\begin{tabbing}
aaaa \= aaaa \= aaaa \= aaaa \= aaaa \= aaaa \kill \\
Let $B_i$ be the bucket pointed by suffixmin$[0]$; \\
let $e$ be the smallest element in the insert buffer; \\
if the key of $e$ is smaller than the minkey of $B_i$ then \\
	\> delete $e$ from the insert buffer, and return e; \\
\\
if $i\leq r$ then \\
	\> the element with the smallest key in the EMSH is in bucket $B_{i}$; \\
	\> let it be $e$; delete $e$ from $B_{i}$; \\
	\> if $B_{i}$ is not empty, then \\ 
	\>	\> update its minkey value; \\

else \\
	\> let $x$ be the root of the tree $T$ that lends its minkey to $B_{i}$; the ckey of $x$ is smaller \\
	\> than all keys in the pnodes and all ckeys; $B_{i}$ holds elements from a listnode $l$ of $x$;\\
	\> let $e$ be an element from $l$;  delete $e$ from $l$; \\
\\
if $B_{i}$ is empty then \\
	\> fill it up with an invocation to {\tt Fill-Up}(), and update $B_i$'s minkey value; \\
\\
update the suffixmin pointers of buckets $B_{i}, \ldots, B_{0}$; \\ 
return $e$; 
\end{tabbing}
\vspace{0.1in}
\hrule
\vspace{0.2in}
\caption{Deletemin}
\label{fig:deletemin}
\end{figure}

Note that if after a {\tt Deletemin}, a root fails to satisfy the relevant invariant,
then a {\tt Sift} is not called immediately. We wait till the next {\tt Fill-Up}.
(While deletions happen in buckets, they are counted against the root from which
the deleted elements were taken. Therefore, a deletion can cause the corresponding root
to fail the relevant invariant.)
 
Recall that we keep at most a block of $B[i]$'s elements in the main memory.
When all elements of the block are deleted by {\tt Deletemin}s, the next block is
	brought in.

\subsubsection{{\tt Findmin}}
A {\tt Findmin} return the same element that a {\tt Deletemin} would.
But the element is not deleted from the EMSH. 
Therefore, a {\tt Findmin} does not need to perform any of the updations
that a {\tt Deletemin} has to perform on the data structure.

As it is an in-core operation, a {\tt Findmin} does not incur any I/O. 

\subsubsection{{\tt Meld}}

In the {\tt Meld} operation, two heaps $H_1$ and $H_2$ are to be merged
        into a new heap $H$.
It is assumed that the buckets of the two heaps remain in the main memory.

Combine the input buffers of $H_1$ and $H_2$.
If the total number of elements exceeds $B\sqrt{m}$, then
create a new node $x$ of rank $0$, move $B\sqrt{m}$ elements from the buffer into it
leaving the rest behind, create a tree $T$ of rank $0$ with $x$ as its only node, 
create a bucket $B_0'$, set the number of trees in it to $1$, and include a pointer to $T$ in it.

Let $B_{1,i}$ (resp., $B_{2,i}$) be the $i$-th bucket of $H_1$ (resp., $H_2$).
Let max denote the largest rank in the two heaps $H_1$ and $H_2$. The {\tt Meld} is analogous to
the summation of two $\sqrt{m}$-radix numbers of $\max$ digits. At position $i$,
buckets $B_{1,i}$ and $B_{2,i}$ are the ``digits''; there could also be
a ``carry-in'' bucket $B'_{i}$. The ``summing'' at position $i$ produces a new
$B_{1,i}$ and a ``carry-out'' $B'_{i+1}$. 
$B_0'$ will function as the ``carry in'' for position $0$.

The {\tt Meld} proceeds as in Figure~\ref{fig:meld}.

\begin{figure}
\vspace{0.2in}
\hrule
\begin{tabbing}
aaaa \= aaaa \= aaaa \= aaaa \= aaaa \= aaaa \kill \\
for $i=0$ to max$+1$ \\
begin \\
\> if only one of $B_{1,i}$, $B_{2,i}$ and $B'_{i}$ exists then \\
\>\>that bucket becomes $B_{1,i}$; {\tt Fill-Up} that bucket, if necessary; \\
\>\>there is no carry-out; \\
\> else \\
\>\>if $i\leq r$ \\
\>\>\>if $B_{1,i}$ (resp., $B_{2,i}$) contains elements then \\
\>\>\>\> send the elements of $B_{1,i}$ (resp., $B_{2,i}$) back to the \\
\>\>\>\>roots from which they were taken; \\ 
\>\>\>for each root $x$ pointed by $B_{1,i}$ or $B_{2,i}$ \\ 
\>\>\>\>if $x$ does not satisfy PNI, invoke {\tt Sift}$(x)$; \\ 
\>\>else \\ 
\>\>\>if $B_{1,i}$ (resp., $B_{2,i}$) and the last listnode $l_1$ (resp., $l_2$) \\
\>\>\>of the root $x_1$ (resp., $x_2$) with the smallest ckey in $B_{1,i}$ (resp., $B_{2,i}$) \\
\>\>\>have sizes $<B\sqrt{m}/2$ each, then \\ 
\>\>\>\>merge the elements in $B_{1,i}$ (resp., $B_{2,i}$) into $l_1$ (resp., $l_2$); \\ 
\>\>\>else \\ 
\>\>\>\>store the elements in $B_{1,i}$ (resp., $B_{2,i}$) in a new listnode $l$\\ 
\>\>\>\>and insert $l$ into the list of $x_1$ (resp., $x_2$) so that \\
\>\>\>\>all but the last listnode will have $\geq B\sqrt{m}/2$ elements; \\ 
\>\>\>\>if $x_1$ (resp., $x_2$) does not satisfy CNI, then {\tt Sift} it; \\ 
\>if the total number of root-pointers in $B_{1,i}$, $B_{2,i}$ and $B'_{i}$ is $<\sqrt{m}$, then  \\
\>\>move all root-pointers to $B_{1,i}$; {\tt Fill-Up} $B_{1,i}$; \\
\>\>delete $B_{2,i}$ and $B'_{i}$; There is no carry-out; \\
\end{tabbing}
\hfill {CONTINUED}
\label{fig:meld1}
\end{figure}

\begin{figure}
\begin{tabbing}
aaaa \= aaaa \= aaaa \= aaaa \= aaaa \= aaaa \kill \\
\>else \\
\>\>create a tree-node $x$ of rank $i+1$; \\
\>\>pool the root-pointers in $B_{1,i}$, $B_{2,i}$ and $B'_{i}$; \\
\>\>take $\sqrt{m}$ of those roots and make them children of $x$; {\tt Sift}$(x)$; \\
\>\>create a carry-out bucket $B'_{i+1}$; \\ 
\>\>place in it a pointer to $x$; this is to be the only root-pointer of $B'_{i+1}$; \\ 
\>\>move the remaining root-pointers into $B_{1,i}$; {\tt Fill-Up} $B_{1,i}$; \\
\>\>delete $B_{2,i}$ and $B'_{i}$; \\
end; \\
update the suffixmin pointers; \\
\end{tabbing}
\vspace{0.1in}
\hrule
\vspace{0.2in}
\caption{Meld}
\label{fig:meld}
\end{figure}

\subsubsection{{\tt Sift}}

The {\tt Sift} operation is invoked only on non-leaf nodes that fail to satisfy 
	PNI or CNI, whichever is relevant.
When the invocation returns, the node will satisfy the invariant.
We shall use in the below a procedure called
	extract that is to be invoked only on cnodes of rank $r+1$, 
	and pnodes, and is defined in Figure~\ref{fig:extract}.

\begin{figure}[ht]
\vspace{0.2in}
\hrule
\begin{tabbing}
aaaa \= aaaa \= aaaa \= aaaa \= aaaa \= aaaa \kill \\
extract($x$) \\
begin \\
\> let $N_x$ be the total number of elements in all the children of $x$ \\
\> put together; extract the smallest $\min\{B\sqrt{m}/2,N_x\}$ of those \\
\> elements and store them in $x$; \\
end
\end{tabbing}
\vspace{0.1in}
\hrule
\vspace{0.2in}
\caption{Extract}
\label{fig:extract}
\end{figure}

Suppose {\tt Sift} is invoked on a node $x$. 
This invocation could be recursive, or from {\tt Meld} or {\tt Fill-Up}.
{\tt Meld} and {\tt Fill-Up} invoke {\tt Sift} only on roots.
Recursive invocations of {\tt Sift} proceed top-down; thus, any recursive
	invocation of {\tt Sift} on $x$ must be from the parent of $x$. 
Also, as can be seen from the below, 
	as soon as a non-root fails its relevant invariant (PNI or CNI),
	{\tt Sift} is invoked on it.
Therefore, at the beginning of a {\tt Sift} on $x$, each child of $x$ must
	satisfy PNI or CNI, as is relevant.

{\bf If $x$ is a pnode} (and thus, PNI is the invariant violated), then
	$x$ contains less than $B\sqrt{m}/2$ elements.
Each child of $x$ satisfies PNI, and therefore has, unless it is a leaf,  
	at least $B\sqrt{m}/2$ elements.  
Invoke extract$(x)$.
This can be done in $O(\sqrt{m})$ I/Os by performing a $\sqrt{m}$-way merge of $x$'s children's arrays. 
For each non-leaf child $y$ of $x$ that now violates PNI, recursively invoke {\tt Sift}$(y)$.
Now the size of $x$ is in the range $[B\sqrt{m}/2, B\sqrt{m}]$, unless 
all of $x$'s children are empty leaves. 

{\bf If $x$ is a cnode of rank $r+1$}, then CNI is the invariant violated.
The children of $x$ are of rank $r$, and are thus pnodes.
There are two possibilities: (A)~This {\tt Sift} was invoked from a {\tt Fill-Up} or {\tt Meld}, and thus
	$x$ has one listnode $l$ left in it. (B)~This {\tt Sift} was invoked recursively, and thus
	$x$ has no listnode left in it.
In either case, to begin with, invoke extract$(x)$, and invoke {\tt Sift}$(y)$ 
	for each non-leaf child $y$ of $x$ that now violates PNI.
The number of elements gathered in $x$ is $B\sqrt{m}/2$, 
	unless all of $x$'s children are empty leaves. 
 
Suppose case (A) holds. 
Create a new listnode $l'$, and store in $l'$ the elements just extracted into $x$.
If $l'$ has a size of $B\sqrt{m}/2$, insert $l'$ at the front of $x$'s list;
	else if $l$ and $l'$ together have at most $B\sqrt{m}/2$ elements, then
	merge $l'$ into $l$;
	else, append $l'$ at the end of the list, and transfer enough elements
	from $l'$ to $l$ so that $l$ has a size of $B\sqrt{m}/2$.

If case (B) holds, then if $x$ has nonempty children, once again, extract$(x)$, and invoke {\tt Sift}$(y)$ 
	for each non-leaf child $y$ of $x$ that now violates PNI.
The total number of elements gathered in $x$ now is $B\sqrt{m}$, 
	unless all of $x$'s children are empty leaves. 
If the number of elements gathered is at most $B\sqrt{m}/2$,
	then create a listnode, store the elements in it, and make it the sole member
	of $x$'s list;
otherwise, create two listnodes, insert them in the list of $x$, store $B\sqrt{m}/2$ elements in the first, 
	and the rest in the second.
 
In both the cases, update the ckey of $x$ so that it will be the largest of all keys now present in $x$'s list.
	
{\bf If $x$ is a cnode of rank greater than $r+1$}, then
	while the size of $x$ is less than $s_{k}$, and not all children of $x$ hold
	empty lists, do the following repeatedly: 
        (i)~pick the child $y$ of $x$ with the smallest ckey, (ii)~remove the last listnode 
	of $x$ and merge it with the last listnode $y$, if they together have at most
	$B\sqrt{m}$ elements, (iii)~merge the resultant list of $y$ to the resultant list of $x$ such that all but the last listnode will have al least $B \sqrt{m}/2$ elements,
	(iv)~set the ckey of $x$ to the ckey of $y$, and (v)~invoke {\tt Sift}$(y)$ recursively.
If merging is not required, then the concatenation merely updates $O(1)$ pointers.
Merging, when it is needed, incurs $O(\sqrt{m})$ I/Os.

The {\tt Sift} operation removes all leaves it renders empty.
An internal node becomes a leaf, when all its children are removed.

\subsubsection{{\tt Fill-Up}}
The {\tt Fill-Up} operation is invoked by {\tt Deletemin} and {\tt Meld} 
	on a bucket $B_{i}$ when those operations find $B_i$ empty.
$B_i$ is filled up using the procedure given in Figure~\ref{fig:fillup}.

\begin{figure}
\vspace{0.2in}
\hrule
\begin{tabbing}
aaaa \= aaaa \= aaaa \= aaaa \= aaaa \= aaaa \kill \\
if $i\leq r$ then \\
	\> for each root $x$ in $B_i$ that does not satisfy PNI \\
	\> 	\> {\tt Sift}$(x)$; \\
	\> Let $N_i$ be the total number of elements in all the roots of $B_i$ put together; \\
	\> extract the smallest $\min\{B\sqrt{m}/2,N_i\}$ of those and store them in $B_i$; \\
else \\
	\> for each root $x$ in $B_i$ that does not satisfy CNI \\
	\>	\> {\tt Sift}$(x)$; \\
	\> pick the root $y$ with the smallest ckey in $B_i$; \\
        \> copy the contents of one $l$ of $y$'s listnodes (not the last one) into $B_i$; \\
        \> remove $l$ from the list of $y$.
\end{tabbing}
\vspace{0.1in}
\hrule
\vspace{0.2in}
\caption{Fill-Up}
\label{fig:fillup}
\end{figure}

A bucket remembers, for each element $e$ in it, the root from which $e$ was extracted. 
This is useful when the {\tt Meld} operation sends the elements in the bucket back to their
respective nodes. 

Even if a {\tt Fill-Up} moves all elements of a root $x$ without children into the bucket,
	$x$ is retained	until all its elements are deleted from the bucket. 
(A minor point: For $i\leq r$, if the roots in the bucket all have sent up all their
	elements into the bucket, are without children, and have at most
	$B\sqrt{m}/2$ elements together, then all of them except one can be deleted at the time
	of the {\tt Fill-Up}.)

\subsubsection{The Memory Requirement}
\label{memory:requirement}
The following lemma establishes the largest rank that can be present in a 
	heap.
\begin{lemma}
\label{lem:nodesofarank}
There are at most $\frac{N}{B\sqrt{m}^{k+1}}$ tree-nodes of rank $k$ when $N$ elements have been 
inserted into it.
\end{lemma}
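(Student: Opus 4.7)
The plan is to prove the bound by induction on the rank $k$, counting not the current tree-nodes but the total number of rank-$k$ tree-nodes that have ever been created during the lifetime of the heap. Since every currently existing rank-$k$ node is a node that was at some point created (and possibly survived), this quantity is an upper bound on the one we are asked to bound. Let $T_k$ denote the total number of rank-$k$ nodes ever created after $N$ elements have been inserted.

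For the base case $k=0$, I would argue that rank-$0$ nodes are born in exactly one way: whenever the insertion buffer of some heap fills up to $B\sqrt{m}$ elements, a fresh rank-$0$ leaf is manufactured to absorb those $B\sqrt{m}$ elements. This happens both in \texttt{Insert} and when two buffers are pooled during \texttt{Meld}. In either case, $B\sqrt{m}$ distinct inserted elements are consumed to manufacture one rank-$0$ node, and no element is ever counted against two such creations. Thus after $N$ insertions, $T_0 \le N/(B\sqrt{m}) = N/(B\sqrt{m}^{0+1})$, as required.

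For the inductive step, assume $T_{k-1} \le N/(B\sqrt{m}^{k})$. The only mechanism by which a rank-$k$ node ($k \ge 1$) is brought into existence is the ``carry'' branch of \texttt{Meld}: when the root-pointers available at position $k-1$ (pooled from $B_{1,k-1}$, $B_{2,k-1}$ and any carry-in $B'_{k-1}$) number at least $\sqrt{m}$, a new node $x$ of rank $k$ is created and exactly $\sqrt{m}$ rank-$(k-1)$ roots are attached as its children. The key monotonicity observation is that once a rank-$(k-1)$ node is installed as a child of such an $x$, it is never again a root, and hence is never again available to be handed to a freshly created rank-$k$ parent. Consequently, each rank-$(k-1)$ node that has ever existed contributes to at most one rank-$k$ creation, giving
\[
T_k \;\le\; \frac{T_{k-1}}{\sqrt{m}} \;\le\; \frac{N}{B\sqrt{m}^{k+1}},
\]
which completes the induction.

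The main obstacle I anticipate is not the recurrence itself but justifying the ``consumed at most once'' step rigorously against all the ways the data structure mutates. Specifically, I will have to check that neither \texttt{Sift} nor any other operation ever promotes a non-root rank-$(k-1)$ node back to root status and reuses it as a child of a second rank-$k$ parent: \texttt{Sift} only extracts elements from existing children or deletes emptied leaves, never rewires a non-root into a root; \texttt{Deletemin} and \texttt{Findmin} operate on buckets without changing the parent pointers; and \texttt{Meld}, when it creates a rank-$k$ parent, does so with roots that have not previously been made children of anything. Once this invariant is stated cleanly, the counting argument above closes the proof, and the bound on current rank-$k$ tree-nodes follows from $(\text{current count}) \le T_k$.
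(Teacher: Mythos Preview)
Your proposal is correct and follows essentially the same inductive counting argument as the paper: rank-$0$ nodes each consume $B\sqrt{m}$ insertions, and each rank-$k$ node consumes $\sqrt{m}$ rank-$(k-1)$ roots. Your version is in fact more careful than the paper's, since you make explicit the distinction between ``ever created'' and ``currently existing'' nodes, you note that \texttt{Meld} (not only \texttt{Insert}) can spawn rank-$0$ nodes when buffers are pooled, and you justify that a rank-$(k-1)$ node, once attached as a child, is never re-rooted and reused---points the paper's proof simply takes for granted.
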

{\bf Proof:}
We prove this by induction on $k$.
The basis is provided by the rank-$0$ nodes.
A node of rank $0$ is created when $B\sqrt{m}$ new elements have been accumulated in 
	the main memory buffer. 
Since the total number of elements inserted in the heap is $N$, 
	the total number of nodes of rank $0$ is at most $N/B\sqrt{m}$. 
Inductively hypothesise that the lemma is true for tree-nodes of
	rank at most $(k-1)$. 
Since a node of rank $k$ is generated when
	$\sqrt{m}$ root nodes of rank $k-1$ are combined, the number of 
	nodes of rank $k$ is at most
	$\frac{N}{B\sqrt{m}^{k}\sqrt{m}}=\frac{N}{B\sqrt{m}^{k+1}}$.
\hfill ${\Box}$

Therefore, if there is at least one node of rank $k$ in the heap, then 
$\frac{N}{B\sqrt{m}^{k+1}} \geq 1$, and so $k\leq \log_{\sqrt{m}}\frac{N}{B}$.
Thus, the rank of the EMSH is at most $\log_{\sqrt{m}} \frac{N}{B}$.
Note that there can be at most $\sqrt{m}-1$ trees of the same rank.

The main memory space required for a bucket is $O(B+\sqrt{m})$.
So, the total space required for all the buckets is
        $O((B+\sqrt{m}) \log_{\sqrt{m}} \frac{N}{B})$.
We can store all buckets in main memory, if we assume that
        $(B+\sqrt{m}) \log_{\sqrt{m}} \frac{N}{B} = O(M)$.
This assumption is valid for all values of $N=O(Bm^{M/2(B+\sqrt{m})})$.
Assume the modest values for $M$ and $B$ given in \cite{MR99}: say, a block is of size $1$ KB, and the main
	memory is of size $1$ MB, and can contain $B=50$ and $M=50000$ records respectively.
Then, if $N<10^{900}$, which is practically always, the buckets will all fit in the main memory.

\section{A Proof of Correctness}
\label{correct:emsh}
If the heap order property is satisfied at every node in the EMSH before an 
	invocation of {\tt Sift}$(x)$, then it will be satisfied after the
	invocation returns too.
This can be shown as follows.

If $x$ is a pnode, then the invocation causes a series of {\tt Extract}s, each
	of which moves up into a node a set of smallest elements in its children;
	none of them can cause a violation of the heap order property.

If $x$ is a cnode of rank $r+1$, then a set of smallest elements at $x$'s children
	move up into $x$ and become corrupt. 
All these elements have key values greater than $k'$, the ckey of $x$ prior to the {\tt Sift}.
The new ckey of $x$ is set to the largest key $k$ among the elements moving in. 
Thus, $k$ is smaller than each key in each of $x$'s
	children after the invocation; and $k>k'$. 

If $x$ is a cnode of rank greater than $r+1$, then a set of corrupt elements move
	into $x$ from $y$, the child of $x$ with the smallest ckey, and the ckey
	of $x$ is set to the ckey of $y$.
Inductively assume that the ckey of $y$ is increased by the recursive {\tt Sift} on $y$.  
Therefore, at the end of the {\tt Sift} on $x$, the ckey of $x$ is smaller than the
	ckey of each of $x$'s children.

In every other operation of the EMSH, all data movements between nodes 
	are achieved through {\tt Sift}s.
Thus, they too cannot violate the heap order property.

When we note that a {\tt Fill-Up} on a bucket $B_i$ moves into it
	a set elements with smallest keys or the smallest ckey from
	its roots, and that 
	the suffixmin pointer of $B_{0}$ points to the bucket with the smallest minkey
	among $\{B_{x}\ |\ x\geq 0\}$, we have the following Lemma. 

\begin{lemma}
If there is no cnode in the EMSH, then the element returned by {\tt Deletemin}
will be the one with the smallest key in the EMSH.
If there are cnodes, and if the returned element is corrupt (respectively, not corrupt), 
then its ckey (respectively, key) will be the smallest
of all keys in the pnodes and ckeys of the EMSH.
\end{lemma}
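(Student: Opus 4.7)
The plan is to combine the heap order property (already established in the paragraphs preceding the lemma) with the invariants maintained on buckets by \texttt{Fill-Up} and \texttt{Meld}. My first step would be to observe that, because the heap order is preserved after every \texttt{Sift} (and hence after every basic operation), the minimum key (resp.\ ckey) in any tree is attained at its root: for a ptree $T$ of rank at most $r$, every key in a child of a node is larger than every key in that node, so the smallest key in $T$ lies in the array of the root; for a ctree $T$ of rank greater than $r$, induction on the rank together with the heap property on ckeys shows that the ckey of the root of $T$ is no larger than any ckey appearing at an interior cnode of $T$, and also no larger than any key in any pnode descendant of $T$'s root.

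Next I would argue that the minkey of each bucket $B_i$, as maintained by \texttt{Fill-Up} and the root-sending step in \texttt{Meld}, really is the smallest candidate contributed by rank-$i$ trees. For $i\le r$, \texttt{Fill-Up} extracts the $\min\{B\sqrt m/2, N_i\}$ smallest elements from across all roots of rank $i$ (after repairing any PNI violation by \texttt{Sift}), so the minkey of $B_i$ equals the smallest key present in any rank-$i$ root, which by the first step equals the smallest key present in any rank-$i$ ptree. For $i>r$, \texttt{Fill-Up} first repairs any CNI-violating root by \texttt{Sift}, then copies a listnode of the root $y$ with the smallest ckey among rank-$i$ roots into $B_i$; since the ckey of every element in that listnode equals $\mathrm{ckey}(y)$, the minkey of $B_i$ equals the smallest ckey among all rank-$i$ ctree roots, which by the first step equals the smallest ckey present anywhere in a rank-$i$ ctree.

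Then I would use the suffixmin invariant: $\mathrm{suffixmin}[0]$ points to the bucket $B_j$ whose minkey is the smallest over all nonempty buckets. Combining this with the previous paragraph, the minkey of $B_j$ is simultaneously the smallest over all ptree root-arrays and over all ctree root-ckeys, i.e.\ the smallest over all pnode keys and all ckeys currently present in the EMSH. The \texttt{Deletemin} procedure returns an element $e$ of $B_j$ (after comparing against the insert buffer, which I would handle as a trivial side case). If $B_j$ is a pnode bucket then $e$ is exactly an element with the smallest key across all pnodes; in the no-cnode case, the whole EMSH is pnodes, so $e$ has the smallest key in the EMSH, proving the first assertion. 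If $B_j$ is a cnode bucket, then $e$ comes from the listnode of the root with the smallest ckey, so the ckey of $e$ equals that smallest ckey; whether $e$ is flagged corrupt or not depends only on whether its original key is smaller than that ckey.

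The only delicate point, and where I would spend the most care, is the dichotomy in the second half of the lemma: the returned element may be non-corrupt even though it sits in a cnode's list (its actual key can be below the ckey). I would verify that in both sub-cases the quantity reported (ckey if corrupt, key if not) coincides with the minkey of $B_j$, and hence with the claimed global minimum across pnode keys and ckeys; this uses the fact that the key of a non-corrupt element in a listnode equals its ckey by definition of corruption, so no bookkeeping slack is lost.
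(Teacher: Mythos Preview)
Your approach is essentially the paper's own: the paper dispatches the lemma in one sentence, noting that \texttt{Fill-Up} moves into $B_i$ the elements with smallest keys (for $i\le r$) or the smallest ckey (for $i>r$) from its roots, and that $\mathrm{suffixmin}[0]$ points to the bucket of smallest minkey; your proposal just unpacks these observations in detail.

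One small gap in your final paragraph: the claim that ``the key of a non-corrupt element in a listnode equals its ckey by definition of corruption'' is not quite \emph{by definition}. Non-corruptness gives only $\mathrm{ckey}(e)\le\mathrm{key}(e)$. The reverse inequality $\mathrm{key}(e)\le\mathrm{ckey}(e)$ is an invariant of the construction: at rank $r+1$ the ckey is set to the largest key in the list, and at higher ranks each merge sets the ckey of $x$ to $\mathrm{ckey}(y)\ge$ the old $\mathrm{ckey}(x)$, so every key already in $x$'s list and every key arriving from $y$'s list remains bounded by the new ckey. You should state and use that invariant explicitly; with it, a non-corrupt element in a listnode indeed has $\mathrm{key}=\mathrm{ckey}$, and your argument closes.
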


For all $k>r$, and for every nonleaf $x$ of rank $k$ that satisfies CNI, 
the size of the list in $x$ is at least $\lfloor s_k/2 \rfloor + 1$.
For a root $x$ of rank $k>r$, when the size of its list falls below $\lfloor s_k /2
	\rfloor + 1$, {\tt Sift}$(x)$ is not invoked until at least the next invocation
	of {\tt Fill-Up}, {\tt Meld} or {\tt Deletemin}.

The following lemma gives an upperbound on the size of the list.

\begin{lemma}
\label{lem:3sk}
For all $k>r$, and for every node $x$ of rank $k$, 
the size of the list in $x$ is at most $3 s_k$.
\end{lemma}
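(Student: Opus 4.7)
The plan is to prove the bound by induction on the rank $k$, with the arithmetic fact $3 s_{k-1} \leq 2 s_k$ --- implied by the recurrence $s_k = \lceil \frac{3}{2} s_{k-1} \rceil$ --- as the lever that makes the inductive step close. I would simultaneously induct on the sequence of operations, showing that the bound is preserved after each operation that can modify the list of $x$.

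For the base case $k = r+1$, I would argue directly from the description of {\tt Sift} on a rank-$(r+1)$ cnode. In case (A) the routine leaves at most two listnodes (the preserved $l$ together with the freshly created $l'$, possibly merged), regardless of which of the three sub-branches is taken; in case (B) it rebuilds the list with at most two listnodes. A subsequent {\tt Meld} at rank $r+1$ can append at most one further listnode to the root of smallest ckey, so the list size stays at most $3 < 6 = 3 s_{r+1}$.

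For $k > r+1$, the dominant growth mechanism is {\tt Sift} itself, and this is where I would spend the bulk of the proof. By the inductive hypothesis, every rank-$(k-1)$ child $y$ of $x$ carries a list of at most $3 s_{k-1}$ listnodes. The while loop inside {\tt Sift}$(x)$ terminates the instant the list of $x$ reaches size $s_k$, so upon entering its final iteration the list has size at most $s_k - 1$. That iteration splices in the list of $y$ (the optional merge of the two adjacent last listnodes can only decrease the count), so the post-{\tt Sift} size of $x$'s list is at most
\begin{equation*}
(s_k - 1) + 3 s_{k-1} \;\leq\; (s_k - 1) + 2 s_k \;=\; 3 s_k - 1.
\end{equation*}

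It remains to check that the other operations do not push the size past $3 s_k$. {\tt Fill-Up} and {\tt Deletemin} only remove listnodes, so they are trivially safe. {\tt Meld} is the delicate one: in the $i > r$ branch it can append at most one new listnode to the root with smallest ckey among those in the bucket being absorbed, and a single such addition keeps the size at most $(3 s_k - 1) + 1 = 3 s_k$. I expect the main obstacle of the proof to be the bookkeeping required to argue that between two consecutive {\tt Sift}s on the same cnode at most one {\tt Meld} can contribute a listnode --- using the facts that each such bucket-addition is immediately followed by the {\tt Fill-Up} of that bucket (which removes a listnode from precisely the smallest-ckey root that was just augmented) or by the structural combination of $\sqrt{m}$ rank-$k$ roots into a rank-$(k+1)$ parent (whose {\tt Sift} cascades down and resets the child's list to at most $3 s_k - 1$). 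Once this bookkeeping is handled, the inductive step closes and the lemma follows.
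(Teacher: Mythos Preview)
Your proposal is correct and follows essentially the same route as the paper: induction on $k$, the inequality $3s_{k-1}\le 2s_k$, and the observation that the while loop in {\tt Sift} enters its last iteration with at most $s_k-1$ listnodes and adds at most $3s_{k-1}$, yielding $\le 3s_k-1$. The one difference is that the paper dispatches your anticipated ``main obstacle'' in a single line rather than by bookkeeping: it simply notes that between one {\tt Sift} and the next on $x$ the list of $x$ can lose elements but never gain any (the {\tt Meld} branch only restores elements that {\tt Fill-Up} previously moved to the bucket, and those are already regarded as belonging to the root), so the bound established immediately after {\tt Sift} is never exceeded thereafter.
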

{\bf proof:}
We prove this by an induction on $k$.
Note that between one {\tt Sift} and another on a node $x$, the list of $x$ can
lose elements, but never gain.  

A {\tt Sift} on a node of rank $r+1$ causes it to have a list of size at most two;
$2\leq 3s_{r+1}=6$; this forms the basis. 

Let $x$ be a node of rank $ > r + 1$. Hypothesise that the upperbound
holds for all nodes of smaller ranks. When {\tt Sift}$(x)$ is called,
repeatedly, a child of $x$ gives $x$ a list $L'$ that is then added to the list $L$ of $x$,
until the size of $L$ becomes at least $s_{k}$ or $x$ becomes a leaf.
The size of each $L'$ is, by the hypothesis, at most $3 s_{k-1} \leq
2\lceil\frac{3}{2}s_{k-1}\rceil =2s_{k}$.


The size of $L$ is at most $s_k - 1$ before the last iteration.
Therefore, its size afterwards can be at most $3 s_k -1 < 3 s_k$. 
\hfill $\Box$

\begin{lemma}
\label{lem:skbounds}
For all values of $k > r$,
\begin{equation*}
\left(\frac{3}{2}\right)^{k - r -1} \leq s_k \leq 2 \left(\frac{3}{2}\right)^{k - r} - 1
\end{equation*}
\end{lemma}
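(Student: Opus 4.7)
The plan is to prove both bounds by induction on $k$, using the recurrence $s_{r+1}=2$ and $s_k = \lceil (3/2)\, s_{k-1} \rceil$ for $k > r+1$.

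For the base case $k = r+1$, the lower bound becomes $(3/2)^0 = 1 \leq 2 = s_{r+1}$ and the upper bound becomes $2(3/2)^1 - 1 = 2 \geq s_{r+1}$, so both hold. The lower-bound induction step is immediate: from $s_{k-1} \geq (3/2)^{k-r-2}$ we get
\[
s_k \;=\; \lceil (3/2)\, s_{k-1}\rceil \;\geq\; (3/2)\, s_{k-1} \;\geq\; (3/2)^{k-r-1}.
\]

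The upper-bound step is the only place requiring care. The key observation I would use is that $s_{k-1}$ is a positive integer, so $(3/2)\, s_{k-1}$ is either an integer or a half-integer; in either case $\lceil (3/2)\, s_{k-1}\rceil \leq (3/2)\, s_{k-1} + 1/2$. Assuming inductively that $s_{k-1} \leq 2(3/2)^{k-1-r} - 1$, I then get
\[
s_k \;\leq\; (3/2)\, s_{k-1} + 1/2 \;\leq\; (3/2)\bigl(2(3/2)^{k-1-r} - 1\bigr) + 1/2 \;=\; 2(3/2)^{k-r} - 1,
\]
which closes the induction.

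The main (very mild) obstacle is getting the ceiling right: a naive bound $\lceil x \rceil \leq x + 1$ would yield $s_k \leq 2(3/2)^{k-r} - 1/2$, which is not tight enough to propagate. Exploiting integrality of $s_{k-1}$ to sharpen this to $+1/2$ is what makes the constants in the closed form match exactly, and is the one step I would double-check carefully before writing out the final argument.
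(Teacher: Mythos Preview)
Your proof is correct, and your handling of the upper bound is actually cleaner than the paper's. The paper uses the naive estimate $\lceil x\rceil \leq x+1$, which, as you noticed, does not propagate the hypothesis $s_{k-1}\leq 2(3/2)^{k-1-r}-1$. To get around this, the paper instead proves a strengthened inequality $s_k\leq 2(3/2)^{k-r}-2$ for $k\geq r+4$ by induction (where the extra slack of $1$ absorbs the $+1$ from the ceiling), and then verifies the cases $k=r+1,r+2,r+3$ by hand. Your observation that $s_{k-1}$ is an integer, so $(3/2)s_{k-1}$ is at worst a half-integer and hence $\lceil (3/2)s_{k-1}\rceil \leq (3/2)s_{k-1}+1/2$, lets the induction close directly with the stated bound from the base case $k=r+1$, with no strengthening and no case analysis. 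Both arguments are short; yours is tighter and more uniform.
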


{\bf Proof:}
A simple induction proves the lowerbound. 
Basis: $s_{r+1}=2\geq\left(\frac{3}{2}\right)^{0}=1$.
Step: For all $k>r+1$, $s_{k}=\lceil \frac{3}{2} s_{k-1} \rceil\geq\frac{3}{2} s_{k-1}\geq\left(\frac{3}{2}\right)^{k - r -1}$.

Similarly, a simple induction shows that, 
for all values of $k \geq r+4$, $s_k \leq 2 \left(\frac{3}{2}\right)^{k - r} - 2$.
Basis: $s_{r+4}=8\leq 2\left(\frac{3}{2}\right)^{4} - 2=8.125$.
Step: $s_{k}=\lceil \frac{3}{2} s_{k-1} \rceil\leq\frac{3}{2} s_{k-1}+1\leq
	\frac{3}{2}\left[2\left(\frac{3}{2}\right)^{k - r -1}-2\right]+1=2 \left(\frac{3}{2}\right)^{k - r} - 2$.
Note that
$s_{r+1}=2=2\left(\frac{3}{2}\right)-1$,
$s_{r+2}=3<2\left(\frac{3}{2}\right)^{2}-1$, and
$s_{r+3}=5<2\left(\frac{3}{2}\right)^{3}-1$,
Therefore, for all values of $k > r$, $s_k \leq 2 \left(\frac{3}{2}\right)^{k - r} - 1$.
 \hfill $\Box$

\begin{lemma}
If $m>110$, at any time there are at most $\epsilon N$ corrupt elements in the EMSH,
where $N$ is the total number of insertions performed.
\end{lemma}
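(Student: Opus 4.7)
The plan is to bound the number of corrupt elements by summing, over all ranks $k>r$, the product of (number of nodes of rank $k$), (maximum number of listnodes in a node of rank $k$), and (maximum number of elements per listnode). Since every corrupt element belongs, conceptually, to a listnode of some cnode (ranks $>r$), this triple product gives a valid upper bound.

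First I would invoke Lemma \ref{lem:nodesofarank} to bound the number of nodes of rank $k$ by $N/(B\sqrt{m}^{k+1})$, Lemma \ref{lem:3sk} to bound the list size by $3s_k$, and the data-structure definition, which limits each listnode to $B\sqrt{m}$ elements. Multiplying these together and summing yields an upper bound
\[
\sum_{k>r} \frac{N}{B\sqrt{m}^{k+1}} \cdot 3s_k \cdot B\sqrt{m} \;=\; \sum_{k>r} \frac{3 s_k N}{\sqrt{m}^{k}}.
\]

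Next I would substitute the upper bound $s_k \le 2(3/2)^{k-r}$ from Lemma \ref{lem:skbounds}, change the summation index to $j=k-r$, and pull out the $k$-independent factor $1/\sqrt{m}^{r}$. Since $r = \log_{\sqrt{m}}(1/\epsilon)$ we have $1/\sqrt{m}^{r} = \epsilon$, so the bound becomes
\[
6\epsilon N \sum_{j\ge 1}\left(\frac{3}{2\sqrt{m}}\right)^{j}.
\]
For any $m>4$ the ratio is less than $1$, so the geometric series converges to $\frac{3}{2\sqrt{m}-3}$, giving a total bound of $\frac{18\epsilon N}{2\sqrt{m}-3}$.

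Finally I would verify the threshold: for $m>110$ we have $\sqrt{m}>10.5$, hence $2\sqrt{m}-3>18$, so the coefficient $18/(2\sqrt{m}-3)$ is at most $1$ and the bound is $\le \epsilon N$, as claimed. The only place where real care is required is ensuring that no corrupt element is double-counted or uncounted — this is handled by the convention stated in Section \ref{datastr:emsh} that elements removed into buckets are still considered to belong to the cnode from which they came, so the rank-by-rank summation genuinely covers every corrupt element in the data structure.
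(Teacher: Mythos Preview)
Your proposal is correct and follows essentially the same argument as the paper: both sum $3s_k\cdot B\sqrt{m}\cdot N/(B\sqrt{m}^{k+1})$ over $k>r$, apply the upper bound on $s_k$ from Lemma~\ref{lem:skbounds}, evaluate the resulting geometric series to get $\frac{9}{\sqrt{m}-1.5}\cdot\epsilon N$ (equivalently your $\frac{18}{2\sqrt{m}-3}\cdot\epsilon N$), and check that this coefficient drops below $1$ once $\sqrt{m}>10.5$. Your closing remark about the bucket convention is a welcome clarification that the paper leaves implicit.
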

{\bf Proof:}
All corrupt elements are stored in nodes of rank greater than $r$.
The size of the list of a node of rank $k>r$ is
    at most $3s_k$ by Lemma~\ref{lem:3sk}.
Each listnode contains at most $B\sqrt{m}$ corrupt elements.
Thus, the total number of corrupt elements at a node of rank $k>r$
	is at most $3s_kB\sqrt{m}$.
Suppose $m>110$. Then $\sqrt{m}>10.5$.

As $r = log_{\sqrt{m}}\frac{1}{\epsilon}$, 
by Lemma~\ref{lem:skbounds} and Lemma~\ref{lem:nodesofarank},
the total number of
corrupt elements are at most
\begin{equation*}
\begin{split}
\sum_{k>r}(3s_kB\sqrt{m})\frac{N}{B(\sqrt{m})^{k+1}} & =
\frac{N}{(\sqrt{m})^r}  \sum_{k>r}\frac{3  s_k}{(\sqrt{m})^{k-r}} \\
 & \leq \frac{N}{(\sqrt{m})^r} \sum_{k>r}\frac{3  (2  (3/2)^{k-r} - 1)}{(\sqrt{m})^{k-r}} \\
 & \leq \frac{N}{(\sqrt{m})^r} \sum_{k>r}\frac{6 
 (3/2)^{k-r}}{(\sqrt{m})^{k-r}} \\
 & \leq \frac{N}{(\sqrt{m})^r}  \frac{9}{(\sqrt{m}- 1.5)}\\
 & < \frac{N}{(\sqrt{m})^r} = \epsilon N
\end{split}
\end{equation*}
\hfill $\Box$

\section{An Amortised I/O Analysis}
\label{analy:emsh}

Suppose charges of colours green, red and yellow remain distributed over the
data structure as follows: (i)~each root carries $4$  green charges,
(iii)~each bucket carries $2\sqrt{m}$ green charges,
(ii)~each element carries $1/B$ red charges,
(iv)~each nonleaf node carries one yellow charge, and
(v)~each leaf carries $\sqrt{m}+1$ yellow charges.

In addition to these, each element also carries a number of blue charges.
The amount of blue charges that an element carries can vary with its position
in the data structure.

The amortised cost of each operation is its actual cost plus
the total increase in all types charges caused by it.
Now we analyse each operation for its amortised cost.

{\tt Insert}:
Inserts actually cost $\Theta(\sqrt{m})$ I/Os when the insert buffer in the main memory runs full,
which happens at intervals of $\Omega(B\sqrt{m})$ inserts.
Note that some {\tt Deletemin}s return elements in the insert buffer.
If an {\tt Insert} causes the buffer to become full, 
then it creates a new node $x$, a tree $T$ with $x$ as its only node, and
a heap $H'$ with $T$ as its only tree. The $B\sqrt{m}$ elements in the buffer
are copied into $x$. Moreover, a bucket is created for $H'$.
New charges are created and placed on all new entities.
Thus, this {\tt Insert} creates $4+2\sqrt{m}$ green charges, $\sqrt{m}$ red charges,
and $\sqrt{m}+1$ yellow charges. Suppose it also places $\Theta(r/B)$ blue charges
on each element of $x$. That is a total of 
$\Theta(r\sqrt{m})$ blue charges on $x$. 
That is, the total increase in the charges of the system is
$\Theta(r\sqrt{m})$. 
It follows that the amortised cost of a single {\tt Insert} is
$O(r/B)$. 

{\tt Meld}:
The buckets are processed for positions $0$ to max$+1$ in that order, where
max is the largest rank in the two heaps melded. The process is
analogous to the addition of two $\sqrt{m}$-radix numbers.
At the $i$-th position, at most three buckets are to be handled:
$B_{1,i}$, $B_{2,i}$ and the ``carry-in'' $B'_{i}$. 
Assume inductively that each bucket holds
$2\sqrt{m}$ green charges. 

If only one of the three buckets is present at
position $i$, then there is no I/O to perform, no charge is released, and, therefore,
the amortised cost at position $i$ is zero. 

If at least two of the three are present,
then the actual cost of the operations at position $i$ is $O(\sqrt{m})$.
As only one bucket will be left at position $i$, at least one bucket is
deleted, and so at least $2\sqrt{m}$ green charges are freed. Suppose, $\sqrt{m}$ of this 
pays for the work done. If there is no ``carry-out'' $B'_{i+1}$ to be formed,
then the amortised cost at position $i$ is negative.

If $B'_{i+1}$ is to be formed, then
place $\sqrt{m}$ of the remaining green charges
on it. When $B'_{i+1}$ is formed, $\sqrt{m}$ roots hook
up to a new node; these roots cease to be roots, and so together give 
up $4\sqrt{m}$ green charges; four of that will be placed on the new root; 
$4\sqrt{m}-4$ remain; $4\sqrt{m}-4\geq \sqrt{m}$, as $m\geq 2$. 
So we have an extra of $\sqrt{m}$ green charges to put on the carry-out
which, with that addition, holds $2\sqrt{m}$ green charges. No charge of other colours is freed.

The amortised cost is non-positive at each position $i$. 
So the total amortised cost of {\tt Meld} is also non-positive.

{\tt Deletemin}:
A typical {\tt Deletemin} is serviced from the main memory, and does
not cause an I/O. Occasionally, however, an invocation
to {\tt Fill-Up} becomes necessary. The actual I/O cost
of such an invocation is $O(\sqrt{m})$. A {\tt Fill-Up} is triggered
in a bucket when $\Theta(B\sqrt{m})$ elements are deleted from it.
At most a block of the bucket's elements are kept in the main memory.
Thus, a block will have to be fetched into the main memory once in every
$B$ {\tt Deletemin}s. 
The red charges of deleted items can pay for the cost of the all these I/Os.  
The amortised cost of a {\tt Deletemin} is, therefore, at most zero.

{\tt Findmin}: As no I/O is performed, and no charge is released,
the amortised cost is zero.

{\tt Fill-Up}:
This operation is invoked only from {\tt Meld} or {\tt Deletemin}.
The costs have been accounted for in those.

{\tt Sift}:
Consider a {\tt Sift} on a node $x$ of rank $\leq r+1$.
This performs one or two Extracts. The actual cost of the Extracts is $O(\sqrt{m})$.
If the number of extracted elements is $\Theta(B\sqrt{m})$,
	then each extracted element can contribute $1/B$ blue charges 
	to pay off the actual cost.
If the number of extracted elements is $o(B\sqrt{m})$, then
$x$ has become a leaf after the {\tt Sift}.
Therefore, all of $x$'s children were leaves before the {\tt Sift}. 
One of them can pay for the {\tt Sift} with its $\sqrt{m}$ yellow charges;
	at least one remained at the time of the {\tt Sift}.
Node $x$ that has just become a leaf, has lost the $\sqrt{m}$ children 
	it once had. 
If one yellow charge from each child has been preserved in $x$, then
	$x$ now holds $\sqrt{m}+1$ yellow charges, enough for a leaf.

Consider a {\tt Sift} on a node $x$ of rank $i > r+1$.
A number of iterations are performed, each of which costs $O(\sqrt{m})$ I/Os.
In each iteration, a number of elements move from a node of rank 
$i-1$ (namely, the child $y$ of $x$ with the smallest ckey) to a node of rank $i$
(namely, $x$). If the number of elements moved is $\Omega(B\sqrt{m})$, then
the cost of the iteration can be charged to the elements moved. 
Suppose each element moved contributes $\frac{1}{s_{i-1}B}$ blue charges.
Since the list of $y$ has at least $\lfloor s_{i-1}/2 + 1\rfloor$ listnodes,
in which all but the last have at least $B\sqrt{m}/2$ elements, the total number of
blue charges contributed is at least 
$\frac{1}{s_{i-1}B}\lfloor \frac{s_{i-1}}{2}\rfloor \frac{B\sqrt{m}}{2} =\Theta{(\sqrt{m})}$. 
Thus, the cost of the
	iteration is paid off.

If the number of elements moved is $o(B\sqrt{m})$, then $y$ was a leaf prior
to the {\tt Sift}, and so can pay for the {\tt Sift} with its $\sqrt{m}$ yellow charges.
If $x$ becomes a leaf at the end of the {\tt Sift}, it will have $\sqrt{m}+1$ yellow charges 
on it, as one yellow charge from each deleted child is preserved in $x$.

An element sheds $1/B$ blue charges for each level it climbs up, for the first
$r+1$ levels. After that when it moves up from level $i-1$ to $i$, it
sheds $\frac{1}{s_{i-1}B}$ blue charges. Therefore, with 
\[\frac{r+1}{B}+\sum_{i>r+1}\frac{1}{s_{i-1}B}=\frac{r+1}{B}+\sum_{i>r}\frac{1}{B}
 \left(\frac{2}{3}\right)^{i - r -1}=\Theta{\left(\frac{r}{B}\right)}\]
blue charges initially placed on the element, it can pay for its travel upwards. 

Thus, we have the following lemma.
\begin{lemma}
In EMSH, the amortised complexity of an {\tt Insert} is $O(\frac{1}{B}\log_{m} \frac{1}{\epsilon})$. {\tt Findmin}, {\tt Deletemin} and {\tt Meld} all have non-positive amortised complexity.
\end{lemma}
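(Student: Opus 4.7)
The plan is to close out the amortised analysis using the multi-coloured charging scheme already set up in Section~\ref{analy:emsh}. I would define the potential $\Phi$ as the total charge on the data structure, summing contributions of four colours: green charges ($4$ per root and $2\sqrt{m}$ per bucket), red charges ($1/B$ per element), yellow charges ($1$ per internal node and $\sqrt{m}+1$ per leaf), and blue charges whose distribution varies with the rank at which each element currently resides. The amortised cost of an operation is its actual I/O cost plus $\Delta\Phi$. The lemma then follows by showing the bounds hold operation by operation.

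First I would bound the blue charge that must be deposited on each element at {\tt Insert} time. An element created at a leaf of rank $0$ may be lifted by successive {\tt Sift}s through ranks $1,2,\ldots,r$ (one $1/B$ charge per level, so $O(r/B)$ total) and then through cnodes of rank $r+1,r+2,\ldots$, where at rank $i>r+1$ it pays $\tfrac{1}{s_{i-1}B}$. Using the lower bound $s_k\geq(3/2)^{k-r-1}$ from Lemma~\ref{lem:skbounds}, the geometric tail $\sum_{i>r+1}\tfrac{1}{s_{i-1}B}$ is $O(1/B)$, so the total blue deposit per inserted element is $O(r/B)=O\bigl(\tfrac{1}{B}\log_m\tfrac{1}{\epsilon}\bigr)$. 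Combined with the $O(\sqrt{m})$ green and yellow charges created once per buffer flush (amortised over $B\sqrt{m}$ inserts), this yields the $O(\tfrac{1}{B}\log_m\tfrac{1}{\epsilon})$ bound for {\tt Insert}.

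Next I would argue each of {\tt Findmin}, {\tt Deletemin}, {\tt Meld} has non-positive amortised cost. {\tt Findmin} does no I/O and changes no charge. For {\tt Deletemin}, the in-core case is trivial, and when a block must be fetched from a bucket it happens only once per $B$ deletions, so the red charges released by the $B$ deleted elements pay for it. For {\tt Meld}, I would walk through positions $i=0,\ldots,\max+1$ as in Section~\ref{analy:emsh}: whenever two or three buckets collide, an $O(\sqrt{m})$ actual cost is offset by the $2\sqrt{m}$ green charges released by deleting a bucket, and when $\sqrt{m}$ roots are hooked into a new parent the $4\sqrt{m}$ green charges they release comfortably refund the $4$ for the new root, the $\sqrt{m}$ for the new bucket, and the $\sqrt{m}$ of carry-out potential, assuming $m\geq 2$.

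The main obstacle will be the {\tt Sift} analysis on cnodes of rank $k>r+1$, since here the iterations must be paid by a mixture of blue charges from elements actually moved and yellow charges from children that become empty leaves. I would argue that whenever an iteration transports a full list from $y$ to $x$, the list has $\Omega(s_{k-1})$ listnodes each holding $\Theta(B\sqrt{m})$ elements (Lemma~\ref{lem:3sk} and the CNI lower bound), so the $\tfrac{1}{s_{k-1}B}$ blue charge per moved element aggregates to $\Theta(\sqrt{m})$, paying the iteration; and whenever the iteration drains a near-empty child, that child must already be a leaf and its $\sqrt{m}+1$ yellow charges cover the work, with one yellow charge per removed child preserved in the parent should $x$ itself become a leaf. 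Verifying that the invariants on green, red, yellow, and blue charges are reestablished after each operation, and in particular that the geometric series $\sum 1/s_{i-1}$ converges to a constant, is the delicate step; everything else is bookkeeping.
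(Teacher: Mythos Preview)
Your proposal is correct and follows essentially the same four-colour charging scheme and operation-by-operation accounting that the paper carries out in Section~\ref{analy:emsh}; the lemma in the paper is just the summary of that analysis. One small point you glossed over: the {\tt Deletemin} cost must also absorb the $O(\sqrt{m})$ cost of an occasional {\tt Fill-Up} (not just the per-$B$ block fetch), but since a {\tt Fill-Up} is triggered only after $\Theta(B\sqrt{m})$ deletions, the $\sqrt{m}$ red charges released over that span cover it, exactly as in the paper.
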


\section{Hard heap: A Meldable Priority Queue}
\label{hard:emsh}
When an EMSH has error-rate $\epsilon=1/(N+1)$, no element in it can be corrupt.
In this case, EMSH becomes an exact priority queue, which we call hard heap.
In it every node is a pnode, and every tree is a ptree. 
{\tt Deletemin}s always report the exact minimum in the hard heap. 
The height of each tree is $O(\log_{m}\frac{N}{B})$, as before.
But, since all nodes are pnodes, the amortised cost of an insertion is
$O(\frac{1}{B}\log_{m}\frac{N}{B})$ I/Os. The amortised costs of all 
other operations remain unchanged.

The absence of corrupt nodes will also permit us to implement a {\tt Delete}
operation: To delete the element with key value $k$, insert a ``Delete'' record
with key value $k$. Eventually, when $k$ is the smallest key value in the hard heap,
a {\tt Deletemin} will cause
the element with key $k$ and the ``Delete'' record to come up together. Then the
two can annihilate each other. The amortised cost of a {\tt Delete} is
$O(\frac{1}{B}\log_{m}\frac{N}{B})$ I/Os, the same as that of an {\tt Insert}.
 
None of the known external memory priority queues (EMPQs) \cite{Arge03,FJKT99,KS96},
support a {\tt meld} operation. However, in all of them, two queues could be {\tt meld}ed by 
inserting elements of the smaller queue into the larger queue one by one.
This is expensive if the two queues have approximately the same size.
The cost of this is not factored into the amortised complexities of those
EMPQs. 

The actual cost of a {\tt meld} of two hard heap's with $N$ elements each is
$O(\sqrt{m}\log_{m}\frac{N}{B})$ I/Os; the amortised cost of the {\tt meld} is subzero.
But this is the case only if the buckets of
both the heaps are in the main memory. Going by our earlier analysis in Section~\ref{memory:requirement},
if $N=O(Bm^{M/2k(B+\sqrt{m})})$ then $k$ heaps of size $N$ each can keep their buckets in the main memory.

The buckets of the heaps to be melded could be kept in the secondary memory, 
and brought into the main memory, and written back either side of the meld. 
The cost of this can be accounted
by an appropriate scaling of the amortised complexities.
However, operations other than {\tt meld} can be performed only if the buckets are 
in the main memory.
 
In hard heap, unlike in the other EMPQs, 
elements move only in the upward direction. 
That makes hard heap easier to implement.
Hard heap and the external memory heap of \cite{FJKT99} do not require any extra space
other than is necessary for the elements.
The other EMPQs \cite{Arge03,KS96} use extra space (See Table~\ref{comparison:emsh}).
  
The buffer tree \cite{Arge03} is a B+ tree, and therefore uses a balancing procedure.
However, because of delayed deletions, its height may depend on the number of pending
deletions, as well as the number of elements left in it.
The external memory heap of \cite{FJKT99} is a balanced heap, and therefore,
incurs a balancing cost. But, in it the number of
I/Os performed by a sequence of $S$ operations is
$O(\frac{S}{B}+\frac{1}{B}\sum_{i = 1}^{S} \log_{m}\frac{N_i}{B})$, where $N_i$ is the number
of elements remaining in the heap before the $i$-th operation; this is helpful
when the {\tt insert}s and {\tt deletemin}s are intermixed so that the number
of elements remaining in the data structure at any time is small.  

In comparison, hard heap is not a balanced tree data structure. It does not use a costly
balancing procedure like the heaps of \cite{FJKT99,KS96}.
However, for a sequence of $N$ operations, 
	the amortised cost of each operation is $O(\frac{1}{B} \log_m \frac{N}{B})$ I/Os.

We can make the amortised cost depend on $N_i$, the number
of elements remaining in the hard heap before the $i$-th operation, at the cost
of adding a balancing procedure.
In a sequence of operations, whenever $N_I$, the number of {\tt insert}s performed, and $N_D$, 
	the number of {\tt deletemin}s performed, satisfy $N_I - N_D < N_I / \sqrt{m}$,
	and the height of the hard heap is $\log_{\sqrt{m}} \frac{N_I}{B}$, 
	delete the remaining $N_I - N_D$ elements from the hard heap,  insert them back in,
	and set the counts $N_I$ and $N_D$ to $N_I-N_D$ and zero respectively;
	we can think of this as the end of an {\em epoch} and the beginning of the next in
	the life of the hard heap. 
The sequence of operations, thus, is a concatenation of several epochs.
Perform an amortised analysis of each epoch independently. 
The cost of reinsertions can be charged to the elements actually deleted in the previous
	epoch, thereby multiplying the amortised cost
	by a factor of $O(1+\frac{1}{\sqrt{m}})$.
It is easy to see that now the number of
I/Os performed by a sequence of $S$ operations is
$O(\frac{S}{B}+\frac{1}{B}\sum_{i = 1}^{S} \log_{m}\frac{N_i}{B})$.

\subsection{Heap Sort}
We now discuss an implementation of Heap Sort using hard heap, and
count the number of comparisons performed.
To sort, insert the $N$ input elements into an initially empty hard heap, and 
then perform $N$ {\tt deletemin}s.

When a node of rank $0$ is created, $O(B\sqrt{m} \log_2 (B\sqrt{m}))$ comparisons are performed;
that is $O(\log_2 (B\sqrt{m}))$ comparisons per element involved.
When an elements moves from a node to its parent, it participates in a $\sqrt{m}$-way merge;
	a $\sqrt{m}$-way merge that outputs $k$ elements requires to perform only
	$O(k\log_2 \sqrt{m})$ comparisons; that is $O(\log_2 \sqrt{m})$ comparisons per element involved.
Since the number of levels in the hard heap is at most $\log_{\sqrt{m}} N/B\sqrt{m}$, the total
	number of comparisons performed by one element is $\log_2 N$.
Each {\tt deletemin} operation can cause at most $\log_{\sqrt{m}} (N/\sqrt{mB})$ comparisons
	among the suffixmin pointers.
Thus, the total number of comparisons is $O(N \log_2 N)$. 
\section{Applications of EMSH}
\label{appl:emsh} 
The external memory soft heap data structure is useful for finding exact and 
	approximate medians, and for approximate sorting \cite{Ch00a}.
Each of these computations take $O(N/B)$ I/Os:
\begin{enumerate}
\item To compute the median in a set of $N$ numbers, insert the numbers
        in an EMSH with error rate $\epsilon$.
Next, perform $\epsilon N$ {\tt Deletemin}s. The largest number $e$ deleted
        has a rank between $\epsilon N$ and $2\epsilon N$.
Partition the set using $e$ as the pivot in $O(N/B)$ I/Os.
We can now recurse with a partition of size at most $\max\{\epsilon, (1-2\epsilon)\}N$.
The median can be found in $O(N/B)$ I/Os.
This is an alternative to the algorithm in \cite{Si02} which also requires $O(N/B)$ I/Os.
\item To approximately sort $N$ items, insert them into an EMSH with error rate $\epsilon$, and
        perform $N$ {\tt Deletemin}s consecutively. Each element can form an inversion with
        at most $\epsilon N$ of the items remaining in the EMSH at the time of its deletion.
        The output sequence, therefore, has at most $\epsilon N^2$ inversions.
We can also use EMSH to near sort $N$ numbers in $O(N/B)$ I/Os such that
        the rank of each number in the output sequence differs from
        its true rank by at most $\epsilon N$; the in-core algorithm given in \cite{Ch00a} suffices.
\end{enumerate}

  \chapter{The Minimum Cut Problem}
\label{cut:chapt}
\section{Introduction}

The minimum cut problem on an undirected unweighted graph is to partition the vertices into two 
	sets while minimizing the number of edges from one side of the partition to the other.
This is an important combinatorial optimisation problem. 
Efficient in-core and parallel algorithms for the problem are known. 
For a recent survey see \cite{Br07, Ka00, KaMo97, NaIb08}.
This problem has not been explored much from the perspective of massive data sets.
However, it is shown in \cite{AgDaRa+04, R03} that the minimum cut can be computed in 
	a polylogarithmic number of passes using only a polylogarithmic sized main
	memory on the streaming and sort model.

In this chapter we design an external memory algorithm for the 
	problem on an undirected unweighted graph.
We further use this algorithm for computing 
	a data structure which represents all cuts of size at most $\alpha$ times the
	size of the minimum cut, where $\alpha < 3/2$. 
The data structure answers queries of the following form: A cut $X$ (defined by a vertex
	partition) is given; find whether $X$ is of size at most
	$\alpha$ times the size of the minimum cut.
We also propose a randomised algorithm that is based on our deterministic algorithm, and 
	improves the I/O complexity.
An approximate minimum cut algorithm which performs fewer I/Os than 
	our exact algorithm is also presented.

\subsection{Definitions}
For an undirected unweighted graph $G=(V,E)$, a 
	cut $X=(S,V - S)$ is defined as a partition of the vertices
	of the graph into two nonempty sets $S$ and $V - S$.
An edge with one endpoint in $S$ and the other endpoint in $(V-S)$ is called a crossing edge of $X$.
The value $c$ of the cut $X$ is the total number of crossing edges of $X$.
	
The minimum cut (mincut) problem is to find a cut of minimum value. 
On unweighted graphs, the minimum cut problem is sometimes referred to as
	the edge-connectivity problem.
We assume that the input graph is connected, since otherwise the problem is
trivial and can be computed by any connected components algorithm.
A cut in $G$ is $\alpha$-minimum, for $\alpha > 0$,
     if its value is at most $\alpha$ times the minimum cut value of $G$.

A tree packing is a set of spanning trees, each with a weight assigned
	to it, such that the total weight of the trees
	containing a given edge is at most one.
The value of a tree packing is the total weight of the trees in it.
A maximum tree packing is a tree packing of largest value. 
(When there is no ambiguity,  will use ``maximum tree packing'' to refer also 
to the value of a maximum tree packing, and a ``mincut'' to the value of a
mincut.)
A graph $G$ is called a $\delta$-fat graph for $\delta > 0$, if the maximum tree  
packing of $G$ is at least $\frac{(1+\delta)c}{2}$ \cite{Ka00}.

After \cite{Ka00}, we say that a cut $X$ $k$-respects a tree $T$ (equivalently,
$T$ $k$-constrains $X$), if $E[X] \cap E[T] \leq k$, where $E(X)$ is the set of crossing edges of $X$, and $E(T)$ is the set of edges of $T$. 

\subsection{Previous Results}
Several approaches have been tried in designing in-core
algorithms for the mincut problem. See the results in \cite{FF56,
Ga95, GH61, HaOr94, KaSt96, Ka00,  NI92a, NII99, NOI94, StWa97}.
Significant progress  has been made in designing parallel algorithms
        as well \cite{GoShSt82, Ka93, Ka00, KaMo97}.
The mincut problem on weighted directed graphs is shown to be
        P-complete for LOGSPACE reductions \cite{GoShSt82, Ka93}.
For weighted undirected graphs, the problem is shown to be
         in NC \cite{KaMo97}.
We do not know any previous result for this problem on the external memory model.

However, the current best deterministic in-core algorithm on an unweighted graph computes the minimum cut in
        $O(E + c^2 V \log(V/c))$ time, and was given by Gabow \cite{Ga95}.
Gabow's algorithm uses the matroid characterisation of the minimum cut problem.
According to this characterisation, the minimum cut in a graph $G$ is equal to
        the maximum number of disjoint spanning trees in $G$.
The algorithm computes the minimum cut on the given undirected graph in two phases.
In first phase, it partitions the edges $E$ into spanning forests $F_i$, $i = 1, \ldots , N$ in $O(E)$ time using the algorithm given in \cite{NI92a}.
This algorithm requires $O(1)$ I/O for each edge in the external memory model.
In the second phase, the algorithm computes maximum number of disjoint
        spanning trees in $O(c^2 V \log (N/c))$ time.
To compute maximum number of disjoint spanning trees, augmenting paths like in
        flow based algorithms are computed.
The procedures used in computing augmenting paths access the edges randomly and require $O(1)$ I/Os for each edge in the external memory model.
Thus, this algorithm, when executed on the external
        memory model, performs $O(E + c^2 V \log(V/c))$ I/Os.

\subsection{Our Results}
We present a minimum cut algorithm that runs in $O(c (\msf{V,E}\log E + \frac{V}{B} \sort{V}))$ I/Os, and
	performs better on dense graphs than the algorithm of \cite{Ga95}, which requires $O(E + c^2 V \log(V/c))$ 
	I/Os, where $\msf{V,E}$ is the number of I/Os required in computing a minimum 
	spanning tree. 
For a $\delta$-fat graph, our algorithm 
	computes a minimum cut in $O(c  (\msf{V,E} \log E + \sort{E}))$ I/Os.
Furthermore, we use our algorithm to construct a data structure
	that represents all $\alpha$-minimum cuts, for $\alpha < 3/2$.
The construction of the data structure requires an additional $O(\sort{k})$ I/Os, where $k$ is
	the total number of $\alpha$-minimum cuts. 
Our data structure answers an $\alpha$-minimum cut query in
	$O(V/B)$ I/Os. 
The query is to verify whether a given cut (defined by a vertex partition), is 
	$\alpha$-minimum or not.

Next, we show that the minimum cut problem can be computed with high
probability in 
$O(c \cdot \msf{V,E} \log E  + \sort{E} \log^2 V +  \frac{V}{B} \sort{V} \log V)$ I/Os.
We also present a $(2 + \epsilon)$-minimum cut
algorithm that requires $O((E/V) \msf{V,E})$ I/Os and performs better on 
sparse graphs than our exact minimum cut algorithm.

All our results are summarised in Table \ref{result4}.
\begin{table}[t]
\begin{tabular}{|l|l|}
\cline{1-2} 
\hline
{\bf Problems on EM model}       & {\bf Lower/Upper Bounds} \\ \hline  \hline
mincut of an undirected          & $\Omega(\frac{E}{V}\sort{V})$   \\
unweighted graph                 & $O(c  (\msf{V,E}\log E + \frac{V}{B} \sort{V}))$   \\ \hline
mincut of a $\delta$-fat graph   & $O(c  (\msf{V,E} \log E + \sort{E}))$   \\ \hline
Monte Carlo mincut algorithm     & $O(c \cdot \msf{V,E} \log E  + $                        \\
with probability  $1-1/V$        & $ \sort{E} \log^2 V +  \frac{V}{B} \sort{V} \log V)$ \\ \hline
$(2 + \epsilon)$-approx. mincut  & $O((E/V) \msf{V,E})$                                 \\ \hline
Data Structure for               & $O(c  (\msf{V,E}\log E +  \frac{V}{B} \sort{V})+ \sort{k})$ \\
all $\alpha$-mincuts             & answers a query in $O(V/B)$ I/Os \\ \hline
\end{tabular}
\caption{Our Results}
\label{result4}

\end{table}

\subsection{Organisation of This Chapter}
The rest of the chapter is organised as follows. In Section~\ref{notation}, 
we define some notations used in this chapter.
In Section~\ref{lb:mincut}, we give a lower bound result for the minimum
	cut problem.
In Section~\ref{min:cut}, we present an external memory algorithm for the minimum cut problem.
Section~\ref{sec:data:structure:cut} describes the construction of a data structure
	that stores all $\alpha$-minimum cuts, 
	for $\alpha < 3/2$. 
In Section~\ref{randomize}, we improve the I/O complexity of our minimum cut algorithm by using randomisation.
In Section~\ref{fat:graph}, we discuss a special class of graphs for which a minimum cut
    can be computed very efficiently.
In Section~\ref{approximate}, we present a $(2 + \epsilon)$-minimum cut algorithm.

\section{Some Notations}
\label{notation}
For a cut $X$ of graph $G$, $E(X)$ is the set of crossing edges of $X$.
For a spanning tree $T$ of $G$, $E(T)$ is the set of edges of $T$.

Let $v\downarrow$ denote the set of vertices that are descendants of $v$ 
	in the rooted tree, and
	$v\uparrow$ denote the set of vertices that are ancestors of $v$ in
	the rooted tree.
Note that $v \in v\downarrow$ and $v \in v\uparrow$.
Let $C(A,B)$ be the total number of edges with one endpoint in vertex
set $A$ and
    the other in vertex set $B$. An edge with both endpoints in both sets is
    counted twice.
Thus, $C(u,v)$ is $1$, if $(u,v)$ is an edge,
     $0$ otherwise.
For a vertex set $S$, let $C(S)$ denote $C(S,V-S)$.  

\section{A Lower Bound for the Minimum Cut Problem}
\label{lb:mincut}
We use P-way Indexed I/O-tree to prove the lower bound. The P-way Indexed I/O-tree is defined in \cite{MR99} and is shown that it can be transformed to a binary decision for the same
	problem.
We can use the bound on the number of comparisions in the decision tree to establish a bound on the number of I/Os in the P-way Indexed I/O-tree.
The following lemma, given in \cite{MR99} can be used to prove the lower bound on I/Os.
\begin{lemma}\cite{MR99}
\label{thm:mr99}
Let $X$ be the problem solved by an I/O tree $I/O_T$, with $N$ the number of records in the input.
There exists a decision tree $T_c$ solving $X$, such that:
\[ \mbox{Path}_{T_c} \leq N \log B + D. I/O_T \cdot O(B \log \frac{M-B}{B} + \log P) 
\]
\end{lemma}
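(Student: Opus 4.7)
The plan is to simulate the I/O-tree $I/O_T$ by a comparison-based decision tree $T_c$ and to bound the depth of $T_c$ incrementally along any root-to-leaf path of $I/O_T$. Let $d$ denote the depth of $I/O_T$. The simulation will be done \emph{lazily}: at every point of the simulation, $T_c$ knows the complete relative ranking of exactly those records that have already been brought into main memory by the I/Os performed so far, and nothing more.

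First I would handle the initial input. The $N$ records arrive grouped into $N/B$ fixed blocks, and the relative order of the $B$ records inside each block is not yet available to $T_c$. Sorting one block takes $\lceil \log_2 B! \rceil = O(B \log B)$ comparisons, for a total of $O(N \log B)$ over all blocks. This accounts for the additive $N \log B$ term in the statement and is performed once, before any I/O of $I/O_T$ is simulated.

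Next I would simulate each I/O along the chosen root-to-leaf path of $I/O_T$. When an I/O brings a fresh block of $B$ records into main memory---which already holds at most $M - B$ records whose mutual order $T_c$ already knows---$T_c$ performs comparisons to merge the $B$ new records into the sorted list of the $M-B$ existing ones. Since the number of possible merges is $\binom{M}{B}$, an information-theoretic argument (equivalently, any optimal merging procedure) identifies the correct one using $O\!\left(B \log \tfrac{M-B}{B}\right)$ comparisons. After this merge the ordinal content of memory is fully known to $T_c$, and so the choice of which of the at most $P$ children of the current I/O-tree node to follow is a deterministic function of that ordinal content; to encode this choice in a binary tree costs a further $\lceil \log_2 P \rceil$ comparisons. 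Summing the two contributions over the at most $d$ I/O-nodes on the simulated path yields the claimed bound of $N \log B + d \cdot O(B \log \tfrac{M-B}{B} + \log P)$.

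The main obstacle I expect is the justification that the branching at each I/O-node of $I/O_T$ is indeed determined purely by the ordinal information among the records present in memory, so that the ``merge-then-branch'' gadget above loses no information. This is the essence of the comparison-based assumption underlying the I/O-tree model, and one needs to check in addition that \emph{writes} performed during an I/O do not create new ordinal information $T_c$ must separately track---they only relocate records to disk and can be charged to $T_c$ for free. Once this is in place, the two bookkeeping bounds above combine to give the stated inequality.
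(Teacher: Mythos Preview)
The paper does not prove this lemma itself; it is quoted verbatim from \cite{MR99} and used as a black box in the subsequent lower-bound argument for minimum cut. There is therefore no ``paper's own proof'' to compare against.

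That said, your sketch is exactly the standard simulation argument behind the cited result. The decomposition into (i) an initial $N\log B$ cost for sorting each of the $N/B$ input blocks, and (ii) a per-I/O cost of $\log\binom{M}{B}=O\!\left(B\log\frac{M-B}{B}\right)$ for merging the incoming block into the sorted memory contents plus $\lceil\log_2 P\rceil$ to emulate the $P$-way branch, is precisely how \cite{MR99} obtains the bound. Your closing remarks---that the I/O-tree's branching depends only on the ordinal type of the records currently in memory, and that writes create no fresh ordinal information---are the substantive model assumptions that make the simulation lossless, and you have identified them correctly. So the proposal is sound and faithful to the original source, even though the thesis itself only cites it.
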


We define two decision problems {\bf P$_1$} and {\bf P$'_1$} as follows: 
\begin{definition}
{\bf P$_1$:} Given as input a set $S$ of $N$ elements,
	each with an integer key drawn from the range $[1,P]$, say ``yes'' when $S$ contains 
	either every odd element or at least one even element in the range $[1,P]$, and say
	``no'' otherwise (that is, when $S$ does not contain at least one
	odd element and any even element in the range $[1,P]$.)
\end{definition}
\begin{definition}
{\bf P$'_1$:} The problem is a restriction of {\bf P$_1$}. Suppose the input $S$ is divided into $P$ subsets each of size $N/P$
        and containing distinct elements from the range $[P+1,2P]$,  where
        $P < N < P(\lceil P/2 \rceil - 1)$.
{\bf P}$'_1$ is to decide whether $S$ contains
        either every odd element or at least one even element in the range $[P+1,2P]$.  

\end{definition}
First, we prove the following claim for {\bf P$_1$},
\begin{claim}
The depth of any decision tree for {\bf P$_1$} is
        $\Omega(N\log P)$.
\end{claim}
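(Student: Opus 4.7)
The plan is to use a distinguishability argument over ``no'' instances that consist of $N$ distinct odd values from $[1,P]$ (such instances exist whenever $N < \lceil P/2\rceil$, which is the interesting regime). Fix any such $S = (x_1,\ldots,x_N)$ and any purportedly correct decision tree $T$; let $\pi$ be the root-to-leaf path of $T$ traversed on $S$, which must terminate at a ``no'' leaf. For each $i$, let $C_i$ denote the set of comparisons along $\pi$ that involve $x_i$, and let
\[
J_i \;=\; \{\,v \in [1,P] : \text{setting } x_i := v \text{ preserves the outcome of every comparison in } C_i\,\}.
\]

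The crux of the argument is the claim that $J_i$ contains no even integer. Suppose, toward contradiction, that some even $v \in J_i$ exists. Then the instance $S'$ obtained by replacing $x_i$ with $v$ still traverses exactly $\pi$: comparisons outside $C_i$ do not reference $x_i$ and so are unaffected, while those inside $C_i$ retain their outcomes by the definition of $J_i$. Hence $S'$ reaches the same ``no'' leaf as $S$; but $S'$ contains an even element and is therefore a ``yes'' instance, a contradiction. Now each comparison in $C_i$ pits $x_i$ against either a constant or another $x_j$, whose value is fixed by $S$ and therefore acts as a constant; its outcome restricts $x_i$ to one of the three regions $\{<\},\{=\},\{>\}$. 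Consequently $J_i$ is an intersection of such regions, i.e.\ a sub-interval of $[1,P]$, and since any interval of $[1,P]$ of length at least two contains two consecutive integers and hence an even one, $|J_i|\le 1$. In other words, the comparisons in $C_i$ pin $x_i$ down to its exact value among the $P$ possibilities.

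The rest is a counting step. Uniquely identifying $x_i$ among $P$ possibilities via ternary comparisons requires $|C_i| \geq \log_3 P$, so $\sum_{i=1}^N |C_i| \geq N\log_3 P$. Since each comparison along $\pi$ touches at most two input elements, it is counted in at most two of the sets $C_i$, giving
\[
|\pi| \;\geq\; \tfrac{1}{2}\sum_{i=1}^N |C_i| \;=\; \Omega(N\log P).
\]
As $S$ and $T$ were arbitrary, this establishes the desired lower bound on the worst-case depth of any correct decision tree for $\mathbf{P}_1$.

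The main subtlety I expect to nail down is the assertion that $J_i$ is genuinely an interval: it relies on the comparison model being the standard ternary one (outcomes $<,=,>$ only) and on the values of the other $x_j$'s being held fixed as $x_i$ alone varies, so they may be treated as constants along $\pi$; both hinge on the decision-tree setup underlying Lemma~\ref{thm:mr99}. A secondary concern is to verify that comparisons outside $C_i$ really are unaffected by changing $x_i$, which is immediate since they do not reference $x_i$, so the modified instance truly follows $\pi$ end-to-end and lands at the same leaf as $S$.
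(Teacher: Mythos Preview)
Your argument has a genuine gap at the information-theoretic step. You correctly establish that $|J_i|=1$ for each $i$: along the path $\pi$ traversed by the fixed instance $S$, the comparisons touching $x_i$ (with the other coordinates frozen at their values in $S$) pin $x_i$ to a single value. But the inference $|C_i|\ge\log_3 P$ does not follow. The comparisons in $C_i$ are not an adaptive procedure that must identify an \emph{unknown} value among $P$ possibilities; they are a fixed sequence along one branch of an adaptive tree, and they need only isolate the one specific value that $S$ happens to carry in coordinate $i$. A single comparison of $x_i$ against the constant $x_i$ with outcome ``$=$'', or two comparisons against $x_i-1$ and $x_i+1$, already force $|J_i|=1$. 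Concretely, take $N=1$, $P$ large, $S=(5)$: a correct tree may begin by comparing $x_1$ to the constant $5$ and output ``no'' immediately on equality, so on this instance $|\pi|=|C_1|=1$. Thus your per-instance lower bound on $|\pi|$ is false as stated; some ``no'' instances can have very short paths even in a correct tree.

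The observation $|J_i|=1$ is the right one, but it has to be exploited globally by counting leaves, not by bounding a single path. The paper's proof does exactly this: it works with the partial order $PO(\ell)$ at a ``no'' leaf $\ell$ over $S\cup\{1,\dots,P\}$, shows that every equivalence class must take a single value across all ``no'' inputs reaching $\ell$ (else one can place an even integer in a non-degenerate class and fabricate a ``yes'' input at $\ell$), and concludes that each ``no'' leaf is hit by exactly one ``no'' instance. With at least $(\lceil P/2\rceil-1)^N$ ``no'' instances, the tree has that many leaves and hence depth $\Omega(N\log P)$. Note also that your one-variable-at-a-time statement does not by itself yield ``one instance per leaf'': two ``no'' instances differing in several coordinates could conceivably share a path even though no single-coordinate perturbation does, because changing $x_j$ shifts the thresholds that define $J_i$. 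The equivalence-class formulation in the paper handles this simultaneity cleanly.
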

\begin{proof}
$(\lceil P/2\rceil -1)^N$ and $\lceil\frac{P}{2}\rceil(\lceil P/2\rceil -1)^N$ 
	are, respectively, lower and upper bounds on the number of different ``no'' instances.
If we prove that in any linear decision tree for {\bf P$_1$}, there is a one-one correspondence between ``no'' instances and leaves that decide ``no'', then the depth of any decision tree for {\bf P$_1$} is
	$\Omega(N\log P)$.

Our proof is similar to the ones in \cite{MR99}. 
We consider tertiary decision trees in which each decision node has three outcomes: $<$, $=$, $>$. 
Each node, and in particular each leaf, corresponds to a partial order on $S\cup \{1,\ldots,P\}$.
Consider a ``no'' leaf $l$ and the partial order $PO(l)$ corresponding to $l$.
All inputs visiting $l$ must satisfy $PO(l)$. 
Let $C_1, \ldots C_k$ be the equivalence classes of $PO(l)$. 
Let $u_i$ and $d_i$ be the maximum and minimum values respectively of the elements of 
	equivalence class $C_i$ over all ``no'' inputs that visit $l$.
Exactly one input instance visits $l$ if and only if $u_i=d_i$ for all $i$.
If $u_i\not =d_i$ for some $C_i$ then, pick a ``no'' input $I$ that visits $l$
	and fabricate a ``yes'' input $I'$ as follows: assign an even integer $e$, $d_i<e<u_i$,
	to every element in $C_i$, and consistent with this choice and $PO(l)$ change 
	the other elements of $I$ if necessary.
Note that this fabrication is always possible.
Since $I'$ is consistent with $PO(l)$, it visits $l$; a contradiction.
Hence our claim.
\end{proof}

Now we consider the restriction {\bf P$'_1$} of {\bf P$_1$}.
\begin{claim}
\label{claim:P}
The depth of any decision tree for {\bf P$'_1$} is
        $\Omega(N\log P)$.
\end{claim}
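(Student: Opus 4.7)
The plan is to adapt the decision-tree argument used for Claim~\ref{claim:P}'s predecessor (the lower bound on P$_1$) to the restricted problem P$'_1$. The key point is that the structural restriction (input partitioned into $P$ subsets of size $N/P$, each with distinct elements from $[P{+}1,2P]$) still leaves exponentially many ``no'' instances, and the fabrication trick used earlier still produces valid structured ``yes'' inputs.

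First, I would count the ``no'' instances of P$'_1$. A ``no'' instance is one in which every subset consists entirely of odd elements from $[P{+}1,2P]$ (so no even element appears in $S$) and some odd element is absent from the union. To get a clean lower bound, fix any one odd element $e^{*}\in[P{+}1,2P]$ and count those structured inputs in which no subset contains $e^{*}$. Each subset must then be an $(N/P)$-subset of the remaining $\lceil P/2\rceil-1$ odd elements, and the subsets are chosen independently, giving at least $\binom{\lceil P/2\rceil-1}{N/P}^{P}$ ``no'' instances. Using $\binom{a}{b}\ge (a/b)^{b}$, the log of this quantity is $\Omega\!\left(N\log\frac{P(\lceil P/2\rceil-1)}{N}\right)$, which is $\Omega(N\log P)$ throughout the stated range $P<N<P(\lceil P/2\rceil-1)$ (the ratio $\tfrac{P(\lceil P/2\rceil-1)}{N}$ stays bounded below by a power of $P$ for the working range of $N$).

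Second, I would set up the standard ternary comparison decision tree with decision outcomes $<$, $=$, $>$ on $S\cup\{1,\ldots,2P\}$, and then prove a one-to-one correspondence between ``no'' instances and ``no''-leaves, by essentially the same fabrication argument as in the predecessor claim. Consider any ``no''-leaf $l$ with partial order $PO(l)$ and equivalence classes $C_{1},\ldots,C_{k}$. Suppose for contradiction that two ``no'' inputs reach $l$; then some class $C_{i}$ has $u_{i}\ne d_{i}$ (its max and min over ``no'' inputs reaching $l$). Since ``no'' inputs have only odd values, $u_{i},d_{i}$ are odd with $u_{i}-d_{i}\ge 2$, so there is an even integer $e\in(d_{i},u_{i})$. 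Pick any ``no'' input $I$ reaching $l$ and rewrite the value of every element in $C_{i}$ to $e$. The resulting input $I'$ still satisfies $PO(l)$, hence reaches $l$, but contains the even $e$, so it is a ``yes'' input --- a contradiction. The main thing to verify here, which was not needed in the earlier unrestricted version, is that $I'$ still obeys the P$'_1$ structural constraints: (i)~$e\in[P{+}1,2P]$ automatically since $d_{i},u_{i}\in[P{+}1,2P]$; and (ii)~no subset gains a duplicate, because within a subset all elements have distinct odd values in $I$, at most one element of any subset lies in $C_{i}$ (otherwise two equal values would sit in one subset), and the new value $e$ is even while the remaining values in the subset are odd.

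Finally, combining the two parts, any decision tree for P$'_1$ must have at least as many ``no''-leaves as ``no'' instances, hence depth at least $\log_{3}$ of this count, which is $\Omega(N\log P)$. The main obstacle I anticipate is the structural verification in the fabrication step (part (ii) above) together with a careful arithmetic check that the combinatorial lower bound $\binom{\lceil P/2\rceil-1}{N/P}^{P}$ really yields $\Omega(N\log P)$ uniformly over the full parameter range $P<N<P(\lceil P/2\rceil-1)$; the counting is tight near the upper end of that range and needs the explicit $N/P<\lceil P/2\rceil-1$ slack to give a nondegenerate logarithm.
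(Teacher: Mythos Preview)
Your overall strategy---lower-bound the number of ``no'' instances and then argue, via fabrication, that distinct ``no'' instances reach distinct leaves---is exactly the paper's approach. The gap is in the count.

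You count $\binom{P'}{N/P}^{P}$ with $P'=\lceil P/2\rceil-1$, treating each of the $P$ input blocks as an \emph{unordered} $(N/P)$-set of odd keys. But the input elements are labelled: recall that in the proof for $\mathbf{P}_1$ the count of ``no'' instances is $(\lceil P/2\rceil-1)^{N}$, one key-choice per element. Hence for $\mathbf{P}'_1$ the correct per-block count is the number of injections from $N/P$ labelled slots into $P'$ odd keys, namely the falling factorial $(P')_{N/P}=P'(P'-1)\cdots(P'-N/P+1)$, and the paper's bound is $(P')_{N/P}^{\,P}$, whose logarithm is $\Omega(N\log P)$ across the whole range. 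Your binomial undercounts by a factor of $\bigl((N/P)!\bigr)^{P}$, and that missing factor is precisely what carries the bound near the top of the range: at $N/P=P'-1$ your count is $(P')^{P}$, with logarithm only $\Theta(P\log P)$, whereas $\Omega(N\log P)=\Omega(P^{2}\log P)$ is required. Your assertion that the ratio $P(\lceil P/2\rceil-1)/N$ ``stays bounded below by a power of $P$'' is false when $N$ approaches $P(\lceil P/2\rceil-1)$; the strict inequality $N<P(\lceil P/2\rceil-1)$ gives no such slack. Replacing the binomial by the falling factorial repairs this completely, and your self-identified worry then disappears.

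A smaller issue in the fabrication step: changing only class $C_{i}$ to an arbitrary even $e\in(d_{i},u_{i})$ in an \emph{arbitrary} ``no'' input $I$ need not preserve $PO(l)$ (take $C_{i}<_{PO}C_{j}$ with $C_{j}(I)$ lying strictly between $C_i(I)$ and $e$). The paper's $\mathbf{P}_1$ argument explicitly allows adjusting other classes. If you prefer to change only $C_{i}$, pick $I$ realising $C_{i}=d_{i}$ and take $e=d_{i}+1$: every $C_{j}>_{PO}C_{i}$ then has $C_{j}(I)>d_{i}$ and odd, hence $\ge d_{i}+2>e$, and your structural checks (i)--(ii) go through unchanged since still only $C_{i}$ moves.
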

\begin{proof}
A lower bound on the number of ``no'' instances of {\bf P$'_1$}
	is $(P' \cdot (P'-1) \cdot (P'-2) \cdot \ldots \cdot (P' - N/P))^P=\Omega(P)^N$, where
        $P' = \lceil P/2 \rceil - 1$.
An argument similar to the above shows that in any decision tree for 
	{\bf P$'_1$}, the ``no'' instances and leaves that decide ``no''
        correspond one to one.
Therefore, the depth of any decision tree for {\bf P$'_1$} is
        $\Omega(N\log P)$.
\end{proof}
\begin{theorem}
The depth of any I/O-tree for computing the min-cut of a $V$ vertex, $E$ edge graph is $\Omega(\frac{E}{V} \sort{V})$ assuming $E \geq V$ and $\log V < B \log \frac{M}{B}$.
\end{theorem}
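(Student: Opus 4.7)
The plan is to combine the decision-tree lower bound of Claim~\ref{claim:P} for $\mathbf{P}'_1$ with Lemma~\ref{thm:mr99}, by reducing $\mathbf{P}'_1$ to the minimum-cut problem. I set $P := V$ and $N := \Theta(E)$; each of the $P$ groups of size $N/P = \Theta(E/V)$ corresponds to one ``source'' vertex, and the distinct elements within a group, drawn from $[V+1,2V]$, encode the neighbours of that source in a carefully designed base gadget on $\Theta(V)$ vertices. The gadget is arranged so that the value of a min-cut takes one fixed value if and only if $S$ contains every odd element of $[V+1,2V]$ or at least one even element, and takes a strictly different value otherwise. A single min-cut computation on the constructed graph therefore decides $\mathbf{P}'_1$, and the graph has $\Theta(V)$ vertices and $\Theta(E)$ edges.

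Given this reduction, suppose there is an I/O-tree for minimum cut of depth $I/O_T$. Composing it with the reduction, Lemma~\ref{thm:mr99} yields a decision tree for $\mathbf{P}'_1$ of depth at most
\[
N\log B \;+\; D \cdot I/O_T \cdot O\bigl(B\log(M/B) + \log P\bigr).
\]
By Claim~\ref{claim:P}, this depth is $\Omega(N\log P) = \Omega(E\log V)$. Under the hypothesis $\log V < B\log(M/B)$, the term $\log P = \log V$ is absorbed into $B\log(M/B)$, and $N\log B = O(E\log B)$ is dominated by $E\log V$. Rearranging,
\[
I/O_T \;=\; \Omega\!\left(\frac{E\log V}{B\log(M/B)}\right) \;=\; \Omega\!\left(\frac{E}{B}\log_{M/B}\frac{V}{B}\right) \;=\; \Omega\!\left(\frac{E}{V}\sort{V}\right),
\]
which is the stated bound.

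The main obstacle is the gadget design: we need a graph family on $\Theta(V)$ vertices and $\Theta(E)$ edges whose min-cut value is a faithful indicator of the ``every odd, or some even'' predicate, while respecting the partitioned input structure of $\mathbf{P}'_1$ (distinctness only within a group). A natural approach is to plant a base subgraph with a canonical bipartition whose crossing count is a known value $c_0$, and to route the input-controlled edges across this bipartition so that the presence of any even-labelled target, or the absence of a single odd-labelled target, shifts the crossing count by exactly one unit while leaving every competing cut strictly larger. Verifying that such a gadget can be built within $\Theta(E)$ edges and still yields a distinguishing gap of one in the min-cut value, without accidentally creating a lighter cut elsewhere, is the delicate part.
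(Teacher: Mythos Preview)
Your overall strategy---reduce $\mathbf{P}'_1$ to a single min-cut computation and then invoke Lemma~\ref{thm:mr99} together with Claim~\ref{claim:P}---is exactly the paper's, and your arithmetic extracting the $\Omega(\frac{E}{V}\sort{V})$ bound from the comparison lower bound is fine.

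The gap is the gadget, and it is much simpler than the one you are reaching for. You propose to make the min-cut value shift by exactly one between ``yes'' and ``no'' instances, and you rightly flag that preventing lighter competing cuts is delicate; you do not actually build the gadget. The paper sidesteps all of this by using the coarsest possible min-cut distinction: connectivity. Its graph has min-cut $\geq 1$ on ``yes'' instances and min-cut $0$ (i.e., is disconnected) on ``no'' instances. Concretely: the vertex set is $[1,2P]$; a path on $[1,P]$ is always laid down; one preliminary scan of $I$ checks whether any even element is present---if so, a path on $[P+1,2P]$ is also added, and if not, all even vertices in $(P,2P]$ are simply deleted; finally, each occurrence of $P+i$ in the $j$-th group contributes the edge $\{j,P+i\}$. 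On a ``no'' instance (no even element present and some odd $P+i$ absent from $I$) the vertex $P+i$ survives but is isolated; on every ``yes'' instance the graph is connected. No competing-cut analysis is needed, and the construction visibly has $\Theta(P)=\Theta(V)$ vertices and $\Theta(N)=\Theta(E)$ edges, built in $O(N/B)$ I/Os.

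One further technicality the paper addresses and you should too: the preliminary scan that tests for an even element reads parity bits, which is not a comparison. The paper handles this by assuming the least significant bit of each key is supplied separately, so that the reduction remains within the indexed I/O-tree model.
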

\begin{proof}
We construct an undirected graph $G=(V,E)$ from an input instance $I$ of {\bf P$'_1$} as follows.
\begin{enumerate}
\item Let the integers in $[1,2P]$ constitute the vertices.
\item Make a pass through $I$ to decide if it contains an even element.
If it does, then for each $i \in [P+1,2P-1]$,
        add an edge $\{i,i+1\}$ to $G$. Otherwise, remove all even integers (vertices) $> P$ from $G$.
\item Make a second pass through $I$. If the $j$th subset of $I$ contains $P+i$, then add an edge $\{j, P+i\}$ to $G$.
\item For $i \in [1,P-1]$, add an edge $\{i , i+1\}$ to $G$.
\end{enumerate}
Here $|V| = \Theta(P)$ and $|E| = \Theta(N)$. 
One example is shown in figure~\ref{example}.
The construction of the graph requires $O(N/B)$ I/Os.
It needs looking at the least significant bits (LSBs)
	of the keys of the elements;
	if the LSBs are assumed to be given separately from the rest of the keys, this
	will not violate the decision tree requirements.
The value of the minimum cut of $G$ is at least $1$ iff {\bf P$'_1$} answers ``yes'' on $I$.

The bound then follows from the lemma~\ref{thm:mr99} and claim~\ref{claim:P}.
\end{proof}

\begin{figure}
---------------------------------------------------------------------------------------------

$P = 5$, \\
Yes Instance $I_{\mbox{yes}} = \{I_1 = 7, I_2 = 7, I_3 = 8, I_4 = 7, I_5 = 6 \}$ \\
No Instance $I_{\mbox{no}} = \{I_1 = 7, I_2 = 7, I_3 = 7, I_4 = 7, I_5 = 7 \}$ \\

\begin{center}
\includegraphics[height = 3cm, width = 11.0cm]{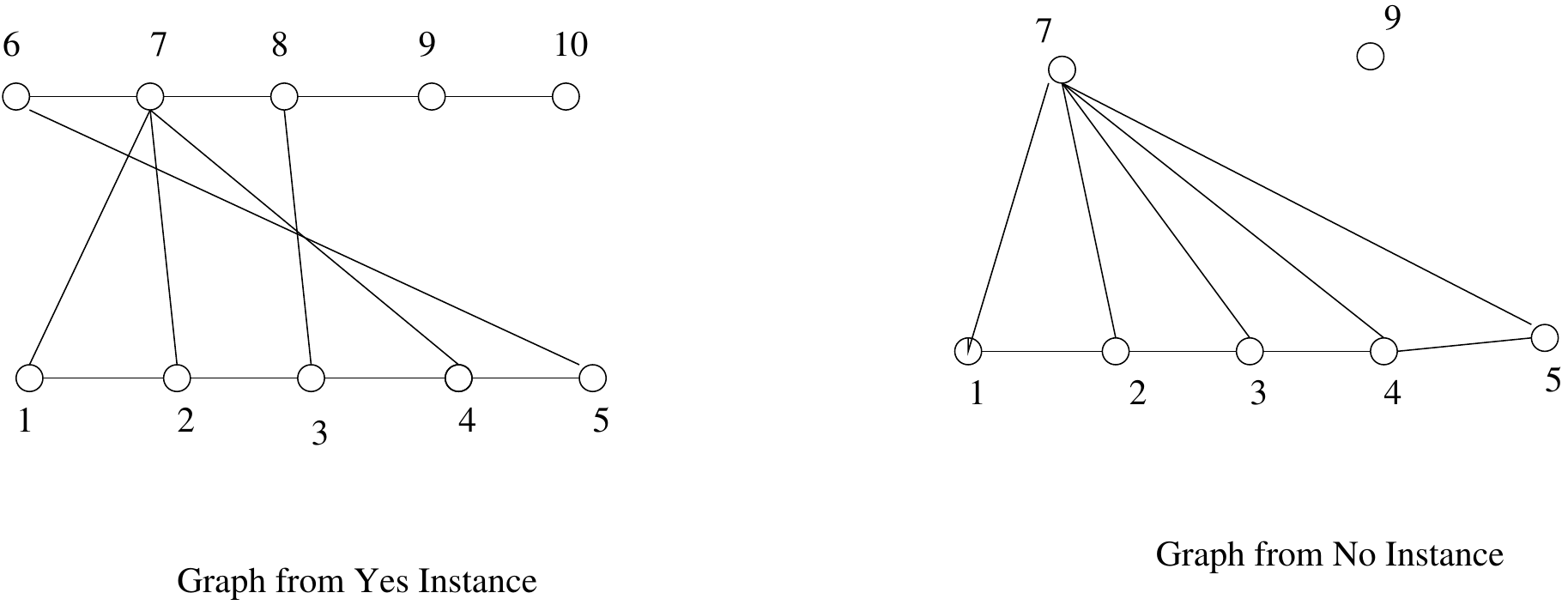}
\end{center}

---------------------------------------------------------------------------------------------
\caption{Example of 'Yes' and 'No' Instance}
\label{example}
\end{figure}

\section{The Minimum Cut Algorithm}
\label{min:cut}
We present an I/O efficient deterministic algorithm for finding mincuts on undirected 
	unweighted graphs.
Our I/O efficient algorithm is based on the semi-duality between minimum
	cut and tree packing.
The duality was used by Gabow \cite{Ga95} in designing
	a deterministic minimum cut in-core algorithm for both directed and undirected 
	unweighted graphs.
It was also used by Karger
	\cite{Ka00} in designing a faster but randomized in-core algorithm for
	undirected weighted graphs.
Our algorithm uses Karger's ideas \cite{Ka00}. 

Nash-Williams theorem \cite{NW61} states that any undirected graph
	with minimum cut $c$ has at least $c/2$ edge disjoint spanning trees.
It follows that in such a packing, for any minimum cut, 
	there is at least one spanning tree that 2-constrains the minimum cut.
Once we compute such a packing, the problem reduces to finding
	a minimum cut that is 2-constrained by some tree in the packing.
The assumption on edge disjointness is
	further relaxed by Karger \cite{Ka00} in the following lemma.
\begin{lemma}{\cite{Ka00}}
\label{lem:tree:packing}
For any  graph $G$, for any tree packing $P$ of $G$ of
	value $\beta c$, and any cut $X$ of $G$ of value $\alpha c$ 
	($\alpha \geq \beta$), at least 
	$(1/2)(3 - \alpha/\beta)$ fraction (by weight) of trees of $P$ $2$-constrains $X$.
\end{lemma}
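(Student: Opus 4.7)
The plan is to prove the lemma by a clean double-counting argument on the quantity $\sum_{T \in P} w_T \cdot k_T$, where $w_T$ is the weight of tree $T$ in the packing and $k_T = |E(T) \cap E(X)|$ is the number of crossing edges of $X$ that appear in $T$. The idea is to upper-bound this sum using the tree-packing constraint, lower-bound it using connectivity, and then solve for the desired fraction.

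First I will establish the upper bound. Swapping the order of summation,
\[
\sum_{T \in P} w_T \cdot k_T \;=\; \sum_{e \in E(X)} \sum_{T \in P : e \in E(T)} w_T \;\leq\; \sum_{e \in E(X)} 1 \;=\; \alpha c,
\]
where the inequality uses the defining property of a tree packing, namely that the total weight of trees containing any given edge is at most one, and the final equality uses $|E(X)| = \alpha c$.

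Next I will establish the lower bound via a case split. Since every spanning tree of $G$ must contain at least one edge crossing any nontrivial cut (otherwise the tree would be disconnected), we have $k_T \geq 1$ for every $T \in P$. Let $f$ denote the weighted fraction of trees in $P$ that $2$-constrain $X$, i.e., those with $k_T \leq 2$; the remaining fraction $1 - f$ of trees satisfy $k_T \geq 3$. Using $k_T \geq 1$ on the first group and $k_T \geq 3$ on the second, and recalling that the total weight is $\beta c$,
\[
\sum_{T \in P} w_T \cdot k_T \;\geq\; 1 \cdot f \beta c + 3 \cdot (1-f) \beta c \;=\; \beta c \,(3 - 2f).
\]

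Combining the two bounds gives $\beta c (3 - 2f) \leq \alpha c$, which rearranges to $f \geq \tfrac{1}{2}\bigl(3 - \alpha/\beta\bigr)$, as claimed. There is no real obstacle here: the only subtlety is getting the tree-packing constraint and the connectivity lower bound $k_T \geq 1$ stated correctly, and verifying that the hypothesis $\alpha \geq \beta$ is not actually needed for the inequality chain itself (it only ensures the conclusion is nontrivial, since when $\alpha/\beta > 3$ the bound becomes vacuous). The whole argument is a few lines and uses no structural properties of $G$ beyond the definition of a tree packing and the fact that spanning trees cross every cut.
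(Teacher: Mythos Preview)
Your argument is correct and is exactly the standard double-counting proof due to Karger~\cite{Ka00}. The paper does not give its own proof of this lemma; it simply cites \cite{Ka00}, so your write-up matches the source argument essentially line for line.
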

If an approximate algorithm guarantees a $\beta c$ packing $P$, for $\beta>1/3$, at least
	$\frac{1}{2}(3 - \frac{1}{\beta})$ fraction (by weight) of the trees in $P$ 
	$2$-constrains any given minimum cut $X$.
In particular, there is at least one tree in $P$ that $2$-constrains
	any given minimum cut $X$.

\subsection{The Algorithm}

From the above discussion, we can conclude that the minimum cut problem
can be divided into two subproblems, (i) compute an approximate
	maximal tree packing $P$ of value $\beta c$, for $\beta>1/3$, and 
	(ii) compute a minimum cut of the graph $G$ that is $2$-constrained 
	by some tree in $P$.

\subsubsection{Subproblem 1}
\label{phase1}

We use the greedy tree packing algorithm given in \cite{PST95}\cite{ThKa00}\cite{Y95}. 
It is described in Figure~\ref{tree:packing}.
A tree packing in $G$ is an assignment of weights to the spanning trees of $G$
    so that each edge gets a load of
\begin{center}
$l(u,v) = \sum_{T: (u,v) \in T} w(T) \leq 1$
\end{center}
The  value of tree packing is $W=\sum_{T} w(T)$.
The algorithm is given in Figure~\ref{tree:packing}.

\begin{figure}[t]
\vspace{0.2in}
\hrule
\begin{tabbing}
aaaa \= aaaa \= aaaa \= aaaa \= aaaa \= aaaa \kill \\

Initially no spanning tree has any weight, and all edges have load $0$. \\

Set $W = 0$ \\

While no edge has load $1$ perform all of the following steps \\

\> Pick a load minimal spanning tree $T$. \\

\> $w(T) = w(T) + \epsilon^2 / 3 \log E$. \\

\> $W = W + \epsilon^2 / 3 \log E$. \\

\> For all edges $(u,v)$ selected by $T$, \\

\>\> $l(u,v) = l(u,v) + w(T)$ \\

Return $W$. \\

\end{tabbing}
\vspace{0.2in}
\hrule
\caption{Greedy Tree Packing Algorithm for Computing a $(1-\epsilon)$-approximate tree packing}
\label{tree:packing}
\end{figure}

As mentioned in \cite{ThKa00}, the algorithm obtains the following result.
\begin{theorem}\cite{PST95, Y95}
The greedy tree packing algorithm,
	when run on a graph $G$,
	computes a $(1 - \epsilon)$-approximate tree packing
	of value $W$; that is, $(1 - \epsilon)\tau \leq W \leq \tau$, where $\tau$ is
	the maximum value of any tree packing of $G$, and $0 < \epsilon < 1$.
\end{theorem}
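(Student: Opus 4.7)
The plan is to establish the two inequalities separately. The upper bound $W \leq \tau$ is the easy direction: the while-loop terminates as soon as some edge's load reaches $1$, so at termination the weights $\{w(T)\}$ satisfy $\sum_{T \ni e} w(T) = l(e) \leq 1$ for every edge $e$ (modulo a tiny final-step rescaling of the last tree's weight, which shrinks $W$ by at most a $1-\delta$ factor where $\delta = \epsilon^2/(3\log E)$). Hence the assignment is a feasible tree packing and its value cannot exceed the optimum $\tau$.

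The main work is the lower bound $W \geq (1-\epsilon)\tau$, which I would prove using the exponential-potential / multiplicative weights technique of Plotkin--Shmoys--Tardos and Young. Formulate the maximum tree packing as the LP $\max \sum_T w(T)$ subject to $\sum_{T \ni e} w(T) \leq 1$ for each edge $e$; the dual is $\min \sum_e y(e)$ subject to $\sum_{e \in T} y(e) \geq 1$ for every spanning tree $T$, with common optimum $\tau$. Picking a load-minimal spanning tree in the greedy step is exactly invoking a minimum-weight-spanning-tree oracle against the current edge weights, i.e.\ against a current dual candidate.

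To quantify progress, introduce a potential $\Phi_t = \sum_e \exp(\eta\, l_t(e))$ with $\eta = \Theta((\log E)/\epsilon)$, so that $\exp(\eta) \gg E$. In iteration $t$ the algorithm adds $\delta$ to the load of every edge in the chosen tree $T_t$, so
\[
\Phi_{t+1} = \Phi_t + (e^{\eta\delta}-1)\sum_{e \in T_t} e^{\eta\, l_t(e)} \leq \Phi_t\Bigl(1 + \eta\delta(1+O(\eta\delta))\cdot\tfrac{S_t}{\Phi_t}\Bigr),
\]
where $S_t = \sum_{e \in T_t} e^{\eta l_t(e)}$. The load-minimality of $T_t$, combined with LP duality applied to the current normalized edge weights $y_t(e) = e^{\eta l_t(e)}/\Phi_t$, gives $S_t/\Phi_t \leq (V-1)/\tau$: any optimal primal packing of value $\tau$ yields a distribution over spanning trees whose average $y_t$-cost is $(V-1)/\tau$, so the min is no larger. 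Telescoping, $\ln \Phi_T \leq \ln E + \eta\delta(1+O(\eta\delta))\cdot (V-1)\cdot (\text{iteration count})/\tau$. Termination forces some edge to hit load $1$, hence $\Phi_T \geq e^\eta$; rearranging and using $W = \delta \cdot(\text{iteration count})$ and the fact that a $\tau$-value packing distributes total weight $(V-1)\tau$ over the edges gives $W \geq (1-O(\epsilon))\tau$ for the stated choice of $\eta$ and $\delta$.

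The main obstacle I expect is the step from ``load-minimal'' (minimizing $\sum_{e \in T} l(e)$) to ``exponential-weight-minimal'' (minimizing $\sum_{e \in T} e^{\eta l(e)}$). The clean analysis really wants the latter, so one either interprets the algorithm as maintaining exponential edge weights under the hood (which a standard MST oracle handles just as easily, since MST depends only on the order of weights when using Kruskal/Bor\r{u}vka-style rules, but here the weights themselves matter), or one supplies a Chernoff/convexity argument to pass from average load to average exponential load. The balancing of $\eta$ and $\delta$ is delicate: $\delta$ must be small enough that the second-order term $O((\eta\delta)^2)$ is absorbed into the $\epsilon$ slack, which is precisely why the algorithm uses the increment $\epsilon^2/(3\log E)$ rather than a larger one; this is the source of the $\log E$ factor and the reason the accuracy is $(1-\epsilon)$ rather than exact.
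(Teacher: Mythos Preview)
The paper does not supply its own proof of this theorem; it is quoted as a known result from \cite{PST95, Y95} (via the presentation in \cite{ThKa00}) and used as a black box. So there is no in-paper argument to compare against.

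Your sketch is essentially the standard multiplicative-weights / exponential-potential argument from those references, and the overall shape is correct. One remark on the ``main obstacle'' you flag at the end: it is not actually an obstacle. The minimum spanning tree of a graph depends only on the relative \emph{ordering} of the edge weights (when distinct), not on their numerical values---this is the cycle/cut characterisation underlying Kruskal and Bor\r{u}vka. Since $x \mapsto e^{\eta x}$ is strictly increasing, a load-minimal spanning tree is automatically also an $e^{\eta l(\cdot)}$-minimal spanning tree, so the oracle the greedy algorithm invokes is exactly the one your potential analysis needs. No separate Chernoff or convexity patch is required, and your hedge ``but here the weights themselves matter'' is mistaken for the spanning-tree polytope specifically.
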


Since each iteration increases the packing value by $\epsilon^2 / (3 \log E)$,
	and the packing value can be at most $c$, the algorithm 
	terminates in $O(c \log E / \epsilon^2)$ iterations.
The I/O complexity of each iteration is dominated by the minimal spanning tree
	computation. 
Thus, number of I/Os required is $O(c \cdot \msf{V,E} \log E)$.
Since the value of the maximum tree packing is at least $c/2$, 
	the size of the computed tree packing is at least $(1 - \epsilon)c/2$. 
From Lemma~\ref{lem:tree:packing}, it follows that, for $\epsilon < 1/3$, 
	and any minimum cut $X$, the computed
	tree packing contains at least one tree that $2$-constrains $X$.
	
\subsubsection{Subproblem 2}
\label{phase2} 

Let $T=(V,E')$ be a spanning tree of graph $G=(V,E)$.
For every $K\subseteq E'$ there is unique cut $X$ so that $K=E(T)\cap E(X)$.
$X$ can be constructed as follows: Let $A=\emptyset$.
For some $s\in V$, for each vertex $v$ in $V$,
add $v$ to set $A$, iff the path in $T$ from $s$ to $v$ has an even number of edges from $K$;
clearly $X=(A,V-A)$ is a cut of $G$.  

A spanning tree in the packing produced by Subproblem~1 $2$-constrains
every mincut of $G$. We compute the following:
(1)~for each tree $T$ of the packing, and for each tree edge $(u,v)$ in 
	$T$, a cut $X$ such that $(u,v)$ is the only edge of $T$ crossing $X$, 
(2)~for each tree $T$ of the packing, and for each pair of tree edges $(u_1,v_1)$ 
and $(u_2,v_2)$, a cut $X$ such that $(u_1,v_1)$ and $(u_2,v_2)$ 
are the only edges of $T$ crossing $X$.
A smallest of all the cuts found is a minimum cut of $G$.

First we describe the computation in (1).
Root tree $T$ at some vertex $r$ in $O(\sort{V})$ I/Os \cite{ChGoGr+95}.
(See Section~\ref{notation} for notations.)
$C(v \downarrow)$ is the set
	of edges whose
	one endpoint is a descendent of $v$,
	and the other endpoint is a nondescendent of $v$.
If $(v, p(v))$ is the only tree edge crossing a cut $X$, then
	$C(v \downarrow)$ is the value of cut $X$,
	where $p(v)$ is the parent of $v$ in $T$.
As given in \cite{Ka00}, The value of $C(v \downarrow)$ is
\begin{center}
$C(v \downarrow) = d^{\downarrow}(v) - 2 
\rho^{\downarrow}(v)$
\end{center}
where $d^{\downarrow}(v)$ is the total number of nontree edges
	incident on vertices in $v \downarrow$, and $\rho{\downarrow}(v)$
	is the total number of nontree edges whose both endpoints are
	in $v \downarrow$.
$C(v \downarrow)$ is to be computed for all vertices $v$ in 
tree $T$, except for the root $r$. 

$d^{\downarrow}(v)$ can be computed by using 
	expression tree evaluation, if
	the degree of each vertex $v$ is stored with $v$. 
$\rho^{\downarrow}(v)$ can be computed using least common
ancestor queries and expression tree evaluation. Once
$d^{\downarrow}(v)$ and $\rho^{\downarrow}(v)$ are known for every vertex
	$v \in T$, $C(v \downarrow)$ can be computed for every vertex $v$  
	using expression tree evaluation.
If we use the I/O efficient least common ancestor and expression tree evaluation 
	algorithms of \cite{ChGoGr+95, zeh}, the total number of I/Os needed
	for the computation in~(1) is $O(\sort{V})$.
 
For the computation in (2), consider two tree edges
	$(u,p(u))$ and $(v,p(v))$, the edges from two vertices $u$ and
	$v$ to their respective parents.
Let $X$ be the cut characterised by these two edges (being the only tree edges
	crossing $X$).

We say vertices $u$ and $v$ are incomparable, if $u \not\in v\downarrow$
    and $v \not\in u\downarrow$; that is, if they are not on the same root-leaf path.
If $u \in v \downarrow$ or $v \in u \downarrow$, then $u$ and $v$ are 
called comparable and both are in the same root-leaf path. 

In the following, when we say the cut of $u$ and $v$, we mean the cut defined by
edges $(p(u),u)$ and $(p(v),v)$.


As given in \cite{Ka00} and shown in Figure~\ref{figure:incomparable} and 
	Figure~\ref{figure:comparable}, 
	if $u$ and $v$ are incomparable then the value of cut $X$ is
	\[C(u \downarrow \cup \; v\downarrow)=C(u \downarrow) + C(v \downarrow) - 2 C(u \downarrow,
	v\downarrow)\] If vertices $u$ and $v$ are comparable then the value of $X$ is 
	\[C(u \downarrow - \; v \downarrow)=
	C(u \downarrow) - C(v \downarrow) + 2(C(u \downarrow, v \downarrow) - 2 \rho^{\downarrow}(v))\]

For each tree in the packing, and for each pair of vertices in the tree we need to
compute the cuts using the above formulae. We preprocess each tree $T$ as follows.
Partition the vertices of $T$ into clusters $V_1, V_2, \ldots, V_{N}$ (where $N=\Theta(V/B)$), each of size 
	$\Theta(B)$, except for the last one, which can of a smaller size. 
Our intention is to process the clusters one at a time by reading each $V_i$ into the main memory
to compute the cut values for every pair with at least one of the vertices in $V_i$.
We assume that $T$ is rooted at some vertex $r$. 

\paragraph{Partitioning of vertices:}

For each vertex $v \in V$, a variable $\var{v}$ is initialised to $1$.
The following steps are executed for grouping the vertices into
clusters.

Compute the depth of each vertex from the root $r$. 
Sort the vertices $u \in V$ in the decreasing order of the composite key   
	$\langle {\tt depth}(u), p(u)  \rangle$. Depth of $r$ is $0$. 
Access the vertices in the order computed above. Let $v$ be the current vertex. 
\begin{itemize}
\item Compute $Y=\var{v} + \var{v_1} + \ldots + \var{v_k}$, where $v_1, v_2, \ldots, v_k$ are the
	children of $v$. If $Y<B$ then set $\var{v}=Y$.
\item Send the value $\var{v}$ to the parent of $v$, if $v$ is not the root $r$.
\item If $Y>B$, divide the children of $v$ into clusters ${\mathcal{Q}} = Q_1, Q_2, \ldots, Q_l$ such that
        for each cluster $Q_i$, $\sum_{u \in Q_i} \var{u} = \Theta(B)$.
	If $v=r$, it joins one of the clusters $Q_i$. 
\end{itemize}

After executing the above steps for all vertices, consider the vertices $u$, one by one, in the
reverse order, that is, in increasing order of the composite key
$\langle {\tt depth}(u), p(u)  \rangle$.
Let $v$ be the current vertex.
If $v$ is the root, then it labels itself with the label of the cluster to which it belongs.
Otherwise, $v$ labels itself with the label received from its parent.
If $v$ has not created any clusters, then it sends its label to all its
        children.
Otherwise, let the clusters created by $v$ be $Q_1, Q_2, \ldots, Q_l$; $v$ 
	labels each cluster uniquely and sends to each child $v_{i}$  
	the label of the cluster that contains $v_i$. 
At the end, every vertex has got the label of the cluster that contains it. 


In one sort, all vertices belonging to the same cluster $V_i$ can be brought  
	together. Since $T$ is a rooted tree, each vertex $u$
	knows its parent $p(u)$. We store $p(u)$ with $u$ in $V_i$.
Thus, $V_i$ along with the parent pointers, forms a subforest $T[V_i]$ of $T$, and we obtain 
	the following lemma.
\begin{lemma}
The vertices of a tree $T$ can be partitioned into clusters $V_1, \ldots V_{N}$ (where $N=\Theta(V/B)$), 
	of size $\Theta(B)$ each, in $O(\sort{V})$ I/Os, with
	the clusters satisfying the following property:
	for any two roots $u$ and $v$ in $T[V_i]$, $p(u)=p(v)$.
\end{lemma}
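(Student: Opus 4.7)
The plan is to verify the four assertions of the lemma in turn: the structural constraint that all roots of $T[V_i]$ share a common parent, the size bound $|V_i|=\Theta(B)$, the resulting cluster count $N=\Theta(V/B)$, and the $O(\sort{V})$ I/O budget. The algorithm consists of three stages — an initial depth computation and sort, a bottom-up pass that populates $\var$ and decides where to cluster, and a top-down pass that labels every vertex — and each of these is a standard tree sweep that I will argue can be simulated by time-forward processing on the external memory model in $O(\sort{V})$ I/Os, following the techniques of \cite{ChGoGr+95}.

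The structural property is the first thing I would nail down. I would argue by induction on the cluster's creation time that a cluster $V_i$ is formed at exactly one vertex $v$, namely the vertex at which the subroutine grouped some subset of $v$'s children into $Q_i$. During the top-down pass, the label $i$ is given to every child $v_j\in Q_i$ and is then propagated by any such $v_j$ down its subtree, stopping only at descendants that themselves created clusters (those descendants receive fresh labels from their own partitioning). Hence the vertices of $V_i$ whose parent lies outside $V_i$ are exactly the children $v_j\in Q_i$, and they all satisfy $p(v_j)=v$, which is the property required.

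For the size bound I would use the invariant that $\var{v}<B$ for every $v$: if $v$'s $Y$ stays below $B$ then $\var{v}=Y<B$ by definition, and otherwise $v$ clusters its children and $\var{v}$ remains at its initial value $1$. Thus every child contributes at most $B-1$ to the sum $\sum_{u\in Q_i}\var{u}$, so a greedy scan that seals off a group as soon as its running total reaches $B$ produces groups of total $\var$-sum in $[B,2B)$, with a possible short trailing group that I would merge with its predecessor. Since $|V_i|$ equals this $\var$-sum (each $\var{v_j}$ counts exactly the vertices that will later receive $v_j$'s label via the top-down pass), each cluster has $\Theta(B)$ vertices; summing over clusters then gives $N=\Theta(V/B)$.

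Finally, the I/O count. Computing depths and sorting vertices by $\langle\mbox{depth},p(v)\rangle$ costs $O(\sort{V})$. The bottom-up pass needs each $v$ to read the $\var$ values sent by its children; because we process in decreasing depth order and each child's message is destined for a vertex earlier in the schedule, this is exactly the time-forward processing pattern and can be executed with an external memory priority queue in $O(\sort{V})$ I/Os. The top-down labeling pass is symmetric, processing vertices in increasing depth order and forwarding a single label per edge, again $O(\sort{V})$ I/Os. A final sort groups vertices by cluster label and attaches each $u$'s parent pointer. The main obstacle I anticipate is arguing the lower side of the $\Theta(B)$ bound cleanly when a cluster-creating vertex has very unevenly sized children; the trailing-group merge step is what fixes this, and it needs to be presented carefully so that the merged group's size remains $O(B)$.
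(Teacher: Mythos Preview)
Your proposal is correct and follows the same approach as the paper. The paper's proof is much terser than yours: it simply cites time-forward processing for the $O(\sort{V})$ bound, mentions Euler tour plus list ranking for the depth computation, and dispatches the structural property in one sentence by noting that each cluster is authored by a single vertex $x$, so every root of $T[V_i]$ is a child of $x$. Your additional work on the $\Theta(B)$ size invariant (the $\var{v}<B$ bound, the greedy grouping, and the trailing-group merge) fills a gap the paper leaves implicit in the algorithm description rather than proving in the lemma itself.
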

\begin{proof}
The partitioning procedure uses the time forward processing method for sending
	values from one vertex to another and can be computed 
	in $O(\sort{V})$ I/Os \cite{ChGoGr+95, zeh}.
The depth of the nodes can be found by computing an Euler
Tour of $T$ and applying list ranking on it~\cite{zeh, ChGoGr+95} in  
	$O(\sort{V})$ I/Os.
Thus, a total of $O(\sort{V})$ I/Os are required for the partitioning 
	procedure.

The property of the clusters mentioned in the lemma follows from the way the
clusters are formed. Each cluster $V_i$ is authored by one vertex $x$, and therefore
each root in $T[V_i]$ is a child of $x$. 
\end{proof}

Connect every pair $V_i,V_j$ of clusters
    by an edge, if there exists an edge $e \in E'$ such that one of its endpoint
    is in $V_i$ and the other endpoint is in $V_j$.
The resulting graph $G'$ must be a tree, denoted as cluster tree $T'$.
Note that $T'$ can be computed in $O(\sort{V})$ I/Os.
Do a level order traversal of the cluster tree:
sort the clusters by depth, and then by key
(parent of a cluster) such that (i) deeper clusters come first,
and (ii) the children of each cluster are contiguous. 
We label the clusters in this sorted order: $V_1, V_2, \ldots, V_{N}$.
Within the clusters the vertices are also numbered the same way.

Form an array $S_1$ that lists $V_1,\ldots,V_N$ in that order; after $V_i$ and
	before $V_{i+1}$ are listed nonempty $E_{ij}$'s, in the increasing order
	of $j$; $E_{ij} \subseteq E - E'$, is the set of non-$T$ edges of $G$ with one endpoint in $V_i$ 
	and the other in $V_j$.
With $V_i$ are stored the tree edges of $T[V_i]$.
Another array $S2$ stores the clusters $V_{i}$ in the increasing order of $i$. 
The depth of each cluster in the cluster tree can be computed in $O(\sort{V})$ I/Os \cite{ChGoGr+95, zeh},
    and arrays $S_1$ and $S_2$ can be obtained in $O(\sort{V + E})$ I/Os.

\paragraph{Computing cut values for all pair of vertices:}

Now, we describe how to compute cut values for all pair of vertices.
Recall that the value of the cut for two incomparable vertices $u$ and $v$ is
\begin{center}
$C(u \downarrow \cup \; v \downarrow) = C(u \downarrow) + C(v
\downarrow) -
    2 C(u \downarrow, v \downarrow)$
\end{center}
and for two comparable vertices $u$ and $v$ is 
\begin{center}
$C(u \downarrow - \; v \downarrow) = C(u \downarrow) - C(v
\downarrow) +
    2(C(u \downarrow, v \downarrow) - 2 \rho^{\downarrow}(v))$
\end{center}

Except for $C(u \downarrow, v \downarrow)$, all the other values of both expressions 
	have already been computed.
$C(u \downarrow, v \downarrow)$ can be computed using
the following expression.
\begin{center}
$C(u \downarrow, v \downarrow) = \sum_{\hat{u}} C(u_k \downarrow, v \downarrow) +
\sum_{\hat{v}} C(u, v_{l} \downarrow) +
C(u , v)$
\end{center}
where, $\hat{u}$ and $\hat{v}$ vary over the children of  $u$ and $v$ respectively. 
In Figure~\ref{subprob2}, we give the procedure for computing $C(u \downarrow, v \downarrow)$ 
	and cut values for all pairs $u$ and $v$.
In the procedure $p(u)$ is the parent of $u$ in $T$, $P(V_{i})$ is the
parent of $V_i$ in the cluster tree.

\begin{figure}
\vspace{0.2in}
\hrule
\begin{tabbing}
aaaa \= aaaa \= aaaa \= aaaa \= aaaa \= aaaa \kill \\
Let binary $L_{i}$ be $1$ iff ``$V_{i}$ is a leaf of the cluster tree''; $\bar{L_i}$ is its negation \\
Let binary $l_{u}$ be $1$ iff ``$u$ is a leaf in its cluster''; $\bar{l_u}$ is its negation \\
For $1\leq i\leq N$, For $1\leq j\leq N$ \\
\> For each $(u,v)\in V_i\times V_j$ considered in lexicographic ordering \\
\>\> if $l_v$ and $\neg L_j$ \\
\>\>\> Deletemin$(Q_1)$ to get $\langle j, u, v, Y \rangle$; add $Y$ to $Y_v$; \\
\>\> if $l_u$ and $\neg L_i$ \\
\>\>\> Deletemin$(Q_2)$ to get $\langle i, j, u, v, X \rangle$; add $X$ to $X_u$; \\
\> For each $(u,v)\in V_i\times V_j$ considered in lexicographic ordering \\
\>\> $A=\sum_{\hat{u}} C(u_k \downarrow, v \downarrow)$ \\
\>\> $B=\sum_{\hat{v}} C(u, v_{l} \downarrow)$ \\
\>\> $C(u\downarrow, v\downarrow)= A\bar{l_u} + X_u\bar{L_i}l_u + B\bar{l_v} + Y_v\bar{L_j}l_v + C(u , v)$ \\
\>\> if $u$ and $v$ are incomparable vertices \\
\>\>\> $C(u \downarrow \cup \; v \downarrow) = C(u \downarrow) + C(v \downarrow) - 2 C(u \downarrow, v \downarrow)$ \\
\>\> if $u$ and $v$ are comparable vertices \\
\>\>\> $C(u \downarrow - \; v \downarrow) = C(u \downarrow) - C(v
\downarrow) +
    2(C(u \downarrow, v \downarrow) - 2 \rho^{\downarrow}(v))$ \\

\> Let $r_{i1}, \ldots  r_{ik}$ be the roots in $T[V_i]$ \\
\> Let $r_{j1}, \ldots  r_{jl}$ be the roots in $T[V_j]$ \\
\> For each vertex $u \in V_i$ \\
\>\> $Y^u=C(u , r_{j1} \downarrow) + \ldots + C(u, r_{jl} \downarrow)$ \\
\>\> Store $\langle P(V_j), u, p(r_{j1}), Y^u\rangle$ in $Q_1$ \\
\> For each vertex $v \in V_j$ \\
\>\> $X^v=C(r_{i1}\downarrow, v \downarrow) + \ldots + C(r_{ik} \downarrow, v\downarrow)$ \\
\>\> Store $\langle P(V_i), j, p(r_{i1}), v, X^v\rangle$ in $Q_2$ \\
\end{tabbing}
\vspace{0.2in}
\hrule
\caption{Procedure to compute cut values for all pair of vertices}
\label{subprob2}
\end{figure}

For each $i,j \in N$, $V_i$, $V_j$ and $E_{ij}$ are brought in main memory.
Note that the size of $E_{ij}$ can be at most $O(B^2)$. 
We assume that size of main memory is $\Omega(B^2)$.
$C(u \downarrow)$ and $\rho^{\downarrow}(u)$ are stored with vertex $u$.

\begin{figure}
\begin{center}
 
\includegraphics[height=2.5in]{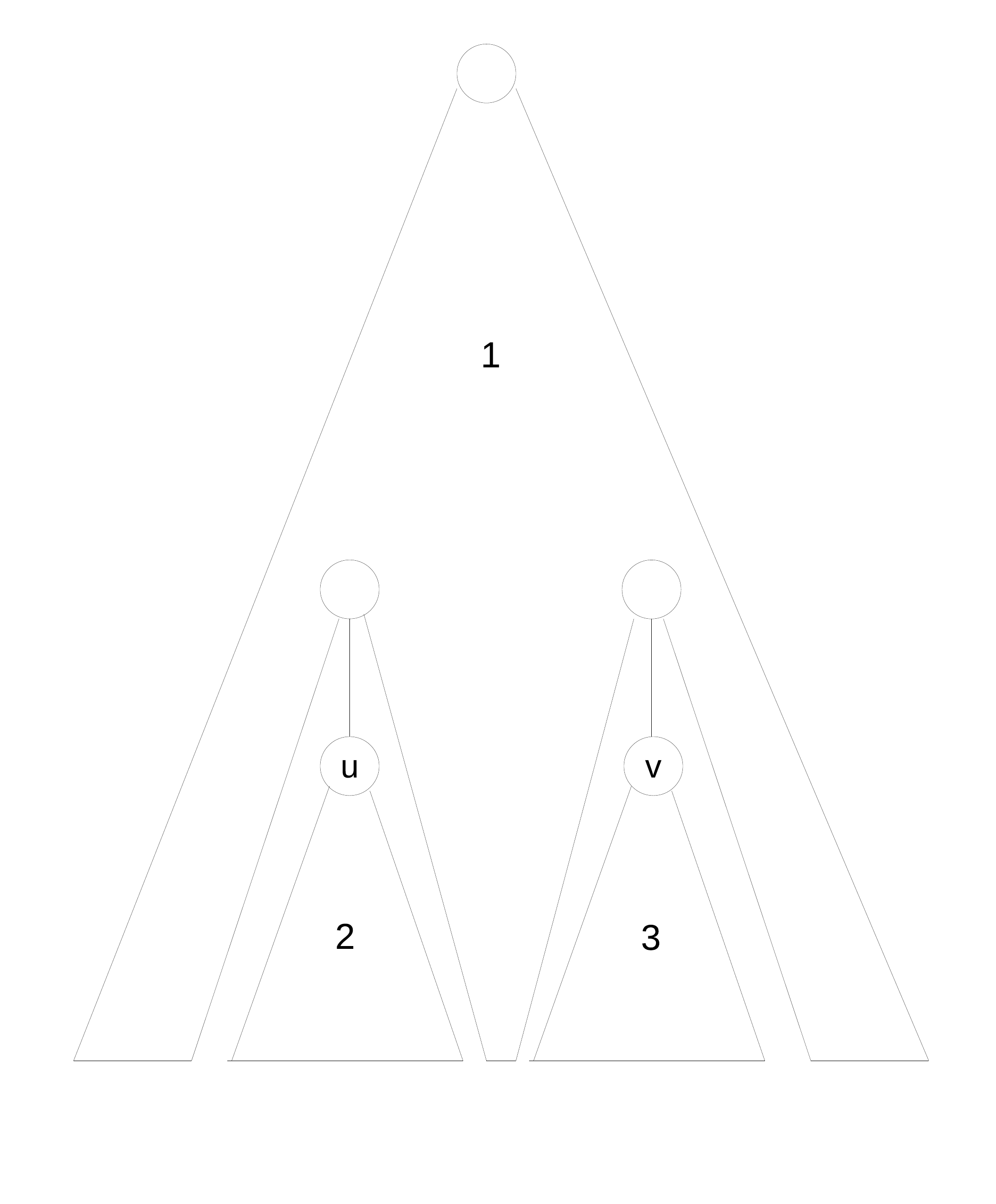}
 
\caption{$C(u \downarrow \cup \; v \downarrow)$: set of edges from region 2 to 1 and 3 to 1}
\label{figure:incomparable}
\end{center}
\end{figure}

\begin{figure}
\begin{center}
\includegraphics[height=2.5in]{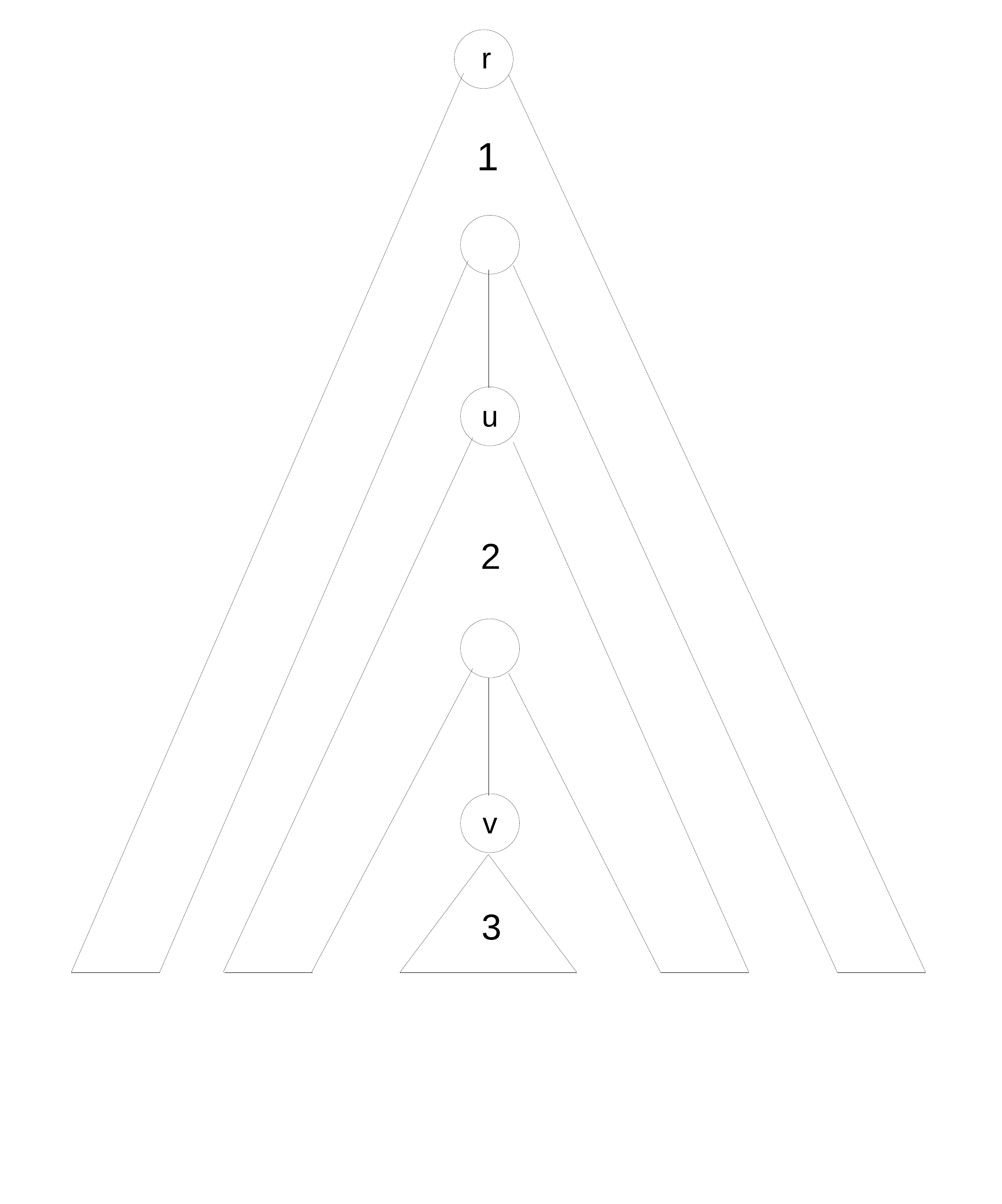}
\caption{$C(u \downarrow - \; v \downarrow)$: set of edges from region 2 to 1 and 2 to 3}
\label{figure:comparable}
\end{center}
\end{figure}

We mark all ancestors of each vertex $u \in V_i$. Note that
    the ancestor vertices of each vertex $u \in V_i$ which reside in other clusters $V_j$ for
	$j \neq i$, are the same, and $V_j$ is an ancestor of $V_i$ in cluster tree $T'$.
We can mark all the ancestor vertices  in  additional $O(V/B)$ I/Os. 

Two priority queues $Q_1$ and $Q_2$ are maintained during the execution of the
	algorithm.
$Q_1$ holds value $Y_{ij}^{uv}= C(u , v_1 \downarrow) + \ldots + C(u, v_l \downarrow)$ 
	with key value $\langle j, u, v \rangle$ 
	for each vertex $u \in V_i$ and $v \in V_j$, while cluster $V_j$ is yet to be  
	accessed for $V_i$, and after $V_{k}$ (with $k<j$, and containing exactly 
$v_1, \ldots v_l$ 
	among the children of $v$), has been processed for $V_{i}$, and
	$C(u , v_1 \downarrow), \ldots C(u, v_l \downarrow)$ have been computed.
Note that it is not necessary that all children of $v$ are in one cluster $V_j$. 
Similarly $Q_2$ holds value $X_{ij}^{uv}=C(u_1 \downarrow , v \downarrow) + \ldots + 
C(u_l \downarrow, v \downarrow)$
	with key value $\langle i , j , u, v \rangle$ 
	for each vertex $u\in V_i$ and $v \in V_j$, while cluster $V_j$ is yet to be  
	accessed for $V_i$, and after $V_j$ has been processed for $V_{k}$ 
	(with $k<j$, and containing exactly $u_1, \ldots u_l$ 
	among the children of $u$), and
	$C(u_1 \downarrow , v \downarrow), \ldots, C(u_l \downarrow, v \downarrow)$
	have been computed.
	Note that it is not necessary that all children of $u$ are in one node $V_k$.   

The correctness of the algorithm is easy to prove. 
Since, for each cluster we perform $O(V/B)$ I/Os and $O(V)$ insertions in each priority
	queue $Q_1$ and $Q_2$ and the vertices are partitioned into $\Theta(V/B)$ clusters, 
	the total I/O cost is $O(\frac{V}{B} \sort{V})$.
We obtain the following lemma.
\begin{lemma}
For a tree $T$, a minimum cut can be computed in $O(\sort{E} + (V/B)\sort{V})$ I/Os,
    if at most two edges of it are in $T$.
\end{lemma}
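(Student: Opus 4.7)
The plan is to reduce the lemma to two sub-tasks and bound each separately, in the spirit of Karger's in-core two-respecting cuts computation, but scheduled so that edge-list traffic is a single $O(\sort E)$ contribution and the quadratic-in-$V$ part is blocked into chunks of size $\Theta(B)$.

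First I would dispose of the $1$-respecting case. Root $T$ at an arbitrary $r$ by Euler-touring $T$ and list-ranking the tour; this is $O(\sort V)$ I/Os by \cite{ChGoGr+95}. For every $v\neq r$, the unique cut whose only tree-crossing is $(v,p(v))$ has value $C(v\!\downarrow)=d^{\downarrow}(v)-2\rho^{\downarrow}(v)$. The quantity $d^{\downarrow}(v)$ is a simple subtree-sum of vertex degrees and is computed by one expression-tree evaluation on $T$; $\rho^{\downarrow}(v)$ counts non-tree edges $(x,y)$ with $\mathrm{lca}(x,y)\in v\!\downarrow$, which, after an $O(\sort E)$ batched-LCA pass and relabelling each non-tree edge by its LCA, again reduces to an expression-tree evaluation on $T$. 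Using the I/O-efficient LCA and expression-tree primitives of \cite{ChGoGr+95,zeh}, this entire block costs $O(\sort E)$ and produces $C(v\!\downarrow)$ and $\rho^{\downarrow}(v)$ at every $v$, ready for reuse.

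For the $2$-respecting case I would use the two explicit identities already stated, so that the only quantity that must be computed pair-by-pair is $C(u\!\downarrow,v\!\downarrow)$. Partition $V$ into the $N=\Theta(V/B)$ clusters $V_1,\dots ,V_N$ of size $\Theta(B)$ given by the preceding lemma, build the cluster tree $T'$, and prepare the arrays $S_1$ (clusters interleaved with the non-tree edge-sets $E_{ij}$) and $S_2$; together this is $O(\sort{V+E})$. Now iterate over pairs $(V_i,V_j)$ in the order prescribed in Figure~\ref{subprob2}, loading $V_i$, $V_j$ and $E_{ij}$ into main memory (this fits since we assume $M=\Omega(B^2)$), and evaluate $C(u\!\downarrow,v\!\downarrow)$ using the recurrence
\[
C(u\!\downarrow,v\!\downarrow)=\sum_{\hat u}C(\hat u\!\downarrow,v\!\downarrow)+\sum_{\hat v}C(u,\hat v\!\downarrow)+C(u,v).
\]
The contributions from children of $u$ (resp.\ $v$) that live in clusters already processed are delivered exactly in time by the two priority queues $Q_1,Q_2$: as soon as the last sibling of a child living in cluster $V_k$ ($k<j$) is processed, the partial sum $X^{u}$ (resp.\ $Y^{v}$) is inserted with key $\langle P(V_i),j,p(r_{i1}),v,X^v\rangle$ (resp.\ $\langle P(V_j),u,p(r_{j1}),Y^v\rangle$), so that a {\tt Deletemin} at the moment the parent cluster is visited surrenders exactly the needed value. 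Once $C(u\!\downarrow,v\!\downarrow)$ is in hand, plugging it into the incomparable/comparable formulas uses only values already resident in memory.

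For the I/O count: the scan of the cluster pairs costs $O(V/B)$ I/Os per fixed $V_i$, and there are $\Theta(V/B)$ values of $i$, totalling $O((V/B)^2)=O(V^2/B^2)$ block transfers, which is $O((V/B)\cdot\scan V)$ and dominated by $O((V/B)\sort V)$. Each cluster injects $O(V)$ elements into the priority queues, so the queues handle $O(V^2/B)$ operations over the run at $O(\tfrac1B\log_{M/B}\tfrac{V}{B})$ amortised I/Os each using the external memory priority queue of \cite{Arge03}; this is $O((V/B)\sort V)$ as well. Adding the $O(\sort E)$ from step one yields the claimed $O(\sort E+(V/B)\sort V)$. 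Finally, streaming through the $V-1+\binom{V}{2}$ computed cut values and keeping the running minimum returns the best $\leq\!2$-respecting cut, which by Nash-Williams and Lemma~\ref{lem:tree:packing} is a minimum cut whenever $T$ $2$-constrains it. The single real obstacle is the queueing discipline of $Q_1,Q_2$: I must make sure the composite keys are chosen so that, for every pair $(u,v)$, the children-sums from all previously processed clusters are available as a single {\tt Deletemin} at the unique moment $u$ (or $v$) is encountered in the $(V_i,V_j)$ iteration, and that no spurious deletion is performed when a child happens to reside in the current cluster (handled by the $L_i,l_u$ binary flags in the pseudo-code). Once that bookkeeping is verified the I/O bound follows from the analysis above.
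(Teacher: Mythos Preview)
Your proposal is correct and follows essentially the same approach as the paper: the $1$-respecting case via batched LCA and expression-tree evaluation, and the $2$-respecting case via the $\Theta(B)$-sized vertex clusters, the cluster tree, the arrays $S_1,S_2$, and the two priority queues $Q_1,Q_2$ exactly as in Figure~\ref{subprob2}. Your I/O accounting (the $O((V/B)^2)$ cluster-pair scans plus $O(V^2/B)$ priority-queue operations at amortised $O(\tfrac1B\log_{M/B}\tfrac{V}{B})$ each) is in fact more explicit than the paper's one-sentence tally, but arrives at the same $O(\sort{E}+(V/B)\sort{V})$ bound.
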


We execute the above operations for all trees in packing and hence
obtain the
    following theorem.
\begin{theorem}
We can compute a minimum cut in $O(c(\msf{V,E} \log E + \sort{E}
+ (V/B)\sort{V} ))$ I/Os
    for the given undirected unweighted graph $G$, where $c$ is the minimum cut value.
\end{theorem}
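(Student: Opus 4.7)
\medskip

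\noindent\textbf{Proof plan.} The plan is to combine the two subproblems developed in Section~\ref{min:cut}. First I would invoke the greedy tree packing of Subproblem~1 with a constant $\epsilon<1/3$. By the cited theorem of \cite{PST95,Y95}, this produces a $(1-\epsilon)$-approximate tree packing of value $W\geq (1-\epsilon)\tau\geq (1-\epsilon)c/2 > c/3$ in $O(c\cdot \msf{V,E}\log E)$ I/Os. Setting $\alpha=1$ and $\beta>1/3$ in Lemma~\ref{lem:tree:packing} gives $(1/2)(3-\alpha/\beta)>0$, so a strictly positive fraction (by weight) of the trees in the packing $2$-constrain any fixed minimum cut $X$; in particular at least one tree in the support of the packing $2$-constrains $X$.

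Second, for every distinct tree $T$ in the support of the packing, I would run the Subproblem~2 procedure: root $T$ in $O(\sort{V})$ I/Os, compute $C(v{\downarrow})$ and $\rho^{\downarrow}(v)$ for every $v$ using the I/O-efficient LCA and expression-tree evaluation algorithms of \cite{ChGoGr+95,zeh}, partition $V(T)$ into $\Theta(V/B)$ size-$\Theta(B)$ clusters and form the cluster tree, and then sweep all ordered pairs of clusters $(V_i,V_j)$ while forwarding the partial sums $X_u$ and $Y_v$ through the two priority queues $Q_1,Q_2$. This yields, for every single tree edge and every pair of tree edges of $T$, the value of the cut they define; taking the minimum of these values gives the best cut that is $2$-constrained by $T$. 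The per-tree cost is $O(\sort{E}+(V/B)\sort{V})$ I/Os, as already established. The algorithm returns the minimum over all trees in the packing of the best $2$-constrained cut. Correctness is immediate: some tree $T^{\star}$ in the packing $2$-constrains a minimum cut $X^{\star}$, and when the algorithm processes $T^{\star}$ it evaluates the cut $X^{\star}$ exactly (it is among the single- or two-edge cuts of $T^{\star}$), so the returned value is at most $c$; it cannot be smaller, since every cut it considers is a genuine cut of $G$.

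The main obstacle in the I/O accounting is to argue that the number of tree-processings is $O(c)$, not the $O(c\log E/\epsilon^{2})$ iteration count of the greedy loop. I would handle this by coalescing the greedy updates: whenever the same spanning tree $T$ is selected in multiple iterations, all of its weight increments may be accumulated into a single entry, so only the set of \emph{distinct} trees that ever appear needs to be processed by Subproblem~2. A standard exchange/basis argument on the tree-packing polytope shows that this support can be taken to have size $O(c)$ (each tree contributes at least one unit of load to some edge that is saturated on a minimum cut, and there are only $c$ such edges to saturate), and that cardinality is all Subproblem~2 charges against. Summing the two phases then gives the stated bound $O\!\bigl(c(\msf{V,E}\log E+\sort{E}+(V/B)\sort{V})\bigr)$ I/Os, with the $\sort{E}$ term absorbing the rooting, LCA, expression-tree, and cluster-tree preprocessing applied once per distinct tree.
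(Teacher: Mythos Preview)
Your overall plan coincides with the paper's: the paper's proof is the single sentence ``We execute the above operations for all trees in [the] packing and hence obtain the following theorem,'' i.e., Subproblem~1 followed by Subproblem~2 on each tree of the packing, with the per-tree lemma supplying the $O(\sort{E}+(V/B)\,\sort{V})$ cost and Lemma~\ref{lem:tree:packing} supplying correctness.

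Where your argument goes beyond the paper---and breaks---is the bound on the number of trees. The greedy loop of Subproblem~1 performs $\Theta(c\log E/\epsilon^{2})=\Theta(c\log E)$ iterations for constant~$\epsilon$, and nothing forces the number of \emph{distinct} load-minimal spanning trees encountered to be $O(c)$. Your parenthetical justification fails on both counts: each tree carries weight $\epsilon^{2}/(3\log E)$, not ``one unit,'' so no individual tree saturates any edge; and the edge that first reaches load~$1$ (terminating the loop) need not lie on a minimum cut at all. A Carath\'eodory/basis argument on the tree-packing polytope would give support size $O(|E|)$, not $O(c)$, and would anyway require \emph{recomputing} a new small-support packing I/O-efficiently, which you have not shown how to do. The paper does not address this point either; it simply asserts the stated bound after the one-line proof. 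So you have correctly put your finger on something the paper glosses over, but the exchange argument you sketch does not close the gap. Taking the packing size at face value gives $O(c\log E)$ trees and hence $O\bigl(c\,\msf{V,E}\log E + c\log E\,(V/B)\,\sort{V}\bigr)$ I/Os, the $c\log E\cdot\sort{E}$ term being absorbed by the first since $\msf{V,E}\geq\sort{E}$.
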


\section{The Data Structure}
\label{sec:data:structure:cut} 

From Lemma~\ref{lem:tree:packing}, we know that
for every $\alpha$-minimum cut, $\alpha < 3/2$, there is at least
one tree in the packing that crosses the cut at most twice. It has been shown that
when $\alpha<3/2$, there can
be at most $O(V^2)$ $\alpha$-minimum cuts \cite{Ka00}. 

Our mincut algorithm computes the cut value corresponding to each edge and 
	each pair of edges in every tree of a packing. 
Once the mincut has been found, run the algorithm again.
In this run, we identify all $\alpha$-minimum cuts.

We use a perfect hash function to map each $\alpha$-minimum cut into a unique 
	key. Recall that a cut is characterised by a vertex set partition. 
Use the hash key to store all $\alpha$-minimum cut values with the respective tree edges and
trees, in a hash table.
 
If the total number of $\alpha$-minimum cuts
is $k$, then the size of the hash table is $O(k)$, and it can be constructed in
$O(c(\msf{V,E} \log E + \sort{E} + (V/B)\sort{V} ) + \sort{k})$ I/Os.

We can perform the following query in our data structure: given a cut
(defined by a vertex partition), find whether the cut
is $\alpha$-minimum or not. We can answer the query by computing the hash key
of the given cut. Looking up the table will require $O(1)$ I/Os on an average.
Computing of the hash key 
requires $O(V/B)$ I/Os. 

\begin{lemma}
A data structure, containing all $\alpha$-minimum cuts, for $\alpha <
3/2$ can be constructed in $O(c (\msf{V,E} \log E + \sort{E} +
(V/B)\sort{V}) + \sort{k})$ I/Os using $O(k)$ space. The following
query can be answered in $O(V/B)$ I/Os: Given a cut (a partition of the vertex set),
find whether it is $\alpha$-minimum or not.
\end{lemma}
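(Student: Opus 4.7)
The plan is to build on the minimum-cut algorithm of the previous section to enumerate every $\alpha$-minimum cut together with a succinct signature, then deduplicate via hashing. By Lemma~\ref{lem:tree:packing}, once we have the $(1-\epsilon)$-approximate tree packing $P$ with $\epsilon<1/3$, every $\alpha$-minimum cut for $\alpha<3/2$ is $2$-respected by at least one tree of $P$. So I would re-run the mincut procedure, which already evaluates, for each tree $T\in P$, the value of every cut $1$-respecting $T$ (one per tree edge) and every cut $2$-respecting $T$ (one per pair of tree edges). After the first run determines $c$, a second run emits into an auxiliary stream every pair $(T,\{e\})$ or $(T,\{e_1,e_2\})$ whose cut value is at most $\alpha c$. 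The I/O cost of the run is unchanged at $O(c(\msf{V,E}\log E + \sort{E} + (V/B)\sort{V}))$ I/Os.

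The key design choice is the canonical identifier for a cut, since the same cut can be charged to many (tree, edge-set) pairs and also has two equivalent set-representations $S$ and $V\setminus S$. I would draw, once and for all, a random bit-string $r_v$ of $O(\log V)$ bits for each $v\in V$, and define the hash of a cut $(S,V\setminus S)$ as $h(S)=\bigoplus_{v\in S} r_v$, canonicalised to the pair $(\min,\max)$ of $h(S)$ and $h(V)\oplus h(S)$. With $O(\log V)$ bits this distinguishes all $\alpha$-mincuts (of which there are $O(V^2)$) with high probability. To avoid paying an $O(V/B)$ scan per emitted cut, the hashes must be computed by the same expression-tree machinery that already computes the cut values: for each rooted tree $T$, precompute $H^\downarrow(v)=\bigoplus_{u\in v\downarrow}r_u$ for all $v\in V$ in $O(\sort{V})$ I/Os, and then for the $2$-respecting case use $H^\downarrow(u)\oplus H^\downarrow(v)$ when $u,v$ are incomparable and $H^\downarrow(u)\oplus H^\downarrow(v)$ (with the appropriate inclusion/exclusion) when comparable. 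These hashes tag each emitted record without any additional asymptotic cost.

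To finish the construction, I sort the emitted records by canonical hash; contiguous duplicates collapse into a single table entry that stores the (tree, edge-set) witness. The number of records emitted per tree that are $\alpha$-minimum is bounded by the number of distinct $\alpha$-mincuts, so the total emitted is $O(k)$ (each of the $k$ cuts is of course witnessed by at least one tree, and the witnesses a tree contributes are exactly the distinct $\alpha$-minimum cuts it $2$-respects), and sorting costs $O(\sort{k})$ I/Os. The resulting table uses $O(k)$ space. For a query cut $(S,V\setminus S)$ I scan the vertex partition once, XORing the precomputed $r_v$'s to obtain $h(S)$ and $h(V)\oplus h(S)$, canonicalise, and probe the table; the scan is $O(V/B)$ I/Os and the probe is $O(1)$ expected I/Os.

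The main obstacle I foresee is the hash-collision analysis: I must choose the signature length so that with high probability no two distinct cuts (among the $O(V^2)$ $\alpha$-minimum ones, or between an $\alpha$-minimum cut and a queried non-$\alpha$-minimum cut that happens to appear) share a canonical hash. A standard $\Theta(\log V)$-bit choice suffices by a union bound, but one should state the failure probability explicitly and confirm that the query-time guarantee is ``expected'' or ``with high probability'' rather than worst-case; if a deterministic bound is desired, one can replace the random hash by a perfect hash built over the sorted list of canonical set-sums, retaining the $O(\sort{k})$ construction and $O(V/B)$ query cost.
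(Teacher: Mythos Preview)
Your approach matches the paper's: re-run the mincut procedure after the first run determines $c$, tag every detected $\alpha$-minimum cut with a hash of its vertex partition, store the results in a hash table, and answer a query by hashing the given partition in one scan. The paper simply says ``use a perfect hash function'' and leaves the mechanics implicit; your XOR-of-random-labels signature, computed as $H^{\downarrow}(v)$ alongside the cut values via the same expression-tree machinery, is a concrete realisation of the same idea and costs nothing extra asymptotically.

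One counting step in your write-up does not go through as stated. You argue that since each tree contributes at most $k$ witness records, ``the total emitted is $O(k)$''. That inference is wrong: a single $\alpha$-minimum cut can $2$-respect many of the $\Theta(c\log E)$ trees in the packing, so the number of emitted (tree, edge-set) records is only bounded by $O(k\cdot c\log E)$, not $O(k)$. The paper's sketch does not justify the additive $\sort{k}$ term either---it simply asserts it---so your overall plan is no worse than the paper's, but the parenthetical justification you give for $O(k)$ is incorrect and should be removed or replaced by an argument that deduplicates incrementally per tree.
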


\section{The Randomised Algorithm}
\label{randomize}

The I/O complexity of computing the minimum cut can
be improved, if spanning trees from the packing are chosen randomly.
We assume that the minimum cut is large and $c > \log^2 V$.
We use ideas from \cite{Ka00}.

The maximum tree packing $\tau$ is at least $c/2$. Consider a minimum cut $X$ and
a packing $P$ of size $\tau'=\beta c$.
Suppose $X$ cuts exactly one tree edge of $\eta \tau'$ trees in $P$,
    and cuts exactly $2$ tree edges of $\nu \tau'$ trees in $P$.
Since $X$ cuts at least three edges of the remaining trees in $P$, 
\[
 \eta \tau' + 2\nu \tau' + 3(1 - \eta - \nu) \tau'  \leq  c \]
\[ 3 - 2\eta - \nu  \leq 1/ \beta  \]
\[ \nu  \geq  3-1/\beta - 2 \eta \]

First assume that $\eta > \frac{1}{2\log V}$. Uniformly randomly we pick a tree 
	$T$ from our approximate maximal tree packing.
The probability is $1/(2 \log V)$ that we pick a tree so that exactly one edge of it crosses
	particular minimum cut $X$.
If we choose $2 \log^2 V$ trees, then the probability of not selecting a tree
    that crosses the minimum cut exactly once is
\[ \left( 1 - \frac{1}{ 2\log V} \right)^{2 \log^2 V} < 2^{- \frac{ \log^2 V}{ \log V}}
 < \frac{1}{V} \]
Thus, with probability $(1 - 1/V)$, we compute a minimum cut.

Now suppose that $\eta \leq \frac{1}{2\log V}$. Then, we have
\[ \nu \geq 3- \frac{1}{\beta} - \frac{1}{\log V} \]
Randomly pick a tree. The probability is $\nu$ that we will pick a tree 
	whose exactly two tree edges crosses the minimum cut.
If we select $\log V$ trees from the packing then the probability of not 
	selecting the right tree is
\[ \left(1 - \nu \right)^{\log V} \leq \left(\left(\frac{1}{\beta}-2\right)+\frac{1}{\log V} \right )^{\log V}
\leq \frac{1}{V} \]
If $\beta$ is small enough; for example $\beta=1/2.2$.
Therefore, we compute a minimum cut with probability $1 - 1/V$. This
reduces the I/O complexity to $O(c \cdot \msf{V,E} \log E +
\sort{E}  \log^2 V + \frac{V}{B} \sort{V} \log V)$. 

\section{On a $\delta$-fat Graph}
\label{fat:graph}
A graph $G$ is called a $\delta$-fat graph for $\delta > 0$, if the maximum tree packing of $G$ is at least
    $\frac{(1+\delta)c}{2}$ \cite{Ka00}.
We can compute an approximate maximal tree packing of size at least
$(1 + \delta/2)(c/2)$ from our
    tree packing algorithm by choosing $\epsilon = \frac{\delta}{2 (1 + \delta)}$.
Since $c$ is the minimum cut, a tree shares on an average
    $2 / (1 + \delta/2)$ edges with a minimum cut which is less than $2$, 
	and thus is $1$.
Hence, for a $\delta$-fat graph, for each tree $T$ we need only to investigate cuts
that contain exactly one edge of $T$.
Hence, the minimum cut algorithm takes only
    $O(c  (\msf{V,E} \log E + \sort{E})) $ I/Os; this is dominated by
    the complexity of the tree packing algorithm.

\section{The $(2 + \epsilon)$-minimum cut algorithm}
\label{approximate}

In this section, we show that a near minimum cut can be computed more
    efficiently than an exact minimum cut.
The algorithm (Figure~\ref{approx:cut}) is based on the algorithm of 
\cite{KaMo97, M93}, and computes a cut of value between $c$ and $(2 + \epsilon)c$, 
if the minimum cut of the graph is $c$.

\begin{figure}
\vspace{0.2in}
\hrule
\begin{enumerate}
\item Let $\lambda_{\mbox{min}}$ be the minimum degree of graph $G$.
\item $k = \frac{\lambda_{\mbox{min}}}{2 + \epsilon}$ for $\epsilon > 0$.
\item find sparse $k$-edge connected certificate $H$ (see below for definition).
\item construct graph $G'$ from $G$ by contracting edges not in $H$ and
    recursively find the approximate minimum cut in the contracted graph $G'$.
\item return the minimum of $\lambda_{\mbox{min}}$ and cut returned from step $4$.
\end{enumerate}
\vspace{0.2in}
\hrule
\caption{Approximate minimum cut algorithm}
\label{approx:cut}
\end{figure}

We omit the correctness proof here. It can be found in \cite{KaMo97}. The depth
of recursion for the algorithm is $O(\log E)$ \cite{KaMo97} and
in each iteration the number of edges are reduced by a constant factor.
Except for step $3$, each step can be executed in $\sort{E}$ I/Os.
Next, we show that step $3$ can be executed in $O(k \cdot \msf{V,E})$ I/Os.

A $k$-edge-certificate of $G$ is a spanning subgraph $H$ of $G$ such that
    for any two vertices $u$ and $v$, and for any positive integer $k' \leq k$,
there are $k'$ edge disjoint paths between $u$ and $v$ in $H$ if and only if
    there are $k'$ edge disjoint paths between $u$ and $v$ in $G$.
It is called sparse, if $E(H) = O(k V)$.
There is one simple algorithm, given in \cite{NI92a}, which computes a sparse
    $k$-edge connectivity certificate of graph $G$ as follows.
Compute a spanning forest $F_1$ in $G$; then compute a
    spanning forest $F_2$ in $G - F_1$; and so on;
continue like this to compute a spanning forest $F_i$ in $G - \cup_{1 \leq j < i} F_j$,
    until $F_k$ is computed.
It is easy to see that connectivity of graph $H = \cup_{1 \leq i \leq k} F_i$ is
    at most $k$ and the number of edges in $H$ is $O(kV)$.
Thus, we can compute a sparse $k$-edge connectivity certificate of graph $G$ in
    $O(k (\msf{V,E} + \sort{E})$ I/Os. 

Since $\lambda_{min}$ is $O(E/V)$ and the number
    of edges is reduced by a constant factor in each iteration, a total
    of $O(\frac{E}{V} \cdot \msf{V,E})$ I/Os are required to compute a cut of 
    value between $c$ and $(2 + \epsilon)c$.

\section{Conclusions from this Chapter}
In this chapter, a minimum cut algorithm was designed exploiting the
semi-duality between minimum-cut
        and tree-packing.
On sparse graphs, the I/O complexity of the second phase dominates.
Computing the second phase of our algorithm in $O(\sort{E})$ I/Os
instead of
        $O(\frac{V}{B} \sort{V} + \sort{E})$ would be an interesting nontrivial result.

An approximate algorithm given in this chapter executes faster than the
above.
Can we improve the error in minimum cut further without compromising much
on the I/O complexity?

   \chapter{Some Lower and Upper Bound Results on Interval Graphs}
\label{ig:chapt}
\section{Introduction}

External memory algorithms for restricted classes of graphs such as planar graphs, grid graphs,
	and bounded treewidth graphs have been reported. 
Special properties of these classes of graphs make it easier on them, in comparison to general graphs, to find 
	algorithms for fundamental graph problems such as single source
	shortest paths, breadth first search and depth
	first search.
A survey of results and references can be found in \cite{Vi08}.

In a similar vein, we study interval graphs in this chapter.   
We present efficient external
	memory algorithms for the single source shortest paths,
	optimal vertex colouring,
	breadth first search and depth first search problems on interval graphs.
Note that optimal vertex colouring is  
	NP-hard for general graphs.
We give I/O lower bounds for the minimal vertex colouring of interval graphs, 
	$3$-colouring of doubly linked lists, finding of the connected
	components in a set of monotonic doubly linked lists, and $2$-colouring of a set of  
	monotonic doubly linked lists. 

\subsection{Definitions}
\label{subsec:prem:ig} 
A graph $G=(V,E)$ is called an interval graph,
    if for some set $\Im$ of intervals of a linearly ordered set,
    there is a bijection $f:V\rightarrow\Im$
    so that two vertices $u$ and $v$ are adjacent in $G$ iff $f(u)$ and
    $f(v)$ overlap.
Every interval graph has an interval representation in which endpoints
    are all distinct \cite{G80}.
Hence, $G$ can be represented by a set of endpoints $\cal E$
    of size $2\mid\Im\mid$, where, for each
    $I\in\Im$, there are unique elements $l(I)$ and $r(I)$ in
    $\cal E$ corresponding
    respectively to the left and right endpoints of $I$.
We define  an ``inverse'' function ${\cal I}:{\cal E}\rightarrow\Im$,
    which gives the corresponding interval for each member of ${\cal E}$.
That is, if $e$ is either $l(I)$ or $r(I)$ and $I\in\Im$ then, ${\cal I}(e)=I$.
We say that an interval $I_{1}$ leaves an interval $I_{2}$ to the left (resp. right) if 
$l(I_1) < l(I_2) < r(I_1) < r(I_2)$ (resp., $l(I_2) < l(I_1) < r(I_2) < r(I_1)$).

Interval graphs are perfect graphs \cite{G80};
    that is, for an interval graph $G=(V,E)$, and for
    every induced subgraph $G'$ of $G$, the chromatic number
    of $G'$ is equal to the clique number of $G'$.
Interval graphs also form a subclass of chordal graphs \cite{G80};
	that is, every cycle of length greater than $3$ has a chord,
	which is an edge joining two vertices that are not adjacent in 
	the cycle.
Vertex colouring of a graph means assigning colours to its
vertices
    so that no two adjacent vertices get the same colour; this is minimal
    when the smallest possible number of colours have been used.
    A maximum clique is a largest subset of vertices in which each pair
    is adjacent.
    A clique cover of size $k$ is a partition of the vertices $V = A_1 + A_2$
        $\ldots + A_k$ such that each $A_i$ is a clique.
    A smallest possible clique cover is called a minimum clique
        cover.

The single source shortest paths (SSSP) problem on interval graphs 
is to computing the shortest paths from a given \emph{source} vertex (interval) 
to all other vertices, where each vertex is assigned a positive weight; 
the length of a path is the sum of the weights of the vertices on the path, including
its endpoints.
 
A BFS tree of a graph $G$ is a subtree $T$ of $G$ rooted at some vertex $s$ such 
	that for each vertex $u$, the path from $s$ to $u$ in $T$ is
	a path of minimum number of edges from $s$ to $u$ in $G$.
A DFS tree of an undirected graph $G$ is a rooted subtree of $G$ such that 
	for each edge $(u,v)$ in $G$, the least common ancestor of $u$ and
	$v$ is either $u$ or $v$.
The breadth first search and depth first search problems are to compute a BFS tree and a DFS tree 
	respectively.

A doubly linked list is a directed graph in which both the out-degree and in-degree of
a vertex can be at most two.
The vertices are given in an array (say $A$) and each vertex has a pointer to 
	the next vertex and previous vertex.
Thus $A[i] = (k,j)$ will mean that $j$ is a successor of  $i$ and $k$ is a
	predecessor of $i$. 
	If $j>i$ and $j$ is the successor of $i$ then the $i$-to-$j$ pointer is said to be 
	``forward'', otherwise
	(if $j<i$ and $j$ is successor of $i$) it is said to be ``backward'' pointer.
A stretch of forward (backward) pointers starting from say $i$ is a maximal collection
	of nodes which can be reached from $i$ by traversing only forward (backward)
	pointers.
Any doubly linked list will have alternate stretches of forward and backward pointers.    
$3$-colouring a doubly linked list(denoted as ``$3$LC'') is the problem of
    vertex colouring a list with $3$
    colours.

Assume that the
vertices of a doubly linked list are
    numbered in some order. The doubly linked list is called  monotonic if the vertex
    numbered $v$ can be a successor of the vertex numbered $u$ if and
    only if $u < v$.
    That is, when the vertices of the list are arranged left to right in
    increasing order of their numbers, all the links are from left to right for
	successor pointers and right to
	left for predecessor pointers.
    MLCC is the problem of labelling each node  in a collection of
    disjoint monotonic doubly linked lists, by the first element of the
    corresponding list and $2$MLC is the problem of vertex colouring each 
	monotonic doubly linked list with 2 colours.

\subsection{Previous work}
External memory algorithms have been designed for many fundamental graph problems. 
All known external memory algorithms for the single source shortest paths (SSSP), 
	breadth first search (BFS), and depth first search (DFS) problems on general graphs perform well 
	only for dense graphs. See the results and references
in \cite{Vi08}; also see Table~\ref{result:previous5}. 
For many graph problems a lower bound of 
       	$\Omega(\min \{V, \frac{E}{V}\sort{V} \})$ on I/Os applies \cite{MR99}.

Some restricted classes of sparse graphs, for example 
planar graphs, outerplanar graphs, grid graphs and bounded tree width graphs, have
been considered in designing I/O efficient 
algorithms for SSSP, DFS and BFS. 
Exploitation of the structural properties of specific classes of sparse graphs has led to algorithms 
for them that
	perform faster than the algorithms for a general graphs.
Most of these algorithms require $O(\sort{V+E})$ I/Os.  
See the results and references in \cite{Vi08}.

For any constant $k$, the set of graphs of tree width $k$
includes all interval graphs with maximum clique size $k$. For
bounded tree width graphs, we can compute SSSP, BFS and DFS, and also some other problems
that are known to be NP-hard for general graphs 
in $O(\sort{E+V})$ I/Os
\cite{MaZe07}.

Interval graphs form a well-known subclass of perfect graphs and have applications 
	in archeology,
biology, psychology, management, engineering, VLSI design, circuit
routing, file
        organisation, scheduling and transportation \cite{G80}.
A lot of work have been done in designing sequential and parallel algorithms
	for various problems on an interval graph 
\cite{ACL95, D97, G80, H92, SaSa99}.

List ranking is a well-known problem. A known lower bound on I/Os for this problem is 
	$\Omega(\perm{N})$ \cite{ChGoGr+95}, where $N$ is the size of 
	the linked list and $\perm{N}$ is the number of I/Os required to 
	permute $N$ elements.
The best known upper bound on I/Os for list ranking and $3$ colouring of lists is
	$O(\sort{N})$ \cite{ChGoGr+95}.
\begin{table}
\begin{tabular}{|l|l|l|}
               \cline{1-2}
               \hline 
               {\bf Problem} & {\bf Result} & \bf{ References} \\
               \hline \hline
               Single source shortest  & $O\left(\sqrt{\frac{VE}{B}} \log V + 
\msf{V,E} \right)$ & \cite{MeZe06}  \\
                problem &  &      \\
               \hline
               Breadth first search & $O\left(\sqrt{(VE)/B} + \sort{E} + \ssf{V,E}\right)$ 
 & \cite{MeMe02}\\
                \hline               
Depth first search & $O\left(\min\{V + \sort{E} + (VE)/M , (V + E/B)\log V\}\right)$
 & \cite{ChGoGr+95} \cite{KS96} \\
               \hline 
            \end{tabular}
\caption{Previous Results: The term \ssf{V,E} and \msf{V,E} represent the I/O bounds
        for computing spanning forest and minimum spanning forest respectively.}
\label{result:previous5}
\end{table}
\subsection{Our Results}
We present some the lower  and upper bound results on interval graphs.
The results are described below and summarised in Table~\ref{result:our5}.

\subsubsection{Lower Bound Results}

We show that
finding the connected components in a collection of disjoint
monotonic doubly linked lists (MLCC) of size $V$ is equivalent to the minimal interval
graph colouring (IGC) problem on an interval graph whose interval representation is
	given.
The number of I/Os needed for both are shown to be $\Omega(\frac{V}{B} \log_{M/B} \frac{\chi}{B})$, 
where $\chi$ is the chromatic
    number of an interval graph, or the total number of disjoint monotonic doubly linked lists, as is relevant.
We also show that 3-colouring of a doubly linked list ($3LC$) of size $V$ is
reducible to  2-colouring of a set of disjoint monotonic doubly linked lists (2MLC)  in 
$O(\scan{V} + \sort{\chi})$ I/Os.
It is also shown that 2MLC and 3LC of sizes $V$ each have lower bounds of 
$\Omega(\frac{V}{B} \log_{M/B} \frac{\chi}{B})$ on
I/Os, where $\chi$ is the number of disjoint 
	monotonic  doubly linked lists, and the total number of forward and backward stretches
	in the doubly linked list respectively.
\subsubsection{Upper Bound Results}
\begin{itemize}
\item {\bf SSSP and BFS/DFS tree computations:} We
present an SSSP algorithm that requires $O(\sort{V})$
    I/Os, and an BFS tree computation algorithm that requires $O(\scan{V})$ I/Os, and
     an DFS tree computation algorithm that requires $O(\frac{V}{\chi}\sort{\chi})$ I/Os.
The input graph is assumed to be represented as a set of intervals in sorted
	order.
\item {\bf minimally vertex colouring interval graphs (IGC):}
We show that IGC can be computed in
an optimal $O(\frac{V}{B}\log_{M/B}\frac{\chi}{B})$ I/Os, if
the input graph is represented as a set of intervals in sorted
	order. 
\item {\bf Algorithms for 3LC, 2MLC, and MLCC Problems}
Optimal algorithms are given for 3LC, 2MLC, MLCC problems.
\end{itemize}

\begin{table}
\begin{center}
\begin{tabular}{|l|l|c|}
\cline{1-3}
\hline
{\bf Problem }       &       {\bf  Notes }           &  {\bf I/O Bound} \\
\hline \hline
SSSP           &   input is a set of intervals   & $O(\sort{V})$           \\
\hline
BFS            &   input is a set of intervals in sorted order  & $O(\scan{V})$            \\
\hline
DFS            &   input is a set of intervals in sorted order & $O(\frac{V}{\chi} \sort{\chi})$            \\
\hline
IGC            &   input is a set of intervals in sorted order    &   $O\left(\frac{V}{B} \log_{M/B} 
\frac{\chi}{B}\right)$ \\
\hline
3LC, 2MLC, MLCC      &                            & $O(\frac{V}{\chi}\sort{\chi})$ \\
\hline

IGC, MLCC, 3LC, 2MLC      &                           &     $\Omega\left(\frac{V}{B} \log_{M/B}
\frac{\chi}{B}\right)$ \\
\hline
\end{tabular} 
\caption{Our Results}      
\label{result:our5}
\end{center}
\end{table}

\subsection{Organisation of This Chapter}
In Section~\ref{igc}, we present the lower bound results on MLCC, IGC
	and 3LC problems.
In Section~\ref{upper:lower:igc}, the algorithms for finding chromatic number and
	minimal vertex colouring problems are given.
The algorithms for SSSP, BFS and DFS
are given respectively in Section~\ref{shortest:path},
and Section~\ref{bfs:dfs}.

\section{The Lower Bound Results}
\label{igc} 

In this section, we discuss the lower bound results for MLCC, 2MLC, 3LC, and IGC problems.

\subsection{Equivalence of MLCC and IGC}

\subsubsection{A Reduction from MLCC to IGC}

\label{mlcc:igc} Consider an instance of MLCC of size $V$ with $K$
	components, given in an array $A[1\ldots V]$ such that for each $A[i]$, 
	its successor and predecessor both are stored in $A[i]$.
From this instance we construct an equivalent instance of IGC as follows.
 
Allocate an array $D[1\ldots 2K+V]$.
Copy $A[1\ldots V]$ into $D[K+1, \ldots K+V]$. While copying, offset every pointer
by $K$ so that it continues to point to the same element as before.
Scanning $D$ in order from location $K+1$ to $K+V$, for $1\leq j\leq K$, make $D[j]$ the
predecessor of the $j$-th headnode in $D[K+1, \ldots K+V]$, and make $D[K+V+j]$ the successor of the
$j$-th lastnode in $D[K+1, \ldots K+V]$. Accordingly, define successors and predecessors 
for the nodes in $D[1\ldots K]$ and $D[K+V+1\ldots 2K+V]$.
(A headnode is a node without a predecessor, and a lastnode
is a node without a successor.) Now $D$ holds in it a set of $K$ monotonic linked lists
so that every headnode is in $D[1\ldots K]$ and every lastnode is in $D[K+V+1\ldots 2K+V]$.
$O(\scan{V})$ I/Os are enough to prepare $D$ as stated above.

Construct a set $\Im$ of intervals such that,
    for $1\leq i,j\leq V+2K$, if the successor of $D[i]$ is $D[j]$ then
    add $I_i=[i,j-\frac{1}{2}]$ to $\Im$. $l(I_i) = i$, and $r(I_i) = j - \frac{1}{2}$.
For every integer $i$ in $D[1\ldots K+V]$, $i$ is a left endpoint, and for every
integer $j$ in $D[K+1\ldots 2K+V]$, $j-\frac{1}{2}$ is a right endpoint.
The successor (predecessor) pointers give the corresponding the right (left) endpoint
for each left (right) endpoint. An instance of IGC can be constructed from these
intervals in $O(\scan{V + 2K})$ I/Os.
 
Consider the interval graph $G$ defined by $\Im$. Identifying each
component of $D$ with
    a unique colour, we get a valid colouring of $G$;
    that is, $\chi(G)\leq K$.
Now, suppose, there is an optimal colouring of $G$
    that for two consecutive edges $(u,v)$ and $(v,w)$ of $D$,
    gives different colours to their corresponding intervals
    $I_u=[u,v-\frac{1}{2}]$ and
    $I_v=[v,w-\frac{1}{2}]$ in $G$.
But, all endpoints are distinct. There is no endpoint between
    $v-\frac{1}{2}$ and  $v$. 
Both $I_u$ and $I_v$ share $K-1$ mutually adjacent neighbours.
Therefore, 
    and we get $\chi(G)= K+1 > K$, a contradiction.
That is, an optimal colouring of $G$ will use one colour per component of $D$.
In other words, $\chi(G)=K$ and any optimal colouring of $G$
    will identify the connected components of $D$, and hence of $A$.
Thus, we have the following lemma.

\begin{lemma}
\label{lem:mlcc2igc:nc1} An instance of MLCC of size $V$ with $K$
components can be reduced
    to an instance of IGC of size $O(V)$ and chromatic number $K$,
    in $O(\scan{V})$ I/Os.
\end{lemma}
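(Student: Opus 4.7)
The plan is to establish the three assertions of the lemma---that the construction above produces a valid IGC instance of size $O(V)$, whose chromatic number is exactly $K$, and which can be built in $O(\scan{V})$ I/Os---and then to argue that the colour classes of any optimal colouring recover the components of $A$.

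First I would handle the size and I/O bounds, which fall out directly from the construction: one scan to copy $A$ into $D[K+1..K+V]$ with every pointer offset by $K$, a second scan to identify headnodes and lastnodes and splice in the $K$ dummy prefix and suffix entries, and a third scan to emit the intervals $I_i=[i,j-\tfrac12]$ from the successor pointers. Since $K\le V$, each scan costs $O(\scan{V})$ I/Os and the resulting set has $V+K=O(V)$ intervals. For $\chi(G)\le K$ I would exhibit the natural colouring: within any component with nodes $p_1<p_2<\cdots<p_m$, consecutive intervals $[p_i,p_{i+1}-\tfrac12]$ and $[p_{i+1},p_{i+2}-\tfrac12]$ are separated by the half-unit gap $(p_{i+1}-\tfrac12,\,p_{i+1})$, so one colour per component suffices.

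The heart of the argument is the matching lower bound, which simultaneously establishes that optimal colourings respect the component partition. The padding is what makes this work: every component's chain has its first node in $D[1..K]$ and its last node in $D[K+V+1..2K+V]$, so at every non-endpoint interior position each of the $K$ components contributes one active interval, and these form a $K$-clique (showing $\chi(G)\ge K$ and forcing distinct colours for distinct components). More carefully, for two consecutive intervals $I_u=[u,v-\tfrac12]$ and $I_v=[v,w-\tfrac12]$ of one component, every other component has some node-pair $p_i<v<p_{i+1}$ (by distinctness of endpoints and by the padding), whose interval $[p_i,p_{i+1}-\tfrac12]$ strictly spans $v$ and is therefore adjacent to both $I_u$ and $I_v$. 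The resulting $K-1$ spanning intervals are pairwise adjacent, so $\{I_u\}$ and $\{I_v\}$ each extend them to a $K$-clique. I would conclude that if $I_u$ and $I_v$ were assigned different colours, $K+1$ colours would be required, contradicting $K$-colourability; transitivity along each chain then forces each component to be monochromatic.

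The main obstacle I anticipate is making the interior-point argument airtight---verifying that the padding genuinely places $v$ in the strict interior of every other component's chain, and that the endpoint-distinctness property propagates through the offsetting step so that no two intervals share a left or right endpoint. Once this is settled the reduction is immediate: solve IGC on $G$, and read off the $K$ colour classes as the components of $A$.
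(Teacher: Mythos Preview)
Your proposal is correct and follows essentially the same approach as the paper: the paper's proof is in fact terser than yours, asserting only that ``there is no endpoint between $v-\frac{1}{2}$ and $v$'' and that ``both $I_u$ and $I_v$ share $K-1$ mutually adjacent neighbours,'' which is exactly the spanning-interval argument you spell out. Your anticipated obstacle about the padding is the right one, and your observation that $v$ must lie in $D[K+1,\ldots,K+V]$ (since $v$ is a non-headnode with a successor) is precisely what makes the $K-1$ common neighbours exist.
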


\subsubsection{A Reduction from IGC to MLCC}
\label{igc:mlcc}

A in-core algorithm for IGC is easy to visualise.
Let $Q$ be a queue of size $\chi(G)$, which is
    initialised with all the available $\chi(G)$ colours.
Consider the endpoints of the intervals one by one in non-decreasing
    order.
For each left endpoint encountered, remove a colour from the front
    of $Q$ and colour the corresponding interval with it.
For
    each right end point, release the colour of the corresponding interval
    onto the back of $Q$.
When the last left endpoint is considered, the graph would be coloured.

We attempt a reduction of IGC to MLCC using this algorithm. Let
$L=\{l_{1},\ldots,l_{V}\}$ and $R=\{r_{1},\ldots,r_{V}\}$
    respectively be the sets of the left and right endpoints of the
    intervals given in non-decreasing order.
For each interval $I$, the rank of $r(I)$ (resp. $l(I)$) is stored with $l(I)$ 
	(resp. $r(I)$) in $L$ (resp. $R$).
For $1\leq i\leq V$, let $t_{i}$ be the rank of $r_{i}$ in $L$.
That is, $l_{1}<\ldots<l_{t_{i}}<r_{i}<l_{t_{i}+1}$. So, when
$r_{i}$ releases the colour of the interval ${\cal I}(r_{i})$
    onto the back of $Q$, $t_i$ left endpoints and $i$ right endpoints
    would have been encountered and the length of $Q$ would be
    $\chi(G)-t_{i}+i$.
Hence, the colour released by  $r_{i}$ will be taken up by the
    $(\chi(G)-t_{i}+i)$-th left endpoint from now on;
    that is, by the left endpoint $l_{t_{i}+\chi(G)-t_{i}+i}=l_{\chi(G)+i}$.
In other words, both ${\cal I}(r_{i})$ and
    ${\cal I}(l_{\chi(G)+i})$  are to get the same colour, and no interval
    with a left endpoint between their respective left endpoints will get
    that colour.

Define the successor and predecessor of each $L[i]$ as follows: let
        $\mbox{succ}(l_i)$ be $l_{\chi(G)+j}$, for $1\leq i\leq V$, where 
	${\cal I}(l_i) = {\cal I}(r_j)$ and $\mbox{pred}(l_i)$ be 
	$l({\cal I}(r_{i - \chi(G)}))$, for $\chi(G) \leq i\leq V$.
For $1 \leq i \leq \chi(G)$, each $l_i$ is the headnode of a monotonic linked
	list.
It is clear that this defines a collection of monotonic linked lists over $L$.
Once we find the connected components in this collection we have got an optimal 
	colouring of the graph.
We can obtain monotonic linked lists in $O(\scan{V})$ I/Os as follows: for each 
	interval $I$, rank of $r(I)$ (resp. $l(I)$) is stored with 
	$l(I)$ (resp. $r(I)$) in $L$ (resp. $R$).
The successor of each $l_i$ can be computed in
    one scan of $L$.
To compute the predecessors, scan $L$ and $R$ together, with the scan of
$R$ staying $\chi(G)$ nodes behind, and set for each $l_i$, $l({\cal I}(r_{i - \chi(G)}))$ 
as the predecessor of $l_i$. 
Thus, we have the following lemma:

\begin{lemma}
\label{lem:igc2mlcc:nc1}
IGC can be reduced to MLCC in $O(\scan{V})$ I/Os.
\end{lemma}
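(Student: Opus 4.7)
The plan is to formalize the greedy ``colour-release queue'' procedure sketched in the paragraphs preceding the lemma as an explicit construction of monotonic linked lists on the left-endpoint array, and then verify that every pointer can be filled in by a constant number of forward scans.

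First I would argue correctness of the underlying algorithm. Processing the endpoints in sorted order, assigning a fresh colour from the front of a queue of $\chi(G)$ colours at each left endpoint and returning the colour to the back at each right endpoint, is the standard greedy interval-colouring procedure; since interval graphs are perfect, it uses exactly $\chi(G)$ colours, and its FIFO discipline pairs the colour released at $r_i$ uniquely with the next left endpoint to dequeue it. The counting argument given just before the lemma shows that pair to be $l_{\chi(G)+i}$. Hence two intervals ${\cal I}(l_i)$ and ${\cal I}(l_j)$ lie in the same colour class if and only if they belong to a common chain of such (release, consume) pairs, so recovering the colour classes is exactly an MLCC instance.

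Next I would make the reduction explicit. On the array $L$ of left endpoints I define, for each $i$, the successor $\mathrm{succ}(l_i) = l_{\chi(G)+j}$, where $r_j$ is the right endpoint of ${\cal I}(l_i)$, and symmetrically the predecessor $\mathrm{pred}(l_i)=l({\cal I}(r_{i-\chi(G)}))$ for $i>\chi(G)$. Because the index strictly increases along every pointer (by at least $\chi(G)\geq 1$), the resulting structure is a disjoint union of monotonic doubly linked lists whose heads are precisely $l_1,\ldots,l_{\chi(G)}$. An MLCC solution on this structure therefore labels every $l_i$ with a unique head, yielding the $\chi(G)$-colouring.

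For the I/O bound I would proceed in three scans. (i)~Compute $\chi(G)$ as the maximum number of currently open intervals, by merging $L$ and $R$ into one sorted endpoint stream and maintaining a running counter, in $O(\scan{V})$ I/Os. (ii)~Assuming (as the preamble does) that each entry of $L$ stores the rank of its partner in $R$ and vice versa, a single scan of $L$ suffices to write all successor pointers. (iii)~To write predecessor pointers, scan $L$ and $R$ in tandem with the $R$-cursor lagging the $L$-cursor by $\chi(G)$ positions, so that when the $L$-cursor reads $l_i$ the $R$-cursor supplies $r_{i-\chi(G)}$, whose partner pointer gives $\mathrm{pred}(l_i)$. The main obstacle I anticipate is exactly this dereferencing of ``the interval $\chi(G)$ steps ahead/behind'' without random access, and the lagged tandem scan together with the cross-referenced ranks is what resolves it; all three scans fit into $O(\scan{V})$ I/Os.
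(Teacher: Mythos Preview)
Your proposal is correct and follows essentially the same route as the paper: the same successor/predecessor definitions on $L$ derived from the queue-release analysis, and the same implementation via a scan of $L$ for successors and a tandem scan of $L$ and $R$ with an offset of $\chi(G)$ for predecessors. The only cosmetic difference is that you fold the computation of $\chi(G)$ into the reduction as an explicit first step, whereas the paper treats that as a separate $O(\scan{V})$ lemma already available.
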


Therefore,

\begin{theorem}
\label{th:igcEQmlcc} IGC is equivalent to MLCC on the external memory model.
\end{theorem}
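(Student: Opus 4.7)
The plan is essentially to invoke the two reduction lemmas that immediately precede the theorem and observe that they yield the equivalence with negligible overhead. First I would note that Lemma~\ref{lem:mlcc2igc:nc1} produces, from any MLCC instance of size $V$ with $K$ components, an equivalent IGC instance of size $\Theta(V)$ whose chromatic number equals $K$, using only $O(\scan{V})$ I/Os; and conversely, Lemma~\ref{lem:igc2mlcc:nc1} converts any IGC instance into an MLCC instance (on the list of left endpoints) in $O(\scan{V})$ I/Os, in such a way that the connected components of the resulting collection of monotonic lists correspond one-to-one with the colour classes of an optimal colouring.

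Given these two reductions, the proof is a one-line argument: if $T_{\mathrm{IGC}}(V,\chi)$ and $T_{\mathrm{MLCC}}(V,\chi)$ denote the I/O complexities of the two problems on instances of size $V$ and chromatic number / component count $\chi$, then
\[
T_{\mathrm{IGC}}(V,\chi) \;=\; O\!\left(\scan{V} + T_{\mathrm{MLCC}}(V,\chi)\right),
\quad
T_{\mathrm{MLCC}}(V,\chi) \;=\; O\!\left(\scan{V} + T_{\mathrm{IGC}}(V,\chi)\right).
\]
Since, as established earlier in the chapter, both problems have the lower bound $\Omega(\tfrac{V}{B}\log_{M/B}\tfrac{\chi}{B})$, which dominates $\scan{V}$ whenever $\chi = \omega(B)$ (and is matched up to the scan term otherwise), the additive $O(\scan{V})$ overhead is absorbed. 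Hence the two problems have the same asymptotic I/O complexity, which is precisely the claim of equivalence on the external memory model.

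There is no real obstacle here: the work has already been done in proving the two reduction lemmas, and what remains is just the bookkeeping observation that both reductions cost only a scan, which is a lower-order term compared to the inherent cost of either problem. The only point worth being careful about is making sure the parameter $\chi$ is preserved by both reductions (it is: the first lemma yields chromatic number exactly $K$, and the second yields exactly $\chi(G)$ monotonic lists), so that the lower bound stated in terms of $\chi$ indeed transfers across the reduction.
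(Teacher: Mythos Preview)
Your proposal is correct and matches the paper's approach exactly: the paper's proof consists of the single word ``Therefore,'' immediately following the two reduction lemmas, so the content is precisely the mutual $O(\scan{V})$ reductions you invoke. Your additional remarks about the lower bound absorbing the scan overhead are more detailed than what the paper provides (and, strictly speaking, reference the lower bound that is only established in the \emph{following} subsection), but the core argument is the same.
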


\subsection{Lower Bounds for 2MLC and MLCC}
The lower bounds for 2MLC and MLCC are obtained by showing each problem
	equivalent to a modified split proximate neighbours problem.
The split proximate neighbours (SPN) problem is defined as follows: ``Given are two arrays $A$ and $B$, each
	a permutation of size $K$ of elements from the range $[1,K]$ so that for each $i$ in $[1,K]$,
	the occurrences of $i$ in $A$ and $B$ know the addresses of each other.
	Permute $A\cup B$ so that for each $i$, the two occurrences of $i$
	are stored in the same block.''
Without loss of generality, we can assume that $A$ is sorted. 
The following lower bound is known for this problem:
\begin{lemma}\cite{ChGoGr+95, zeh}
\label{prox:neigh:lb}
The split proximate neighbours problem requires $\Omega(\perm{K})$ I/Os for an input of 
	size $2K$, where $\perm{K}$ is the number of I/Os required to permute
	$K$ elements.
\end{lemma}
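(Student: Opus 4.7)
The plan is to prove this lower bound by adapting the classical Aggarwal--Vitter counting argument to the proximate neighbours setting. The $K!$ distinct inputs (indexed by the permutation stored in $B$) must lead to nearly as many distinguishable final disk contents, and since each I/O can multiplicatively expand the set of reachable computation states by only a bounded factor, $t$ I/Os suffice to grow the reachable set only exponentially in $t$; balancing the two forces $t = \Omega(\perm{K})$.

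Concretely, first I would fix a deterministic algorithm $\mathcal{A}$ solving split proximate neighbours in $t$ I/Os, and define a computation state as the joint configuration of memory contents together with disk contents. Starting from a single canonical initial state (in which $A$ is sorted and $B$ is in some reference order), each I/O reaches at most $\binom{M}{B} \cdot B!$ new states: $\binom{M}{B}$ choices for which $B$ memory cells are involved in the transfer, and $B!$ orderings within the block. After $t$ steps the set of reachable states has size at most $\big(\binom{M}{B} \cdot B!\big)^t$, the standard Aggarwal--Vitter upper bound.

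Second, I would lower-bound the number of final states the algorithm must be able to reach. Each of the $K!$ permutations of $B$ is a legal input; since all $2K$ elements carry distinct labels, distinct inputs must produce distinguishable final disk contents modulo the slack permitted by the output specification --- namely, freedom to place each pair in any block and to order its two elements within that block. A short combinatorial estimate bounds this slack by $\gamma \leq K! \cdot B^K$, so the number of distinguishable reachable final states is at least $K!/\gamma \geq (K/B)^{\Omega(K)}$. Combining with the upper bound, $t \cdot O(B \log(M/B)) \geq \Omega(K \log(K/B))$, giving $t = \Omega\big(\tfrac{K}{B}\log_{M/B}\tfrac{K}{B}\big) = \Omega(\perm{K})$ in the regime where the sorting term dominates; the trivial $\Omega(K/B)$ scanning bound handles the remaining regime.

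The main obstacle will be the careful combinatorial bookkeeping of the slack $\gamma$: pairs may sit in any block and in any order within a block, and the argument must subtract exactly the right amount of ``equivalence entropy'' from the input entropy $\log(K!)$, since over-counting indistinguishable outputs collapses the bound, while under-counting invalidates the reduction to the $K!$ input lower bound. A secondary subtlety is guaranteeing that the per-I/O expansion factor has base $M/B$ rather than $M$, which requires the standard refinement that unmoved memory cells contribute no new entropy between consecutive I/Os. Once these technicalities are handled, the bound follows the Aggarwal--Vitter template cleanly.
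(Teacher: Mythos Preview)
First, note that the paper does not actually prove this lemma; it is quoted from \cite{ChGoGr+95, zeh} as a known result, so there is no in-paper argument to compare against. The remark that immediately follows the lemma (``This assumption is akin to assuming that in the permutation problem each element in the input knows its position in the output'') signals that the intended route in those references is essentially a reduction from the Aggarwal--Vitter permutation lower bound, not a fresh counting argument. Your plan to redo the counting directly is reasonable and can be made to work, but as written it contains a concrete arithmetic error that breaks the proof.

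You assert $\gamma \le K!\cdot B^{K}$ and then claim $K!/\gamma \ge (K/B)^{\Omega(K)}$; these two statements are incompatible, since the first only yields $K!/\gamma \ge B^{-K}$, which is vacuous. The quantity that actually drives the argument is $\gamma = \max_{o}\,|\{\pi : o \text{ is a valid output for } \pi\}|$. Identify atoms by their initial positions (so the initial \emph{arrangement} is the same for all $K!$ inputs). In any valid output, each block holds $B/2$ $A$-atoms and $B/2$ $B$-atoms, and $\pi$ is constrained only to send the $B$-indices in a block to the $A$-indices in that same block; hence $\gamma \le \bigl((B/2)!\bigr)^{2K/B}$, giving $\log(K!/\gamma)=\Theta\bigl(K\log(K/B)\bigr)$, which is exactly the input the Aggarwal--Vitter machinery needs. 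Your verbal description of the slack --- ``freedom to place each pair in any block and to order its two elements within that block'' --- is the wrong object: it counts valid \emph{outputs per input}, whereas the image-size bound $|\{o_\pi\}|\ge K!/\gamma$ requires the number of \emph{inputs per output}. Fix this, and the rest of your outline (including the $M/B$-versus-$M$ refinement you already flag) goes through.
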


The assumption that for each $i$, the occurrences of $i$ in $A$ and $B$ know the addresses of each other
does not make the problem any easier. This assumption is akin to assuming that in the
permutation problem each element in the input knows its position in the output. 
See \cite{AgVi88, zeh}. 

Consider a set $S$ of $\frac{N}{2K}$ independent instances of SPN of size $K>M$ each.
Let $\mathcal{P}$ be the problem of solving all instances of $S$.
A lower bound for $\mathcal{P}$ is $\Omega(\frac{N}{K}\perm{K})$.
This follows from Lemma~\ref{prox:neigh:lb}, and the fact that $N$ cannot be greater than $K!$
	because $N < B(M/B)^B$ when $\sort{N} < N$. 

Create a set of $K$ monotonic linked lists from $\mathcal{P}$ as follows.
For all $i$, $1\leq i\leq K$, and for all $j$, $1 < j < N/2K$, 
let the successor (resp. predecessor) of the $i$ in $A_j$
	be the $i$ in $B_j$ (resp. $B_{j-1}$), and let the successor (resp. predecessor) 
	of the $i$ in $B_j$ be the $i$ in $A_{j+1}$ (resp. $A_j$).
For all $i$, $1\leq i\leq K$, let the successor (resp. predecessor) of the $i$ in $A_{N/2K}$ (resp. $A_{1}$)
	be NULL, and let the predecessor (resp. successor) of the $i$ in $A_{N/2K}$ (resp. $A_{1}$)
	be the $i$ in $B_{N/2K-1}$ (resp. $B_{1}$).    

Consider solving the above MLCC instance $C$. 
We claim that during the execution of any MLCC algorithm, every link of $C$ has to come into the main memory. 
Suppose this is wrong.
Then an adversary can cut an edge $e$ that does not
come into the main memory; the algorithm would still give the same output, 
even though the number of connected components has increased. 

Whenever the link from the $i$ in $A_j$ to the $i$ in $B_j$ comes into the
main memory, make a copy of the two occurrences of $i$ into a block.
When block becomes full, write it into the external memory.
Thus, with an extra $N/B$ I/Os, problem $\mathcal{P}$ is also computed during the execution of
	an algorithm for MLCC.
Thus the lower bound of $\mathcal{P}$ also holds for MLCC.

The same argument holds for 2MLC, and 3LC too.
Therefore,
\begin{theorem}
Each of IGC, 3LC, MLCC and 2MLC require  
$\Omega(\frac{N}{K} \sort{K})$ I/Os on the external memory model.
\end{theorem}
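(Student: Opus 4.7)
\bigskip

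\noindent
\textbf{Proof proposal.}

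The plan is to reduce a hard base problem, namely a batch of independent Split Proximate Neighbours (SPN) instances, to each of the four problems, so that any algorithm for the target problem solves all the SPN instances up to a negligible additive cost. Given Lemma~\ref{prox:neigh:lb}, which furnishes an $\Omega(\perm{K})$ lower bound for a single SPN instance of size $2K$, and the fact that $\frac{N}{2K}$ independent SPN instances of size $K$ require $\Omega(\frac{N}{K}\perm{K})$ I/Os (as noted in the excerpt), the bound $\Omega(\frac{N}{K}\sort{K})$ will follow once we observe that for all interesting parameters $\perm{K}=\Theta(\sort{K})$.

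My first step is to prove the bound for MLCC. I would take $\frac{N}{2K}$ independent SPN instances $(A_j,B_j)$ of size $K>M$, and stitch the occurrences of each value $i\in[1,K]$ into a single monotonic doubly linked list by making the $i$ in $A_j$ the successor of the $i$ in $B_{j-1}$ and the predecessor of the $i$ in $B_j$, and so on. This yields a collection $C$ of exactly $K$ disjoint monotonic doubly linked lists. Now comes the central adversary argument: any correct MLCC algorithm must bring every link of $C$ into main memory, because otherwise the adversary may cut an unread link, splitting one component into two without affecting the algorithm's output, and so contradicting correctness. Whenever the algorithm reads the link joining the two occurrences of a value $i$ in some $(A_j,B_j)$ pair, I would copy the pair into a buffer and flush the buffer to disk whenever it fills; this piggybacks the solution of all the SPN instances onto the MLCC computation at an extra cost of only $O(N/B)$ I/Os. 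The MLCC lower bound $\Omega(\frac{N}{K}\sort{K})$ follows.

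Next I would transfer the bound to the other three problems. For IGC, I would invoke Theorem~\ref{th:igcEQmlcc} and Lemma~\ref{lem:mlcc2igc:nc1}: the MLCC instance of size $O(N)$ with $K$ components maps to an IGC instance of size $O(N)$ with chromatic number $\chi=K$ in $O(\scan{N})$ I/Os, so any IGC algorithm beating the bound would yield an MLCC algorithm beating it. For 2MLC the same construction $C$ already works, since exhibiting a valid $2$-colouring of a monotonic list requires reading every one of its links for exactly the same adversary reason (flip the colour class beyond an unread link). For 3LC, I would concatenate the monotonic lists of $C$ end-to-end into a single doubly linked list with $\Theta(K)$ alternating forward/backward stretches (so the relevant parameter $\chi$ equals $K$ up to constants), and observe that the adversary argument again forces every link to be read.

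The routine part is the parameter bookkeeping: checking that $\perm{K}=\Theta(\sort{K})$ under the usual assumption $N<B(M/B)^B$, that the various $\chi$'s in the four problems all equal $K$ up to constants in the construction, and that the $O(N/B)$ piggyback cost is absorbed. The main obstacle, and the step that requires the most care, is making the adversary argument fully rigorous against algorithms that may be randomised or that may read links partially (e.g.\ reading a node without following its pointer); I would formalise it by tracking, for each link of $C$, whether both its endpoint records have been jointly present in main memory at some moment, and arguing that two executions differing only at an unwitnessed link are indistinguishable to the algorithm, forcing an incorrect answer on one of them.
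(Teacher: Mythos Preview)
Your proposal is correct and follows essentially the same route as the paper: build $K$ monotonic doubly linked lists by threading the matching elements across $\frac{N}{2K}$ independent SPN instances, use the adversary ``cut an unwitnessed link'' argument to force every link into main memory, and piggyback the SPN solution at an $O(N/B)$ additive cost; the extension to 2MLC and 3LC and the transfer to IGC via Lemma~\ref{lem:mlcc2igc:nc1} are handled just as you describe. If anything, your treatment of the 3LC case (explicitly concatenating the $K$ monotonic lists into one list with $\Theta(K)$ stretches) and your discussion of formalising the adversary argument are more detailed than the paper's, which simply asserts that ``the same argument holds for 2MLC and 3LC too.''
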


\section{The Algorithms for $3$LC, $2$MLC, and MLCC}
\label{3lc:2mlc}

Note that each node of the given doubly linked list knows both its predecessor and successor.
A general doubly linked list $L$, given in an array $A$, can be visualised as being constituted of
    alternating stretches of forward and backward pointers in an array (described in
        Subsection~\ref{subsec:prem:ig}), and
    hence can be decomposed into two instances of $2$MLC, one consisting only
    of forward stretches and the other consisting only of backward stretches.
Let $L'$ and $L''$ be the $2$MLC instances formed by the forward and backward
    stretches of $L$, respectively.
Invoke a $2$MLC algorithm on each.
The output of the $2$MLC invocations can be used to $4$-colour $L$ as follows:
for each node $x$ in $L$, colour $x$ with ``$ab$'' if 
    $a$ and $b$ are the colours ($0$ or $1$) of $x$ in $L'$ and $L''$ respectively. 

Each node $v$, if $v$ has both a successor and a predecessor in $L'$,
	can infer their colours in $L'$ from its own colour;
	if $v$ is coloured $0$, they are coloured $1$ and vice versa.
There are at most $2\chi$ nodes in $L'$ with a forward predecessor or a backward successor,
where $\chi$ is the total
        number of forward and backward stretches in $L$.
In one pass, isolate these nodes, sort them and inform each of them the colours
	of its successor and predecessor.
Now every node knows the colours
	of its successor and predecessor in $L'$. Process $L''$ in a similar fashion.
All of this takes only $O(\scan{V} + \sort{\chi})$ I/Os.

A $4$-colouring can be reduced to a $3$-colouring in one scan of $A$ as follows:
    for each vertex of colour $4$, assign it the smallest colour
not assigned to its predecessor or successor.
This can be done in a single scan of $A$.

Thus, $3$LC can be reduced to $2$MLC in $O(\scan{V} + \sort{\chi})$ I/Os.

$2$MLC can be solved as follows:
Maintain a priority queue $PQ$. Scan the array $A$.
For each node $A[i]$, if $A[i]$ is a headnode, then give it a colour $c$ of $0$;
otherwise, perform deletemin on $PQ$ to know the colour $c$ of $A[i]$;
if $A[i]$ is not a lastnode, then
give colour $1-c$ to its successor $A[j]$,
and insert $1-c$ with key $j$ into $PQ$.
At any time, $PQ$ holds at most $\chi$ colours. The total number of I/Os needed is, 
therefore, $O(\frac{V}{B} \log_{\frac{M}{B}} \frac{\chi}{B})$.
The same algorithm can be used in computing MLCC also with a change in colouring each headnode
	uniquely, and sending the same colour to successor.

That is, a linked list can be 3-coloured in 
$O(\frac{V}{B} \log_{\frac{M}{B}} \frac{\chi}{B})$ I/Os, where $\chi$ is the total
        number of forward and backward stretches in $L$.

\section{An Interval Graph Colouring Algorithm }
\label{upper:lower:igc} 

First, we consider a problem closely
related to IGC, namely,
    that of finding the chromatic number of an interval graph
    with a known interval representation.
\subsection{Finding the Chromatic Number}
\label{chromatic}
Let an interval graph $G=(V,E)$ be represented by
    $L=\{l_{1},\ldots,l_{V}\}$
    and
    $R=\{r_{1},\ldots,r_{V}\}$
    the left and right endpoints of the intervals  of $G$
    given in non-decreasing order.
The sequences $L$ and $R$
    can be cross ranked in one scan.
For $1\leq i\leq V$, let $s_{i}$ be the rank of $l_{i}$ in $R$.
That is, $r_{1}<\ldots<r_{s_{i}}<l_{i}<r_{s_{i}+1}$. Then,
$i-s_{i}$ ($=t_i$ say), is the size of the clique formed precisely
by
    those intervals that contain $l_{i}$.
In other words, $t_{i}$ is the size
    of the clique that ${\cal I}(l_{i})$ forms along with its in-neighbours,
	if each edge thought to be directed away from the interval with the 
	smaller left endpoint.
Observe that any maximal clique $\cal C$ of an interval graph should
    contain a vertex $v$, such that the set of all in-neighbours of $v$
    is precisely the set ${\cal C}-\{v\}$.
Thus, the chromatic number of $G$ can be obtained by
    taking the maximum of $t_{i}$ over all $i$ and we have following lemma.
\begin{lemma}
\label{lem:chrnocnstnt}
The clique number of an interval graph can be found in
    $O(\scan{V})$ I/Os on the External Memory model, provided,
    the left and right endpoints of
    the intervals are given in sorted order, separately.
\end{lemma}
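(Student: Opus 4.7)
The plan is to turn the observation already floated in the paragraph above the lemma into a one-pass streaming algorithm. Because interval graphs are perfect, the clique number equals the chromatic number, so it suffices to compute the maximum clique. The key structural fact I would use is: for any maximal clique $\mathcal{C}$ in an interval graph, if $v \in \mathcal{C}$ is the vertex whose interval has the largest left endpoint, then every other $u \in \mathcal{C}$ must satisfy $l(u) < l(v) < r(u)$, i.e.\ $u$'s interval contains the point $l(v)$. Hence the size of a maximum clique equals $\max_i t_i$, where $t_i$ is the number of intervals containing $l_i$.

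Next I would show how to compute $t_i$ for all $i$ in one simultaneous scan of $L$ and $R$. Walk two pointers $p$ and $q$ through $L$ and $R$ in non-decreasing order of coordinate, advancing whichever pointer holds the smaller value; whenever we advance $p$ past $l_i$, we have the rank $s_i$ of $l_i$ in $R$ (the number of right endpoints seen so far), and we record $t_i = i - s_i$. This is exactly the cross-ranking described in the preamble: the number of intervals with $l(\cdot) \leq l_i$ is $i$, and $s_i$ of them have already closed by the time we see $l_i$, leaving $i - s_i$ intervals that still cover $l_i$ (including ${\cal I}(l_i)$ itself). Maintain the running maximum $T = \max_{j \leq i} t_j$ in a single register and output $T$ at the end.

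The I/O analysis is immediate: the procedure reads $L$ and $R$ once sequentially, writes nothing except a single scalar, and uses $O(1)$ working memory beyond the two input streams, so the total cost is $O(\scan{V})$. The only step that requires any real argument is the structural claim connecting $\max_i t_i$ to the clique number; the rest is bookkeeping. I do not anticipate a genuine obstacle, but to keep the proof clean I would state the structural claim as a short sub-lemma (any maximal clique is ``witnessed'' by a left endpoint $l_i$ with $t_i = |\mathcal{C}|$) and verify both directions: $\max_i t_i$ is at most the clique number because the intervals through any $l_i$ pairwise overlap, and $\max_i t_i$ is at least the clique number by the witnessing observation. Perfection of interval graphs then upgrades this count from clique number to chromatic number, matching the use of the lemma in the subsequent colouring algorithm.
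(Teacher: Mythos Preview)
Your proposal is correct and follows essentially the same approach as the paper: cross-rank $L$ and $R$ in a single simultaneous scan to obtain $s_i$, set $t_i = i - s_i$, and return $\max_i t_i$, justified by the same structural observation that every maximal clique is witnessed by the left endpoint of its rightmost-starting interval. Your write-up is in fact a bit more explicit than the paper's about the two-pointer mechanics and about verifying both inequalities, but the underlying argument is identical.
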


\subsection{The IGC Algorithm}
Our algorithm is based on the sequential algorithm given in Subsection~\ref{igc:mlcc}.
We assume that left and right endpoints of all intervals are stored together in
    an array $A$ in sorted order, and for each interval $I$, the left end point $l(I)$ knows the right
    endpoint $r(I)$.
The algorithm executes the following steps:
\vspace{0.3in}
\hrule
\vspace{0.1in}
\begin{enumerate}
\item Find the chromatic number $\chi(G)$ using the algorithm given in Section~\ref{chromatic}.

\item Initialize an external memory queue $Q$ \cite{P03} with all the available $\chi(G)$ colours.

\item Read the endpoints from $A$ in order.

\item For each left endpoint $l(I)$ encountered in $A$, remove a colour from the front of $Q$,
    colour interval $I$ with it, and insert this colour with
    the right endpoint $r(I)$ into a priority queue $PQ$
    with $r(I)$ as the key.

\item For each right endpoint $r(I)$ encountered in $A$, obtain the colour of $I$ from $PQ$ using a deletemin
    operation, insert this colour onto the back of $Q$.
\end{enumerate}
\vspace{0.1in}
\hrule
\vspace{0.3in}
When the last of the endpoints is considered, graph would be coloured.
The sizes of $PQ$ and $Q$ can be at most $\chi(G)$ at any time.
The I/O complexity of $PQ$ operations dominates the I/O complexity of $Q$ operations.
There are $V$ insertions and $V$ deletions.
Therefore Algorithm colours the graph in
    $O(\frac{V}{B} \log_{\frac{M}{B}} \frac{\chi(G)}{B})$ I/Os \cite{Arge03}.
The correctness of the algorithm comes from the in-core algorithm given in section~\ref{igc:mlcc}.
\begin{lemma}
An interval graph can be coloured in $O(\frac{V}{B} \log_{\frac{M}{B}} \frac{\chi(G)}{B})$
    I/Os, provided that the left and right endpoints of the intervals are given in sorted order,
    and each left endpoint knows the rank of the corresponding right endpoint, 
	where $V$ is the total number of intervals.
\end{lemma}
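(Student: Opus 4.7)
The plan is to establish both correctness and the claimed I/O bound for the five-step algorithm described just above the lemma. The essential idea is that the algorithm is a direct external-memory implementation of the in-core sweep algorithm already shown correct in Subsection \ref{igc:mlcc}: sweep the endpoints left to right, draw a fresh colour from a pool whenever an interval opens, and return it to the pool when the interval closes. Since at any instant the pool plus the set of currently open intervals partitions the $\chi(G)$ colours, and the sweep processes events in nondecreasing order of endpoint, the colouring produced is exactly the one given by the in-core argument, which already guarantees that adjacent intervals (those overlapping at some swept point) receive distinct colours.

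First I would carry out the preprocessing: invoke the clique-number routine of Lemma \ref{lem:chrnocnstnt} to obtain $\chi(G)$ in $O(\scan{V})$ I/Os, then initialise the external memory queue $Q$ with the $\chi(G)$ colour tokens. Next I would perform one synchronised scan of the endpoint array $A$; at a left endpoint $l(I)$ a dequeue from $Q$ yields a colour $c$, which is attached to $I$ and simultaneously inserted into $PQ$ keyed by $r(I)$; at a right endpoint $r(I)$ a \texttt{deletemin} on $PQ$ returns the same colour $c$, which is then enqueued onto the back of $Q$. Because $r(I)$ is the key used for insertion and the sweep encounters right endpoints in nondecreasing order, the element retrieved by the \texttt{deletemin} at event $r(I)$ is precisely the colour that had been assigned to $I$; this is the invariant that ties the external-memory execution to the in-core sweep.

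For the I/O accounting I would observe that $Q$ and $PQ$ each hold at most $\chi(G)$ elements at any time, since together they contain exactly the $\chi(G)$ colour tokens. Over the whole sweep there are $V$ insertions and $V$ \texttt{deletemin}s on $PQ$; using the external memory priority queue of \cite{Arge03}, each of these costs $O\!\left(\frac{1}{B}\log_{M/B}\frac{\chi(G)}{B}\right)$ amortised I/Os, for a total of $O\!\left(\frac{V}{B}\log_{M/B}\frac{\chi(G)}{B}\right)$. The operations on $Q$ are cheaper (amortised $O(1/B)$ per enqueue/dequeue), the scan of $A$ costs $O(\scan{V})$, and the preprocessing for $\chi(G)$ costs $O(\scan{V})$, so all lower-order terms are absorbed into the claimed bound.

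The main obstacle I expect is making precise the claim that $|PQ|\le \chi(G)$ throughout, since the priority-queue I/O bound degrades with size; this needs the invariant that at every moment the number of currently open intervals equals the number of colours currently held by $PQ$, and hence is bounded by the clique number, which is $\chi(G)$ on an interval graph. Once this invariant is secured by induction on the sweep events, the I/O bound follows immediately and the lemma is established.
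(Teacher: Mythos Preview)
Your proposal is correct and follows essentially the same approach as the paper: the same five-step sweep algorithm, the same appeal to the in-core argument of Subsection~\ref{igc:mlcc} for correctness, and the same I/O accounting via the $\chi(G)$ bound on the sizes of $Q$ and $PQ$ together with the \cite{Arge03} priority-queue bounds. Your exposition is in fact slightly more explicit than the paper's in justifying why the \texttt{deletemin} at $r(I)$ retrieves exactly $I$'s colour and why $|PQ|\le\chi(G)$ holds throughout.
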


\section{Single Source Shortest Paths}
\label{shortest:path} For the shortest paths problem, we consider a weighted set of 
	intervals $\Im$ such that each interval $I_i$ is assigned a positive weight
        $w(I_i)$.
A path from interval $I_i$ to $I_j$ is a sequence $\rho = \{I_{v_1},\ldots I_{v_l}\}$
        of intervals in $\Im$, where $I_{v_1} = I_1$, $I_{v_l}=I_j$, and any two consecutive 
	intervals, $I_{v_r}$ and $I_{v_{r+1}}$ overlap for every $r \in \{ 1, \ldots l-1\}$.
The length of $\rho$ is the sum of the weights of its intervals.
$\rho$ is a \emph{shortest} path from $I_i$ to $I_j$,
        if it has the smallest length among all possible paths between $I_i$ to $I_j$ in $\Im$.
The single source shortest paths problem is that of computing a shortest path
        from a given \emph{source} interval to all other intervals.

We assume, without loss of generality, that we are computing the shortest paths
	from the source interval to only those intervals that end after the source begins
	(in other words, have their right endpoints to the right of the left endpoint of the source). 
The same algorithm can be run with the direction reversed to compute the shortest paths
	from the source interval to those intervals that begin before the source ends;
the neighbours of the source feature in both sets, and will get assigned the same values in
both computations; the two computations together will give the shortest paths to all
intervals from the source.

Computing, thus, only in one direction, we can also assume that every interval that intersects
the source interval $I_s$ begins after $I_s$. This would enable us to assume, 
without loss of generality, that $I_s$ has the smallest left endpoint of all intervals. 

\subsection{The Algorithm}
As discussed above, we assume that $I_s$ has the smallest left endpoint.
Suppose all intervals are stored in an array $A$ in sorted order of left endpoints.
(Otherwise, sort them in $O(\sort{V})$ I/Os).
Let the intervals in sorted order be $I_s=I_1, \ldots , I_V$, where $I_i = (l_i , r_i)$
	and $l_i$ (resp. $r_i$) denotes the left (resp. right) endpoint of $I_i$.
Let $d_i$ be the shortest distance of an interval $I_i$ from  the source
    interval $I_1$.

Initially, set $d_1 = w_1$ for $I_1$. For each $I_i$ such that 
	$l_1 < l_i < r_1$, set $d_i = d_1 + w(I_i)$, and the parent of $I_i$ to $I_1$.
If $r_i > r_1$, then insert $\langle d_i, I_i\rangle$ in a priority queue $PQ$ with $d_i$ as the key.
Then execute the following steps.
\begin{tabbing}
aaaa \= aaaa \= aaaa \= aaaa \= aaaa \= aaaa \kill \\
set $r=r_1$; \\
while $PQ$ is not empty \\
\>perform a deletemin operation on $PQ$; let $\langle d_i,I_i\rangle$ be the record returned; \\
\>for every $I_j \in A$ such that $r < l_j < r_i$ \\
\>\> set $d_j = w(I_i) + w(I_j) $, and the parent of $I_j$ to $I_i$ \\
\>\> If $r_j > r_i$ then \\
\>\>\> insert $d_j$ with $I_j$ in $PQ$ with key value $d_j$; \\
\>set $r=\max\{r,r_i\}$; \\
\end{tabbing}
It is easy to see that random access is not required in $A$, because after each deletemin
the scan of $A$ resumes at $r$. 
Since, each interval is inserted in $PQ$ exactly once, we perform $V$ deletemin and 
	insert operations in $PQ$. 
The amortised I/O complexity of each operation is 
$O(\frac{1}{B} \log_{\frac{M}{B}} \frac{V}{B})$ (See Chapter~\ref{emsh:chapt}).
Thus, shortest paths can be computed in $O(\sort{V})$ I/Os.

The correctness of the algorithm follows from this lemma:
\begin{lemma}\cite{ACL95}
\label{path}
If there exists a shortest path from $I_1 \leadsto I_i$ then all intervals
    in this path covers a contiguous portion of the line from $l_1$
    to $r_i$.
\end{lemma}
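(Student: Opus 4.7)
The plan is to prove the lemma in two parts: first, I would show that the union of intervals lying on any path in the interval graph forms a connected (i.e.\ contiguous) subset of the real line; second, I would observe that this contiguous subset must contain both $l_1$ and $r_i$, and therefore contains the entire segment $[l_1, r_i]$.

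For the first part, let $\rho = (I_{v_1}, I_{v_2}, \ldots, I_{v_k})$ denote the path, with $I_{v_1} = I_1$ and $I_{v_k} = I_i$. By the very definition of a path in the interval graph, any two consecutive intervals $I_{v_j}$ and $I_{v_{j+1}}$ are adjacent, and so (by the definition of an interval graph) $I_{v_j} \cap I_{v_{j+1}} \neq \emptyset$ as subsets of $\mathbb{R}$. I would then argue by induction on $j$ that the prefix union $U_j = \bigcup_{l \leq j} I_{v_l}$ is a connected subset of $\mathbb{R}$: the base case $U_1 = I_{v_1}$ is itself an interval, and in the inductive step $U_{j+1} = U_j \cup I_{v_{j+1}}$ is the union of two connected subsets of $\mathbb{R}$ whose intersection contains $I_{v_j} \cap I_{v_{j+1}} \neq \emptyset$, and the union of two connected sets with nonempty intersection is connected. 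Hence $U_k$ is connected, which for subsets of $\mathbb{R}$ means it is itself an interval.

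For the second part, since $l_1 \in I_{v_1} \subseteq U_k$ and $r_i \in I_{v_k} \subseteq U_k$, and $U_k$ is an interval, it must contain every point between $l_1$ and $r_i$. Thus $[l_1, r_i] \subseteq U_k$, which is precisely the contiguous-coverage assertion of the lemma.

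The argument is short and the connectivity step is the only nontrivial ingredient; I do not expect any real obstacle. One remark worth making: the proof uses nothing about $\rho$ being a \emph{shortest} path, so the conclusion in fact holds for every path from $I_1$ to $I_i$. The relevance of the shortest-path hypothesis to the SSSP algorithm is more subtle: together with the assumption that $I_1$ has the smallest left endpoint, contiguity lets the algorithm advance a single scan pointer $r$ monotonically rightward through $A$, because any interval $I_j$ that could ever relax a distance on the frontier must have $l_j$ inside the contiguous portion $[l_1, r]$ already swept, never to the left of it.
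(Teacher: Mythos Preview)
Your proof is correct. The paper itself does not supply a proof of this lemma; it simply cites it from \cite{ACL95} and moves on to the next statement. So there is no ``paper's own proof'' to compare against.

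Your argument---that consecutive intervals on a path overlap, hence by induction the union of the intervals along the path is a connected subset of $\mathbb{R}$ containing both $l_1$ and $r_i$---is the natural and standard one. Your observation that the shortest-path hypothesis is not needed for the covering statement is also accurate.
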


\begin{lemma}
For each interval $I_i$, $d_i$ is the shortest distance from the source
    interval $I_1$ to
    $I_i$.
\end{lemma}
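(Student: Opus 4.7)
The plan is to prove this by strong induction on the order in which intervals have their $d$-values assigned (each is assigned exactly once and never updated). The two tools driving the argument are Lemma~\ref{path} (shortest paths in an interval graph are contiguous over $[l_1, r_j]$) and the standard Dijkstra-style observation that the priority queue extracts intervals in non-decreasing order of $d$-values. Let $d^*(I_j)$ denote the true shortest distance from $I_1$ to $I_j$.

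The base case handles $I_1$ and its initial neighbours: $d_1 = w_1 = d^*(I_1)$ trivially, and for any $I_i$ with $l_1 < l_i < r_1$, every path from $I_1$ to $I_i$ must begin with $I_1$, so $d_i = d_1 + w_i = d^*(I_i)$. For the inductive step I would maintain two coupled invariants at every point in the execution: (A)~every interval $I_i$ that has already been extracted from $PQ$ (or was the source) satisfies $d_i = d^*(I_i)$; and (B)~every interval $I_j$ whose $d_j$ has already been set but which is still in $PQ$ satisfies $d_j = d^*(I_j)$. Invariant (A) implies the lemma because every interval eventually gets extracted (or is a neighbour of the source with its $d$-value frozen to the optimum).

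The heart of the argument is establishing (B) at the moment $d_j$ is first written. Suppose $d_j$ is set during the processing of $I_i$, so that $l_j \in (r, r_i)$ where $r$ was the largest right endpoint among previously processed intervals. I would fix a shortest path $\rho = I_1 = I_{v_1}, I_{v_2}, \ldots, I_{v_t} = I_j$ and analyse its penultimate interval $I_p = I_{v_{t-1}}$. By Lemma~\ref{path}, $\rho$ covers $[l_1, r_j]$ contiguously, and a short exchange argument shows $I_p$ may be chosen so that either $l_p \le l_j$ and $r_p > l_j$, or $I_p = I_i$ itself. In the first case, the interval $I_p$ with $r_p > l_j > r$ has not yet been processed but must have been scanned during some earlier step; by induction it satisfies (B), and because $PQ$ extracts in non-decreasing order of $d$-values, $d_p \ge d_i$. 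Hence $d^*(I_j) = d_p + w_j \ge d_i + w_j = d_j$, while $d_j \ge d^*(I_j)$ holds trivially, giving equality.

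The main obstacle I anticipate is justifying that the scan restriction $l_j \in (r, r_i)$ (rather than all neighbours of $I_i$) misses no optimal predecessor. This is exactly where Lemma~\ref{path} is indispensable: any interval $I_j$ with $l_j \le r$ overlaps some previously processed $I_q$ with $r_q \ge l_j$, and the contiguity of shortest paths guarantees that an optimal path to $I_j$ can be rerouted to pass through that earlier $I_q$, so $d_j$ was correctly frozen at its first scan. Making this rerouting rigorous, and handling the corner case in which the optimal predecessor on $\rho$ is an interval that starts to the right of $l_j$ but strictly contains the relevant overlap (which must be excluded by a shortest-path minimality argument), is the delicate part; everything else is a straightforward Dijkstra-style induction.
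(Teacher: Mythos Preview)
Your approach is essentially the paper's: the paper's three-sentence proof simply asserts that the algorithm chooses as parent of $I_{j'}$ the interval with smallest $d$-value among all intervals containing $l_{j'}$, and then concludes by induction on scan order. Your Dijkstra-style framing with invariants (A) and (B) is the natural way to \emph{justify} that assertion, and your exchange argument (the optimal penultimate interval $I_p$ on a shortest path to $I_j$ must contain $l_j$, since positive weights forbid any earlier interval on the path from already covering $l_j$) is exactly the point the paper leaves implicit.

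There is one slip to repair. Your claim that ``every interval eventually gets extracted (or is a neighbour of the source)'' is false: an interval $I_j$ scanned during the processing of $I_i$ enters $PQ$ only when $r_j > r_i$, so intervals with $r_j \le r_i$ are neither inserted nor extracted and need not be neighbours of $I_1$. Hence invariants (A) and (B) as you state them do not together cover every interval, and (A) alone does not imply the lemma. Fortunately your inductive step already proves the stronger fact you need: at the moment $d_j$ is first written, $d_j = d^*(I_j)$, regardless of whether $I_j$ subsequently enters the queue. State that as your single invariant---which is precisely the paper's induction on scan order---and the proof closes cleanly.
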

\begin{proof}
For two intervals $I_j$ and $I_{j'}$, our algorithm chooses $I_j$ as 
the parent of $I_{j'}$ because it is the nearest to $I_1$ (has the smallest $d$-value) of all
	intervals that contain the left endpoint of $I_{j'}$.
Therefore, if $d_{j}$ is the shortest distance to $I_{j}$, then
	$d_{j'}$ is the shortest distance to $I_{j'}$. 
The $d$-values of all neighbours of $I_1$ indeed match their shortest distances from $I_1$.
\end{proof} 

While computing shortest distances, our algorithm computes
the shortest paths tree also.

Given a set of $V$ weighted intervals and a source interval, a shortest path of each interval from the source
    interval can be computed using our algorithm in $O(\sort{V})$ I/Os.

\section{Breadth First Search and Depth First Search}

\label{bfs:dfs} Here we show how to compute a breadth first search tree
and depth first
    search tree for an interval graph in $O(\scan{V})$ I/Os,
    if intervals are given in sorted order of left
    endpoints i.e. $\Im = \{I_1, I_2, \ldots I_V\}$.
We assume, without loss of generality, that source vertex is $I_1$

\subsection{Breadth First Search}
To begin with, interval $I_1$ is assigned as the parent of the 
intervals whose left
    endpoints are  contained in $I_1$.
Among those selected intervals, we find an interval $I_j$ whose right
    endpoint is maximum.
This interval $I_j$ is assigned as parents of the remaining
intervals whose
    left endpoints are contained in $I_j$, and for whom a parent has not been assigned yet.
We repeat this process until no interval is left. The resulting tree
is a BFS tree. Since we have to scan $\Im$ once, the BFS tree
is computed in  $O(\scan{V})$ I/Os.

\subsection{Depth First Search}
To find a depth first search tree, an interval $I_j$ is assigned as the parent
    of an interval $I_i$, if $I_j$ has the largest left endpoints among
	$I_k$ such that $l_k < l_i < r_k$; that is, the parent of
	an interval is its youngest older neighbour if the intervals are assumed
	to be on the timeline.
Note that this scheme indeed defines a DFS tree, because if the resultant tree has
	a cross edge from  a ``older'' interval $I_u$ to an ``younger'' interval $I_v$, then it can be shown that
	a proper ancestor of $I_v$ that is lower than the lowest common ancestor of $I_u$ and $I_v$
	failed to choose $I_u$ as its parent as it ought to have.
To compute the tree, scan the endpoints from the smallest to the largest,
while maintaining the set of open intervals in a priority queue $PQ$ with left endpoint
as key. Whenever, a left endpoint $l$ is encountered use a Findmax operation on $PQ$ to 
find the youngest interval open at that point in time. After, thus defining the DFS
parent of ${\cal I}(l)$, insert ${\cal I}(l)$ into $PQ$ with key $l$.
Whenever a right endpoint $r$ is encountered, delete ${\cal I}(r)$ from $PQ$. 
Therefore, a DFS tree can be computed in 
$O(\frac{V}{B} \log_{\frac{M}{B}} \frac{\chi}{B})$ I/Os.

\section{Conclusions from this Chapter}
In this chapter, we present lower and upper bound results for many problems on an interval 
	graphs.
Once an interval representation is given for an interval graph, various problems like
	SSSP, BSF and DFS become easier.
External memory algorithms for interval graph recognition, and computing of an interval representation 
	of the given graph are yet to be found.

   \part{Algorithms on the W-Stream Model and its Variants}
  \chapter{Some Algorithms on the W-Stream Model}
\label{wstream:chapt}
\section{Introduction}

In this chapter, we discuss the designing of W-Stream algorithms for sorting, 
	list ranking, and some fundamental tree and graph problems. 
A lower bound of $\Omega(N/(M \log N))$ on passes is known for some
	of these problems \cite{DEM+07, DFR06, R03}, where $N$ is either the number of 
	elements in the input stream or the number of vertices, the latter if the input is
	a graph. 
The number of bits available in the working memory is $M \log N$. 
We shall often refer to the working memory as just the memory.
For sorting and the graph problems considered, we give improved upper bound results. 
For list ranking and the tree problems, we present algorithms that are easier
	than the known ones to implement, and perform as well. 
Lower bounds for some problems are also established. 

The tree problems considered are the finding of Euler tours of trees,
	rooting of trees, labelling of rooted trees,
	and expression tree evaluation.
The graph problems considered are the finding of a maximal
	independent set, a $(\Delta + 1)$ colouring, a maximal matching, 
	a $2$-approximate vertex cover, and $\epsilon$-approximate single 
	source shortest paths.

\subsection{Definitions of the Problems}

\noindent
{\bf Sorting :} The problem is to arrange in order a set of elements drawn from a linearly ordered set.

\noindent
{\bf List ranking :} A linked list is a directed graph in which both the out-degree
	and in-degree of a vertex can be at most one.
A vertex of zero in-degree is called the head of the list.
The vertices are given in an array and each vertex $x$ has a pointer
	to the next vertex $y$ \cite{CoVi89, SaSa99}.
The rank of a vertex $x$ in the list is the number of edges on the path from the head of the list  to $x$. 
The list ranking problem is to compute the rank of each vertex in the list.  

\noindent {\bf Euler tour of a tree:} A tree is an acyclic graph. 
An Euler Tour of a tree is a traversal of its edges that starts and ends
at one distinguished vertex $s$ and traverses each edge exactly twice.

\noindent {\bf Rooting of a tree :}
A tree is rooted, if its edges are directed so that
for each vertex $x$ other than a distinguished vertex $s$ called the root, 
the out-degree of $x$ is one. The out-neighbour of $x$ is the parent of $x$.
The rooting of a tree is the process of so directing the edges.

\noindent {\bf Labelling of a tree :} Some of the labellings we consider are 
preorder numbering, postorder numbering, finding of the depths and the number of 
descendants of each vertex.

\noindent {\bf Expression tree evaluation :} 
In an expression tree, each internal vertex is labelled by a
function which can be computed in $O(1)$ time and each leaf is labelled 
by a scalar. A leaf evaluates to its scalar. The value at an internal node 
is obtained by applying its function to the values of its children.
The problem is to compute the values at all vertices in the tree. 

\noindent
{\bf Maximal independent set:}
Given a graph $G = (V,E)$, a subset $I$ of $V$ is called an
independent set, if no two vertices in $I$ are adjacent in $G$. 
The maximal independent set problem is  to compute an
independent set which is closed under inclusion.

\noindent
{\bf ($\Delta + 1$)-colouring of a graph:}
The problem is to assign one of $\Delta + 1$ colours to each vertex so that
        no two adjacent vertices get the same colour, where $\Delta$ is the maximum
vertex degree of the graph.

\noindent {\bf $\epsilon$-Approximate single source shortest
paths:} Given a weighted graph $G = (V,E)$ with a non-negative
integer weight for each edge
    and a source vertex
    $s$, the single source shortest paths problem (SSSP) is to compute, for each vertex $v \in V$, 
	a shortest path from $s$ to $v$.
A path $P$ from $s$ to $v$ is called an $\epsilon$-approximate
    shortest path, if the length of $P$ is at most $(1 + \epsilon)$ times
    the length of a shortest path from $s$ to $v$.

\noindent
{\bf Maximal matching:}
Given a graph $G = (V,E)$, a subset of $E$ is called a matching, if no two edges of it
    have a common endpoint. The maximal matching problem is to compute
    a matching which is closed under inclusion.

\noindent
{\bf $2$-Approximate vertex cover:}
Given a graph $G = (V,E)$ with positive weight at each vertex, a subset $S$ of $V$ is
    called a vertex cover, if
    each edge has at least one endpoint in $S$.
The $2$-approximate vertex cover problem is to compute a vertex cover of weight
    at most two times the weight of a minimum vertex cover.

\subsection{Previous Results}

On the W-Stream model, sorting has a lower bound of $\Omega(N/(M \log N))$ on passes \cite{R03}. 
The best known algorithms take $O(N/M)$ passes \cite{DEM+07, MP80}. 
Both algorithms \cite{DEM+07, MP80} perform $O(N^2)$ comparisons.
The total Number of comparisons can be reduced to the optimal 
$O(N\log N)$ at the cost of increasing the number of passes to $O((N \log N)/M)$
using a simulation of an optimal PRAM algorithm \cite{DEM+07}. But in this
case, the total number of operations is $\Theta((N^2 \log N)/M)$, where
each reading or writing of an element into the  memory counts as an operation.

List ranking, Euler tour of trees, rooting of trees, labelling of trees and
expression tree evaluation can all be solved in $O(N / M)$
passes using simulations of optimal PRAM algorithms \cite{Ja92}, as shown in \cite{DEM+07}. 

The lower bound for the maximal independent set problem is
$\Omega(V/(M \log V))$ on passes \cite{DEM+07} when the input is 
an unordered edge list. The best known algorithm uses $O(V \log V/M)$ passes to find a maximal
independent set with high probability, when the input graph is presented as an unordered 
edge list.

The lower bound for the SSSP problem is
$\Omega(V/(M \log V))$ on passes \cite{DFR06}, when the input graph is 
presented as an unordered edge list.
The best known SSSP algorithm \cite{DFR06} executes in $O((C V \log V)/ \sqrt{M})$ 
	passes and finds shortest paths with high probability, where $C$ is the maximum weight 
	of an edge, and $\log C = O(\log V)$. 
This algorithm is not very efficient for large weights. 
This algorithm can also be used for performing a breadth first search (BFS) 
	of the graph in $O((V \log V)/ \sqrt{M})$ passes with high probability.
These algorithms assume that the input graph is given as an unordered edge list.

\subsection{Our results}
In this chapter, we present the following results.
\begin{itemize}

\item Lower bounds of $\Omega(N/(M \log N))$ on passes for list ranking and maximal matching.
A lower bound for list ranking also applies to expression tree evaluation, finding the depth of
every node of a tree, and finding the number of descendants of every node in a tree.

\item An algorithm that sorts $N$ elements in $O(N /M)$ passes
while performing $O(N\log M+N^2/M)$ comparisons and $O(N^2/M)$ elemental reads.  
Our algorithm does not use a simulation, and is easier to implement than the 
earlier algorithms.

\item Algorithms for list ranking, and tree problems
such as Euler Tour, rooting of trees, labelling of rooted trees and
expression tree evaluation that use $O(N /M)$ passes each. 
Unlike the previous algorithms, our algorithms are easy to implement as they do not use
simulations. 

\item Algorithms for finding a maximal independent set and a $\Delta + 1$ colouring of graphs.
We show that when the input graph is presented in an adjacency list representation, 
each can be found deterministically in $O(V / M)$ passes. 
We also show that when the input is presented as an unordered edge list, 
each can be found deterministically in 
$O(V /x)$ passes, where 
$x = O(\min\{M, \sqrt{M\log V} \})$ for MIS, and
$x = O(\min\{M, \sqrt{M\log V}, \frac{M\log V}{\Delta\log\Delta} \})$
for $\Delta + 1$ colouring. 

\item  Algorithms for maximal matching and $2$-approximate weighted vertex cover that are
deterministic and require $O(V / M )$ passes.  The vertex cover algorithm assumes that
the weight of each vertex is $V^{O(1)}$.
The input here is assumed to an unordered edge list.
The lower bound of maximal matching problem is shown to be $\Omega(V/(M \log V))$ on passes.

\item An algorithm that, for all vertices $v \in V$, computes with high probability an 
	$\epsilon$-shortest 
path from a given source vertex $s$ to $v$ in $O(\frac{V \log V \log W}{\sqrt{M}})$ passes, 
where $W$ is the sum of the weights of all edges.
We assume that $\log W = O(\log V)$. 
If $C$ is the maximum weight of an edge, then $W \leq VC$, and
our algorithm improves on the previous bound by a factor of $C/\log({VC})$ at the cost a 
small error in accuracy. 
Here again, we assume the input to be  given as an unordered edge list.
\end{itemize}

\subsection{Organisation of This Chapter}

In Section~\ref{lower:bound:wstream}, we prove lower bounds for the list ranking
	and maximal matching problems. In Section~\ref{upper:bound:wstream}, 
we present algorithms for the various problems mentioned above.
In particular, in Subsection~\ref{sorting}, we present a sorting algorithm. In
Subsection~\ref{list:ranking}, we give a list ranking algorithm.
In Subsection~\ref{tree:algo}, we present several tree
algorithms.
In Subsection~\ref{mis:colouring}, we present algorithms for the maximal independent set
	and $(\Delta+1)$ colouring problems. 
In Subsection~\ref{sssp:wstream}, we give an approximate SSSP algorithm,
	and in Subsection~\ref{matching} and \ref{vertex:cover}, we present
	algorithms for maximal matching and $2$-approximate weighted vertex cover, respectively.

\section{The Lower Bound Results}
\label{lower:bound:wstream}

In this section, we prove lower bounds for the list ranking and maximal matching 
	problems.
The lower bound for each problem $P$ is proved by reducing the bit-vector
	disjointness problem ${\cal D}$ to $P$.
Results on the bit-vector disjointness problem in Communication Complexity 
	have been proved useful in establishing lower bound results for  
	problems on the streaming model \cite{HRR99}.
In this problem, two players $A$ and $B$ have bit vectors $a$ and $b$ respectively,
	each of length $N$.
Player $B$ wants to know if there exists an index $i$ such that $a_i=1$ and
        $b_i=1$.
It is known that this problem requires $\Omega(N)$ bits of communication between
$A$ and $B$ \cite{NK}. 
Let $a_1,\ldots, a_N,b_1,\ldots , b_N$  be an instance of ${\cal D}$.
On the W-Stream model, between one pass and the next over the sequence,
at most $O(M\log N)$ bits of information can be transferred between the two halves
of the input. Any algorithm that solves ${\cal D}$ in $o(N/(M\log N))$ passes
would, therefore, cause $o(N)$ bits to be transferred between the two halves.
That is, $\Omega(N/(M\log N))$ is a lower bound on the number of passes
for ${\cal D}$ on the W-Stream model.

First we show a reduction of ${\cal D}$ to
the list ranking problem.
Assume that the input $I_d$ to ${\cal D}$ is a bit sequence
$a_1,\ldots, a_N,b_1,\ldots , b_N$.
Our reduction constructs a set $L$ of lists with
        $\{a_1,\ldots, a_N,b_1,\ldots ,b_N\}$ as its vertex set.
We define the successor function $\sigma$ on the vertices of $L$ as follows:
For $1\leq i\leq N$, $\sigma(a_i)=b_{i}$, and $\sigma(b_i)=a_{i+1}$.
The list can be formed in $O(1)$ passes on the W-Stream model.
Invoke a list ranking algorithm on this list. If the list ranking algorithm
is one that exploits the associativity of addition for its correctness, then
at some point in the algorithm, for every $i$, the link from $a_i$ to $\sigma(a_i)$ 
must be loaded into main memory; at this point in time, the list ranking algorithm
can say if $a_i=b_i=1$; thus, any list ranking algorithm based on the associativity 
of addition can be amended to answer ${\cal D}$ as an aside.
Without this property we cannot guarantee that for every $i$, the link from $a_i$ to $\sigma(a_i)$
must be loaded into main memory. 

Thus the lower bound of ${\cal D}$ applies to list ranking too.

Next we show a reduction of ${\cal D}$ to the maximal matching problem.
We construct a graph whose vertex set is $(a_1, \ldots a_N, b_1, \ldots, b_N, c_1, \ldots, c_N, 
d_1, \ldots, d_N, x_{1}, \ldots, x_{N}, y_{1}, \ldots, y_{N})$ as follows:
Add an edge $(a_i, d_i)$ into the edge set if $a_i=1$, add
    two edges $(a_i, x_{i})$ and $(x_{i}, c_i)$, otherwise.
Add an edge $(b_i, d_i)$                   if $b_i=1$, add
    two edges $(b_i, y_{i})$ and $(y_{i}, c_i)$, otherwise.
A maximal matching of this  graph is of size exactly $2N$ if and
only if  both $A$ and $B$ do not have $a_i=b_i=1$ for any $i$.
Thus, the lower bound of ${\cal D}$ applies to the maximal matching problem too.

A lower bound for list ranking also applies to expression tree evaluation, finding the depth of
every node of a tree, and finding the number of descendants of every node in a tree. Therefore,
we have following lemma.
\begin{lemma}
The problems of list ranking, maximal matching,
expression tree evaluation, finding the depth of
every nodes of a tree, and finding the number of descendants of every node in a tree
all require $\Omega(N/(M \log N))$ passes on the W-Stream model.
\end{lemma}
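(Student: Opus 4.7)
The plan is to reduce the bit-vector disjointness problem $\mathcal{D}$ to each of the listed problems. Player $A$ holds $a_1,\ldots,a_N$ and player $B$ holds $b_1,\ldots,b_N$, and $\mathcal{D}$ requires $\Omega(N)$ bits of communication. A W-Stream algorithm running in $p$ passes can be viewed as a protocol in which $A$ writes her half of the input, passes the full working memory to $B$, $B$ writes his half and passes the memory back, and so on; the only inter-party channel is the memory state between consecutive passes, carrying $O(M \log N)$ bits per handover, so any W-Stream solution to $\mathcal{D}$ takes $\Omega(N/(M \log N))$ passes. The task is therefore, for each problem $P$, to (i)~construct in $O(1)$ passes a $P$-instance from $(a,b)$ such that $A$'s writes depend only on $a$ and $B$'s writes depend only on $b$, and (ii)~arrange for the output of $P$ (plus a short private post-processing pass) to determine the disjointness answer.

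For list ranking I would sharpen the sketch in the excerpt by encoding the bits as node weights. Place the nodes in the input order $a_1,b_1,a_2,b_2,\ldots,a_N,b_N$ with $\sigma(a_i)=b_i$ and $\sigma(b_i)=a_{i+1}$, giving node $a_i$ weight $a_i$ and node $b_i$ weight $b_i$. Ask for weighted list ranking (at least as hard as the unweighted version up to $O(1)$ passes) and additionally request that each node report the AND of its weight with its successor's weight. This is an associative, constant-space side-computation that a weighted list ranking algorithm must support because its output is forced to depend on each consecutive pair. Taking the OR of these AND-summaries in one extra pass decides $\mathcal{D}$, giving the lower bound for list ranking.

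For maximal matching I would use the gadget from the excerpt and show its matching size exactly decides $\mathcal{D}$. Each index $i$ induces a disjoint gadget on $\{a_i,b_i,c_i,d_i,x_i,y_i\}$, and a four-case check on $(a_i,b_i)$ shows that any maximal matching picks exactly two edges from the gadget when $(a_i,b_i)\neq(1,1)$ and at most one edge when $a_i=b_i=1$. Hence the matching has total size $2N$ iff $a$ and $b$ are disjoint, and strictly less than $2N$ otherwise. A final scan that counts output edges and compares to $2N$ would then solve $\mathcal{D}$, so any maximal matching algorithm in $o(N/(M \log N))$ passes contradicts the disjointness lower bound.

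For the three tree problems, the approach is to reduce list ranking to each of them using the observation that a linked list is a rooted path. Depth in the rooted path equals rank, so any depth-computing algorithm solves list ranking in the same number of passes. The number of descendants of the node of rank $r$ in a length-$N$ path is $N-1-r$, so descendant counting solves list ranking after a single arithmetic pass. For expression tree evaluation, label each internal node by an associative operator that propagates both rank and the AND-over-pairs summary used above, and read the value off the root; this simulates weighted list ranking with $O(1)$ passes of overhead. The main obstacle I anticipate is precisely the informal step in the excerpt that a list ranking algorithm ``must load every link into memory.'' The cleanest way around it is the weight-encoding above: once the bits $a_i,b_i$ are the actual data values the algorithm operates on, rather than labels external to it, the output of weighted list ranking provably cannot be produced without communicating across the $A/B$ cut enough information to decide $\mathcal{D}$, and the pass lower bound follows by the bit-counting argument in the first paragraph.
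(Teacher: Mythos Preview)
Your maximal-matching reduction is correct and is exactly the paper's construction; your case analysis of the gadget is accurate (in the $(1,1)$ case the maximal matching has exactly one edge, not merely ``at most one''), and the final counting pass is fine. Your reductions from list ranking to depth, descendant-count, and expression-tree evaluation are also correct and in fact more explicit than the paper, which simply asserts that the list-ranking bound ``also applies'' to these problems.

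The gap is in the list-ranking reduction itself. Encoding the bits as weights does not help: after weighted list ranking, node $b_i$ carries $\mathrm{rank}(b_i)=\sum_{j<i}(a_j+b_j)+a_i$ and node $a_i$ carries $\mathrm{rank}(a_i)=\sum_{j<i}(a_j+b_j)$, so the output determines every individual bit, but recovering $a_i$ at the location of $b_i$ still requires pairing rank$(a_i)$ with rank$(b_i)$ across the $A/B$ cut---which is itself an $\Omega(N)$-bit communication task. Your further move of ``additionally requesting'' that each node output the AND with its successor defines a \emph{harder} problem; a lower bound on that harder problem does not transfer back to plain list ranking, and the sentence ``a weighted list ranking algorithm must support this because its output is forced to depend on each consecutive pair'' is exactly the informal step you set out to replace. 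The paper runs into the same obstacle and explicitly restricts to algorithms ``based on the associativity of addition,'' arguing that such algorithms must load every link into memory and can therefore be amended to answer $\mathcal{D}$ on the side. So your approach and the paper's coincide in spirit and share the same weakness; neither gives an unconditional reduction from disjointness to list ranking.
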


\section{The Upper Bound Results}
\label{upper:bound:wstream}
\subsection{Sorting}
\label{sorting}
Unlike the previous algorithms, our algorithm does not use a PRAM simulation, and hence is easier
to implement.

First, in one pass, we create $N/M$ sorted sequences called runs, each of size $M$;
the last run may be of a smaller size, if $N$ is not a multiple of $M$.  
The total number of elemental reads and comparisons are $\Theta(N)$ and $\Theta(N \log M)$,
	respectively. 

Next, we perform $\log(N/M)$ merge-phases, each of which $2$-way merges consecutive
pairs of runs in its input.

At the beginning of the $k$-th merge-phase, we have $N/2^{k-1}M$ runs 
\[A_1, B_1, A_2, B_2, \ldots, A_{N/2^kM}, B_{N/2^kM}\]
where each $A_i$ and $B_i$ is of size $2^{k-1}M$, except that the last 
	run $B_{N/2^kM}$ may be of a smaller size.   
We merge pairs $(A_i, B_i)$ concurrently and store the partial result in 
	an array $C_i$ which is kept between $B_i$ and $A_{i + 1}$.
Initially $C_i$ is without elements. 
For each pair $A_i$ and $B_i$, the following steps are executed:
\begin{quote}
Read $M/2$ elements from $A_i$ into the  memory; let the rest of $A_i$ stream through into the output.
Read $M/2$ elements from $B_i$ into the  memory; let the rest of $B_i$ stream through into the output.
Merge these elements in-core. Keep $D_i$, the merged array, in the  memory. 
Merge $D_i$ with $C_i$, on the fly, as $C_i$ streams in, and output the merged sequence as the new $C_i$.
\end{quote}
In one pass, for all $i$, we move $M/2$ elements each of $A_i$ and $B_i$ into $C_i$, while keeping
all three in sorted order. The size of $C_i$ increases by $M$, and the sizes of $A_i$ and $B_i$ decrease by $M/2$ each.
After $2^k$ passes, $A_i$'s and $B_i$'s become empty, and we have $N/2^kM$ runs of size $2^kM$ each. 
Thus, the total number of passes required to reduce the number of runs from $N/M$ to one is
$\sum_{k=1}^{\log(N/M)} 2^k= 2(N/M-1) = O(N/M)$
 
In the $k$-th merge phase, in the $j$-th pass, we perform $(1+j)M$ comparisons for each pair.
The total number of comparisons performed is
\[ O(N\log M) + \sum_{k=1}^{log(N/M)} \sum_{j=1}^{2^k} (1+j)M \cdot \frac{N}{2^kM} = O\left(N\log M + \frac{N^2}{M}\right)\]
The total number of elemental reads is, clearly,  $O(N^2/M)$.

\subsection{List Ranking}
\label{list:ranking} 
In this section, we present an algorithm that ranks a list of $N$ nodes 
	in $O(N/M)$ passes without using a PRAM simulation. 
It is assumed that the input list $L$ is stored in an array $A$ and each node of 
the list knows the addresses of its successor and predecessor. Each node $u$ holds
two variables $w(u)$ and $r(u)$ initialised to one and zero, respectively.

Our algorithm repeatedly splices out sets of independent sublists from the remaining list,
and then splices them back in, in the reverse order.

Two sublists of a list $L$ are independent if there is no link in $L$ between a
node in one and a node in the other. A set $S$ of sublists of $L$ is independent 
if its members are pairwise independent. The splicing out of a sublist 
$L'=(a_1,\ldots,a_k)$ involves setting the predecessor $p(a_1)$ of $a_1$ and the
successor $s(a_k)$ of $a_k$, respectively, as the predecessor and successor of
each other; it also involves adding $W=\sum_{i=1}^{k} w(a_i)$ 
to $w(p(a_1))$. (In the above, $a_i$ is the predecessor of $a_{i+1}$.) 
A later splicing in of the sublist involves a reversal
of the above pointer arrangements; it also involves setting 
$r(p(a_1))$ to $r(p(a_1))-W$, and then $r(a_i)$ to $r(p(a_1))+\sum_{j=1}^{i} w(a_j)$. 
The splicing in/out of a set of independent sublists
involves the splicing in/out of its members individually, one after the other.

Divide the array $A$ into segments of size $M$ each; the last segment may be of a 
size less than $M$. Repeatedly, load a segment into the  memory. Splice out the sublist
induced by the nodes of the segment. This can be done when the remaining nodes of the list
stream by. When all the nodes have gone by, send the spliced out nodes too into the output.
Thus in one pass the length of the list reduces by $M$. After $N/M-1$ passes, the list
would fit in the  memory. Rank the list by computing a prefix sum of the
$w$-values along it and storing the result in the corresponding $r$-values.
Thereafter, splice in the segments in the reverse of the order in which they were removed.
The list $L$ would now be ranked.  

That is, a list of $N$ nodes can be ranked in $O(N/M)$ passes.

\subsection{The Tree Algorithms}
\label{tree:algo} 

In this section we show that a variety of fundamental 
	problems on trees can be solved in $O(N/M)$ passes without
	 using PRAM simulations.
Our algorithms are easy to implement and use sorting and list ranking procedures. 
In particular, we consider Euler Tour, rooting of a tree, labelling of a
rooted tree and expression tree evaluation.

\subsubsection{Euler Tour}

An Euler Tour $L$ of a tree $T$ is a traversal of $T$'s edges 
that starts and ends at the same vertex, and uses each edge exactly twice. 
Suppose $T$ is presented as an unordered edge-list.
Replace each edge $e = \{v,u\}$ by two directed edges $(v,u)$
and $(u,v)$; one is the twin of the other. Sort the resultant edge list on the first component of the
ordered pairs. Then all outgoing edges of each vertex $v$ come together. 
Number them consecutively: $e_1,\ldots,e_k$; let $e_{(i+1)\mbox{ mod }k}$ be the
successor of $e_i$. For each $(u,v)$, define {\tt next}$(u,v)$ as the successor 
of $(v,u)$. The {\tt next} pointers define an Euler tour of $T$ \cite{Ja92, zeh}. 
They can be computed for all edges in $O(N/M)$ passes: load $M$ edges
into the  memory; let the other edges stream through; when the twin of
an edge $(u,v)$ in the  memory passes by, copy its successor pointer as 
the {\tt next}  pointer of $(u,v)$. Thus, $M$ edges can be processed in one pass, and
so, a total of $O(N/M)$ passes are required to compute the Euler tour.

\subsubsection{Rooting a tree}

The rooting of a tree $T$ is the process of choosing a vertex $s$ as the root and
labelling the vertices or edges of $T$ so that the labels assigned
to two adjacent vertices $v$ and $w$, or to edge $(v,w)$, are
sufficient to decide whether $v$ is the parent of $w$ or vice
versa. Such a labelling can be computed using the following steps:
\begin{enumerate}
\item Compute an Euler Tour $L$ of tree $T$ 
\item Compute the rank of every edge $(v,w)$ in $L$ 
\item For every edge $(v,w) \in T$ do: if rank of $(v,w)$ $<$ rank of $(w,v)$ then $p(w) = v$, otherwise $p(v) = w$
\end{enumerate}
The first two steps take $O(N/M)$ passes as shown above. 
For step $3$, load the edges into the memory $M$ at a time,
for every edge $(v,w)$, when it is in the memory and edge $(w,v)$ streams by,
orient it.
Thus,
an undirected tree can be rooted in $O(N/M)$ passes.

\subsubsection{Labelling a Rooted Tree}

A labelling of a rooted tree provides useful information about the
structure of the tree. Some of these labellings are defined in
terms of an Euler tour of the tree that starts at the root $s$. These
labelling are preorder numbering, postorder numbering, depth of
each vertex from the root $s$ and the number of descendants of each
vertex.

To compute the preordering numbering, assign to each edge $e=(v,w)$ a weight of one
if $v = p(w)$, zero otherwise. The preorder number of 
each vertex $w \neq s$ is one more than the weighted rank of the edge $(p(w),w)$
in the Euler tour of $T$. The root has preorder number of one. A
postorder numbering can also be computed in a similar fashion.

In order to compute the depth of each vertex, assign to each
edge $e = (v,w)$ a weight of one if $v = p(w)$, $-1$ otherwise.
The depth of a vertex $w$ in $T$ is the weighted rank of edge
$(p(w),w)$ in the Euler Tour.

In order to compute the number $\vert T(v) \vert$ of descendants
of each vertex $v$, assign weights to each edge the same as for the
preorder numbering. In particular, for
every non-root vertex $v$, let $r_1(v)$ and $r_2(v)$ be the ranks
of the edges $(p(v),v)$ and $(v,p(v))$. Then $T(v)= r_2(v) - r_1(v)
+1$.

As each of Euler tour, list ranking and sorting requires $O(N/M)$
passes on the W-Stream model, the above labellings can all be computed in
$O(N/M)$ passes.

\subsubsection{Expression Tree Evaluation}
In an expression tree, each internal vertex is labelled by a
function which can be computed in $O(1)$ time and each leaf is labelled 
by a scalar. A leaf evaluates to its scalar. The value at an internal node 
is obtained by applying its function to the values of its children.
The problem is to compute the values at all vertices in the tree. 
To solve it, first sort the vertices by depth and parent
in that order so that deeper vertices come first, and 
the children of each vertex are contiguous. Then the vertices are processed in
sorted order over a number of passes. In one pass, we load $M$ vertices 
with known values into the memory and partially compute the functions at 
their respective parents as they stream by revealing the functions they hold.
The computed partial values are also output with the parents. 
Since in one pass $M$ vertices are processed, $O(N/M)$ passes are enough. 

Here we have assumed that the function at each internal node is an associative operation.

\subsection{Maximal Independent Set and $(\Delta + 1)$ Colouring}
\label{mis:colouring}

We consider two different input representations: (i) a set of adjacency lists, and
	(ii) an unordered edge list.

\subsubsection{The input is a set of adjacency lists}

In this representation, all edges incident on a vertex are stored
        contiguously in the input.
We assume that a list of all vertices is stored before the adjacency list; otherwise
	$O(V/M)$ passes are required to ensure that. 
For an edge $\{u,v\}$, its entry $(u,v)$ in the adjacency list of $u$ is treated as an
	outgoing edge of $u$; its twin $(v,u)$ in the adjacency list of $v$ is an
	incoming edge of $u$.

\paragraph{Maximal Independent Set:}

Divide the vertex set $V$ into segments $C_1, \ldots, C_{\lceil V/M \rceil}$,
of size $M$ each, except for the last segment which can be of size less than $M$.
The algorithm has $V/M$ iterations, each of which has two passes.
In $i$-th iteration, read the $i$-th segment $C_i$ into the  memory, and start a streaming of the
edges. For all $v\in C_i$, if an incoming edge $(u,v)$ of $v$ is found to be marked, then
mark $v$. This signifies that a neighbour of $v$ has already been elected into the MIS.
(Initially, all edges are unmarked.) Start another pass over the input.
In this pass, when the adjacency list of $v\in C_i$ steams in, if $v$ is unmarked, then
elect $v$ into the MIS. Use $v$'s adjacency list to mark all unmarked neighbour of $v$ in $C_i$. 
Also mark all entries of $v$'s adjacency list. When the pass is over, every vertex in $C_i$
is either in the MIS or marked. When all $V/M$ iterations are over, every vertex
is either in the MIS or marked. 

That is, the MIS of the graph is computed in $O(V/M)$ passes.

\paragraph{$(\Delta+1)$ vertex colouring:}

For each vertex $v$, append a sequence of colours $1,\ldots,\delta(v) + 1$, called
the palette of $v$, at the back of $v$'s adjacency list; $\delta(v)$ is the degree of $v$.
Divide the vertex set $V$ into segments $C_1, \ldots, C_{\lceil V/M \rceil}$
of size $M$ each, except for the last segment which can be of size less than $M$.
The algorithm has $V/M$ iterations, each of which has two passes.
In the $i$-th iteration, read the $i$-th segment $C_i$ into the  memory, and start a streaming of the
edges. When the adjacency list of $v\in C_i$ steams in, use it to mark all coloured neighbours 
of $v$ in $C_i$. When the palette of $v$ arrives, give $v$ the smallest colour 
that is in the palette but is not used by any of its coloured neighbours in $C_i$.
When the pass is over, every vertex in $C_i$ is coloured.
Start another pass meant for updating the palettes. 
When the adjacency list of an uncoloured vertex $u$ arrives,
use it to mark all the neighbours of $u$ in $C_i$. When the palette of $u$
arrives, delete from it the colours used by the marked vertices. 

That is, a graph can be $(\Delta+1)$ vertex coloured in $O(V/M)$ passes.

\subsubsection{Input is an edge list}

The above algorithms process the vertices loaded into the memory in the order
in which their adjacency lists stream in. They will not work if the input
is an unordered set of edges, for which case we now present alternative algorithms.

Divide the vertex set $V$ into segments 
	$C_1, \ldots, C_{\lceil V/x\rceil}$,
        of size $x$ each, except for the last segment which may be of a smaller size;
$x$ is a parameter to be chosen later.
The algorithm has $V/x$ iterations, each of which has two passes.
In the $i$-th iteration, we store $C_i$ into the  memory. We also maintain in the memory the
adjacency matrix $A$ of the subgraph induced by $C_i$ in $G$. 
Start a streaming of the edges, and use them to fill $A$, and also to mark the
loaded vertices if they are adjacent to vertices elected into the MIS in earlier iterations.
Remove the marked vertices from the subgraph $G[C_i]$, and compute in-core an MIS of 
the remaining graph. When the pass is over, every vertex in
$C_i$ is either in the MIS or marked.
Read the stream in again, and mark all edges whose one endpoint is
        in MIS.
Thus, a total $O(V/x)$ passes are required.
The total space required in the  memory is $x\log V + x^2$, and that must be $O(M \log V)$.
Therefore, $x$ must be $O(\min\{M, \sqrt{M\log V}\})$.
That is, the MIS of the graph is computed in $O(V/M)$ passes, when $V\geq 2^M$,
and in $O(V/\sqrt{M\log V})$ passes, otherwise.

In a similar manner, we can also colour the graph with $(\Delta + 1)$ colours in 
$O(V/x)$ passes for $x = O(\min\{M, \sqrt{M\log V}, \frac{M\log V}{\Delta\log\Delta} \})$.
We keep palettes of possible colours with the memory loaded
vertices. This would require $x(\Delta+1)\log\Delta$ additional bits in the memory. 
Then, $x\log V + x^2+ x(\Delta+1)\log\Delta$ must be $O(M \log V)$.

\subsection{Single Source Shortest Paths}
\label{sssp:wstream} 

Now we present a randomised algorithm for the $\epsilon$-approximate single source 
shortest paths problem. The input is a weighted graph $G = (V,E)$ with a
non-negative integer weight associated with each edge,
and a source vertex $s$ from which to find an $\epsilon$-approximate shortest
path for each vertex $v \in V$.
The sum of weights of all edges is $W$, and $\log W = O(\log V)$.

The randomized algorithm of Demetrescu et al. \cite{DFR06} solves SSSP 
in $O((C V \log V) / \sqrt{M})$ passes, where $C$ is the largest 
weight of an edge in $G$, and $\log C = O(\log V)$.

Our algorithm uses a subroutine from \cite{DFR06}, and some ideas from \cite{KlSu97}.

\subsubsection{Demetrescu et al.'s Algorithm}

We now briefly describe the algorithm of Demetrescu et al.

The algorithm first picks a subset $A$ of vertices such that 
$A$ includes the source vertex $s$, the other vertices of $A$ are picked uniformly randomly,
and $|A|=\sqrt{M}$. 
A streamed implementation of Dijkstra's algorithm (which we call ``StreamDijkstra'') computes exact shortest 
	paths of length at most $l =
\frac{\alpha C V \log V}{\sqrt{M}}$ (for an $ \alpha > 1$) from each vertex $u \in A$ in
	$O(\frac{V}{\sqrt{M}} + l)$ passes.

Next an auxiliary graph $G'$ is formed in the working memory on vertex set $A$, where the weight
	$w'(x,y)$ of an edge $\{x,y\}$ in $G'$ is set to the length of the shortest path from $x$ to $y$ 
	found above. 
SSSP is solved on $G'$ using $s$ as the source vertex. 
This computation takes place within the working memory.
For $x\in A$, let $P'(x)$ denote the path obtained by taking the shortest path in $G'$
	from $s$ to $x$, and replacing every edge $\{u,v\}$ in it by the shortest path
	in $G$ from $u$ to $v$.
For $v \in V$, do the following: For $x$ in $A$, concatenate $P'(x)$ with the shortest path
	from $x$ to $v$ found in the invocation of StreamDijkstra to form $P'(x,v)$.
Report as a shortest path from $s$ to $v$ the shortest $P'(x,v)$ over all $x\in A$.
This reporting can be done for all vertices in $O(V/M)$ passes, once the auxiliary graph is computed.
The total number of passes is, therefore, $O(\frac{C V \log V}{\sqrt{M}})$.
It can be shown that the reported path is a shortest path with high probability.
See \cite{DFR06}.

Now we describe StreamDijkstra in greater detail, as we will be using it as a subroutine.
StreamDijkstra takes a parameter $l$.

Load $A$ into the  memory. Recall, $|A|=\sqrt{M}$.
Visualise the input stream as partitioned as follows: 
$\gamma_1, \delta_1, \ldots, \gamma_q, \delta_q$, where each $\delta_i$ is an empty sequence, 
each $\gamma_i$ is a sequence of edges $(u,y_i,w_{uy_i})$, and $\forall i<q$, $y_i\not= y_{i+1}$. 
For $c_j\in A$, let $P_{c_{j}}=\{c_j\}$ and $d_j = 0$; $P_{c_{j}}$ will always have
	a size of at most $\sqrt{M}$ and will stay in the  memory.
Execute the following loop:
\begin{tabbing}
aaaa \= aaaa \= aaaa \kill
loop \\
     \> Perform an extraction pass; \\
     \> Perform a relaxation pass; \\
     \> if every $P_{c_j}$ is empty, then halt; \\
endloop
\end{tabbing}

In a extraction pass, stream through the $\gamma$-$\delta$ sequence;
in general, $\delta_i$ is a sequence $(d_{i1},f_{i1}),\ldots,(d_{i\sqrt{M}},f_{i\sqrt{M}})$ where $d_{ij}$
is an estimate on the distance from $c_j$ to $y_i$ through an edge in $\gamma_i$,
and $f_{ij}$ is a boolean that is $1$ iff $y_i$ is settled w.r.t. $c_j$.
For $j=1$ to $\sqrt{M}$, let $d_j$ be the smallest $d_{ij}$ over all $i$ such that $y_i$ is unsettled w.r.t. $c_j$.
For $j=1$ to $\sqrt{M}$, copy into $P_{c_j}$ at most $\sqrt{M}$ $y_i$'s 
so that $d_{ij}=d_j\leq l$.

In a relaxation pass, stream through the $\gamma$-$\delta$ sequence; 
As the pass proceeds, $i$ varies from $1$ to $q$.
For $j=1$ to $\sqrt{M}$, initialise $X_j=\infty$. 
For each $(u,y_i,w_{uy_i})\in \gamma_i$, and for each $c_j\in A$, if $u\in P_{c_j}$, and
$X_j>d_j+w_{uy_i}$ then set $X_j=d_j+w_{uy_i}$.
For each $(d_{ij},f_{ij})\in\delta_{i}$, if $(X_{j}\not=\infty)$ and $d_{ij}>X_j$ then 
set $d_{ij}$ to $X_j$.
For each $y_i \in P_{c_j}$ and $(v , y_i , w_{vy_i}) \in \gamma_i$, set flag $f_{ij} = 1$. 

\subsubsection{Our Approximate Shortest Paths Algorithm}

We use ideas from \cite{KlSu97} to compute in $O(\frac{V \log V \log W}{\sqrt{M}})$ passes
paths that are approximate shortest paths with high probability; here $W$ is the sum of the edge weights.
Our algorithm invokes Procedure StreamDijkstra $\lceil \log W \rceil$ times in as many phases.
 
The algorithm first picks a subset $A$ of vertices such that 
$A$ includes the source vertex $s$, the other vertices of $A$ are picked uniformly randomly,
and $|A|=\sqrt{M}$. Let $l'=\frac{\alpha V\log V}{\sqrt{M}}$, for an $\alpha>1$.
For $1$ to $\lceil \log W \rceil$, execute the $i$-th phase.
Phase $i$ is as follows: 
\begin{itemize}
\item Let $\beta_i=(\epsilon \cdot 2^{i-1})/l'$.

\item Round up each edge weight upto the nearest multiple of $\beta$.
Replace zero with $\beta$. Formally, the new weight function $w_i$ on edges is defined as follows:
$w_i(e) = \beta_i \lceil w(e) / \beta_i \rceil$, if $w(e) > 0$; $w_i(e)=\beta_i$, if $w(e) = 0$.

\item Let $l = \lceil \frac{2 (1 + \epsilon) l'}{\epsilon} \rceil$.
Invoke Procedure StreamDijkstra with $l\beta_i$ as the input parameter.  

\item For each vertex $x \in A$ and $v \in V$, if $p_i(x,v)$ and $\hat{P}(x,v)$ are the shortest 
paths from $x$ to $v$ computed in the above, and in the earlier phases respectively, then 
set $\hat{P}(x,v)$ to the shorter of $p_i(x,v)$ and $\hat{P}(x,v)$. 
\end{itemize}

If $P$ is a path from $x\in A$ to $v\in V$ such that its length is between $2^{i-1}$
to $2^i$ and the number of edges in it is at most $l'$, then
the length $w(p_i)$ of the path $p_i(x,v)$ computed in the $i$-th phase above is at most
$(1 + \epsilon)$ times the length $w(P)$ of $P$.
We can prove this as follows. (A similar proof is given in
\cite{KlSu97}.)

We have, $w_i(e) \leq w(e) + \beta_i$.
So, $w_i(P) \leq w(P) +  \beta_i l'=w(P)+\epsilon 2^{i-1}$. 
As $2^{i-1}\leq w(P)$, this means that $w_i(P) \leq (1 + \epsilon)w(P)$.
Furthermore, since $w(P) \leq 2^i$, $w_i(P) \leq (1 + \epsilon) 2^i$.
Thus, if Procedure StreamDijkstra iterates at least $\frac{(1+\epsilon)2^i}{\beta_i}
=\frac{2(1+\epsilon)l'}{\epsilon}\leq l$ times, then $P$ would be encountered by
it, and therefore $w(p_i)$ would be at most $(1 + \epsilon)w(P)$.
Thus, $\hat{P}(x,v)$ at the end of the $\lceil \log W \rceil$-th phase, will indeed
be an $\epsilon$-approximate shortest path of size at most $l'$, 
for every $v \in V$ and $x \in A$.

Since, each phase requires  $O(\frac{V}{\sqrt{M}} + l)$ passes, where
$l = \lceil \frac{2 (1 + \epsilon) \alpha V \log V }{\epsilon \sqrt{M}}\rceil$,
the total number of passes required for
$\lceil \log W \rceil$ phases is $O(\frac{(1 + \epsilon) V \log V \log W}{\epsilon \sqrt{M}})$.

The rest is as in the algorithm of Demetrescu et al.
An auxiliary graph $G'$ is formed in the working memory on vertex set $A$, where the weight
	$w'(x,y)$ of an edge $\{x,y\}$ in $G'$ is set to the length of the shortest path from $x$ to $y$ 
	found above. 
SSSP is solved on $G'$ using $s$ as the source vertex. 
For $x\in A$, let $P'(x)$ denote the path obtained by taking the shortest path in $G'$
	from $s$ to $x$, and replacing every edge $\{u,v\}$ in it by the reported path $\hat{P}(u,v)$
	in $G$ from $u$ to $v$.
For $v \in V$, do the following: For $x$ in $A$, concatenate $P'(x)$ with the reported path $\hat{P}(x,v)$.
Report as a shortest path from $s$ to $v$ the shortest $P'(x,v)$ over all $x\in A$.

\begin{lemma}
Any path computed by our algorithm has a length of at most $(1 +
\epsilon)$ times the length of a shortest path between the same endpoints with probability at
least $1 - 1/V^{\alpha - 1}$.
\end{lemma}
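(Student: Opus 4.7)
The plan is to combine two ingredients: (i)~a random-sampling argument showing that, with probability at least $1 - 1/V^{\alpha-1}$, every true shortest path from $s$ can be decomposed into ``hops'' of at most $l' = \alpha V \log V / \sqrt{M}$ edges whose intermediate endpoints lie in the sampled set $A$; and (ii)~a per-hop approximation bound that follows from the informal argument already sketched in the paragraph preceding the lemma, applied to whichever phase $i$ has $2^{i-1} \le L_j \le 2^i$ for the hop in question. Once both hold, concatenating the hops through the Dijkstra computation on the auxiliary graph $G'$ yields a path of total weight at most $(1+\epsilon)$ times the true $s$-to-$v$ distance.

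First I would pin down the sampling event. Fix any vertex $v$ and, breaking ties canonically, fix a true shortest path $P_v$ from $s$ to $v$. Partition the vertices of $P_v$ (excluding $s$, which is in $A$ by construction) into consecutive blocks of $l'$ vertices each. Since the $\sqrt{M}-1$ non-source vertices of $A$ are chosen uniformly at random from $V \setminus \{s\}$, the probability that a given block contains no vertex of $A$ is at most
\begin{equation*}
\left(1 - \frac{\sqrt{M}-1}{V-1}\right)^{l'} \le \exp\!\left(-\frac{(\sqrt{M}-1)\,l'}{V-1}\right) \le V^{-\alpha},
\end{equation*}
once $V$ is large enough that $(\sqrt M - 1)/(V-1) \ge \tfrac{1}{2}\sqrt M / V$, say. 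A union bound over at most $V/l'$ blocks, and then over the at most $V$ choices of $v$, bounds the probability of the bad event (some $P_v$ containing a block with no $A$-vertex) by $V \cdot V^{-\alpha} = V^{1-\alpha}$, as required.

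Next I would argue the per-hop approximation. Conditioning on the good sampling event, write $P_v = Q_1 Q_2 \cdots Q_t$ where each $Q_j$ has at most $l'$ edges and consecutive $Q_j$'s meet at a vertex of $A$ (with $Q_1$ starting at $s$ and $Q_t$ ending at $v$). Let $L_j = w(Q_j)$ and choose $i_j = \max\{1,\lceil \log L_j\rceil\}$. The paragraph preceding the lemma already verifies that, in phase $i_j$, the rounded weight $w_{i_j}(Q_j) \le L_j + \beta_{i_j}\, l' \le (1+\epsilon) L_j$ and $w_{i_j}(Q_j) \le (1+\epsilon)2^{i_j} \le l\beta_{i_j}$, so StreamDijkstra run with cap $l\beta_{i_j}$ settles the endpoints of $Q_j$ and returns a path $p_{i_j}$ of rounded length at most $w_{i_j}(Q_j)$. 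Since original weights are bounded by rounded weights, the path $\hat P$ stored in the algorithm for this endpoint pair has $w(\hat P) \le w_{i_j}(p_{i_j}) \le (1+\epsilon) L_j$. Hence the edge weights $w'$ on the hops of $Q_1,\ldots,Q_{t-1}$ in $G'$ satisfy $w'(\cdot) \le (1+\epsilon) L_j$, and the final piece $Q_t$ (from the last $A$-vertex to $v$) is likewise approximated within $(1+\epsilon) L_t$ by $\hat P$. Picking $x$ in the algorithm's final reporting step to be the endpoint common to $Q_{t-1}$ and $Q_t$ gives
\begin{equation*}
w(P'(x,v)) \le \sum_{j=1}^{t-1}(1+\epsilon)L_j \;+\; (1+\epsilon) L_t \;=\; (1+\epsilon)\, w(P_v),
\end{equation*}
and the algorithm outputs something no longer than this.

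The main obstacle will be the corner cases in the phase argument rather than the probability bound: verifying that hops of very small or zero weight are correctly handled by the $w_i(e)=\beta_i$ convention, that StreamDijkstra invoked with bound $l\beta_{i_j}$ really does certify every vertex within that $w_{i_j}$-distance of each $x \in A$ as settled (which uses the fact that $l$ extraction/relaxation rounds suffice when $|A|=\sqrt M$ and each $P_{c_j}$ holds up to $\sqrt M$ frontier vertices), and that the canonical-shortest-path choice used in the sampling bound is compatible with ties broken during extraction. Once these are dispatched, putting the two ingredients together yields the claimed $1 - 1/V^{\alpha-1}$ success probability.
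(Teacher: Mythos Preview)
Your approach is essentially the paper's: both combine the random-sampling argument (every $l'$-edge block of a shortest path hits $A$) with the per-hop $(1+\epsilon)$ bound already established in the paragraph before the lemma, then compose through $G'$. One small discrepancy: the paper proves the probability bound for a \emph{single} fixed target, union-bounding only over the at most $V/l' \le V$ subpaths of that one shortest path to get $V\cdot V^{-\alpha}$; you additionally union-bound over all $V$ choices of $v$, which would yield $V\cdot(V/l')\cdot V^{-\alpha}$ rather than the $V\cdot V^{-\alpha}$ you wrote --- simply drop the union over $v$ to match the lemma as stated.
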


\begin{proof}
The proof is similar to the one in \cite{DFR06}. The lemma is obvious for
shortest paths of at most $l'$ edges. Now consider a path $P$ of
$\tau>l'$ edges. $P$ has $\lfloor \tau / l' \rfloor$
subpaths of size $l'$ each, and a subpath of size at most $l'$.
We show that each subpath contains at least one vertex
$x$ from set $A$. The probability of not containing any vertex
from $A$ in a subpath is at least $(1 - \vert A \vert /V)^{l'} <
2^{- \frac{\vert A \vert l'}{V}} = 1 /V^\alpha$.

Since, there are at most $V/l' \leq V$ disjoint subpaths,
the probability of containing a vertex from set $A$ in
each subpath is at least $1 - (1/V^{\alpha-1})$. Furthermore, each subpath is of size at most $(1 +
\epsilon)$ times the shortest path of $G$. Thus, the computed
path is at most $(1 + \epsilon)$ times the shortest path with probability $1 - 1 / V^{\alpha -1 }$. 
\end{proof}

\noindent
Putting everything together, we have the following lemma.
\begin{lemma}
The paths computed by our algorithm are $\epsilon$-approximate shortest paths with high probability;
the algorithm runs in $O(\frac{(1 + \epsilon) V \log V \log W}{\epsilon \sqrt{M}})$ passes.
\end{lemma}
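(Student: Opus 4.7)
The plan is to combine the two already-proven ingredients and add a uniform pass-count bookkeeping. The approximation quality is essentially settled by the previous lemma, which guarantees that for each fixed pair $(x,v) \in A \times V$, the path produced has length at most $(1+\epsilon)$ times a true shortest path with probability at least $1 - 1/V^{\alpha - 1}$. I would then observe that the final reported path from $s$ to any $v \in V$ is the concatenation of an intra-$A$ Dijkstra path (computed exactly in-core on $G'$) with an $\hat{P}(x,v)$, and then argue that for the particular choice $x = x^\ast$ lying on a true shortest $s \to v$ path and decomposing it into blocks of at most $l'$ edges, each block is an approximate shortest path of stretch $(1+\epsilon)$. Summing stretches block-wise preserves a $(1+\epsilon)$ factor overall, since each block of a true shortest path is itself a shortest path between its endpoints.

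For the pass-count bound, I would sum the cost of the $\lceil \log W \rceil$ phases. Each phase invokes StreamDijkstra with iteration bound $l = \lceil 2(1+\epsilon)l'/\epsilon\rceil$ where $l' = \alpha V \log V / \sqrt{M}$, and each iteration of StreamDijkstra consists of one extraction pass and one relaxation pass. Hence a single phase costs $O(V/\sqrt{M} + l) = O\bigl((1+\epsilon) V \log V / (\epsilon \sqrt{M})\bigr)$ passes, and summing over $\lceil \log W \rceil$ phases yields $O\bigl((1+\epsilon) V \log V \log W / (\epsilon \sqrt{M})\bigr)$. I would then verify that the remaining steps (building $G'$, running an in-core Dijkstra on $G'$, and the final $O(V/M)$-pass reporting from \cite{DFR06}) are all subsumed by this bound: since $|A| = \sqrt{M}$, the graph $G'$ has at most $M$ edges and fits in working memory, so both its construction from the $\hat{P}(x,y)$ distances and its in-core SSSP cost no additional passes beyond what StreamDijkstra already uses.

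The final step is a union-bound argument. The previous lemma controls the probability that a single source in $A$ fails to land on some long shortest path. To promote this to an ``all-vertex'' statement for every $v \in V$, I would apply a union bound over the $V$ target vertices (and if needed over the $\sqrt{M}$ sources in $A$), choosing $\alpha$ a sufficiently large constant so that the total failure probability remains $O(1/V^{c})$ for some $c > 0$. This gives the ``with high probability'' clause in the statement.

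The part I expect to need the most care is the concatenation argument: I must make sure that on the good event (where every length-$l'$ subpath of every true shortest path has at least one $A$-vertex as an endpoint), the specific $\hat{P}(x^\ast, v)$ selected by the ``take the minimum over $x \in A$'' rule is no worse than the block-wise approximation I described, and that the additive error $\epsilon \cdot 2^{i-1}$ introduced by rounding in phase $i$ composes correctly across the $\log W$ weight scales so that only the phase aligned with the true distance contributes. Once that is cleanly written, the two bounds combine into the stated lemma.
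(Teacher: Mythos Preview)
Your proposal is correct and follows the same approach as the paper. In fact the paper gives no separate proof for this lemma at all---it just writes ``Putting everything together''---because the pass count was already computed in the text (summing $O(V/\sqrt{M}+l)$ over $\lceil\log W\rceil$ phases) and the per-path $(1+\epsilon)$ guarantee with probability $1-1/V^{\alpha-1}$ is exactly the preceding lemma; your plan reconstructs precisely these pieces, and your extra care about a union bound over all $v$ and about the concatenation through $G'$ is more explicit than what the paper writes.
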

If $C$ is the maximum weight of an edge, then $W \leq VC$, and
our algorithm improves the Demetrescu et al.'s  algorithm \cite{DFR06} by a factor of $C/\log{VC}$ at the cost a small error in accuracy of negligible probability.  

\subsection{Maximal Matching}
\label{matching}

Here we show that a maximal matching of a graph can be computed in $O(V/M)$ 
	passes. In each pass, our algorithm executes the following steps:

\begin{enumerate}
\item While the memory holds less than $M$ edges, if the incoming edge $e$ is
independent of all the edges held in the memory, add $e$ to the memory; else discard $e$.
Let $L$ denote the set of edges in the memory when the while-loop terminates.
Clearly, $L$ is a matching in $G$. Nothing has been output till this point in the algorithm.

\item Continue with the streaming of the edges. Discard the edges whose one end point 
is in $L$, write out the others.

\item store the edges of $L$ at the end of the input stream.
\end{enumerate}

Repeat passes until the edge stream is empty. It is easy to
prove that this algorithm correctly computes a maximal matching. Since in
each pass we remove $\Theta (M)$ vertices and its
incident edges from the graph, a total of $O(V/M)$ passes are sufficient.

\subsection{Vertex Cover}
\label{vertex:cover} 

A vertex cover in an undirected graph $G =(V,E)$ is a set of vertices 
$S$ such that each edge of $G$ has at
least one endpoint in $S$. Computing a vertex cover of minimum
size is an NP-complete problem \cite{GJ79}. It is a well-known
result that the endpoints of the edges in a maximal
    matching form a vertex cover whose weight is at most twice that of a minimum
    vertex cover \cite{Va01}.
Our above algorithm for maximal matching, thus, computes a $2$-approximate vertex cover in
    $O(V/M)$ passes.

We now show that a weighted vertex cover of approximation
    ratio $2$ can also be found in $O(V/M)$ passes.
In the weighted vertex cover problem a positive weight is associated with
    each vertex, and the size of a vertex cover is defined as the sum of the weights of the vertices 
    in the vertex cover.
Our algorithm uses the idea of Yehuda et al. \cite{YBF+04}, which involves executing the following
\begin{quote}
If there exists an edge $(u,v)$ such
that $\varepsilon=\min \{ \mbox{weight}(u), \mbox{weight}(v)\} > 0$ then set $\mbox{weight}(u) =  \mbox{weight}(u) - \varepsilon $ and
    $\mbox{weight}(v) = \mbox{weight}(v) - \varepsilon$
\end{quote}
until there does not exist an edge whose both endpoints have nonzero weight.
Let $C$ be the set of the vertices whose weights reduce to $0$.

$C$ is a vertex cover of weight at most twice that of a minimum vertex cover.
We can prove this as follows \cite{YBF+04}. Consider the $i$-th round of the algorithm. Let
$(u, v)$ be the edge selected in this round and $\varepsilon_i$ be
the value deducted at $u$ and $v$. Since every vertex
cover must contain at least one of $u$ and $v$, decreasing both
their values by $\varepsilon_i$ has the effect of lowering the
optimal cost, denoted as $C^*$, by at least $\varepsilon_i$. Thus
in the $i$-th round, we pay $2 \varepsilon_i$ and effect a
drop of at least $\varepsilon_i$ in $C^*$. Hence, local ratio
between our payment and the drop in $C^*$ is at most $2$ in each round.
It follows that the ratio between our total payment and total drop
in $C^*$, summed over all rounds, is at most 2.

Proceeding as in the maximal matching algorithm, in one pass, the weight of 
$O(M)$ vertices can be reduced to $0$. Therefore, $O(V/M)$ passes are required to execute
the above algorithm. Thus, a $2$-approximate weighted vertex cover can
be computed in $O(V/M)$ passes, if the weight of each vertex is $V^{O(1)}$.

\section{Conclusions from this Chapter}
\label{conclusion:wstream}

For list ranking and some tree problems, solved before now using PRAM
simulations, we present alternative algorithms that avoid PRAM simulations. 
While PRAM simulations are
helpful in establishing a bound, they are hard to implement. 
Thus, our algorithms are easier. 

Our results on the maximal independent set and $(\Delta+1)$-colouring problems
show that the hardness of a problem may lie to an extent in the input 
representation.
 
     \chapter{Two Variants of the W-Stream Model and Some Algorithms on Them}
\label{modstream:chapt}
\section{Introduction}

The classical streaming model, which accesses the input data in the form of a stream,
	has been found useful for data-sketching and statistics problems \cite{DFR06}, but 
       classical graph problems and geometric problems are found to be hard to solve on it. 
Therefore, a few variants of the stream model have been proposed.
One such is the W-Stream model \cite{R03, DFR06}.
In this chapter, we propose two further variants, and on them, 
	design deterministic algorithms for the
	maximal independent set and $(\Delta+1)$ colouring problems on general graphs, and 
	the shortest paths problem on planar graphs.
The proposed models are suitable for offline data.

\subsection{Definitions}

Let $G = (V,E)$ be an embedded planar graph with nonnegative integer weights.
Then $E \leq 3V - 6$. A separator for $G = (V,E)$ is a subset $C$
of $V$ whose removal partitions $V$ into two disjoint subsets $A$
and $B$ such that any path from vertex $u$ of $A$ to a vertex $v$ of $B$
in $G$ contains at least one vertex from $C$.

See Chapter~\ref{wstream:chapt}  
for definitions of the maximal independent 
	set, $\Delta+1$ colouring, single source shortest paths, and breadth first search 
	problems.
The all pairs shortest paths problem is to compute a shortest path between every pair of vertices. 

\subsection{Some Previous Results}

The streaming model which was introduced in
	\cite{AlMaSz99, HRR99, MP80}, contains only one read-only input stream and
	uses a polylogarithmic sized working memory. 
Only a constant number of read-only passes are allowed, where one read-only pass is to read 
	the input stream sequentially from the beginning to the end. 
Due to such restrictions, this model can compute only approximate solutions for many problems
	\cite{DFR06}.
Many graph and geometric problems have been considered hard to solve on this model.  
Therefore, a few variants have been proposed of this model. 
One of them is the W-Stream model \cite{DFR06, R03}; this allows the input stream to
	be modified during a pass.
Various graph problems have been solved in this model. See the references \cite{DEM+07, DFR06}.
Problems on special graphs like planar graph etc. have not been explored in the W-Stream model.

\subsection{Our Results}
In this chapter, we propose two models which are variants of the 
	W-Stream model. We give 
	the following algorithms that run on two of those models: an
	$O(V/M)$ passes maximal independent set algorithm and an
	$O(V/x)$ passes $(\Delta+1)$-colouring algorithm,where $x = O(\min \{M, \sqrt{M\log V}\})$, both for general graphs, 
	and an $O((\sqrt{V}+ \frac{V}{M})\log V+\frac{V}{\sqrt{M}})$ passes single source shortest paths algorithm 
	and an $O(\frac{V^2}{M})$ passes all pairs shortest paths algorithm, both
for planar graphs.

\subsection{Organisation of This Chapter}

In Section~\ref{model:modified:wstream}, we propose two variants of the W-Stream model.
In Section~\ref{algo:mod:wstream}, we present some algorithms that run on two of those variants.
In particular, in Subsection~\ref{mis:mod:wstream}, we give a maximal independent
	set algorithm.
In Subsection~\ref{colouring:mod:wstream}, we give a $(\Delta+1)$-colouring algorithm.
In Subsection~\ref{sssp:mod:wstream}, we present an SSSP algorithm for planar graphs.

\section{Two Variants of the W-Stream Model}
\label{model:modified:wstream}
The W-Stream model is described in detail in Chapter~\ref{intro:chapt}.
This model has the following parameters: $N$ is the size of the input, and
        $M \log N$ is the size of the working memory. $P$ the number of passes
        executed by an algorithm is the metric of its performance.
Let the input stream be denoted by $S_0$.
In $i$-th pass, for $i\geq 0$, stream $S_i$ is read and modified (only sequential read
        and write) into stream $S_{i+1}$.
$S_{i+1}$ will be read in the $(i+1)$-st pass.
The size of $S_{i+1}$ can be a constant factor larger than the size of $S_0$.
Streams $S_1, S_2, \ldots , S_{i-1}$ are not used in passes $j\geq i$.
Offline processing is indicated as intermediate streams are allowed.

The W-Stream model handles two streams at a time, one for input and the other for output.
The rate of access need not be the same on the two streams.
It is as if there are two read-write heads, one for each stream, that are handled independently;
        the heads read or write only in their forward movements; a rewind of the head
        in the backward direction, during which no data is read or written, signals the end of a pass;
        a restart of the forward movement marks the beginning of the next pass.
Given the above, the implicit assumption in the W-Stream model that the ends of passes on
the two streams must synchronise seems too restrictive. One might as well
        let one head make more than one pass during just one pass of the other.
In this spirit, we suggest the following two variants of the W-Stream model.

Note that a presence of multiple storage devices with independently handled heads
has been assumed in memory models earlier too \cite{BeJaRu07, GrSc05, ViSh94}.

\paragraph{Model ${\cal M}1$:}
Assume two storage devices (say, disks) $D_1$ and $D_2$ that
can read and write the data independently. In each pass, each disk
assumes only one mode---read or write, and the pass is, accordingly, called a read or write pass.
A disk can perform several read passes while the other is involved in just one
        read/write pass.
The output of any write pass has a length of $O(N)$,  where $N$ is the length
        of the input to the $0$-th pass.
If both disks are in read mode, then they must be reading the
last two output streams produced by the algorithm before then.
If $P_1$ and $P_2$ are the passes executed by an algorithm on the two disks,
        we say, the algorithm runs in $P_1 + P_2$ passes.

This model is stronger than the W-Stream model.
The bit vector disjointness problem can be solved on it in $O(1)$ passes:
given a pair $(A,B)$ of bit sequences of length $N$ each, in one pass over the input disk,
        copy $B$ into the other disk, and then execute read passes on both disks
        concurrently.
On the W-Stream model, as mentioned before, the problem requires $\Omega(N/(M\log N))$ passes.

\subsubsection{Model ${\cal M}2$:}
This is similar to the ${\cal M}1$ except in that at any given time, only one disk is
        allowed to be in a read pass and only one disk is allowed to be in a write pass.

Clearly, ${\cal M}2$ is at least as strong as the W-Stream model, and ${\cal M}1$ is at least
as strong as ${\cal M}2$.

\section{The Algorithms}
\label{algo:mod:wstream}
In this section, we present algorithms for the
	maximal independent set and $(\Delta+1)$ colouring problems on general graphs, and 
	the shortest paths problem on planar graphs.
The algorithms run on both ${\cal M}1$ and ${\cal M}2$ in the same number of passes.
For all algorithms the input is an unordered edge list.

\subsection{Maximal Independent Set}
\label{mis:mod:wstream} 

A maximal independent set can be computed by repeating the following greedy
strategy until no vertex is left:
Select a remaining vertex into the independent set, and remove it and all its adjacent
	vertices from the graph.
We use the same in our algorithm. 

Divide the vertex set $V$ into segments $C_1, \ldots, C_{2V/M}$,
of size $M/2$ each, except for the last segment which can be of size less than $M/2$.
For each segment $C_i$, let $E_1(C_i)$ denote the set of edges with both endpoints
	are in $C_i$ and let $E_2(C_i)$ denote the set of edges with exactly
	one endpoint is in $C_i$. 
From the unordered input stream of edges we construct the following stream:
\[ \sigma = \langle E_2(C_1), \; C_1, \; E_1(C_1), \ldots E_2(C_{2V/M}), \; C_{2V/M}, \;  E_1(C_{2V/M}) \rangle\]
Make multiple read passes over the input, while the output is being written.
For segment $C_i$, in the first read pass over the input, filter the edges in $E_2(C_i)$ 
into the output, then stream out $C_i$, and in another read pass over 
the input, filter out the edges in $E_1(C_i)$. 
A total of $O(V/M)$ read passes over the input are enough to prepare the sequence.

For all $i$, concurrently sort $E_1(C_i)$ so that, for each node, all its outgoing edges in $E_1(C_i)$ come together.
Use the  algorithm of the previous chapter for sorting. 
Note that an algorithm designed on the W-Stream model can be executed on  ${\cal M}1$ and ${\cal M}2$
in the same number of passes without any change.
Since the size of each $E_1(C_i)$ is $O(M^2)$, and the sorts proceed concurrently,  
	a total of $O(M)$ passes are sufficient.

The algorithm has $O(V/M)$ iterations, each of which has a single pass.
In the $i$-th iteration, read the $i$-th segment $C_i$ into the  memory.
Some vertices of $C_i$ could be marked signifying that a neighbour has 
already been elected into the MIS, unless $i=1$.
Start a streaming of $\sigma$. Use the edges in $E_1(C_i)$ to execute the following:
for all unmarked $u\in C_i$, elect $u$ into the MIS and mark all its neighbours in $C_i$.
For $j>i$, when $E_2(C_j)$ streams in, use it to find the neighbours that the newly
elected vertices have in $C_j$, and mark all those when $C_j$ streams in later.
At the end of the pass over $\sigma$, for every vertex $v$ in $C_i$, either $v$ is marked, or
$v$ is in the MIS and every neighbour of $v$ is marked.
When all iterations are over, every vertex
is either in the MIS or marked. 

Thus, we obtain the following lemma.
\begin{lemma}
The MIS of the graph is computed in $O(V/M)$ passes on both ${\cal M}1$ and ${\cal M}2$,
even when the input is an unordered edge-list.
\end{lemma}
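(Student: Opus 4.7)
The plan is to verify both the pass-count and correctness, treating the three cost contributions separately (construction of $\sigma$, the concurrent sorts of the $E_1(C_i)$, and the main loop), and then closing with an inductive correctness argument. The key enabling feature of ${\cal M}1$ and ${\cal M}2$ is the asymmetric read/write schedule: one disk may perform several read passes while the other disk is in a single write pass, and this is exactly what the $\sigma$-construction exploits.

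For preprocessing, I would argue as follows. Keeping $C_i$ (size $M/2$) in working memory, one disk holds the original unordered edge list and is read twice per segment---once to sieve out $E_2(C_i)$ and once to sieve out $E_1(C_i)$---while the other disk writes the triples $E_2(C_i),\,C_i,\,E_1(C_i)$ in order into $\sigma$. Summed over the $2V/M$ segments this is $2\cdot(2V/M)$ read passes against a single write pass, hence $O(V/M)$ in the $P_1+P_2$ metric of both models. For the concurrent sort, each $E_1(C_i)$ has size $O(M^2)$, so the W-Stream sort of Chapter~\ref{wstream:chapt} handles it in $O(M)$ passes; all these sorts can be interleaved on $\sigma$ because they act on pairwise disjoint substreams, and one pass of the interleaved execution advances every individual sort by one pass, giving $O(M)$ passes total, which is absorbed into $O(V/M)$ under the standard regime $M^2 = O(V)$.

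Next I would analyze the main loop: in the $i$-th iteration we load $C_i$ (plus one mark-bit per vertex) into the $M\log V$-bit memory, make one pass over $\sigma$ on the input disk while writing the updated $\sigma$ on the other disk, use the sorted $E_1(C_i)$ to elect each still-unmarked vertex of $C_i$ greedily and mark its $C_i$-neighbours in core, and, for each later segment $C_j$, use $E_2(C_j)$ to identify vertices of $C_j$ adjacent to newly elected vertices and flag them in the outgoing $\sigma$. This is one pass per iteration and $2V/M$ iterations, so $O(V/M)$ passes. Correctness then follows by induction on $i$: after iteration $i$ every vertex of $C_1\cup\cdots\cup C_i$ is either elected or marked; independence holds because a vertex is elected only when unmarked and all its present and future neighbours are simultaneously marked; maximality holds because every vertex still unmarked when its own segment is processed gets elected.

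The main obstacle I expect is the bookkeeping required in the main pass: the marks generated for later segments must survive to the iteration that actually loads them, but they cannot all be buffered in the $O(M\log V)$-bit working memory. The clean fix is to piggyback these marks onto the outgoing copy of the relevant $C_j$ inside $\sigma$ on the write disk during the same pass; on ${\cal M}1$ and ${\cal M}2$ the simultaneous read-and-write of two disks is exactly what the models permit, and this keeps the per-iteration cost at one pass without inflating the stream size beyond $O(|\sigma|)=O(V+E)$.
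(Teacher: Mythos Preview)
Your proposal is correct and follows essentially the same approach as the paper: the same stream $\sigma$, the same $O(V/M)$-pass construction via multiple read passes against one write pass, the same concurrent $O(M)$-pass sort of the $E_1(C_i)$, and the same one-pass-per-segment main loop with marks propagated forward through $\sigma$. The paper's own argument is in fact terser than yours---it simply describes the algorithm and asserts the invariant ``either in the MIS or marked'' without spelling out the induction or the mark-piggybacking mechanism---and it leaves the assumption $M^2=O(V)$ (needed to absorb the $O(M)$ sorting passes into $O(V/M)$) entirely implicit, whereas you make it explicit.
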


\subsection{$(\Delta + 1)$-colouring Problem}
\label{colouring:mod:wstream} 

Divide the vertex set $V$ into segments 
	$C_1, \ldots, C_{V/x}$,
        of size $x$ each, except for the last segment which may be of a smaller size;
	$x$ is a parameter to be chosen later.
For each segment $C_i$, let $E(C_i)$ denote the set of edges with at least one endpoint
	in $C_i$, and let $\colour{C_1}$ denote a specification, for each vertex $v\in C_i$,
	of a palette of available colours for $v$. 
From the unordered input stream of edges we construct the following stream:
\[ \sigma=\langle C_1, \; E(C_1), \; \colour{C_1}, \; \ldots, \; C_{V/x}, \; E(C_{V/x}), \; \colour{C_{V/x}} \rangle\] 
The palette of $v$ is initialised here
	to $\{1,\ldots,\delta(v)+1\}$, where $\delta(v)$ is the degree of $v$.
A total of $O(V/x)$ read passes over the input are enough to prepare the sequence, if $x<M$.

The algorithm has $V/x$ iterations, each of which has a single pass.
In the $i$-th iteration, we store $C_i$ into the  memory. We also maintain in the memory the
adjacency matrix $A$ of the subgraph induced by $C_i$ in $G$. 
Start a streaming of $\sigma$, and use $E(C_i)$ to fill $A$.
For each $v\in C_i$, when the palette of $v$ arrives, give $v$ the smallest colour 
that is in the palette but is not used by any of its coloured neighbours in $C_i$.
When $\colour{C_i}$ has streamed by, every vertex in $C_i$ is coloured.
For $j>i$, when $E(C_j)$ streams in, use it to construct in the memory the adjacency matrix of the
subgraph induced by $C_i \cup C_j$, and then as $\colour{C_j}$ streams in, update the
palettes in it using this subgraph.  
At the end of the pass over $\sigma$, for every vertex $v$ in $C_i$, $v$ is coloured, and the palette
of every uncoloured neighbour of $v$ is updated. When all iterations are over, every vertex
is coloured. 

The total space required in the  memory is $x\log V + x^2$, and that must be $O(M \log V)$.
Therefore, $x$ must be $O(\min\{M, \sqrt{M\log V}\})$.

Thus we obtain the following lemma.
\begin{lemma} 
A $(\Delta+1)$-colouring of the graph is computed in $O(V/M)$ passes, when $V\geq 2^M$,
and in $O(V/\sqrt{M\log V})$ passes, otherwise.
\end{lemma}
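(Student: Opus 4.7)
The plan is to verify both the correctness and the pass-count claim of the algorithm described just above the lemma, and then split the pass-count bound into the two regimes of $V$ relative to $M$.

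First I would justify the $O(V/x)$ cost of building the stream $\sigma$. Because the layout of $\sigma$ requires, for each segment $C_i$, the triple $\langle C_i, E(C_i), \colour{C_i}\rangle$ in order, and the input is an unordered edge list, one natural way is to hold $\Theta(M/\log V)$ segment descriptors in the working memory at a time and filter the relevant edges out of the input, concurrently producing their slice of $\sigma$; since each filtering sweep uses one read pass over the input and a write pass to $\sigma$, and each iteration processes $\Theta(M/\log V)$ segments of size $x$, the total is $O(V/x)$ passes provided $x\le M$, which is forced by the memory bound below. The palette initialisation for a vertex $v$ only needs its degree $\delta(v)$, which can be accumulated during the same filtering sweep.

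Next I would argue that each of the $V/x$ iterations fits in one pass of $\sigma$. In iteration $i$ we keep $C_i$ and the $x\times x$ adjacency matrix $A$ of $G[C_i]$ resident; as $E(C_i)$ streams in we fill $A$ (each edge with at least one endpoint in $C_i$ arrives here), and when $\colour{C_i}$ streams in we assign each $v\in C_i$ the smallest colour in its current palette that is unused on its already-coloured neighbours in $C_i$; such a colour exists because the palette of $v$ has size $\delta(v)+1$ and the coloured neighbours of $v$, both those inside $C_i$ and those in earlier $C_j$'s (which have already shrunk $v$'s palette in previous iterations), total at most $\delta(v)$. For every later segment $C_j$ ($j>i$), when $E(C_j)$ streams in we can form in memory the bipartite adjacency of $C_i$ with $C_j$ (which fits, since it is just $\Theta(x^2)$ bits) and then, as $\colour{C_j}$ arrives, remove from each $u\in C_j$'s palette any colour just assigned in $C_i$ to a neighbour of $u$; this maintains the invariant that when iteration $j$ starts, the palette of $u$ excludes all colours used by neighbours of $u$ in $C_1\cup\cdots\cup C_{j-1}$. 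An inductive argument on $i$ then gives correctness: the colouring is proper and every vertex receives a colour in $\{1,\dots,\Delta+1\}$.

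For the memory budget, the resident data in an iteration is $C_i$ (size $\Theta(x\log V)$ bits), the adjacency matrix $A$ (size $\Theta(x^2)$ bits), and $O(1)$ scratch, so the constraint $x\log V + x^2 = O(M\log V)$ forces $x = O(\min\{M,\sqrt{M\log V}\})$, exactly as stated. The two read sweeps (one to build $\sigma$, one to execute the iterations) are both $O(V/x)$, so the total pass count is $O(V/x)$.

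Finally, the case split is immediate from comparing $M$ with $\sqrt{M\log V}$. If $V\ge 2^M$ then $\log V \ge M$, so $\sqrt{M\log V}\ge M$ and the binding bound is $x=\Theta(M)$, giving $O(V/M)$ passes; otherwise $\log V < M$, so $\sqrt{M\log V} < M$ and the binding bound is $x=\Theta(\sqrt{M\log V})$, giving $O(V/\sqrt{M\log V})$ passes. I do not anticipate a genuine obstacle; the main delicate point is the palette-update bookkeeping for later segments during iteration $i$ (the inductive invariant on palettes), since without it a vertex in $C_j$ could be starved of colours, but the $\Theta(x^2)$-bit slack in memory is enough to maintain it cleanly.
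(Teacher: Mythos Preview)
Your proposal is correct and follows essentially the same approach as the paper: the algorithm description preceding the lemma \emph{is} the paper's proof, and you have reconstructed it faithfully, including the memory constraint $x\log V + x^2 = O(M\log V)$, the per-iteration single-pass processing of $\sigma$, the palette-update invariant, and the case split on $V$ versus $2^M$. One small simplification: for building $\sigma$ the paper (as in the MIS subsection just before) simply does one read pass over the input per segment while a single write pass to $\sigma$ stays open, which gives $V/x$ read passes directly; your batching of $\Theta(M/\log V)$ segment descriptors per sweep is unnecessary here and, as written, does not quite yield $O(V/x)$ without an extra assumption on $M$ versus $\log V$.
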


\subsection{Shortest Paths on Planar Graphs}
\label{sssp:mod:wstream}

Let $\hat{G}=(V,E)$ be the given embedded planar graph.

First we transform $\hat{G}$ into a planar graph $G$ in which every vertex has a degree of at most $3$. 
For every vertex $u$ of degree $d>3$, if $v_0, v_1, \ldots, v_{d-1}$ is a cyclic ordering 
of the neighbours of $u$ in the planar embedding, then replace $u$ with new vertices $u_0$, $u_1$,
$\ldots$, $u_{d-1}$; add edges $\{(u_i,u_{(i+1) {\small \mbox{mod}}\; d}) \vert i = 0, \ldots, d-1\}$, 
each of weight $0$, and for $i=0, \ldots, d-1$, replace edges $(v_i,u)$ with edge $(v_i,u_i)$
of the same weight. Such a transformation can be done in $O(V/M)$ passes.

\subsubsection{Graph Decomposition} 
Next we decompose $G$ into $O(V/M)$ regions.
With respect to these regions, each vertex is categorized as either 
an interior or a boundary vertex
depending upon whether it  belongs to exactly
one region or it is shared among at least two regions.
There is no edge between two interior vertices belonging to two different
regions. The decomposition has the following properties.
\begin{enumerate}
\item Each region has $O(M)$ vertices.
\item The total number of regions is $O(V / M)$.
\item Each region has $O(\sqrt{M})$ boundary vertices.
\item Each boundary vertex is contained in at most three regions.
\end{enumerate}

For the decomposition of $G$, we use an adaptation of 
Frederickson's in-core algorithm \cite{F87} that 
decomposes a graph into a number of regions in $O(V \log V)$ time.
Frederickson's decomposition has been used in shortest paths 
algorithms before \cite{ArBrTo04, KRR+94, TZ00}.
It recursively applies the planar separator theorem of
Lipton and Tarjan \cite{LT79}, which we state now: 
\begin{theorem}
{\em Planar Separator Theorem [Lipton and Tarjan]:} Let $G = (V,E)$
be an $N$ vertex planar graph with nonnegative costs on its vertices
summing upto one, then there exists a separator $S$ of $G$ which
partitions $V$ into two sets $V_1, V_2$ such that $\vert S \vert =
O(\sqrt N)$ and each of $V_1, V_2$ has total cost of at most 2/3.
\end{theorem}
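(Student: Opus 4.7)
The plan is to follow the classical Lipton--Tarjan argument, which combines a BFS level decomposition with a fundamental cycle argument. First I would pick an arbitrary vertex $v_0$ and grow a BFS tree $T$ rooted at $v_0$, partitioning $V$ into levels $L_0, L_1, \ldots, L_h$ where $L_i$ is the set of vertices at BFS distance exactly $i$ from $v_0$. If some single level $L_i$ already has total cost greater than $2/3$, I could recurse or restructure, but the standard case to handle is when the cost is more spread out; then I locate the \emph{median level} $L_{\ell_1}$, i.e., the smallest $\ell_1$ such that the total cost of $L_0 \cup \cdots \cup L_{\ell_1}$ exceeds $1/2$.

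Next I would search for two ``thin'' levels surrounding $L_{\ell_1}$. Specifically, find the largest $\ell_0 \leq \ell_1$ and smallest $\ell_2 \geq \ell_1$ such that $|L_{\ell_0}| \leq \sqrt{N}$ and $|L_{\ell_2}| \leq \sqrt{N}$. A pigeonhole/averaging argument on the level sizes shows that one can choose $\ell_0$ and $\ell_2$ with $\ell_1 - \ell_0 \leq \sqrt{N}$ and $\ell_2 - \ell_1 \leq \sqrt{N}$, because otherwise there would be more than $\sqrt{N} \cdot \sqrt{N} = N$ vertices in those bands. If the middle band $L_{\ell_0+1} \cup \cdots \cup L_{\ell_2-1}$ has total cost at most $2/3$, then $S = L_{\ell_0} \cup L_{\ell_2}$ is a separator of size $O(\sqrt{N})$ satisfying the theorem, and I am done.

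The main obstacle is the remaining case, where the middle band itself has cost greater than $2/3$. Here I would apply the fundamental cycle trick. Contract all of $L_0 \cup \cdots \cup L_{\ell_0}$ into the root, delete all of $L_{\ell_2} \cup \cdots \cup L_h$, and then triangulate the resulting planar graph $G'$. The restricted BFS tree $T'$ in $G'$ has depth at most $\ell_2 - \ell_0 - 1 = O(\sqrt{N})$. Every non-tree edge $e$ of $G'$ together with $T'$ determines a fundamental cycle $C_e$ of length $O(\sqrt{N})$; by planarity, $C_e$ separates the plane into an interior and an exterior. I would then show, via a weighted averaging over the faces of the triangulation, that some non-tree edge yields a cycle such that the vertices strictly inside and strictly outside $C_e$ each carry total cost at most $2/3$. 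Lifting $C_e$ back to the original graph (un-contracting the root and adjoining $L_{\ell_0} \cup L_{\ell_2}$) gives a separator of size $O(\sqrt{N})$. The trickiest step, and the one I expect to require the most care, is the planarity-based averaging argument that guarantees the existence of such a well-balanced fundamental cycle; this is where Euler's formula $V - E + F = 2$ and the triangulation bound $E \leq 3V - 6$ enter crucially.
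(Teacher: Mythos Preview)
The paper does not prove this theorem at all: it merely states the Planar Separator Theorem and cites Lipton and Tarjan~\cite{LT79} for it, then uses it as a black box inside the graph-decomposition procedure. So there is no ``paper's own proof'' to compare your proposal against.

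That said, your outline is exactly the classical Lipton--Tarjan argument (BFS levels, locating two thin levels around the median, and the fundamental-cycle separator on the shallow contracted tree when the middle band is too heavy), and it is correct as a proof sketch. If you want to submit it as a self-contained proof you should tighten two points: (i) the averaging argument for the existence of a balanced fundamental cycle is usually phrased not over faces but by scanning non-tree edges in a specific order and tracking how the inside/outside cost changes as you swap across adjacent triangles---this is where triangulation and planarity are actually used; and (ii) the case where a single level already carries cost more than $2/3$ needs an explicit treatment (one typically handles it by noting that then two adjacent levels serve as the separator, or by the same fundamental-cycle trick applied to that level alone). Neither is a real gap, just places where ``I would show'' needs to become a concrete argument.
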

Such separator is called a $2/3$-separator and $(V1, V2, S)$ will be deemed the output of Separation.

Gazit and Miller \cite{GM87a} give a work optimal parallel
    algorithm for finding a $2/3$-separator. 
Their algorithm runs in $O(\sqrt{V}\log V)$ time using $O(\sqrt{V} / \log V)$
processors on a CRCW PRAM. Their algorithm can be simulated on the W-Stream
model. Hereafter, we call this simulation the {\it separator procedure}.

We decompose the graph into
    $O(V / M)$ regions of size $O(M) $ each, with
    $\sqrt{M}$ boundary vertices.
The decomposition runs in two phases, phase $1$ and phase $2$.
\paragraph{phase $1$} 
In this phase, the following loop is executed.
Initially $G$ is the only region and it is marked {\em unfinished}.

\noindent
While there is an unfinished region $R$ do the following:
\begin{enumerate}
\item if $\vert V(R) \vert \leq c_1 M$ and the size of $R$'s boundary is at most $c_2 \sqrt{M}$, then
mark $R$ as {\em finished}; continue to the next iteration;
\item else if $\vert V(R) \vert > c_1 M$ then run the separator procedure
on $R$, after giving each vertex a weight of $1 / \vert V(R) \vert $;
let $A = V_1$, $B = V_2$ and $C = S$ be the output;
\item else (that is, the size of $R$'s boundary is greater than $c_2 \sqrt{M}$), 
run the separator procedure
on $R$, after giving each of its $N'$ boundary vertices a weight of $1/N'$
and each of its interior vertices a weight of $0$;
let $A = V_1$, $B = V_2$ and $C = S$ be the output;

\item compute $C'\subseteq C$ such that no vertex in $C'$ is adjacent to $A
\cup B$;

\item let $C'' = C \setminus C'$; compute the connected
components $A_1, \ldots, A_q$ in $A \cup B \cup C'$;

\item while there exists a vertex $v$ in $C''$ such that it is
adjacent to some $A_i$ and not adjacent to a vertex in $A_j$ for
$j \neq i$ do
\begin{itemize}
\item remove $v$ from $C''$ and insert into $A_i$;
\end{itemize}

\item For $1\leq i\leq q$, let $R_i = A_i \cup \{v\in C'' \;|\; v$ has a neighbour in $A_i \}$;  
mark each $R_i$ {\em unfinished};
\end{enumerate}

We have the following lemma.
\begin{lemma}
(i) At the end of phase $1$, the graph is decomposed into connected subgraphs of size $O(M)$ each,
and the boundary vertices of each connected subgraph has at most $\sqrt{M}$ vertices.
(ii) Phase $1$ can be computed in $O((\sqrt{V}+ \frac{V}{M})\log V)$ passes.
\end{lemma}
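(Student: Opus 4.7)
The plan is to establish parts (i) and (ii) in turn, with (i) an invariant-plus-potential argument on the recursion tree induced by the while loop, and (ii) a level-by-level accounting of the pass count.

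For (i), I will first show that the while loop terminates and that on termination every surviving region meets the stated bounds. Introduce the potential $\Phi(R) = \lceil \log_{3/2} |V(R)| \rceil + \lceil \log_{3/2} |\partial R| \rceil$, where $\partial R$ is the boundary vertex set of $R$. When a split is triggered by $|V(R)| > c_1 M$, step 2 applies the separator procedure with uniform vertex weights, so by the Lipton--Tarjan theorem the two produced sides each have at most $(2/3)|V(R)|$ vertices, reducing the first term of $\Phi$ by at least one. When the split is triggered by oversized boundary, step 3 weights only the boundary vertices and an analogous bound yields a drop of at least one in the boundary term. The key additional fact is that the newly created separator $C$ always has size $O(\sqrt{|V(R)|})$, and steps 4--7 absorb most of $C$ into the new $R_i$'s: vertices of $C'\subseteq C$ are boundary of one child only, while each vertex of $C''$ either migrates into some $A_i$ (and thus becomes interior) or ends up shared among at most three children. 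Termination follows because $\Phi$ is a nonnegative integer that strictly decreases at every split, and the exit condition of the loop coincides with the bounds claimed in (i). The assertion that each boundary vertex lies in at most three regions is a consequence of the planarity of $G$ and the way step~6 greedily absorbs $C''$-vertices that are adjacent to a unique $A_i$.

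For (ii), I will argue level by level. Let $\mathcal{L}_\ell$ denote the multiset of regions living at depth $\ell$ of the recursion tree; the invariant $\sum_{R\in\mathcal{L}_\ell}|V(R)|\le V$ holds at every level since the shared separators telescope to $O(V)$ over the entire recursion. I interleave the data of all regions of the same level into a single W-Stream input, so the separator procedure can be driven on all of them in lock step. Recall that the separator procedure is the simulation of Gazit--Miller, which runs in $O(\sqrt{N}\log N)$ EREW-PRAM time with $O(\sqrt{N}/\log N)$ processors on an $N$-vertex region; a standard PRAM-to-W-Stream simulation executes one PRAM step in $O(1+P/M)$ passes, where $P$ is the total processor count summed over the regions being simulated in parallel. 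Using $\sum_R P_R = O(\sqrt{V}/\log V)$ at a level and a common time bound of $O(\sqrt{V}\log V)$ gives $O(\sqrt{V}+V/M)$ passes for that level. The auxiliary bookkeeping in steps 4--7 (finding $C'$, the connected components $A_1,\dots,A_q$, and executing the absorption loop of step 6) reduces to a constant number of sorts and scans of streams of size $O(V)$, and by the sorting algorithm of the previous chapter each costs $O(V/M)$ passes, which fits inside the same bound. With $O(\log V)$ recursion levels the overall count is $O((\sqrt{V}+V/M)\log V)$, as required.

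The main obstacle, I expect, is controlling the boundary size through the alternating split types. A size-triggered split can inject $O(\sqrt{|V(R)|})$ new boundary vertices, which at upper levels is potentially much larger than $c_2\sqrt{M}$, and the boundary must be driven back below that threshold before phase~1 ends. The argument therefore cannot be purely monotone in either term of $\Phi$ alone; instead, I track the two components of $\Phi$ separately and show that boundary-splits shrink $|\partial R|$ without inflating $|V(R)|$, while size-splits inflate $|\partial R|$ by at most $O(\sqrt{|V(R)|})$, an amount that subsequent boundary-splits can absorb in $O(\log V)$ further levels. This coupling is the crux of Frederickson's analysis and must be adapted to the streaming setting, where the ``recursion'' is realised as an iterated while loop and different regions may reach termination at different depths; the accounting must ensure that the per-level pass count remains valid even in this asynchronous unrolling.
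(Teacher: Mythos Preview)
Your treatment of (i) is more explicit than the paper's (which simply defers to Frederickson \cite{F87}), but the potential argument as you state it is internally inconsistent. You first assert that $\Phi$ strictly decreases at every split, yet later acknowledge that a size-triggered split can inflate $|\partial R|$ by $O(\sqrt{|V(R)|})$; then the boundary term of $\Phi$, and hence $\Phi$ itself, can \emph{increase} for a child. The repair you sketch (subsequent boundary-splits absorb the growth within $O(\log V)$ further levels) is indeed Frederickson's mechanism, but it is not a potential-decrease argument in the form you announced; you need either a different, genuinely monotone measure, or the two-stage analysis from \cite{F87}.

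The substantive gap is in (ii). Your claim $\sum_R P_R = O(\sqrt{V}/\log V)$ at every recursion level is false. At a deep level with, say, $\Theta(V/M)$ regions of size $\Theta(M)$ each, running Gazit--Miller with its native processor count on every region gives
\[
\sum_R \frac{\sqrt{N_R}}{\log N_R} \;=\; \Theta\!\left(\frac{V}{M}\cdot\frac{\sqrt{M}}{\log M}\right)\;=\;\Theta\!\left(\frac{V}{\sqrt{M}\,\log M}\right),
\]
which for $V\gg M$ is far larger than $\sqrt{V}/\log V$, and your per-level pass bound blows up. The paper avoids this by a different processor allocation: it fixes a total budget of $M$ PRAM processors and assigns region $i$ a share $M N_i/V$ proportional to its size, then applies Brent-style slowdown inside each region. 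This yields PRAM time $O(\sqrt{n}\log n + V/M)$ per level, with $n$ the maximum region size at that level; a case split on whether $M$ exceeds the natural processor count, summed over the $O(\log V)$ levels (with $n$ shrinking geometrically), gives total PRAM time $O((\sqrt{V}+V/M)\log V)$, and since $p\le M$ the simulation converts this directly to passes. A secondary point: step~5 (connected components) is not a constant number of sorts and scans; the paper uses iterated hook-and-contract with Euler tours and list ranking, exploiting planarity for geometric size decrease, to stay within $O(V/M)$ passes per level. Also, Gazit--Miller is CRCW, not EREW.
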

\begin{proof}
The correctness (Statement (i) of the Lemma) follows from \cite{F87}.
Here we prove Statement (ii) of the Lemma. 

The algorithm decomposes the given region recursively, with each level of recursion
reducing the size of the region by a factor of at least $2/3$.
We handle all recursive calls at the same level concurrently.

Gazit and Miller \cite{GM87a} give a work optimal parallel
	algorithm for finding a $2/3$-separator. 
Their algorithm runs in $O(\sqrt{N}\log N)$ time using $O(\sqrt{N} / \log N)$
processors on a CRCW PRAM.
A self-simulation of their algorithm on a $p$ processor CRCW PRAM would run
in $O(\sqrt{N}\log N+N/p)$ time.
Suppose we have $x$ regions of sizes $N_1,\ldots,N_x$ to decompose, so that
$N_1+\ldots+N_x=V$. Say $M$ processors are available. 
Let the $i$-th region be allocated $MN_i/V$ processors.
The decompositions of all regions can be done simultaneously in
$O(\sqrt{n}\log n+V/M)$ steps, where $n=\max_{i}\{N_i\}$.
If $n=O(V/d^k)$, then $M\leq\frac{d^k\sqrt{V/d^k}}{\log (V/d^k)}$ implies that
the decompositions of all regions can be done simultaneously in
$O(V/M)$ steps.

In the following we consider a simulation of the above on the W-Stream model.

Theorem 1 in \cite{DEM+07} states that any
PRAM algorithm that uses $p\geq M$ processors and runs in time $T$ using space
        $S = \mbox{poly}(p)$ can be simulated in W-Stream in 
$O((Tp\log S)/$ $(M \log p)) $ passes using $M \log p$ bits of working memory and intermediate
        streams of size $O(M+p)$.
In the simulation, each step of the PRAM algorithm is simulated in W-Stream model in $O(p \log S / (M \log p))$ 
	passes, where at each pass we simulate the execution of $M \log p / \log S$ processors using $M \log p$ bits
	of working memory.
The state of each processor and the content of the memory accessed by the algorithm are maintained in the intermediate
	steps.
we simulate the execution of a processor as follows. We first read from the input stream the state 
	and then read the content of the memory cell used by each processor, and then execute the step of the algorithm.
Finally we write to the output stream the new state and the modified content. Unmodified content is written as it is.
Note that this simulation works on our models also.

We handle the $k$-th level of recursion using a simulation of the following PRAM algorithm
designed on an $M$-processor CRCW PRAM:
let $d=3/2$;
if $M>\frac{d^k\sqrt{V/d^k}}{\log (V/d^k)}$,
decompose the regions obtained from the next higher level of recursion in 
$O(\sqrt{V/d^k}\;{\log (V/d^k)})$ time using $\frac{d^k\sqrt{V/d^k}}{\log (V/d^k)}$ processors;
else, decompose them in $O(V/M)$ time using $M$ processors as shown above.
The total time taken by the PRAM algorithm is $O((\sqrt{V}+ \frac{V}{M})\log V)$.
The W-Stream simulation, therefore, requires $O((\sqrt{V}+ \frac{V}{M})\log V)$ passes.

The remaining steps can all be implemented in $O(V/M)$ passes per level of recursion. 
In particular, for connected components, we adopt a hook and contract strategy \cite{Ja92}.
Every vertex can be made to hook in a single pass, if the graph is in adjacency list
representation. The contraction can be achieved by $O(1)$ invocations to 
sorts, scans, Euler tours and list ranks, as shown in Chapter~\ref{mst:chapt}. 
Each hook-and-contract reduces a planar graph by a constant factor in size. 
Thus, connected component can be computed in $O(V/M)$ passes.

Therefore, the total number of passes is $O((\sqrt{V}+ \frac{V}{M})\log V)$.
\end{proof}

The total number of regions (connected subgraphs) produced by Phase 1 can be greater than 
$O(V/M)$.
In the second phase, some of these regions are combined to reduce the total number of regions
to $O(V/M)$.

\paragraph{Phase $2$}

The phase begins by calling each connected subgraph a region.
It executes the following steps for {\em small} regions; that is, regions
of size at most $c_1 M/2$ and boundary size
    at most $c_2 \sqrt M /2$.
\begin{quote}
(1) while there exist two small regions that share boundary vertices, combine them; \\
(2) while there exist two small regions that are adjacent to the same set of either
    one or two regions, combine them.
\end{quote}

For the correctness of the above, we depend on \cite{F87}, where it is shown that
the number of the regions in the decomposition is $O(V/M)$.
Now we show how to compute the above two steps in $O(V / M)$ passes.
Suppose, the boundary vertices of each region are included and distinguished in the vertex list
for the region; since the total number of boundary vertices is $O(V /\sqrt{M})$ and
each boundary vertex is in at most three regions, this supposition increases
the size of the stream by at most a constant factor.

In a few sorts, each region knows its adjacent regions and boundary vertices
        of those regions.
In one pass $O(M)$ regions can be grouped together.
Therefore Step (1) can be
        done in $O(V/M)$ passes.

Similarly, $O(V/M)$ passes are required to perform step $2$.
Hence we give the following lemma.

\begin{lemma}
After Phase 2, the graph is decomposed into a set of regions that satisfy all the required properties.
Phase $2$ is computed in $O(V/M)$ passes.
\end{lemma}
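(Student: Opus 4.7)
\proof
The correctness of the decomposition after Phase 2 is inherited directly from Frederickson's analysis \cite{F87}: the combining rules preserve the invariants that each region has $O(M)$ vertices and $O(\sqrt{M})$ boundary vertices, while reducing the total number of regions to $O(V/M)$. The plan, therefore, is to concentrate on establishing the $O(V/M)$ pass bound for the combining steps on the ${\cal M}1$/${\cal M}2$ model.

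First I would set up the stream representation. The input after Phase 1 is a list of regions, where each region is followed by its interior vertices, its boundary vertices (distinguished by a flag), and its edges. Since each boundary vertex lies in at most three regions, replicating boundary vertices within each region inflates the stream by only a constant factor, keeping it $O(V)$ in size. I will maintain, for each region, a summary record containing its size, its boundary size, and a list of its boundary vertices; by the size bound these summaries total $O(V/\sqrt{M})$ in length.

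For Step (1), the goal is to pair up small regions sharing a boundary vertex. The plan is: (a)~in $O(V/M)$ passes using the sorting algorithm from Chapter~\ref{wstream:chapt}, sort the records $\langle b, R\rangle$ for each boundary vertex $b$ and each small region $R$ containing $b$; (b)~scan the sorted stream to produce, for each $b$, the (at most three) small regions sharing $b$; (c)~load $\Theta(M)$ such merger instructions into main memory and, in one pass over the region stream, relabel the affected regions with their new merged identifier, resolving chains of mergers via a small in-core union-find. Since each merging pass fuses $\Theta(M)$ regions, and each fusion strictly reduces the region count, a total of $O(V/M)$ passes suffice; the sorts are themselves absorbed into this bound.

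For Step (2), the goal is to merge small regions that share the same one- or two-region neighbourhood. After Step (1), for each small region $R$ I would compute its neighbour set $N(R)$ of adjacent regions by sorting the boundary-vertex/region pairs again and scanning; when $|N(R)|\le 2$ I emit a key consisting of $N(R)$ sorted. Another sort groups small regions by key, and another in-memory batch of $\Theta(M)$ merges per pass reduces the region count. Again this costs $O(V/M)$ passes. The main obstacle I expect is bookkeeping: when two small regions merge, their combined boundary may shrink (a shared boundary vertex becomes internal) and their neighbour set may change, so the merging must be iterated. Frederickson's argument guarantees termination with $O(V/M)$ final regions, and since each pass removes $\Theta(M)$ regions (or declares the remaining regions large), the total number of iterations is $O(V/M)$, giving the claimed bound. \hfill $\Box$
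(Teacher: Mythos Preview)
Your proposal is correct and follows essentially the same approach as the paper: defer correctness to Frederickson~\cite{F87}, replicate boundary vertices within each region's record (inflating the stream only by a constant factor since there are $O(V/\sqrt{M})$ boundary vertices each in at most three regions), use sorting to let each region discover its adjacent regions and shared boundary vertices, and then batch $\Theta(M)$ region mergers per pass to obtain the $O(V/M)$ bound for each of the two combining steps. The paper's own argument is terser---it simply asserts that ``in a few sorts, each region knows its adjacent regions'' and ``in one pass $O(M)$ regions can be grouped together''---whereas you have spelled out the sorting keys, the in-core union-find for resolving merger chains, and the need to iterate because merging can alter boundaries and neighbour sets; but these are elaborations of the same plan rather than a different route.
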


\subsubsection{Computing the shortest Paths from Source Vertex $s$}
The graph is now decomposed into $\Theta(V/M)$ regions each with a boundary of
at most $\Theta(\sqrt{M})$ vertices. Also, each boundary vertex is shared by at most
three regions. Since each region fits in the memory, in one pass we can compute 
the shortest distance between each pair of boundary vertices in each region;
we do this by invoking an in-core  all pairs shortest paths algorithm. A new
graph $G^R$ is constructed by replacing each region with a
complete graph on its boundary vertices. The weight of each edge in
the complete graph is the shortest distance within the region between its endpoints. 
If the source vertex $s$ is not a boundary vertex, then we
include it in $G^R$ and connect it to the boundary vertices of the
region containing it, and all these new edges are also weighted by the respective
within-the-region distances. This graph has $O(V/\sqrt{M})$ vertices and $O(V)$
edges. The degree of each vertex is $O(\sqrt{M})$. This graph need not be planar.

Now we show how to compute the shortest paths in $G^R$ from the source vertex $s$
using Dijkstra's algorithm, but without using a priority queue. Like
in Dijkstra's algorithm, our algorithm maintains a label $d(v)$ for
each vertex of $G^R$; $d(v)$ gives the length of a path from the source vertex
$s$ to $v$. We say an edge $(v,w)$ is relaxed if $d(w) \leq d(v) + \mbox{weight}(v,w)$.

Initially, $d(v) = \infty$, $\forall v \in V^R$, $v \neq s$.
Assume that $d(v)$ is stored with each vertex in the vertex list and edge list. 
$d(s) = 0$. Execute
the following steps until all edges are relaxed and all vertices
in the vertex list are marked.

\begin{enumerate}
\item Select the vertex $v$ with the smallest label among
all unmarked vertices in the vertex list. Also select the edges incident on $v$
from the edge list. Mark $v$.  
\item Relax the selected edges, if necessary. Labels of some
vertices may need to be changed. Update the labels of these vertices
in the vertex list and edge list.
\end{enumerate}

The above steps are from Dijkstra's algorithm. Therefore the correctness follows.
Each iteration above can be executed in $O(1)$ passes.
In each iteration a vertex is marked. This vertex will not be
selected again. Therefore $\vert V^R \vert$ iterations are sufficient.
As $V^R=O(V/\sqrt{M})$, the number of passes needed is $O(V/\sqrt{M})$.
Thus, we obtain the following lemma.
\begin{lemma}
SSSP and BFS problems can be computed in $O((\sqrt{V}+ \frac{V}{M})\log V+\frac{V}{\sqrt{M}})$ passes on
models ${\cal M}1$ and ${\cal M}2$.
\end{lemma}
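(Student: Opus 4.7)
The plan is to combine the graph decomposition machinery already developed (Phases 1 and 2) with a streamed Dijkstra-like algorithm on the contracted graph $G^R$, then add back the in-region computation to recover shortest paths to interior vertices. The key observation driving the pass bound is that after the decomposition, each region fits in the working memory, so all ``local'' computations reduce to in-core work done during a single pass, while the ``global'' computation runs on a much smaller graph of $O(V/\sqrt{M})$ vertices.

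First, I would preprocess $\hat{G}$ into a degree-$3$ planar graph $G$ using the $0$-weight cycle replacement of high-degree vertices; this is an $O(V/M)$-pass transformation and preserves shortest-path distances. Then I would invoke the decomposition procedure: Phase $1$ to obtain connected subgraphs of size $O(M)$ with $O(\sqrt{M})$ boundary vertices each, in $O((\sqrt{V}+V/M)\log V)$ passes, and Phase $2$ to merge small regions down to $\Theta(V/M)$ regions in $O(V/M)$ passes. Next, I would construct the auxiliary graph $G^R$: stream the regions one at a time into the working memory (each fits by property~1), run an in-core all-pairs shortest paths on each region's boundary vertices, and emit a clique on the boundary weighted by the within-region distances. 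Since there are $O(V/M)$ regions and each takes one pass-worth of reading, this step costs $O(V/M)$ passes and produces a graph with $O(V/\sqrt{M})$ vertices and $O(V)$ edges (each boundary vertex is in at most three regions, and each region contributes $O(M)$ edges on its $O(\sqrt{M})$ boundary vertices). If $s$ is an interior vertex, I would splice it into $G^R$ along with its connections to the boundary of its own region, again in-core.

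The Dijkstra-like streamed computation on $G^R$ is carried out as described: in each iteration, one read pass over the vertex list identifies the unmarked vertex $v$ with minimum label, a second read pass over the edge list fetches the edges incident on $v$, and a write pass relaxes those edges, updating $d(\cdot)$ where it appears in both the vertex list and edge list. Because exactly one vertex is permanently marked per iteration, there are $|V^R|=O(V/\sqrt{M})$ iterations, each using $O(1)$ passes, for a total of $O(V/\sqrt{M})$ passes. On $\mathcal{M}1$ and $\mathcal{M}2$, the two-disk capability lets the selection pass and the relaxation write proceed without extra rewinds, so the per-iteration cost stays $O(1)$. Finally, to lift shortest-path labels from boundary vertices to interior vertices, I would run one more sweep: stream each region into memory together with the final $d$-values on its boundary, and in-core run Dijkstra from the boundary (using $d$-values as initial labels) to label all interior vertices. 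This adds $O(V/M)$ passes. For BFS, the identical structure applies with all edge weights set to $1$.

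The main obstacle is arguing correctness rather than pass-efficiency: one must show that computing shortest paths in $G^R$ and then extending into regions yields the true shortest paths in $G$. The standard argument is that any shortest path in $G$ from $s$ to a vertex $v$ crosses region boundaries at a sequence of boundary vertices $b_1,\ldots,b_k$, and between consecutive crossings the subpath is entirely inside one region, so its weight equals the within-region distance used as the edge weight in $G^R$; hence the $G^R$-distance from $s$ to $b_k$ equals the $G$-distance, and the final in-region extension is optimal. A secondary subtlety is bookkeeping the two copies of $d(v)$ maintained in the vertex list and the edge list under the ${\cal M}1/{\cal M}2$ read/write conventions, but since relaxation is a single write pass over the edge list followed by a consistent update to the vertex list, this can be handled in $O(1)$ passes per Dijkstra iteration, yielding the claimed bound of $O((\sqrt{V}+V/M)\log V + V/\sqrt{M})$ passes overall.
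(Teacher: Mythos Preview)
Your proposal is correct and follows essentially the same route as the paper: degree-$3$ reduction, the two-phase Frederickson-style decomposition, construction of the boundary graph $G^R$ via in-core APSP on each region, and a streamed Dijkstra on $G^R$ that settles one vertex per $O(1)$ passes. You are in fact more careful than the paper in two places: you spell out the final sweep that propagates boundary labels to interior vertices, and you give the standard boundary-crossing argument for correctness, both of which the paper leaves implicit. One minor slack: since the regions can be laid out contiguously on the stream, the $G^R$ construction can be done in a single pass rather than $O(V/M)$, but this does not affect the stated bound.
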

If we run the randomized algorithm given in \cite{DFR06} on graph $G^R$ then 
with high probability we can compute the shortest paths from the source vertex
in $O(C V \log V /M)$ passes, which is better than the above when $C \log V < \sqrt{M}$.

\subsubsection{All pair shortest paths}

We can compute the shortest distances from
    $O(\sqrt{M})$ source vertices in $O(V/ \sqrt{M})$
    passes with $O(V \sqrt{M})$ space.
For this, we maintain $O(\sqrt{M})$ distances at each vertex, and obtain the following lemma.
\begin{lemma}
For $V$ source vertices, we can compute the shortest paths in
    $O(V^2 / M)$ passes using $O(V^2)$ space. 
\end{lemma}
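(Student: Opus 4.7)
The plan is to apply the batched single-source machinery from the preceding paragraph $V/\sqrt{M}$ times in succession, processing $\sqrt{M}$ source vertices at a time. Partition the vertex set $V$ arbitrarily into $V/\sqrt{M}$ batches $B_1, \ldots, B_{V/\sqrt{M}}$ of $\sqrt{M}$ vertices each. For each batch $B_i$, invoke the stated routine that computes the shortest distances from its $\sqrt{M}$ source vertices in $O(V/\sqrt{M})$ passes and $O(V\sqrt{M})$ working/intermediate space, and append its $\sqrt{M} \cdot V$ output distances to a result stream on the other disk.

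I would first recall the one-batch analysis: after the graph decomposition of the preceding lemmas, the contracted graph $G^R$ has $O(V/\sqrt{M})$ vertices and maximum degree $O(\sqrt{M})$, and the label-streaming variant of Dijkstra (without a priority queue) used there relaxes one settled vertex per $O(1)$ passes. When we carry $\sqrt{M}$ distance labels per vertex of $G^R$ rather than one, the per-vertex record still fits in the working memory ($O(\sqrt{M}\log V)$ bits per vertex, and at most $M$ vertices touched per pass), and the relaxation of one edge simultaneously updates $\sqrt{M}$ labels in $O(1)$ amortised cost per label. Thus the whole batch costs $O(V/\sqrt{M})$ passes, matching the cited bound. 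Summing over $V/\sqrt{M}$ batches yields $O\bigl((V/\sqrt{M}) \cdot (V/\sqrt{M})\bigr) = O(V^2/M)$ passes.

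For the space bound, note that within a batch the intermediate streams on the two disks of $\mathcal{M}1$/$\mathcal{M}2$ need only hold $G^R$ augmented with $\sqrt{M}$ labels per vertex, which is $O(V\sqrt{M})$; however, across batches we must accumulate the final output table of all $V^2$ shortest-path distances. Since each batch contributes $V \cdot \sqrt{M}$ new entries which can be streamed out and concatenated to the growing result stream, the cumulative output (and hence total space) reaches $O(V^2)$ but never exceeds it. Reading and writing this growing stream concurrently with the next batch's computation is exactly the two-head capability of $\mathcal{M}1$ and $\mathcal{M}2$, so no extra passes are incurred.

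The main obstacle I anticipate is verifying that the batched Dijkstra variant genuinely costs only $O(V/\sqrt{M})$ passes per batch when $\sqrt{M}$ labels are carried per vertex rather than one, i.e.\ that selecting, per pass, the unmarked vertex of smallest label \emph{for each of the $\sqrt{M}$ sources} and relaxing its incident edges can be executed in $O(1)$ passes even though the different sources may select different vertices. This is handled by keeping the full current label vector of every vertex of $G^R$ in the stream: in one pass, scan the vertex list to identify, for each source $s \in B_i$, the current candidate vertex $v_s$ with smallest label among unmarked vertices; in a second pass, scan the edge list and, for every edge $(v_s, w)$, relax the $s$-th label of $w$. Since each batch settles one vertex per source per $O(1)$ passes and $|V(G^R)|=O(V/\sqrt{M})$, the batch terminates within $O(V/\sqrt{M})$ passes, completing the argument.
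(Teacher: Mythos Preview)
Your proposal is correct and follows essentially the same approach as the paper: batch the $V$ sources into groups of $\sqrt{M}$, maintain $\sqrt{M}$ distance labels per vertex of $G^R$, and run the label-scanning Dijkstra variant once per batch for $O(V/\sqrt{M})$ passes each, giving $O(V^2/M)$ passes and $O(V^2)$ space overall. The paper's own justification is only the one sentence preceding the lemma, so your discussion of the per-pass mechanics (in particular, that the $\sqrt{M}$ sources may select distinct minimum vertices in a single iteration and that this can still be handled in $O(1)$ passes) is more thorough than the original.
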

\section{Conclusions from this Chapter}
\label{mod:wstream:chapt}
We presented two variants of the W-Stream model on which we found it easy to design
	faster algorithms for a few problems. 
We believe that the lower bounds for these problems on our models would match those on the W-Stream model.
Lower bounds in communication complexity have not proved helpful for our models.
New techniques may have to be developed for proving lower bounds on them.  




\bibliographystyle{plain}
\bibliography{ref}



\end{document}